\documentclass[
  aps,
  prx,
  reprint,
  superscriptaddress,
  amsmath,
  amssymb,
  nofootinbib,
  longbibliography
]{revtex4-2}

\usepackage{graphicx}
\usepackage{bm}
\usepackage{mathtools}
\usepackage{amsthm}
\usepackage{algpseudocode}
\usepackage{hyperref}
\usepackage{physics}
\usepackage{color}
\usepackage{mathrsfs}

\newcounter{algorithm}
\renewcommand{\thealgorithm}{\arabic{algorithm}}

\providecommand{\theHalgorithm}{}
\renewcommand{\theHalgorithm}{main.algorithm.\arabic{algorithm}}

\newenvironment{inlinealgorithm}[2]{%
  \par\medskip
  \refstepcounter{algorithm}%
  \label{#1}%
  \noindent\textbf{Algorithm~\thealgorithm. #2.}%
  \par\smallskip
}{%
  \par\medskip
}

\newtheorem{theorem}{Theorem}
\newtheorem{proposition}[theorem]{Proposition}
\newtheorem{lemma}[theorem]{Lemma}
\newtheorem{corollary}[theorem]{Corollary}

\newtheorem{definition}[theorem]{Definition}

\newtheorem{problem}[theorem]{Problem}

\newtheorem*{lemma*}{Lemma}
\newtheorem*{proposition*}{Proposition}
\newtheorem*{theorem*}{Theorem}
\newtheorem*{corollary*}{Corollary}
\newtheorem*{problem*}{Problem}

\newcommand{\supp}{\operatorname{supp}}
\newcommand{\Span}{\operatorname{span}}

\newcommand{\F}{\mathsf{F}}
\newcommand{\D}{\mathsf{D}}
\newcommand{\Pc}{\mathcal{P}}
\newcommand{\Fc}{\mathbb{F}_2}

\begin{document}

\title{Sample-Efficient Tomography of a Class of Mixed States 
\\
with Extensive Entanglement and Magic}

\author{Pengcheng Liao}
\email{liaopeng@usc.edu}
\affiliation{Ming Hsieh Department of Electrical and Computer Engineering, University of Southern California, Los
Angeles, California 90089, USA}

\author{Quntao Zhuang}
\email{qzhuang@usc.edu}
\affiliation{Ming Hsieh Department of Electrical and Computer Engineering, University of Southern California, Los
Angeles, California 90089, USA}
\affiliation{Department of Physics and Astronomy, University of Southern California, Los
Angeles, California 90089, USA}

\date{\today}

\begin{abstract}

Full tomography of a generic many-qubit quantum state requires exponentially
many copies, while suitable structural constraints can make reconstruction
sample-efficient. Existing approaches exploit, for example, limited
entanglement structure, low magic, or constrained
state-preparation circuits. Here we consider a class of mixed states that can
simultaneously exhibit extensive entanglement and extensive magic.
Specifically, we introduce Clifford-encoded block-product (CEBP) states,
obtained by applying an unknown global Clifford unitary to a tensor product of
arbitrary mixed states supported on unknown blocks of bounded size. We show that, for a fixed block size, CEBP states can be reconstructed using polynomially many copies by exploiting Clifford-preserved Pauli correlations to recover the latent structure and reduce the remaining problem to local tomography. Our result demonstrates that sample-efficient tomography can arise from bounded complexity in a latent frame even when the physical state is highly entangled, highly magical, and mixed, and suggests complexity modulo structured transformations as a broader organizing principle for quantum-state learnability.


\end{abstract}

\maketitle

\section{Introduction}

Quantum state tomography reconstructs a classical description of an unknown
quantum state from measurements on independently prepared copies, enabling
state characterization, device validation, resource quantification, and
prediction of measurement statistics~\cite{Hradil1997,JamesKwiatMunroWhite2001,ParisRehacek2004,AltepeterJamesKwiat2004,BlumeKohout2010,ChristandlRenner2012}.
For unrestricted $n$-qubit states, however, reconstruction to fixed accuracy
requires exponentially many copies~\cite{haah2017sample,ODonnellWright2016}.
Sample-efficient tomography therefore relies on exploitable structure that
restricts the state class.

Existing approaches to tomography exploit different restrictions on the unknown states. Product-state tomography and tensor-network methods exploit limited correlation structure, with matrix product state and matrix product operator approaches using bounded bond dimension~\cite{Cramer2010Efficient,Baumgratz2013Scalable,Lanyon2017Efficient}. Stabilizer-based methods instead exploit restricted magic, enabling efficient learning of stabilizer states and suitable extensions involving few $T$ gates or few local non-Clifford gates~\cite{Montanaro2017,lai2022learning,leone2024,grewal2023efficient,HangleiterGullans2024}. Beyond these restricted-resource regimes, Zhao et al.\ establish sample-efficient learning of pure states with bounded preparation-gate complexity~\cite{ZhaoEtAl2024BoundedGate}. Their result accommodates substantial entanglement and magic but relies on purity. Under the additional restriction of constant-depth local preparation on a finite-dimensional lattice, Landau and Liu obtain polynomial-time reconstruction of pure states~\cite{LandauLiu2025Shallow}.

Beyond these regimes, a central question is: \emph{can highly mixed states
with extensive entanglement and magic admit sample-efficient tomography, without a constant-depth
preparation assumption?}

We provide an affirmative answer for a structured family of
\emph{Clifford-encoded block-product} (CEBP) states,
$\rho=U_{\mathrm c}(\bigotimes_{a=1}^{K}\rho_a)U_{\mathrm c}^{\dagger}$,
where $U_{\mathrm c}$ is an unknown global Clifford unitary and the unknown
states $\rho_a$ are supported on disjoint blocks of at most $d$ qubits.
The blocks may be arbitrary mixed states, and neither the encoder nor the
partition is supplied to the learner. The only structural prior is the
upper bound $d$ on the hidden block size; no restriction is imposed on the
depth of the encoding circuit.
CEBP states can exhibit extensive entanglement and magic: Clifford encoding
can generate entanglement across physical bipartitions while preserving
magic supplied by the block inputs. Allowing mixed inputs also accommodates
extensive entropy and exponentially large global rank. The class therefore
extends beyond low-bond-dimension descriptions in a fixed physical ordering
and the few-non-Clifford regimes discussed above. The relevant restriction
is the bounded block size in a hidden Clifford frame, rather than small
entanglement, limited non-Clifford resources, or purity.

\begin{figure*}[t]
  \centering
    \includegraphics[width=0.99\linewidth]{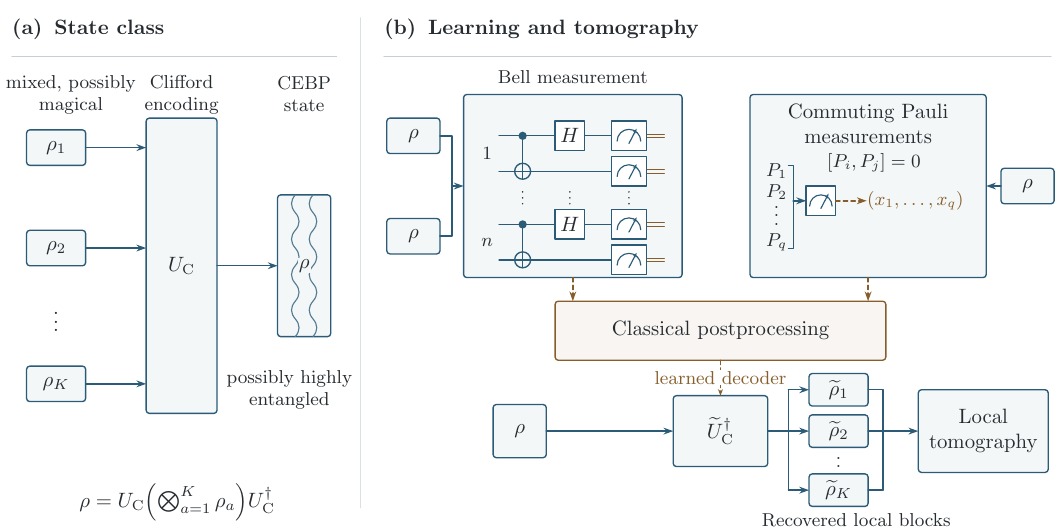}
    \caption{
Schematic of the CEBP state class and learning protocol.
(a) A CEBP state is obtained by applying an unknown Clifford unitary $U_{\mathrm C}$ to a product of local blocks $\rho_a$, each supported on at most $d$ qubits. The local blocks may be mixed and magical, while the Clifford encoding can generate a highly entangled global state.
(b) The learner combines Bell measurements on many pairs of copies with joint measurements of commuting Pauli tuples on fresh single copies. The former are used to estimate $\left|\Tr(\rho P)\right|^2$, while the latter provide the signed moments required for cumulant grouping. Classical postprocessing combines these data to construct a decoding Clifford $\widetilde U_{\mathrm C}^{\dagger}$. Applying the learned decoder to fresh copies exposes the recovered local blocks, which are then reconstructed by local tomography.
}  
    \label{fig:cebp-overview}
\end{figure*}

Our main result establishes tomography of CEBP states with copy complexity
polynomial in the system size and inverse target accuracy for every fixed
maximum block size $d$, with a high-probability trace-norm guarantee.
The reconstruction exploits the algebraic structure of Pauli correlations
preserved by Clifford encoding. Two-copy Bell measurements estimate squared
Pauli expectation values, whose blockwise factorization and commutation
structure enable recovery of suitable Pauli algebras. For multiqubit blocks,
additional single-copy measurements of commuting Pauli observables provide
mixed cumulants that guide the grouping of recovered sectors, with
controlled approximation error. A recovered Clifford then localizes the
relevant algebras onto bounded-size registers for local tomography.
Exact identification of the original encoder and partition is unnecessary.

The remainder of the paper is organized as follows.
Section~\ref{sec: problem} defines the CEBP state class and formulates the associated
tomography problem. Section~\ref{sec: protocol} presents our sample-efficient learning
protocol and establishes its end-to-end sample-complexity guarantee.
Section~\ref{sec: numerical} reports numerical simulations comparing the proposed CEBP
tomography protocol with local-Pauli linear inversion. Finally,
Section~\ref{sec: discussion} discusses the implications of our results and directions for
future work. Technical details and proofs are deferred to the Supplemental
Material.

\section{State Class and Learning Problem}
\label{sec: problem}

A state with complicated correlations in the physical basis may admit a
much simpler description in an appropriate Clifford frame. This perspective
appears in several existing constructions. Clifford encoders map logical
states into stabilizer codes~\cite{Gottesman1997}, while Clifford-enhanced
matrix product states combine global Clifford transformations with
low-bond-dimension cores for quantum state
designs~\cite{LamiHaugDeNardis2025} and many-body
simulation~\cite{QianHuangQin2024}. Clifford dynamics can likewise generate
nonlocal correlations and entanglement from simple initial
states~\cite{SommersHuseGullans2023,RichterLuntPal2023}. A related
continuous-variable construction is provided by Gaussian-entanglable states,
obtained by applying Gaussian processing to separable, potentially
non-Gaussian inputs, whose pure-state subclass admits sample-efficient
tomography~\cite{zhao2025}.

Here we specialize this viewpoint to a core that factorizes into blocks
of bounded size, without assuming pure inputs. Each block may contain
arbitrary correlations and mixedness; only correlations between distinct
blocks are excluded in the latent frame. Applying a global Clifford
unitary to such a core defines the Clifford-encoded block-product (CEBP)
state class.

For a hidden partition
$B_1\sqcup\cdots\sqcup B_K=[n]$, a CEBP state has the form
\begin{equation}
\rho
=
U_{\mathrm c}
\left(
\bigotimes_{a=1}^{K}\rho_a
\right)
U_{\mathrm c}^{\dagger},
\qquad
\rho_a\in \D\!\bigl((\mathbb C^2)^{\otimes |B_a|}\bigr),
\label{eq:clifford-encoded-block-product}
\end{equation}
where $U_{\mathrm c}$ is a Clifford unitary and the latent block states
$\rho_a$ may be arbitrary $|B_a|$-qubit states, as shown in Fig~\ref{fig:cebp-overview}(a).

Here is a simple way to see that CEBP states can simultaneously exhibit large-scale entanglement and extensive magic. For even
$n$, take single-qubit hidden blocks with latent product state
$\ket{T}^{\otimes n}$, where
\begin{equation}
\ket{T}:=T\ket{+}=(\ket 0+e^{i\pi/4}\ket 1)/\sqrt2 .
\end{equation}
Fix the physical ordering $1,\ldots,n$ and the midpoint cut
$[n/2]|[n]\setminus[n/2]$.  Let $U_{\mathrm c}$ be the product of CNOT gates
from qubit $i$ to qubit $n/2+i$ for $i=1,\ldots,n/2$.  Then
$U_{\mathrm c}\ket{T}^{\otimes n}$ is a tensor product of $n/2$ identical
two-qubit entangled states across the midpoint cut.  Hence its entanglement
entropy across this cut is $\Omega(n)$, and any exact MPS representation in
this fixed physical ordering requires bond dimension $\exp(\Omega(n))$.
At the same time, the order-two stabilizer R\'enyi entropy
obeys~\cite{LeoneOlivieroHamma2022SRE}
\begin{equation}
M_2(\ket{T}^{\otimes n})
=
M_2(U_{\mathrm c}\ket{T}^{\otimes n})
=
n\log(4/3),
\end{equation}
where additivity gives the linear scaling and Clifford invariance gives the
equality after encoding.  Thus, the CEBP class can contain states with both
extensive fixed-order entanglement and extensive magic even when the hidden
block size is one.  Consequently, fixed-order bounded-bond-dimension MPS
tomography and few-non-Clifford-gate methods do not directly apply to the full
CEBP class.

Tomography for this class is therefore nontrivial.  This
raises the central question: does the hidden bounded-block structure
nevertheless enable sample-efficient tomography?
\begin{problem}[Tomography of Clifford-encoded block-product states]
\label{prob:cebp-tomography}
Given independent copies of an unknown $n$-qubit state $\rho$, target
accuracy $\varepsilon\in(0,1)$, failure probability
$\delta\in(0,1)$, and a prior $d\in\mathbb N$, assume that
there exist a hidden partition $B_1\sqcup\cdots\sqcup B_K=[n]$ and an
unknown Clifford unitary $U_{\mathrm c}$ such that $\rho$ has the form
in Eq.~\eqref{eq:clifford-encoded-block-product}, with
$\max_{a\in[K]} |B_a|\le d$. 
The task is to output a classical description of an estimator $\widehat\rho$
such that
\begin{equation}
\Pr\!\left\{\|\widehat\rho-\rho\|_1\le\varepsilon\right\}
\ge 1-\delta.
\end{equation}
\end{problem}

We use the standard trace distance
$D_{\mathrm{tr}}(\omega,\sigma):=\frac12\|\omega-\sigma\|_1$.
The analytical parameter $\varepsilon$ specifies trace-norm error, while
numerical reconstruction errors are reported in trace distance. Thus the
trace-norm target $\varepsilon$ corresponds to
$D_{\mathrm{tr}}(\widehat\rho,\rho)\le\varepsilon/2$.

\noindent We emphasize that the learner is not given the hidden partition, the number of
blocks, the block locations, or a generating tableau or description of
$U_{\mathrm c}$.
The only structural prior is the upper bound $d$ on the largest
hidden block. Nevertheless, this limited prior is sufficient for sample-efficient tomography when $d$ is fixed, as summarized by our main result below.

Before we proceed any further, we give an informal result here. 

\begin{theorem}[Informal]
\label{thm:informal-cebp-tomography}
Problem~\ref{prob:cebp-tomography} can be solved with $O(n^{O(d)} 2^{O(d\log d)})$ copies of  $\rho$. 
\end{theorem}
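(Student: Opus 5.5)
The plan is to reduce the problem to local tomography in a learned Clifford frame. Write $\sigma=\bigotimes_a\rho_a$ and $P' = U_{\mathrm c}^\dagger P U_{\mathrm c}$. Then
\begin{equation}
\Tr(\rho P)=\Tr(\sigma P')=\prod_a \Tr\bigl(\rho_a P'|_{B_a}\bigr)
\end{equation}
up to a sign. So the Pauli spectrum of $\rho$ is the Clifford image of a product spectrum.

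\textbf{Step 1: estimating squared Pauli expectations.} I would use Bell sampling on pairs of copies. This gives unbiased estimates of $|\Tr(\rho P)|^2$ for all Paulis simultaneously, through the characteristic distribution $p(P)=2^{-n}|\Tr(\rho P)|^2$. The sample cost is poly$(n,1/\eta)$, and it yields a list $\mathcal L$ of all Paulis with $|\Tr(\rho P)|\ge \eta$.

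\textbf{Step 2: recovering the latent algebras.} Each block $a$ determines a subgroup
\begin{equation}
G_a=U_{\mathrm c}(\Pc_{B_a}\otimes I)U_{\mathrm c}^\dagger,
\end{equation}
of rank $2|B_a|\le 2d$. Distinct $G_a$ commute elementwise, and $\Tr(\rho P)$ factorizes over the decomposition $P=\prod_a P_a$ with $P_a\in G_a$. I would look for a symplectic decomposition $\Fc^{2n}=\bigoplus_a V_a$ into mutually orthogonal subspaces of dimension at most $2d$ that is consistent with this multiplicativity on the heavy list. The candidate generators for each $V_a$ come from Paulis in $\mathcal L$. I would grow groups by testing mutual commutation, and test multiplicativity through the cumulant identities
\begin{equation}
\langle PQ\rangle-\langle P\rangle\langle Q\rangle=0
\end{equation}
for commuting $P,Q$ in different groups, using single-copy joint measurements of commuting tuples to obtain signed moments. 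Enumerating candidate $2d$-dimensional subspaces spanned by heavy Paulis accounts for the $n^{O(d)}$ factor.

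\textbf{Step 3: localizing and local tomography.} From the $V_a$ I would build a Clifford $\widetilde U$ by symplectic Gram--Schmidt, mapping each $V_a$ onto the Paulis of a $d$-qubit register. I would then apply $\widetilde U^\dagger$ to fresh copies and perform full tomography of each register with $O(4^d d/\epsilon_a^2)$ copies, setting $\epsilon_a=\varepsilon/n$. The product estimate, re-encoded, has trace error at most $\sum_a\epsilon_a$ by subadditivity for tensor products. The $2^{O(d\log d)}$ factor comes from grouping enumeration (partitions of $O(d)$ generators) and local tomography. A union bound over all estimates gives failure probability $\delta$.

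\textbf{Main obstacle.} The hardest part is robustness in Step 2. Pauli directions with small expectation, below $\eta$, cannot be detected. A block $\rho_a$ may even be maximally mixed in some directions, so $G_a$ is not identifiable. I would argue that identifiability is unnecessary. Any decomposition consistent with the heavy Paulis and the cumulant tests gives a state $\widetilde\rho$ whose Pauli coefficients on undetected directions are all small. I would then try to bound the trace-norm error from merging or splitting sectors. The approach is to show that the Paulis wrongly grouped carry weight $\le \mathrm{poly}(\eta)$ in a fixed $d$-local frame. Local Pauli-to-trace norm conversion then costs a factor $2^{O(d)}$, and choosing $\eta=\varepsilon/\mathrm{poly}(n)2^{O(d)}$ closes the argument. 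The step most likely to fail is controlling the mixed-cumulant thresholds when sectors are merged. If it does, I would fall back to adding grouped sectors greedily and bound the error that each merge introduces.
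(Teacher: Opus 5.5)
Your outline (Bell scores, Clifford-transported block algebras, signed cumulants from single-copy joint measurements, symplectic localization, local tomography) matches the paper's architecture. However, Step 2 has a genuine gap: nothing in it separates transformed single-block Paulis from composites, and the candidate space it enumerates is not polynomial.

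Score factorization only gives $s_\rho(P)\le s_\rho(Q_a^\ell)$ for a component $Q_a^\ell$ of $P$. Equality holds whenever the other factors have unit score, and the difference falls below the sampling resolution when they have near-unit score. Stabilizer dressing $s_\rho(S\odot P)=s_\rho(P)$ then creates whole degenerate families, so $|\mathcal L|$ can be $2^{\Omega(n)}$: $t$ independent exact stabilizers already give $2^t$ unit-score Paulis. Commutation tests alone cannot tell a composite from a single-block direction either. Enumerating $2d$-dimensional subspaces spanned by elements of $\mathcal L$ therefore means $|\mathcal L|^{O(d)}$ candidates. Each sign-sensitive test you propose needs fresh single-copy measurements of its own commuting tuple, so the copy count is not $n^{O(d)}$.

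The paper closes this with two ingredients your plan lacks.
\begin{itemize}
\item \emph{Certified stabilizer peeling.} A grid threshold $h$ is accepted only when the inner and outer empirical high-score spans coincide. That span is automatically commuting; it is diagonalized to $Z_1,\dots,Z_t$, and the syndrome signs are deferred to separate signed measurements. This guarantees every visible residual direction has score at most $1-\lambda$.
\item \emph{The ranking gap.} That ceiling gives $s(R_a^\ell)-s(A)\ge\lambda\,s(R_a^\ell)\ge\lambda(\theta-\tau_{\mathrm{rank}})$ for any component of a visible composite $A$. Under $\lambda(\theta-\tau_{\mathrm{rank}})>2\tau_{\mathrm{rank}}$, every component therefore precedes every composite in the empirical ranking.
\end{itemize}
One greedy pass (skip what is already spanned, dress, append a singleton or complete a triple) then returns at most $m\le n$ block-pure, mutually commuting sectors. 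Their span contains every above-threshold residual direction. Only these $L\le n$ sectors enter the $O(dn^{d+1}4^d)$ cumulant tests, which is where $n^{O(d)}$ actually comes from.

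The grouping test and the error budget are also incomplete. As displayed, $\langle PQ\rangle-\langle P\rangle\langle Q\rangle=0$ for $P,Q$ in different groups is not a factorization certificate. In the paper's three-qubit example $\widetilde\rho_{\mathrm{ex}}$, every pairwise cumulant of the recovered sectors vanishes while $\kappa(P_\alpha,P_\beta,P_\gamma)=-1/8$. A decomposition passing your test therefore splits a genuine block, and the product of marginals has $\langle Z_1Z_2Z_3\rangle=1/8$ instead of $0$, a constant trace-norm error. You need three things here:
\begin{itemize}
\item mixed cumulants up to order $d$, with a full rescan after each merge;
\item exact vanishing of cross-block cumulants up to the peeling floor $4\Gamma_d\sqrt{\varepsilon_{\mathrm{peel}}}$, which rules out false merges;
\item the moment--cumulant bound $|M(T)-\prod_{h\in T}M(\{h\})|\le(1+\xi)^{2^q-q-1}-1$, which turns small surviving cumulants into a small splitting error.
\end{itemize}

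Finally, ``trace error at most $\sum_a\epsilon_a$'' covers only tomography. You must add the distance of the localized state from a product over registers: peeling leakage $2\sqrt{\varepsilon_{\mathrm{peel}}}$, missed Pauli mass, and splitting error. Because you tomograph the actual marginals, this structural term is multiplied by $\widehat K+1$. ``All undetected coefficients are small'' is not enough by itself, since there are exponentially many undetected composites. The paper instead bounds the missed mass blockwise, via the conditional expectation onto the recovered hidden-local algebra. There are at most $4^d$ omitted directions per hidden block, each with score at most $\theta+\tau_{\mathrm{rank}}$, and telescoping over blocks gives $n2^d\sqrt{\theta+\tau_{\mathrm{rank}}}$.

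A minor point: for mixed $\rho$ the weights $2^{-n}\Tr(\rho P)^2$ sum to the purity, not to one. The scores must therefore be read off the Bell parity moments, not sampled as a characteristic distribution.
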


Theorem~\ref{thm:informal-cebp-tomography} establishes that the hidden block-product structure remains learnable even after an unknown global Clifford transformation has obscured the underlying tensor-product decomposition. In particular, the copy complexity remains polynomial in $n$ for every fixed $d$, despite the fact that the resulting physical states can simultaneously possess extensive entanglement and magic.

Figure~\ref{fig:cebp-overview}(b) summarizes our tomography protocol.  To uncover this hidden structure, the learner combines Bell measurements on many copy pairs with commuting-Pauli measurements on fresh single copies, the latter supplying the signed moments needed for cumulant grouping.  Classical postprocessing uses these data to construct an effective decoding Clifford $\widetilde U_{\mathrm c}^{\dagger}$; applying it to fresh copies exposes the recovered local blocks, which can then be estimated by local tomography.  The next section develops these steps and their sample complexity in detail.

\section{Sample-Efficient Learning Protocol}
\label{sec: protocol}
We now give a constructive learning strategy for CEBP states and write
$d$ as the learner-known valid hidden-block-size cap.
There are two ingredients underlying the protocol: squared Pauli scores factor across the
unknown hidden blocks, and a transversal two-copy Bell measurement estimates
all such scores simultaneously.  From these foundations, the operational
stages are certified stabilizer peeling, rank-guided Pauli-algebra recovery,
bounded-order cumulant grouping, Clifford localization, and local tomography.
Detailed certification conditions, calibrated parameters, copy ledgers, and
proofs are deferred to Supplemental Material (SM),
Secs.~\ref{sec:supp-notation}--\ref{sec:supp-end-to-end}.

\subsection{Score factorization in a hidden Pauli frame}

Let $\Pc_n=\{I,X,Y,Z\}^{\otimes n}$ denotes the phase-free Pauli $n$-qubit Pauli group. For $P\in\Pc_n$, define its squared Pauli score by
\begin{equation}
s_\rho(P):=\left|\Tr(\rho P)\right|^2.
\end{equation}
For the CEBP representation in
Eq.~\eqref{eq:clifford-encoded-block-product}, write
\begin{equation}
U_{\mathrm c}^\dagger P U_{\mathrm c}
=
\chi(P)\bigotimes_{a=1}^K P_a,
\qquad
P_a\in\Pc_{B_a},
\qquad
\chi(P)\in\{\pm1\}.
\end{equation}
The latent product across blocks then gives
\begin{equation}
s_\rho(P)
=
\prod_{a=1}^K s_{\rho_a}(P_a),
\qquad
s_{\rho_a}(P_a):=\left|\Tr(\rho_aP_a)\right|^2.
\label{eq:main-transformed-pauli-factorization-intuition}
\end{equation}
Only cross-block factorization is asserted: each $\rho_a$ may contain
arbitrary correlations and entanglement within $B_a$. Because each block score lies in $[0,1]$, the score of a nonidentity block component is at least as large as the score of any composite Pauli containing that component. This monotonicity provides the basic signal for recovery.

To see the signal in its simplest form, temporarily take single-qubit hidden
blocks, so the latent state is $\bigotimes_{i=1}^n\rho_i$.  The unknown
Clifford transports the latent Pauli directions to physical Pauli strings
\begin{equation}
Q_i^\alpha
:=
U_{\mathrm c}\sigma_i^\alpha U_{\mathrm c}^\dagger,
\qquad
\alpha\in\{x,y,z\}.
\end{equation}
Every $P\in\Pc_n$ can be written, up to sign, as
$\prod_iQ_i^{\alpha_i}$ with $\alpha_i\in\{0,x,y,z\}$ and $Q_i^0=I$.
Equation~\eqref{eq:main-transformed-pauli-factorization-intuition} becomes
\begin{equation}
s_\rho(P)
=
\prod_{i=1}^n
\left|\Tr\!\left(\rho_i\sigma_i^{\alpha_i}\right)\right|^2,
\qquad
\sigma_i^0:=I_i,
\end{equation}
so high-scoring Paulis form a natural candidate pool for the transported
single-qubit frame.

Scores alone do not identify those latent Pauli directions.  A composite Pauli may
outrank an unrelated single-sector Pauli even though it cannot outrank its own
nonidentity components.  The protocol therefore also uses the preserved
symplectic structure: the three Pauli directions of one latent qubit anticommute pairwise,
whereas Pauli directions associated with distinct latent qubits commute.  
Rank-guided compatibility tests combine this commutation structure with score ordering to distinguish Pauli directions supported within a single latent block from composite directions spanning multiple blocks.

A second ambiguity comes from exact stabilizers.  If $S$ has unit score and
$\odot$ denotes phase-free Pauli multiplication, then stabilizer dressing
leaves all scores invariant, $s_\rho(S\odot P)=s_\rho(P)$.  
A Pauli direction with empirical score close to one need not be an exact stabilizer, yet it can still create an approximate degeneracy in the score landscape and interfere with subsequent recovery. The protocol therefore peels off such near-stabilizer directions while explicitly controlling the approximation error introduced by this removal.

\subsection{Two-copy Bell-sampling score estimation}

\begin{figure*}[!t]
  \centering
    \includegraphics[width=0.99\linewidth]{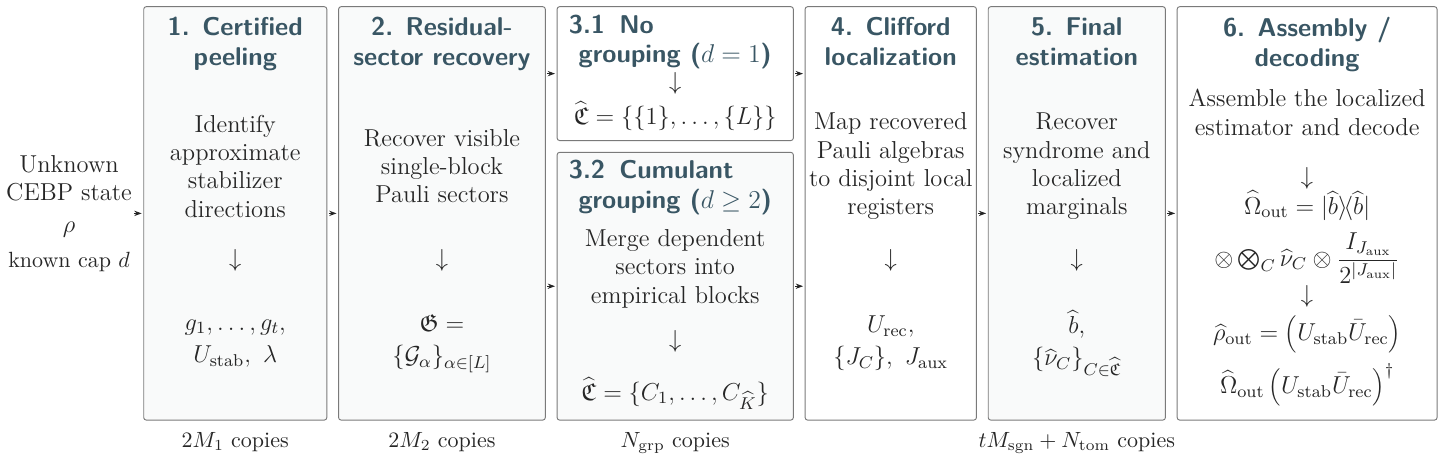}
    \caption{Schematic of the CEBP tomography protocol. Starting from an unknown CEBP state
$\rho$ and a known upper bound $d$ on the hidden block size, the learner first
peels approximate stabilizer directions, recovering
$g_1,\ldots,g_t$, $U_{\mathrm{stab}}$, and $\lambda$, and then performs
rank-guided recovery of the visible single-block Pauli sectors
$\mathfrak G$. For $d=1$, each recovered sector forms its own empirical block,
whereas for $d\ge2$ cumulant grouping merges dependent sectors to obtain the
empirical partition $\widehat{\mathfrak C}$. A simultaneous Clifford
localization maps the recovered Pauli algebras to disjoint registers
$\{J_C\}$ and $J_{\mathrm{aux}}$. Final estimation recovers the syndrome
$\widehat b$ and the localized marginals $\{\widehat\nu_C\}$, which are
assembled into $\widehat\Omega_{\mathrm{out}}$ and decoded through
$U_{\mathrm{stab}}\bar U_{\mathrm{rec}}$ to obtain
$\widehat\rho_{\mathrm{out}}$. The fresh-copy budgets used by the measurement
stages are indicated below the corresponding boxes.}  
    \label{fig:cebp-protocol-schematic}
\end{figure*}

In one Bell round, take two independent copies $\rho\otimes\rho$ and perform a
Bell-basis measurement transversally on corresponding qubit pairs.  Equivalently,
for each $i\in[n]$, jointly measure the commuting observables
\begin{equation}
\sigma_i^{x,(1)}\sigma_i^{x,(2)},\qquad
\sigma_i^{y,(1)}\sigma_i^{y,(2)},\qquad
\sigma_i^{z,(1)}\sigma_i^{z,(2)}.
\end{equation}
This measurement requires neither access to $\rho^*$ nor knowledge of the
encoding Clifford $U_{\mathrm c}$.  Its use as a uniform score-estimation primitive
builds on Bell sampling for stabilizer learning~\cite{Montanaro2017}; its
simultaneous-query role is also related to broader observable-prediction
methods~\cite{HuangKuengPreskill2021}.

Let
\begin{equation}
B_{i,j}^{\alpha}\in\{-1,+1\},
\quad
i\in[n],\quad \alpha\in\{x,y,z\},\quad j\in[M],
\end{equation}
be the recorded eigenvalues from $M$ rounds.  For
$P=\bigotimes_iP_i\in\Pc_n$, use the labelled support
$\supp(P):=\{(i,\alpha):P_i=\sigma_i^\alpha\}$ and define
\begin{equation}
b_j(P)
:=
\prod_{(i,\alpha)\in\supp(P)}B_{i,j}^{\alpha},
\qquad
\hat s(P)
:=
\frac1M\sum_{j=1}^M b_j(P).
\end{equation}
First, this estimator is unbiased. 
\begin{lemma}[Expectation of the Bell-sampling score product]
\label{lem:main-bell-product-expectation}
For every $P\in\Pc_n$,
\begin{equation}
\mathbb E[\hat s(P)]=s_\rho(P).
\end{equation}
\end{lemma}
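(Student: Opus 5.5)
The proof rests on two facts. First, $b_j(P)$ is the outcome of measuring the observable $P\otimes P$ on $\rho\otimes\rho$. Second, $\Tr[(\rho\otimes\rho)(P\otimes P)]=\Tr(\rho P)^2$, and this equals $|\Tr(\rho P)|^2$ because $P$ is Hermitian, so $\Tr(\rho P)$ is real.

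\textbf{Step 1: reduce to a single round.} The rounds are identically distributed, so linearity of expectation gives $\mathbb E[\hat s(P)]=\mathbb E[b_1(P)]$. It therefore suffices to treat one round.

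\textbf{Step 2: identify the product as a measured observable.} On each pair $i$, the three observables $\sigma_i^{\alpha,(1)}\sigma_i^{\alpha,(2)}$ commute pairwise, and observables on different pairs act on disjoint qubits. The whole family $\{\sigma_i^{\alpha,(1)}\sigma_i^{\alpha,(2)}\}_{i,\alpha}$ therefore commutes, and the transversal Bell measurement is a joint projective measurement of all of them. Its Bell-basis projectors are common eigenprojectors of the family.

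For a joint measurement of commuting observables, the product of recorded eigenvalues over a subset equals the eigenvalue of the operator product on the observed joint eigenspace. Hence
\[
\mathbb E[b_1(P)]=\Tr\!\Bigl[(\rho\otimes\rho)\prod_{(i,\alpha)\in\supp(P)}\sigma_i^{\alpha,(1)}\sigma_i^{\alpha,(2)}\Bigr].
\]

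\textbf{Step 3: rearrange the operator product.} Each site $i$ contributes at most one factor, namely $\sigma_i^{\alpha,(1)}\sigma_i^{\alpha,(2)}$ with $P_i=\sigma_i^\alpha$; sites with $P_i=I$ contribute nothing. Factors on distinct sites act on different qubits. Factors on copy $(1)$ act on different qubits from those on copy $(2)$. So all factors can be reordered freely with no sign, and the product equals $P^{(1)}\otimes P^{(2)}=P\otimes P$.

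\textbf{Step 4: conclude.} We get $\Tr[(\rho\otimes\rho)(P\otimes P)]=\Tr(\rho P)^2$. Since $\Tr(\rho P)$ is real, this is $|\Tr(\rho P)|^2=s_\rho(P)$.

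\textbf{Main subtlety.} The only point needing care is Step 2. The single-pair products are not functionally independent: on each pair, $(XX)(YY)=-ZZ$ as operators. The multiplicativity-of-eigenvalues argument must therefore be applied to the actual joint measurement, and the recorded values $B^\alpha_{i,j}$ must be understood as the joint eigenvalues of that measurement. These values are automatically consistent with the operator relation, since on Bell states $B^xB^yB^z=-1$.

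Once this is stated, the only per-site ingredient left is the trivial single-site identity $\sigma^\alpha\otimes\sigma^\alpha$ on the relevant site, and the phase-free convention for $P$ means no extra sign enters.
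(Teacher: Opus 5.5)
Your proposal is correct and follows essentially the same route as the paper's proof. Both identify $b_j(P)$ as the outcome of the product of the commuting transversal correlators, regroup that product by copy into $P^{(1)}P^{(2)}$, and use $\Tr[(\rho\otimes\rho)(P\otimes P)]=\Tr(\rho P)^2=|\Tr(\rho P)|^2$ from Hermiticity, then average over rounds. Your extra remark on the per-pair relation $(XX)(YY)=-ZZ$ is harmless, since each site contributes at most one recorded factor to $b_j(P)$.
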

\noindent Second, one can estimate the score of all $P\in\Pc_n$ with polynomial samples. 
\begin{proposition}[Uniform Bell-sampling score concentration]
\label{prop:main-uniform-bell-score-concentration}
For $M$ Bell rounds and $\zeta\in(0,1)$, with probability at least
$1-\zeta$,
\begin{equation}
\begin{aligned}
\left|\hat s(P)-s_\rho(P)\right|
&\le\tau(M,n,\zeta),
&&\forall P\in\Pc_n,\\
\tau(M,n,\zeta)
&:=\sqrt{\frac{2\log(2\cdot4^n/\zeta)}{M}}.
\end{aligned}
\label{eq:main-bell-uniform-guarantee}
\end{equation}
\end{proposition}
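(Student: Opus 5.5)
The plan is to apply Hoeffding's inequality to each Pauli string and then a union bound over all $4^n$ strings.

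Fix $P\in\Pc_n$. The Bell rounds are run on independent pairs of copies, so the outcomes $b_1(P),\dots,b_M(P)$ are i.i.d. random variables. Each one is a product of recorded eigenvalues $B^\alpha_{i,j}\in\{-1,+1\}$, so it takes values in $\{-1,+1\}$. By Lemma~\ref{lem:main-bell-product-expectation}, their common mean is $s_\rho(P)$.

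Hoeffding's inequality for $M$ i.i.d. variables with range of length $2$ then gives
\begin{equation}
\Pr\!\left\{\left|\hat s(P)-s_\rho(P)\right|>t\right\}\le 2\exp\!\left(-\frac{2M^2t^2}{M\cdot 2^2}\right)=2\exp\!\left(-\frac{Mt^2}{2}\right).
\end{equation}

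Next we take a union bound over all $|\Pc_n|=4^n$ phase-free Pauli strings, including the identity, for which the bound holds trivially. This shows the probability that some $P$ violates the bound is at most $2\cdot4^n\exp(-Mt^2/2)$. Setting this equal to $\zeta$ and solving gives $t=\sqrt{2\log(2\cdot4^n/\zeta)/M}=\tau(M,n,\zeta)$, which is exactly the threshold in Eq.~\eqref{eq:main-bell-uniform-guarantee}.

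The one step that needs care is the claim that every estimate uses the same $M$ rounds. All $\hat s(P)$ are computed from one shared data set, so they are highly correlated across different $P$. This does not matter: the union bound requires no independence across $P$, only that each individual $\hat s(P)$ is an average of i.i.d. bounded terms. Everything else is routine. As a sanity check, $\log(2\cdot4^n/\zeta)=O(n+\log(1/\zeta))$, so $M=O((n+\log(1/\zeta))/\tau^2)$ rounds suffice for a target accuracy $\tau$.
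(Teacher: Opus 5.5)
Your proposal is correct and is essentially the paper's proof. The paper also applies Hoeffding's inequality to the i.i.d.\ $\pm1$ parities $b_j(P)$, whose mean is $s_\rho(P)$ by the unbiasedness lemma, to get $2e^{-Mt^2/2}$, then union-bounds over the $4^n$ phase-free Paulis and sets $t=\tau(M,n,\zeta)$.
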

Thus one stored record supports every later fixed or adaptively selected
score query on the same state, with scores evaluated on demand.
Additive accuracy $\eta$ needs
$M=O((n+\log(1/\zeta))/\eta^2)$ Bell rounds.  The proof is in the SM,
Sec.~\ref{sec:supp-bell-estimator}.

Measurement records are reused only for queries concerning the same fixed state. Whenever the protocol changes the state or begins a stage whose analysis requires independent data, it uses fresh copies.
For example, we have $2M_1$ copies reserved for peeling.  After the
peeling Clifford is fixed, fresh copies are transformed by
$U_{\mathrm{stab}}^\dagger$, and the full peeled state is retained with its
syndrome-prefix and residual coordinates identified.  An independent $M_2$
recovery Bell record is collected on that full state.  Cumulants, syndrome
signs, and empirical-register tomography use separate fresh-copy pools.  A
Bell round consumes two state copies, so $M_1$ and $M_2$ rounds contribute
$2M_1$ and $2M_2$ copies, respectively.

\subsection{Sketch of the algorithm}

The protocol has five operational learning stages: certified peeling,
residual-sector recovery, cumulant grouping, Clifford localization, and final estimation. 
Figure~\ref{fig:cebp-protocol-schematic} summarizes the protocol flow,
the learner-visible outputs at each stage, and the corresponding fresh-copy
budgets.
Measurement stages use disjoint fresh-copy pools and never reuse
data when their guarantees require independence.  Deterministic assembly and
Clifford decoding follow those stages and consume no further copies.  The recovered
Cliffords, sectors, empirical groups, registers, syndrome estimate, and local
estimates are learner-visible; hidden assignments, aggregate true-block
registers, and comparison states are proof-only.

The $M_1$ Bell record estimates the Pauli scores of the original state. From
the near-unit-score directions, certified peeling selects a commuting basis
$g_1,\ldots,g_t$ satisfying $s_\rho(g_j)\ge 1-\lambda$. These operators are
approximate stabilizers of $\rho$. The learner then constructs a Clifford
$U_{\mathrm{stab}}$ that maps this basis to standard $Z$ operators, thereby
separating the syndrome and residual coordinates. Writing $m:=n-t$, we have
\begin{equation}
U_{\mathrm{stab}}^\dagger g_jU_{\mathrm{stab}}=Z_j
\quad (j\in[t]),
\qquad
\widetilde\rho
:=
U_{\mathrm{stab}}^\dagger\rho U_{\mathrm{stab}}.
\end{equation}

On the peeling event, there exist a syndrome $b\in\{0,1\}^t$ and a residual
state $\rho_{\mathrm{res}}$ on the remaining $m$ qubits such that, with
$\varepsilon_{\mathrm{peel}}:=t\lambda/2$,
\begin{equation}
\left\|
\widetilde\rho
-
|b\rangle\!\langle b|\otimes\rho_{\mathrm{res}}
\right\|_1
\le
2\sqrt{\varepsilon_{\mathrm{peel}}}
=
\sqrt{2t\lambda}.
\end{equation}
Thus the quality of the syndrome--residual decomposition is controlled by the
peeling rank $t$ and the accepted score deficit $\lambda$. 
Moreover, every Pauli operator $R$  on the remaining $m$ qubits satisfies
$s_{\widetilde\rho}(R)\le 1-\lambda$. Consequently, no 
residual Pauli direction has a near-unit score, removing the stabilizer-score
degeneracy that would otherwise obstruct rank-guided residual recovery.

An independent $M_2$ Bell record on $\widetilde\rho$ drives rank-guided
recovery.  The learner ranks all operators of form $Z^c\otimes P$ based on their estimated score, where
$P$ is the residual action on $m$ remaining qubits and $Z^c$ is its syndrome-prefix lift in the first $t$ qubits. 
The learner then performs compatibility tests that preserve the required commutation relations and remove already represented components.
When the score estimation is accurate enough,  each transformed single-block
component precedes any composite containing it because of the hidden block-product structure.

As a result, the recovery stage outputs a sector family
$\mathfrak G=\{\mathcal G_\alpha\}_{\alpha\in[L]}$, where each sector is either a
Pauli triple
$\mathcal G_\alpha=\{\widehat R_\alpha^x,\widehat R_\alpha^y,
\widehat R_\alpha^z\}$ or a singleton
$\mathcal G_\alpha=\{\widehat R_\alpha^x\}$. Operators belonging to distinct
sectors commute, whereas the three operators within each Pauli triple sector
pairwise anticommute. On the successful recovery event, the sectors are
block-pure: all operators in a given sector are transformed Pauli operators
originating from the same hidden block. The recovered family is also
thresholded-complete, in the sense that every nonidentity residual Pauli
direction whose score exceeds the lower recovery threshold lies in the span
generated by the recovered sectors.

The recovered sectors identify individual latent Pauli directions but not yet which sectors belong to the same hidden block. 
For $d\ge2$, the learner must determine which recovered sectors originate
from the same hidden block. We use mixed cumulants of commuting Pauli
observables as the grouping signal
\cite{doubilet1972foundations,smith2020multivariateKstatistics}. A mixed
cumulant vanishes when the tested observables factor across independent
components, whereas a nonzero cumulant certifies residual dependence among
the corresponding sectors. Bell sampling cannot estimate these cumulants
directly because it reveals only squared Pauli expectations and therefore
discards the signs required by the moment--cumulant relation. Instead, the
learner uses a fresh $N_{\mathrm{grp}}$ pool and jointly measures each queried
commuting Pauli tuple on ordinary single copies of the peeled state. The
grouping routine merges clusters connected by a witnessed cumulant and
rescans all admissible orders after every merge.

The routine outputs a learner-visible partition of the recovered sector family,
\begin{equation}
\widehat{\mathfrak C}
=
\{C_1,\ldots,C_{\widehat K}\},
\quad
[L]
=
\bigsqcup C_i.
\end{equation}
Each cluster $C_i\in\widehat{\mathfrak C}$ consists of recovered sector indices
originating from a single hidden block, so the resulting partition contains no
cross-block merges. The partition may nevertheless over-refine a hidden block.
In that case, the admissible mixed cumulants across distinct recovered clusters
are bounded by the calibrated grouping scale, which quantitatively controls any
within-block dependence left unmerged.
The exact construction and calibrations are in the SM,
Sec.~\ref{sec:supp-cumulant-grouping}.
For $d=1$, grouping is unnecessary: each recovered sector forms its own empirical cluster.

Once the empirical partition $\widehat{\mathfrak C}$ is determined, each empirical cluster represents one candidate localized block. The learner next maps its recovered Pauli algebra onto a physical register $J_C$.
Each Pauli triple sector already supplies a symplectic pair, while each singleton
sector is augmented by a compatible conjugate partner. These partners are
chosen simultaneously so that commutation between distinct empirical clusters
is preserved. Since the recovered sector family is only
thresholded-complete, its generated Pauli groups need not span the entire
residual Pauli space. The remaining symplectic coordinates are therefore
placed in an auxiliary register $J_{\mathrm{aux}}$, giving the disjoint
decomposition
\begin{equation}
[m]
=
\left(
\bigsqcup_{C\in\widehat{\mathfrak C}}J_C
\right)
\sqcup J_{\mathrm{aux}},
\qquad
|J_C|\le d.
\end{equation}

For each empirical cluster $C\in\widehat{\mathfrak C}$, define the phase-free
Pauli group generated by its recovered sectors as
\begin{equation}
\mathcal P(C)
:=
\left\langle
\bigcup_{\alpha\in C}\mathcal G_\alpha
\right\rangle.
\end{equation}
The simultaneous symplectic completion yields a residual Clifford
$U_{\mathrm{rec}}$ such that
\begin{equation}
U_{\mathrm{rec}}^\dagger
\mathcal P(C)
U_{\mathrm{rec}}
\subseteq
\Pc(J_C)
\qquad
\text{for every }C\in\widehat{\mathfrak C}.
\end{equation}
Moreover, the added conjugate partners complete the recovered directions to
the full Pauli group on $J_C$. Thus a single global Clifford localizes all
empirical clusters onto pairwise disjoint physical registers.

After the recovered sectors are converted into small and pairwise disjoint physical registers, local tomography can be performed.
Let us define
\begin{equation}
\bar U_{\mathrm{rec}}
:=
I^{\otimes t}\otimes U_{\mathrm{rec}},
\qquad
\rho_{\mathrm{loc}}
:=
\bar U_{\mathrm{rec}}^\dagger
U_{\mathrm{stab}}^\dagger
\rho
U_{\mathrm{stab}}
\bar U_{\mathrm{rec}}.
\label{eq:mt-computable-localized-state}
\end{equation}
The combined action of $U_{\mathrm{stab}}$ and $U_{\mathrm{rec}}$ transforms $\rho$ into an approximate block-product state
\begin{equation}
\rho_{\mathrm{loc}} \approx
\Omega_{\mathrm{emp}}
:=
|b\rangle\!\langle b|
\otimes
\left(
\bigotimes_{C\in\widehat{\mathfrak C}}\omega_C
\right)
\otimes
\frac{I_{J_{\mathrm{aux}}}}{2^{|J_{\mathrm{aux}}|}}, 
\label{eq:mt-localized-emp-block-model}
\end{equation}
where
$\omega_C\in\D\!\left((\mathbb C^2)^{\otimes |J_C|}\right)$.
The learner then performs local tomography on the actual
marginals
\begin{equation}
\nu_C
:=
\Tr_{\overline{J_C}}\!\left(\rho_{\mathrm{loc}}\right),
\qquad
C\in\widehat{\mathfrak C},
\end{equation}
and obtains physical estimates $\widehat\nu_C$. An independent pool of signed
measurements similarly recovers the syndrome estimate $\widehat b$. The
localized estimator is then
\begin{equation}
\widehat\Omega_{\mathrm{out}}
:=
|\widehat b\rangle\!\langle\widehat b|
\otimes
\left(
\bigotimes_{C\in\widehat{\mathfrak C}}
\widehat\nu_C
\right)
\otimes
\frac{I_{J_{\mathrm{aux}}}}{2^{|J_{\mathrm{aux}}|}}.
\end{equation}
Because the registers $J_C$ are pairwise disjoint, one can measure one chosen local Pauli observable on every register
simultaneously in each round. Consequently, the local tomography cost is governed by the
largest register size $\max_C|J_C|\le d$, rather than by the sum of all
register sizes.
Finally, the learner decodes the localized estimator through the recovered
Cliffords:
\begin{equation}
\widehat\rho_{\mathrm{out}}
:=
\left(
U_{\mathrm{stab}}\bar U_{\mathrm{rec}}
\right)
\widehat\Omega_{\mathrm{out}}
\left(
U_{\mathrm{stab}}\bar U_{\mathrm{rec}}
\right)^\dagger.
\end{equation}
The complete procedure is summarized in Alg.~\ref{alg:block-product-main-tomography}.
\begin{inlinealgorithm}{alg:block-product-main-tomography}{Tomography of Clifford-encoded block-product states}
\begin{algorithmic}[1]
\Require Independent-copy access to $\rho$; $n$, known valid cap $d$, target
trace-norm accuracy $\varepsilon$, and failure probability $\delta$.
\Ensure Either \textsc{failure} or a compact physical
estimator $\widehat\rho_{\mathrm{out}}$. 
\State Calibrate all thresholds and budgets from $(n,d,\varepsilon,\delta)$.
Reserve the common $2M_1,2M_2$ Bell pools and $N_{\mathrm{tom}}$ for local
tomography, where $N_{\mathrm{tom}}:=N_{1\mathrm q}$ for $d=1$ and
$N_{\mathrm{tom}}:=N_{\mathrm{bp}}$ for $d\ge2$.
Set $N_{\mathrm{grp}}=0$ for $d=1$; for $d\ge2$, reserve
$N_{\mathrm{grp}}$ fresh copies for grouping.
\State Use $M_1$ Bell rounds on $\rho$ to certify peeling and construct
$g_1,\ldots,g_t,U_{\mathrm{stab}}$; fail on a visible certification failure.
\State Reserve the realized $tM_{\mathrm{sgn}}$ sign
pool and set $m=n-t$.
\State Apply $U_{\mathrm{stab}}^\dagger$ to the second Bell pool and use its
$M_2$ rounds to recover a family $\mathfrak G$ of $L$ sectors.
\If{$d=1$}
  \State No grouping is needed; $\widehat{\mathfrak C}\gets \{\{1\},\dots,\{L\}\}$.
\Else
  \State Use $N_{\mathrm{grp}}$ fresh copies for
  calibrated cumulant merging and rescanning to obtain
  $\widehat{\mathfrak C}$.
\EndIf
\State Complete all groups simultaneously, construct $U_{\mathrm{rec}}$, and
compute $J_C,J_{\mathrm{aux}}$.

\State Use the reserved $tM_{\mathrm{sgn}}$ fresh copies to recover
$\widehat b$.
\State Estimate every marginal $\nu_C$  to obtain $\widehat\nu_C$  using $N_{\mathrm{tom}}$ fresh copies. 
\State Return 
$\widehat\rho_{\mathrm{out}}$
\end{algorithmic}
\end{inlinealgorithm}

\subsection{End-to-End Guarantee and Sample Complexity}

\begin{figure*}[!t]
    \centering
    \includegraphics[width=0.99\linewidth]{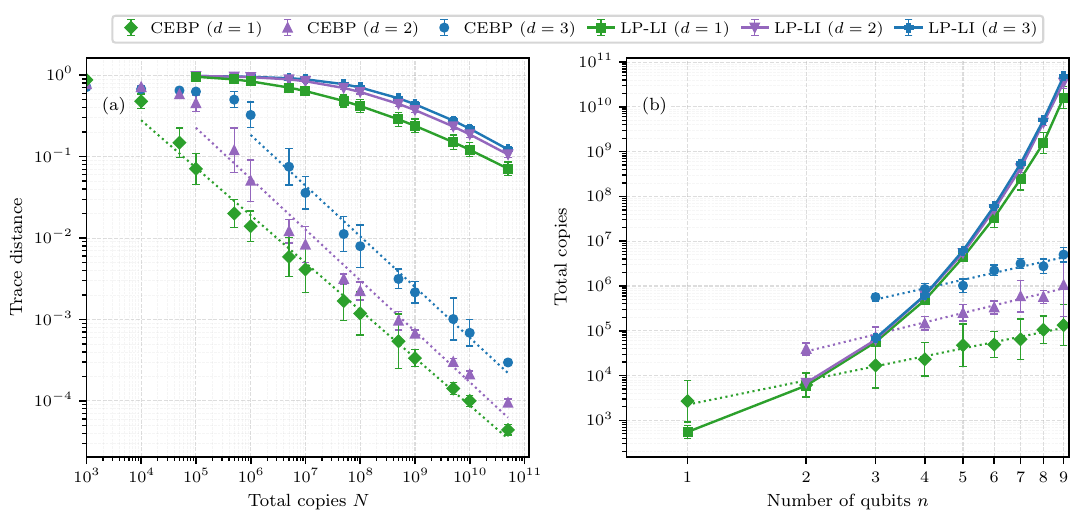}
\caption{
Sample complexity of CEBP tomography versus local-Pauli linear inversion (LP-LI).
For each $(n,d)$, we generate 20 independent CEBP target states.
The $n$ qubits are partitioned into blocks of maximum size $d$, with
$d=1$ corresponding to $(1,\ldots,1)$,
$d=2$ to $(2,\ldots,2,1/2)$ with a final one-qubit block when $n$ is odd,
and $d=3$ to $(3,\ldots,3,1/2/3)$ with a final block of size one or two when required.
For each block, the density matrix is sampled independently from the
Hilbert--Schmidt ensemble using a normalized complex Ginibre matrix~\cite{ZyczkowskiSommers2001}, and the
resulting block-product state is encoded by a random Clifford circuit generated
by a nonuniform $5n$-step random walk over Hadamard, phase, and CNOT gates.
(a) Trace-distance reconstruction error
$D_{\mathrm{tr}}(\widehat\rho,\rho):=\frac12\|\widehat\rho-\rho\|_1$ as a function of the
total number of copies $N$ for $n=10$ qubits and $d=1,2,3$.
CEBP results are shown as unconnected markers with error bars, while LP-LI
results are shown as solid connected curves.
Points show the geometric mean over completed runs, comprising 20 independently
generated target states with 17 independent holdout measurement seeds per state
for CEBP and 6 independent measurement replicates per state for LP-LI.
Error bars indicate one standard deviation in log space.
Dotted lines show power-law fits $D_{\mathrm{tr}}\propto N^{p}$ to the CEBP
data, using $N\geq 10^4$, $10^5$, and $10^6$ for $d=1,2,3$, respectively.
The fitted exponents and coefficients of determination
$(p,R^2_{\log_{10}})$ are
$(-0.584,0.992)$, $(-0.625,0.980)$, and $(-0.622,0.983)$
for $d=1,2,3$, respectively.
(b) Total number of copies required to achieve
$D_{\mathrm{tr}}\leq 0.05$ as a function of the number of qubits $n$, for
$d=1,2,3$.
Points show the geometric mean over 20 independently generated target states,
with error bars indicating one standard deviation in log space.
For the CEBP data, empirical power-law fits $N\propto n^{p}$ using all
available data points give
$(p,R^2_{\log_{10}})=(1.795,0.985)$, $(2.155,0.986)$, and $(1.974,0.942)$
for $d=1,2,3$, respectively.
All reported power-law regressions and $R^2$ values are evaluated in
$\log_{10}$ space.
Across the simulated regime, CEBP tomography requires substantially fewer
copies than LP-LI, while the CEBP sample complexity increases with the block
size $d$ at fixed system size.
}
\label{fig:cebp_vs_lpli}
\end{figure*}

By carefully choosing suitable parameters and thresholds, one can show  Alg.~\ref{alg:block-product-main-tomography} performs tomography for a CEBP state using a polynomial number of samples. 

\begin{theorem}[CEBP tomography]
\label{thm:main-calibrated-cebp-tomography}

Let $n,d\ge1$ and $0<\varepsilon,\delta<1$. Let $\rho$ be an $n$-qubit CEBP
state whose hidden blocks all have size at most $d$.

There exists a calibrated procedure implementing
Alg.~\ref{alg:block-product-main-tomography} such that, with probability at
least $1-\delta$, it returns an estimator $\widehat\rho_{\mathrm{out}}$
satisfying the trace-norm guarantee
\begin{equation}
\left\|
\widehat\rho_{\mathrm{out}}-\rho
\right\|_1
\le
\varepsilon.
\end{equation}

When $d=1$, the required number of copies of $\rho$ is
\begin{equation}
N_{\mathrm{tot}}^{(d=1)}
=
\widetilde O\!\left(
\frac{n^9}{\varepsilon^8}
\right).
\label{eq:main-calibrated-d-one-copies}
\end{equation}
For $d\ge2$, the required number of copies is
\begin{equation}
N_{\mathrm{tot}}^{(d\ge2)}
=
\widetilde O\!\left[
2^{O(d\log d)}
\left(
\frac{n^{19}}{\varepsilon^8}
+
\frac{n^{d+5}}{\varepsilon^2}
\right)
\right].
\label{eq:main-calibrated-d-ge-two-copies}
\end{equation}
Here $\widetilde O$ suppresses logarithmic factors in the displayed parameters
and in $\delta^{-1}$.

\end{theorem}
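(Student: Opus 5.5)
The proof composes stage-wise guarantees through a union bound over failure events and a triangle-inequality decomposition of the final trace-norm error. We allocate failure probability $\delta/6$ to each of the six measurement stages: the $M_1$ peeling Bell record, the $M_2$ recovery Bell record, the grouping pool $N_{\mathrm{grp}}$, the sign pool $tM_{\mathrm{sgn}}$, the local tomography pool $N_{\mathrm{tom}}$, and slack. Each stage's success event is defined purely in terms of its own fresh data, conditioned on the learner-visible outputs of earlier stages. This makes the analysis sequential and avoids independence issues.

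For the two Bell records we invoke Proposition~\ref{prop:main-uniform-bell-score-concentration} with accuracy parameters $\eta_1,\eta_2$, which costs $M_i=O((n+\log(1/\delta))/\eta_i^2)$. On the peeling event we use the stated bound $\|\widetilde\rho-|b\rangle\!\langle b|\otimes\rho_{\mathrm{res}}\|_1\le\sqrt{2t\lambda}$ with $t\le n$. This forces $\lambda=O(\varepsilon^2/n)$, and hence $\eta_1\lesssim\lambda$. We also need the residual gap $s(R)\le1-\lambda$.

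The error then splits as
\[
\|\widehat\rho_{\mathrm{out}}-\rho\|_1\le \|\rho_{\mathrm{loc}}-\Omega_{\mathrm{emp}}\|_1+\|\Omega_{\mathrm{emp}}-\widehat\Omega_{\mathrm{out}}\|_1 .
\]
This holds by unitary invariance, since the decoding Cliffords are exact. For the second term we use the fact that the registers are disjoint and the product states telescope. It is therefore bounded by $\Pr[\widehat b\ne b]$ plus $\sum_C\|\widehat\nu_C-\nu_C\|_1$ plus the gap between $\nu_C$ and $\omega_C$. Choosing $\omega_C:=\nu_C$ removes that gap, and the first term then measures how close $\rho_{\mathrm{loc}}$ is to a product of its own marginals. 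Simultaneous local Pauli tomography on at most $n$ registers of size at most $d$, each to accuracy $\varepsilon/n$ in trace norm, needs $\widetilde O(4^d\cdot 2^{O(d)} n^2/\varepsilon^2)$ copies. The sign pool is a Hoeffding estimate on each $g_j$ and costs $O(\log(n/\delta))$ per bit.

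The main obstacle is bounding $\|\rho_{\mathrm{loc}}-\Omega_{\mathrm{emp}}\|_1$. It has three sources:
\begin{enumerate}
\item The peeling residue, at most $\sqrt{2t\lambda}$.
\item Incompleteness. Pauli directions not captured by the thresholded-complete sector family, in particular everything landing in $J_{\mathrm{aux}}$ and cross-cluster products, have score below the lower recovery threshold $\theta$. This holds because every block component of a composite precedes it in score, so recovering components block-purely forces any omitted direction to be low-scoring.
\item Over-refinement in grouping, where admissible cross-cluster mixed cumulants are at most the grouping scale $\kappa$.
\end{enumerate}

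We would convert these to trace norm through a Pauli-expansion bound, $\|X\|_1\le 2^{n/2}\|X\|_2$. That bound is too lossy globally, so instead we work hidden-block by hidden-block. Each hidden block touches at most $d$ sectors and hence at most $2^{O(d)}$ relevant Pauli strings. Moment-to-cumulant inversion over partitions of sets of size at most $d$ costs $B_d=2^{O(d\log d)}$, which explains the $2^{O(d\log d)}$ prefactor. Summing over at most $n$ blocks by a hybrid argument gives an error of order $n\,2^{O(d)}(\sqrt\theta+B_d\kappa)$. Setting this to $\varepsilon/3$ gives $\theta,\kappa\sim\varepsilon^2/\mathrm{poly}(n)$.

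Finally, the recovery stage must resolve scores to within $\eta_2\ll\theta$ and also keep the ranking correct among competitors near $\theta$ and $1-\lambda$. Taking $\eta_2\sim\theta^2$ to certify strict precedence gives $M_2=\widetilde O(n^9/\varepsilon^8)$ in the $d=1$ case, matching the claimed bound.

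The grouping stage must estimate mixed cumulants of order at most $d$ for all $O(L^d)=O(n^d)$ admissible tuples, each to accuracy $\kappa/B_d$. By Hoeffding with a union bound this costs $\widetilde O(n^{d}\,2^{O(d\log d)}/\kappa^2)$ copies. Commuting tuples across disjoint tests cannot be parallelized in general, which is what produces the $n^{d+5}/\varepsilon^2$ term.

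Propagating the more stringent $\theta$ required for $d\ge2$, where sector completeness must hold jointly over clusters, gives the $n^{19}/\varepsilon^8$ term. Summing all budgets yields the stated totals. The delicate part I expect to fight with is showing that the hybrid argument really loses only a factor $n$, not $2^{|J_{\mathrm{aux}}|}$. This needs the block-pure structure to ensure $\rho_{\mathrm{loc}}$ restricted to the residual directions of each hidden block is exactly of product-with-maximally-mixed form up to those small scores.
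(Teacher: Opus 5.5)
Your pipeline and error decomposition follow the paper's architecture. Your $d=1$ parameter count also matches the paper's: with $\lambda\lesssim\theta\sim\varepsilon^2/n^2$ and the ranking margin $\tau_{\mathrm{rank}}\lesssim\lambda\theta$, one gets $M_2\sim n/(\lambda\theta)^2=n^9/\varepsilon^8$.

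The genuine gap is in grouping for $d\ge2$. You take the grouping output to be an over-refinement of the hidden partition, but you never show that no merge crosses hidden blocks, and your calibration omits the condition that guarantees it.

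\begin{itemize}
\item The learner can only estimate cumulants of residual Paulis $I^{\otimes t}\otimes O_j$, because the syndrome prefixes $Z^{c_j}$ of the hidden-block lifts are not learner-visible.
\item Cross-block cumulants vanish exactly only for the full lifts $Z^{c_j}\otimes O_j$ (Propositions~\ref{prop:supp-full-lift-moment-factorization} and~\ref{prop:supp-nonzero-cumulant-certifies-no-product-split}).
\item Dropping the prefixes perturbs each subset moment by up to $4\sqrt{\varepsilon_{\mathrm{peel}}}$. Residual cross-block cumulants are therefore only bounded by the floor $\beta_{\mathrm{peel}}=4\Gamma_d\sqrt{\varepsilon_{\mathrm{peel}}}$ (Lemma~\ref{lem:supp-signed-prefix-stability}, Corollary~\ref{cor:supp-cross-block-cumulant-floor}).
\item Ruling out false merges thus needs $\beta_{\mathrm{peel}}+\tau_\kappa<\eta_{\mathrm{test}}$, i.e.\ $\lambda\lesssim\eta_s^2/(n\Gamma_d^2)$, where $\eta_s$ is the grouping scale (your $\kappa$).
\end{itemize}

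Your choices violate this. With $\lambda$ of order $\theta\sim\varepsilon^2/(n^2 4^d)$ and $t=\Theta(n)$, the floor is of order $\Gamma_d\varepsilon/(2^d\sqrt n)$. That is far above any grouping scale $\kappa\sim\varepsilon/(n2^dB_d)$ that meets your own error budget.

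Without no-false-merge, the rest of your argument loses its footing:
\begin{itemize}
\item a cluster may mix two hidden blocks;
\item collections whose union exceeds $d$ are never tested, so the terminal cumulant certificate need not cover the dependence left across such clusters;
\item the aggregate registers $J_a$ are no longer unions of empirical registers, so your block-by-block hybrid argument has nothing to telescope over.
\end{itemize}

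This missing constraint is also the actual source of the $n^{19}/\varepsilon^8$ term, which you assert rather than derive. ``A more stringent $\theta$ because completeness must hold jointly over clusters'' is not the mechanism, and your own budget would not produce that exponent. In the paper the chain is:
\begin{itemize}
\item $\theta_0\sim\varepsilon^2/(n^4 4^d)$, from the $n2^d$ missed-mass prefactor times the $(\widehat K+1)\le n+1$ multiplier paid when the witness factors are replaced by the actual marginals $\nu_C$;
\item $\eta_s\sim\varepsilon/(n^2 2^d q_d)$;
\item the floor forces $\lambda_0\sim\eta_s^2/(n\Gamma_d^2)\sim\varepsilon^2/n^5$, up to $2^{O(d\log d)}$ factors;
\item the ranking margin then gives $M_2\sim n/(\lambda_0\theta_0)^2\sim n^{19}/\varepsilon^8$.
\end{itemize}
Likewise, $n^{d+5}/\varepsilon^2$ is $N_{\mathrm{test}}\sim n^{d+1}$ tests (including up to $L$ adaptive phases) times $\eta_s^{-2}\sim n^4$.

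Setting $\omega_C:=\nu_C$ does not remove the witness-to-marginal cost. It moves it into $\|\rho_{\mathrm{loc}}-\Omega_{\mathrm{emp}}\|_1$, which is exactly the step you defer. A direct per-register comparison, as the paper does for $d=1$, could avoid the $(\widehat K+1)$ factor, but the floor constraint on $\lambda$ remains either way.

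Smaller points:
\begin{itemize}
\item Cross-cluster products lie in the recovered span, so they are controlled by product structure and cumulants, not by low scores.
\item $\kappa$ must scale as $\varepsilon/\mathrm{poly}(n)$, not $\varepsilon^2/\mathrm{poly}(n)$, to give an $\varepsilon^{-2}$ grouping term.
\item Handle the sign stage by conditioning on $\widehat b=b$, not by adding $\Pr[\widehat b\neq b]$ to a trace norm.
\end{itemize}
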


The detailed proof of the parameter choices is given in
Sec.~\ref{sec:supp-end-to-end}. The copy complexity is polynomial in $n$ for
every fixed $d$, with an explicit exponential dependence on the block-size cap
$d$. For $d=1$, the grouping stage and the associated splitting error are absent,
which leads to a lower copy complexity. For $d\ge2$, direct
bounded-order cumulant enumeration scales as $n^{O(d)}$ and is therefore
polynomial in $n$ for fixed $d$, although  it grows rapidly with the
block-size cap. We do not optimize the classical runtime here. We next examine
the protocol numerically through simulation.

\section{Numerical Simulation}
\label{sec: numerical}

In this section, we perform numerical simulations to compare the sample
complexity of our CEBP tomography protocol with local-Pauli linear inversion
(LP-LI) on randomly generated mixed CEBP states, as summarized in
Fig.~\ref{fig:cebp_vs_lpli}. Across the simulated parameter regime, CEBP
tomography requires substantially fewer copies than LP-LI to achieve the same
reconstruction accuracy. The numerical results also show a clear dependence on
the block size $d$: for fixed system size, larger $d$ generally requires more
copies to reach a given trace-distance error, which is consistent with our
analytical sample-complexity bounds.
Here we compare the two methods only in terms of the number of physical copies of the unknown state; other resources such as circuit depth, entangling-gate count, measurement-setting complexity, classical runtime, and memory are not normalized.

Figure~\ref{fig:cebp_vs_lpli}(a) examines the copy dependence of the
reconstruction accuracy at fixed system size $n=10$. For all three block sizes,
the trace-distance error decreases systematically as the total number of copies
$N$ increases, with CEBP maintaining a substantial advantage over LP-LI in the
low-error regime. For fixed $n$ and $d$, our rigorous worst-case
sample-complexity guarantee scales as
$N=\widetilde{O}(\varepsilon^{-8})$ in the trace-norm target $\varepsilon$.
Since $\varepsilon=2D_{\mathrm{tr}}$ for corresponding targets, this is
equivalently $N=\widetilde{O}(D_{\mathrm{tr}}^{-8})$ up to a constant factor.
To characterize the numerical behavior, we fit the CEBP data to
$D_{\mathrm{tr}}\propto N^{p}$ in the low-error regime, using
$N\geq 10^4$, $10^5$, and $10^6$ for $d=1,2,3$, respectively.
The resulting exponents are
$p=-0.584$, $-0.625$, and $-0.622$, with corresponding
$R^2_{\log_{10}}=0.992$, $0.980$, and $0.983$.
Equivalently, these fits correspond to
$N\propto D_{\mathrm{tr}}^{-p_D}$ with
$p_D\simeq 1.71$, $1.60$, and $1.61$ for $d=1,2,3$,
respectively. Thus, over the simulated accuracy range, the empirical dependence
on the target error is substantially milder than the worst-case
$D_{\mathrm{tr}}^{-8}$ analytical bound and is of comparable order to the
sampling-limited $D_{\mathrm{tr}}^{-2}$ scaling. This difference does not contradict
the rigorous guarantee, which is designed to hold uniformly in the worst case;
rather, it indicates that the bound is rather conservative for the random CEBP
instances considered here.

Figure~\ref{fig:cebp_vs_lpli}(b) instead fixes the target accuracy at
$D_{\mathrm{tr}}\leq 0.05$ and studies the copy dependence on the number of
qubits. This threshold corresponds to $\|\widehat\rho-\rho\|_1\le0.10$,
or the analytical trace-norm target $\varepsilon=0.10$.
Empirical power-law fits of the CEBP data to
$N_{\min}\propto n^{p_n}$, using all available system sizes for each $d$, yield
$p_n\simeq 1.80$, $2.16$, and $1.97$ for $d=1,2,3$, respectively, with
$R^2_{\log_{10}}=0.985$, $0.986$, and $0.942$.
These results indicate an approximately quadratic growth of the empirical CEBP
sample complexity over the accessible system sizes. This scaling is again
substantially milder than the worst-case analytical dependence, which scales as
$n^9$ for $d=1$ and is dominated by $n^{19}$ for fixed $d\geq 2$.
The similar fitted exponents for $d=1,2,3$ further suggest that, over the
simulated range, increasing the block size primarily increases the overall
sample requirement rather than strongly changing its observed scaling with
$n$. In contrast, LP-LI grows much more rapidly with system size over the
simulated range, leading to an increasing separation between the two methods.
These results demonstrate that exploiting the encoded block-product structure
can provide a substantial practical reduction in sample complexity.

\section{Discussion}
\label{sec: discussion} 

Our CEBP construction suggests a notion of \emph{genuine magical entanglement} that characterizes multipartite nonstabilizer correlations which cannot be removed by Clifford transformations.  For a pure state, we define the genuine magical-entanglement depth $d_{\mathrm{GME}}$ as  the smallest maximum block size among all possible CEBP representations of the state.   No Clifford transformation can further reduce the required maximum block size below $d_{\mathrm{GME}}$; the correlations remaining within the irreducible blocks are therefore not merely entanglement generated by Clifford processing, but involve intrinsically nonstabilizer structure.  In this sense, they may be viewed as a form of  magical entanglement.  In particular, $d_{\mathrm{GME}}=1$ means that all nonstabilizer resources can be localized onto individual qubits by a Clifford transformation.  For $d_{\mathrm{GME}}>1$, genuinely multiqubit nonstabilizer structure remains after the optimal Clifford reduction.  How to characterize and quantify this residual magical entanglement beyond the block-size diagnostic introduced here remains an open question.

This notion is related to, but distinct from, recently introduced notions of \emph{non-local magic}~\cite{ViscardiLeoneHamma2026NonlocalMagic,TorreFranchiniGiampaolo2026NonlocalMagic}.  Non-local magic quantifies the nonstabilizerness that cannot be removed by local unitaries with respect to a chosen bipartition, whereas $d_{\mathrm{GME}}$ asks whether the full state can be reduced to bounded-size blocks by a global Clifford transformation.  The distinction is already visible in our construction: a state such as $\mathrm{CNOT}\ket{T}\ket{T}$ has $d_{\mathrm{GME}}=1$, because the CNOT can be undone by a global Clifford, while it can nevertheless possess nonzero non-local magic across the two-qubit bipartition.  Thus the two quantities probe different aspects of irreducible nonstabilizer structure, and understanding their relation more systematically may provide useful connections between tomography complexity and resource-theoretic characterizations of magic.

A broader open direction is to extend the present approach beyond CEBP states toward more general Clifford-entanglable (CE) states, where block-product inputs may be processed by Clifford channels and combined through classical mixtures.  The present protocol relies strongly on the existence of a single invertible Clifford frame in which the state becomes block-product.  General Clifford channels can discard or dephase Pauli information, while mixtures of different latent Clifford structures need not admit a common frame and can introduce cross terms in the two-copy Bell-sampling statistics.  A natural progression would therefore be to first consider a single unknown Clifford channel acting on a block-product input, then structured mixtures sharing a common Clifford frame or partition, before addressing the full CE class.  The known structural characterizations of Clifford channels~\cite{yashin2025characterization} may provide useful tools for these extensions.

Finally, the present work focuses primarily on establishing \emph{sample efficiency}, and the classical time complexity of the reconstruction procedure has not been optimized.  Several parts of the current algorithm use conservative enumeration and recovery procedures over bounded-size blocks, which are sufficient but are unlikely to be optimal.  Improving the classical postprocessing, particularly its dependence on the block size $d$, is therefore an important direction for future work.  More generally, obtaining tighter upper bounds and complementary lower bounds for both sample and computational complexity would help clarify how directly the latent CEBP block structure captures the intrinsic difficulty of quantum tomography.

\textit{Acknowledgment}
P.L. and Q.Z. acknowledge support from NSF (2240641, 2350153). Q.Z. also acknowledges support from AFOSR MURI FA9550-24-1-0349 and ONR MURI N000142612102. This project is funded in part by the Gordon and Betty Moore Foundation.

\textit{Data and code availability}. The data and scripts used to generate the numerical results are
available on \href{https://github.com/peng-cheng-liao/Sample-Efficient-Tomography-of-Clifford-Encoded-Block-Product-States}{Github}

\bibliography{refs}

\clearpage
\onecolumngrid
\appendix

\makeatletter
\@removefromreset{equation}{section}
\makeatother
\setcounter{section}{0}
\setcounter{equation}{0}
\setcounter{theorem}{0}
\setcounter{algorithm}{0}
\setcounter{figure}{0}
\setcounter{table}{0}
\renewcommand{\thesection}{S\arabic{section}}
\renewcommand{\thesubsection}{\thesection.\arabic{subsection}}
\makeatletter
\renewcommand{\p@subsection}{}
\makeatother
\renewcommand{\theequation}{S\arabic{equation}}
\renewcommand{\thetheorem}{S.\arabic{theorem}}
\renewcommand{\thealgorithm}{S\arabic{algorithm}}
\renewcommand{\thefigure}{S\arabic{figure}}
\renewcommand{\thetable}{S\arabic{table}}
\providecommand{\theHsection}{}
\providecommand{\theHsubsection}{}
\providecommand{\theHequation}{}
\providecommand{\theHtheorem}{}
\providecommand{\theHfigure}{}
\providecommand{\theHtable}{}
\renewcommand{\theHsection}{supp.section.\arabic{section}}
\renewcommand{\theHsubsection}{supp.subsection.\arabic{section}.\arabic{subsection}}
\renewcommand{\theHequation}{supp.equation.\arabic{equation}}
\renewcommand{\theHtheorem}{supp.theorem.\arabic{theorem}}
\renewcommand{\theHalgorithm}{supp.algorithm.\arabic{algorithm}}
\renewcommand{\theHfigure}{supp.figure.\arabic{figure}}
\renewcommand{\theHtable}{supp.table.\arabic{table}}

\section*{Supplemental Material}
\makeatletter
\newcommand{\smtocentrytext}[2]{#1\quad #2}
\newcommand{\l@smtocsection}[2]{%
  \par\addvspace{0.32em}%
  \noindent #1\dotfill #2\par
}
\newcommand{\l@smtocsubsection}[2]{%
  \par\addvspace{0.10em}%
  \noindent\hspace*{1.5em}#1\dotfill #2\par
}
\newcommand{\smtableofcontents}{%
  \begingroup
  \small
  \noindent\textbf{Contents}\par\smallskip
  \@starttoc{smtoc}%
  \endgroup
}
\makeatother
\smtableofcontents

\let\smtocoriginalsection\section
\let\smtocoriginalsubsection\subsection
\RenewDocumentCommand{\section}{s o m}{%
  \IfBooleanTF{#1}{%
    \smtocoriginalsection*{#3}%
  }{%
    \IfNoValueTF{#2}{%
      \smtocoriginalsection{#3}%
      \addcontentsline{smtoc}{smtocsection}{%
        \protect\smtocentrytext{\thesection}{#3}}%
    }{%
      \smtocoriginalsection[#2]{#3}%
      \addcontentsline{smtoc}{smtocsection}{%
        \protect\smtocentrytext{\thesection}{#2}}%
    }%
  }%
}
\RenewDocumentCommand{\subsection}{s o m}{%
  \IfBooleanTF{#1}{%
    \smtocoriginalsubsection*{#3}%
  }{%
    \IfNoValueTF{#2}{%
      \smtocoriginalsubsection{#3}%
      \addcontentsline{smtoc}{smtocsubsection}{%
        \protect\smtocentrytext{\thesubsection}{#3}}%
    }{%
      \smtocoriginalsubsection[#2]{#3}%
      \addcontentsline{smtoc}{smtocsubsection}{%
        \protect\smtocentrytext{\thesubsection}{#2}}%
    }%
  }%
}
\addcontentsline{toc}{section}{Supplemental Material}

This SM contains the proofs and technical details
supporting the results in the main text. It is organized to be readable
independently, while using the same notation as the main text.

\section{State Class and Learning Model}
\label{sec:supp-state-class}

\subsection{Standing model and notation}

Let
\begin{equation}
\Pc_n:=\{I,\sigma^x,\sigma^y,\sigma^z\}^{\otimes n}
\end{equation}
denote the phase-free $n$-qubit Pauli family.  Unless signed operators are
displayed explicitly, Pauli products, subgroups, and binary spans throughout
the SM are taken modulo global phase; canonical Hermitian representatives and
symplectic labels are fixed in Sec.~\ref{sec:supp-pauli-notation}.  Fix a
hidden partition
\begin{equation}
B_1\sqcup \cdots \sqcup B_K=[n],
\qquad
L_a:=|B_a|,
\qquad
\max_{a\in[K]}L_a\le d.
\label{eq:supp-hidden-partition}
\end{equation}

Clifford gates are generated by $\{S,H,\mathrm{CNOT}\}$ and map Pauli
operators to Pauli operators under conjugation.  
A state is block-product with respect to this partition if it has the form
\begin{equation}
\sigma=\bigotimes_{a=1}^K \rho_a,
\qquad
\rho_a\in \D\!\bigl((\mathbb C^2)^{\otimes |B_a|}\bigr).
\label{eq:supp-block-product-state}
\end{equation}
Each block state $\rho_a$ may be mixed and may contain arbitrary
entanglement within $B_a$; the product structure is only across distinct
blocks.
Clifford-encoded block-product (CEBP) states are defined as 
\begin{equation}
\rho
=
U_{\mathrm c}
\left(
\bigotimes_{a=1}^{K}\rho_a
\right)
U_{\mathrm c}^{\dagger},
\qquad
\rho_a\in \D\!\bigl((\mathbb C^2)^{\otimes L_a}\bigr),
\label{eq:supp-clifford-encoded-block-product}
\end{equation}
where $U_{\mathrm c}$ is an $n$-qubit Clifford unitary and the block states
$\rho_a$ may be arbitrary mixed states.  

The unknown target $\rho$ is a CEBP state of the form in
Eq.~\eqref{eq:supp-clifford-encoded-block-product}.  The learner is given
independent copies of $\rho$, the system size $n$, a
maximum-block-size prior $d$, an accuracy
$\varepsilon\in(0,1)$, and a failure probability $\delta\in(0,1)$.  The
number of blocks $K$, the partition $\{B_a\}_{a=1}^K$, the block states
$\{\rho_a\}_{a=1}^K$, and the Clifford unitary $U_{\mathrm c}$ are unknown.
The task is to output a classical description of an estimator $\widehat\rho$
satisfying the trace-norm error guarantee
\begin{equation}
\Pr\!\left\{\|\widehat\rho-\rho\|_1\le\varepsilon\right\}
\ge 1-\delta.
\label{eq:supp-learning-guarantee}
\end{equation}
Throughout, hidden blocks and labels, hidden lifts and aggregate true-block
registers, and comparison states are proof-only.  Returned Paulis, sectors,
empirical clusters, Cliffords, and registers are learner-computable from the
inputs and measurement transcript; in particular, the learner never
reconstructs $U_{\mathrm c}$ or the hidden partition.  Every probabilistic
stage uses a fresh independent copy pool and is analyzed conditionally on the
preceding visible transcript.  Here and below, $M$ Bell rounds consume $2M$
physical copies, while simultaneous tomography on disjoint registers uses the
maximum local schedule rather than their sum.  Empty products and spans have
their usual identities; branches with $t=0$, $t=n$, $L\le1$, or
$\widehat K=0$ are interpreted by the explicit vacuous conventions stated
where they arise.

\subsection{Extensive entanglement and magic in CEBP states}
\label{sec:supp-cebp-entanglement-magic}

For even $n$, take single-qubit hidden blocks with latent product state
$\ket{T}^{\otimes n}$, where
\begin{equation}
\ket{T}:=T\ket{+}=(\ket 0+e^{i\pi/4}\ket 1)/\sqrt2 .
\end{equation}
Fix the physical ordering $1,\ldots,n$ and the midpoint cut
$[n/2]|[n]\setminus[n/2]$.  Let $U_{\mathrm c}$ be the product of CNOT gates
from qubit $i$ to qubit $n/2+i$ for $i=1,\ldots,n/2$.  Then
$U_{\mathrm c}\ket{T}^{\otimes n}$ is a tensor product of $n/2$ identical
two-qubit states across the midpoint cut.  One such state is
\begin{equation}
\ket{\eta}
=\mathrm{CNOT}\ket{T}\ket{T}
=\frac{1}{2}\left(
\ket{00}+e^{i\pi/4}\ket{01}+i\ket{10}
+e^{i\pi/4}\ket{11}
\right).
\end{equation}
Its two Schmidt probabilities are
\begin{equation}
\lambda_{\pm}=\frac{1}{2}\left(1\pm\frac{1}{\sqrt2}\right),
\end{equation}
so its entanglement entropy
$s_T:=-\sum_{\nu\in\{+,-\}}\lambda_\nu\log\lambda_\nu \approx0.42$ is strictly
positive.  Additivity across the $n/2$ pairs therefore gives entanglement
entropy $(n/2)s_T=\Theta(n)$ across the midpoint cut.  Both Schmidt
probabilities are nonzero, so each crossing pair has Schmidt rank two and
the full state has Schmidt rank $2^{n/2}$ across this cut.  Consequently,
any exact MPS representation in this fixed physical ordering requires bond
dimension at least $2^{n/2}$.

For a pure $m$-qubit state $\ket{\phi}$, the order-two stabilizer R\'enyi
entropy is~\cite{LeoneOlivieroHamma2022SRE}
\begin{equation}
M_2(\ket{\phi})
:=-\log\!\left(
2^{-m}\sum_{P\in\Pc_m}|\bra{\phi}P\ket{\phi}|^4
\right).
\label{eq:supp-stabilizer-renyi-two}
\end{equation}
For $\ket{T}$, the Pauli expectation values are
\begin{equation}
\bra{T}I\ket{T}=1,
\qquad
\bra{T}X\ket{T}=\bra{T}Y\ket{T}=\frac{1}{\sqrt2},
\qquad
\bra{T}Z\ket{T}=0.
\end{equation}
Consequently,
\begin{equation}
M_2(\ket{T})
=-\log\!\left[\frac{1}{2}\left(1+\frac14+\frac14\right)\right]
=\log(4/3).
\end{equation}
Additivity and Clifford invariance now give
\begin{equation}
M_2(\ket{T}^{\otimes n})
=M_2(U_{\mathrm c}\ket{T}^{\otimes n})
=n\log(4/3).
\end{equation}
Thus, the CEBP class contains states with both linear bipartite entanglement
and extensive magic even when the hidden block size is one.

\section{Pauli Preliminaries and Bell Sampling}
\label{sec:supp-notation}

We next establish the Pauli conventions, hidden block structure, and common
Bell-sampling estimator used throughout the learning protocol.

\subsection{Phase-free Pauli and symplectic notation}
\label{sec:supp-pauli-notation}

Recall the phase-free Pauli family $\Pc_n$ and group/span convention from
Sec.~\ref{sec:supp-state-class}.  Whenever an operator rather than a
phase-free class is required, we use the canonical Hermitian representative:
for $v=(x,z)\in\Fc^{2m}$, set
\begin{equation}
P(v)
:=
i^{\sum_{j=1}^m x_jz_j}
(\sigma_1^x)^{x_1}\cdots(\sigma_m^x)^{x_m}
(\sigma_1^z)^{z_1}\cdots(\sigma_m^z)^{z_m}.
\end{equation}
The sum in the exponent is evaluated over the integers, rather than modulo
two, so this expression agrees with the tensor product of its local
$I,\sigma^x,\sigma^y,\sigma^z$ factors.  This convention fixes one Hermitian
representative of each phase-free Pauli class; a Clifford conjugation may
still return its negative, which never affects the scores defined below.

Thus every phase-free Pauli on $m$ qubits has a unique binary label
$v=(x,z)\in\Fc^{2m}$, with $(x_i,z_i)=(0,0),(1,0),(1,1),(0,1)$ corresponding,
respectively, to the local factors $I,\sigma^x,\sigma^y,\sigma^z$.  We use
binary labels and Pauli operators interchangeably when the number and
ordering of the qubits are clear.

The symplectic inner product is
\begin{equation}
[u,v]
:=
u^\mathsf{T}
\begin{pmatrix}
0 & I_m \\
I_m & 0
\end{pmatrix}
v
\pmod 2.
\end{equation}
It is bilinear and alternating over $\Fc$, and it is the symplectic form used
throughout the recovery arguments.

In particular, the Hermitian representatives obey
\begin{equation}
P(u)P(v)=(-1)^{[u,v]}P(v)P(u).
\end{equation}
Hence two phase-free Paulis commute if and only if $[u,v]=0$, and they
anticommute if and only if $[u,v]=1$
\cite{Gottesman1997,DehaeneDeMoor2003,AaronsonGottesman2004}.

For a Pauli string $P=\bigotimes_{i=1}^m P_i$, its labelled support is
\begin{equation}
\supp(P)
:=
\bigl\{(i,\alpha):P_i=\sigma_i^\alpha,
\alpha\in\{x,y,z\}\bigr\},
\end{equation}
where $\sigma_i^\alpha$ denotes the Pauli $\sigma^\alpha$ acting on qubit
$i$.  Retaining the label $\alpha$, rather than only the occupied qubit $i$,
makes the Bell parities below unambiguous.

Finally, for any state $\omega$ and any $P\in\Pc_n$, we define its squared
Pauli expectation score by
\begin{equation}
s_\omega(P):=|\Tr(\omega P)|^2.
\label{eq:supp-score-def}
\end{equation}
Because the representative $P$ is Hermitian, $0\le s_\omega(P)\le1$ and the
score is independent of the representative's sign.  When the state is clear
from context, we simply write $s(P)$.

\subsection{Transformed block-Pauli structure and score factorization}
\label{sec:supp-block-factorization}

With the phase-free Pauli conventions fixed, we now express the hidden
block-product structure in physical Pauli coordinates.  For each hidden
block $B_a$, $a\in[K]$, let $L_a:=|B_a|$ denote its block size.  We enumerate
the nonidentity Pauli operators supported on $B_a$ as
\begin{equation}
\Pc_{B_a}\setminus\{I_{B_a}\}
=
\{\Lambda_a^1,\dots,\Lambda_a^{4^{L_a}-1}\},
\end{equation}
where $I_{B_a}$ denotes the identity operator on the qubits in $B_a$.
For every $a\in[K]$ and every
$\ell\in[4^{L_a}-1]$, define the transformed single-block Pauli
operator
\begin{equation}
Q_a^\ell
:=
U_{\mathrm c}
\bigl(
\Lambda_a^\ell \otimes I_{B_a^c}
\bigr)
U_{\mathrm c}^\dagger ,
\label{eq:supp-transformed-block-pauli-indexed}
\end{equation}
where $\Lambda_a^\ell\otimes I_{B_a^c}$ denotes the
$n$-qubit Pauli operator that acts as $\Lambda_a^\ell$ on block
$B_a$ and as identity on all other blocks.  The collection
\begin{equation}
\mathcal Q
:=
\Bigl\{
Q_a^\ell:\ a\in[K],\ \ell\in[4^{L_a}-1]
\Bigr\}
\end{equation}
is the transformed block-Pauli family associated with the hidden
block-product structure.

\begin{lemma}[Hidden block-Pauli decomposition]
\label{lem:supp-hidden-block-pauli-decomposition}
Every $P\in\Pc_n$ has a unique transformed block support
\begin{equation}
\supp_Q(P)
\subseteq
\bigcup_{a=1}^K\bigl(\{a\}\times[4^{L_a}-1]\bigr),
\end{equation}
containing at most one pair with each block index, such that
\begin{equation}
P
=
\chi(P)\prod_{(a,\ell)\in\supp_Q(P)}Q_a^\ell,
\qquad
\chi(P)\in\{\pm1\}.
\label{eq:supp-block-support-decomp}
\end{equation}
\end{lemma}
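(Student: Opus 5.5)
We pull $P$ back through the encoder, decompose the resulting latent Pauli blockwise, and push each block factor forward again. Concretely, set $P' := U_{\mathrm c}^\dagger P U_{\mathrm c}$. Since $U_{\mathrm c}$ is Clifford, $P'$ is a Pauli operator up to a sign; write $P' = \pm \widetilde P$ with $\widetilde P\in\Pc_n$ Hermitian.

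Because the hidden blocks partition $[n]$, the phase-free Pauli $\widetilde P$ factors uniquely as a tensor product $\bigotimes_{a=1}^K \widetilde P_a$ with $\widetilde P_a\in\Pc_{B_a}$. This is just the qubitwise tensor structure of Pauli strings grouped by block. Tensor products of Hermitian Pauli representatives on disjoint supports are again the Hermitian representative, with no extra phase, because the factors commute and each is Hermitian. Let $A:=\{a:\widetilde P_a\neq I_{B_a}\}$. For each $a\in A$, the enumeration of $\Pc_{B_a}\setminus\{I_{B_a}\}$ gives a unique index $\ell_a$ with $\widetilde P_a=\Lambda_a^{\ell_a}$.

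Define $\supp_Q(P):=\{(a,\ell_a):a\in A\}$; it contains at most one pair for each block index. Then
\[
\widetilde P=\prod_{a\in A}\bigl(\Lambda_a^{\ell_a}\otimes I_{B_a^c}\bigr),
\]
where the order of the product is irrelevant because the factors act on disjoint qubits and commute. Conjugating by $U_{\mathrm c}$ gives
\[
P=\pm\, U_{\mathrm c}\widetilde P U_{\mathrm c}^\dagger=\pm\prod_{a\in A}Q_a^{\ell_a}.
\]
The $Q_a^{\ell}$ for distinct $a$ commute, since conjugation preserves commutation, so this product is well defined independent of order. This establishes existence with $\chi(P)\in\{\pm1\}$. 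One should note that the $\pm$ is real: $P$ and $\prod_a Q_a^{\ell_a}$ are both Hermitian and differ by a unit scalar, which must therefore be $\pm1$.

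For uniqueness, suppose $P=\chi'\prod_{(a,\ell)\in S'}Q_a^\ell$ for some set $S'$ with at most one pair per block. Conjugating by $U_{\mathrm c}^\dagger$ gives
\[
U_{\mathrm c}^\dagger P U_{\mathrm c}=\chi'\bigotimes_a \Lambda_a^{\ell'_a},
\]
where $\Lambda_a^{\ell'_a}$ is replaced by the identity on blocks absent from $S'$. Hence the phase-free class $\widetilde P$ equals $\bigotimes_a \Lambda_a^{\ell'_a}$ modulo phase. Uniqueness of the tensor factorization of a phase-free Pauli string, together with injectivity of the enumeration $\ell\mapsto\Lambda_a^\ell$, then forces $S'=\supp_Q(P)$. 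With the set fixed, the sign is also determined.

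No step here is a genuine obstacle. The only point requiring care is the phase bookkeeping: checking that the phase relating $P$ to the product of the $Q$'s is a real sign, handled by the Hermiticity argument above, and noting that the statement treats $P$ as a phase-free class with Hermitian representative, so $\chi$ may depend on the representative convention but is always $\pm1$.
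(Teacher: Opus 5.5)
Your proposal is correct and follows the paper's proof essentially step for step. You conjugate $P$ into the hidden frame by $U_{\mathrm c}^\dagger$, use the unique blockwise tensor factorization of a Pauli string to read off at most one index $\ell$ per block, and conjugate back, with cross-block commutation making the product order-independent and $\chi(P)$ absorbing the sign. Your explicit uniqueness check, which pulls back an arbitrary competing representation and uses injectivity of $\ell\mapsto\Lambda_a^\ell$, just spells out what the paper compresses into the remark that each nonidentity block factor has a unique index.
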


\begin{proof}
Conjugating $P$ into the hidden coordinates gives
\begin{equation}
\widetilde P:=U_{\mathrm c}^\dagger P U_{\mathrm c},
\end{equation}
Since $U_{\mathrm c}$ is Clifford, $\widetilde P$ is a Hermitian Pauli
operator and therefore equals either sign of the canonical representative of
its phase-free class.

The hidden partition gives the unique phase-free tensor decomposition
\begin{equation}
\widetilde P
=
\chi(P)\bigotimes_{a=1}^K\Lambda_a(P),
\qquad
\Lambda_a(P)\in\Pc_{B_a},
\end{equation}
because a Pauli string has a unique local Pauli factor on every qubit and
hence on every block.

Define $\supp_Q(P)$ by including $(a,\ell)$ precisely when
$\Lambda_a(P)=\Lambda_a^\ell\ne I_{B_a}$.  Each nonidentity block factor has
a unique index $\ell$, so this support contains at most one labelled factor
from each block and is unique.  Because the selected factors originate from
distinct hidden blocks, they commute, so the product is independent of their
ordering.

Finally, conjugating the block decomposition by $U_{\mathrm c}$ and using
Eq.~\eqref{eq:supp-transformed-block-pauli-indexed} gives
Eq.~\eqref{eq:supp-block-support-decomp}, with $\chi(P)$ accounting for the
sign relative to the canonical Hermitian representative.
\end{proof}

We call $P$ \emph{composite in the transformed block-Pauli family} when
$|\supp_Q(P)|\ge2$; otherwise a nonidentity $P$ with one support element is a
transformed single-block Pauli.

\begin{proposition}[Transformed score factorization]
\label{prop:supp-transformed-score-factorization}
For every $P\in\Pc_n$ with the decomposition in
Lemma~\ref{lem:supp-hidden-block-pauli-decomposition},
\begin{equation}
\Tr(\rho P)
=
\chi(P)
\prod_{(a,\ell)\in\supp_Q(P)}
\Tr(\rho_a \Lambda_a^\ell).
\label{eq:supp-block-factorization-exp}
\end{equation}
Consequently,
\begin{equation}
s_\rho(P)
=
\prod_{(a,\ell)\in\supp_Q(P)}
\bigl|\Tr(\rho_a \Lambda_a^\ell)\bigr|^2 .
\label{eq:supp-block-factorization-score}
\end{equation}
\end{proposition}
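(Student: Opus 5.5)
The plan is to pull the trace back into the hidden frame using the cyclicity of the trace, and then factor it using the tensor-product structure of the latent state. By Lemma~\ref{lem:supp-hidden-block-pauli-decomposition}, we write $U_{\mathrm c}^\dagger P U_{\mathrm c}=\widetilde P=\chi(P)\bigotimes_{a=1}^K\Lambda_a(P)$. Here $\Lambda_a(P)=\Lambda_a^\ell$ when $(a,\ell)\in\supp_Q(P)$, and $\Lambda_a(P)=I_{B_a}$ otherwise. Substituting Eq.~\eqref{eq:supp-clifford-encoded-block-product} gives
\begin{equation}
\Tr(\rho P)=\Tr\!\Bigl[\Bigl(\bigotimes_{a=1}^K\rho_a\Bigr)U_{\mathrm c}^\dagger P U_{\mathrm c}\Bigr]
=\chi(P)\,\Tr\!\Bigl[\bigotimes_{a=1}^K\rho_a\Lambda_a(P)\Bigr].
\end{equation}

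Next, the trace of a tensor product is the product of the traces. This yields $\chi(P)\prod_{a=1}^K\Tr(\rho_a\Lambda_a(P))$. Every block with $\Lambda_a(P)=I_{B_a}$ contributes the factor $\Tr\rho_a=1$, so only the factors indexed by $\supp_Q(P)$ remain, which is Eq.~\eqref{eq:supp-block-factorization-exp}.

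For the score formula, we take the modulus squared of Eq.~\eqref{eq:supp-block-factorization-exp}. Since $|\chi(P)|^2=1$ and the modulus is multiplicative, Eq.~\eqref{eq:supp-block-factorization-score} follows. Each $\Lambda_a^\ell$ is a Hermitian Pauli, so each factor $\Tr(\rho_a\Lambda_a^\ell)$ is real and lies in $[-1,1]$. Consequently each block score lies in $[0,1]$, which is the monotonicity property used later in the paper.

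There is no substantive obstacle in this argument. The only point needing care is the sign bookkeeping: $\chi(P)$ must be defined relative to the canonical Hermitian representatives, and the ordering of the commuting factors must be irrelevant. Both points are already settled in Lemma~\ref{lem:supp-hidden-block-pauli-decomposition}.
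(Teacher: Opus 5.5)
Your proposal is correct and follows essentially the same route as the paper's proof: cyclicity of the trace to move into the hidden frame, the block decomposition from Lemma~\ref{lem:supp-hidden-block-pauli-decomposition}, factorization of the trace across the tensor product with identity blocks contributing $\Tr\rho_a=1$, and then taking absolute squares to remove $\chi(P)$. Your added remark that each factor is real and lies in $[-1,1]$ is the same observation the paper uses right afterward for score monotonicity.
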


\begin{proof}
Using the CEBP form in
Eq.~\eqref{eq:supp-clifford-encoded-block-product} and cyclicity of the
trace, the expectation of $P$ equals the expectation of
$U_{\mathrm c}^\dagger P U_{\mathrm c}$ in
$\bigotimes_a\rho_a$.  The unique block decomposition from the preceding
lemma therefore reduces the observable to distinct hidden-block factors.

The trace of a tensor product factorizes across those blocks, while identity
factors contribute $\Tr(\rho_a)=1$.  This gives the product in
Eq.~\eqref{eq:supp-block-factorization-exp}.

The factor $\chi(P)$ records the possible sign between the canonical
Hermitian representative of $P$ and the product of the transported
representatives.  It is the only phase information that can enter the real
expectation value.

Taking absolute squares removes $\chi(P)$ and yields the score product in
Eq.~\eqref{eq:supp-block-factorization-score}.
\end{proof}

In particular, a transformed single-block Pauli has score
\begin{equation}
s_\rho(Q_a^\ell)
=
\bigl|\Tr(\rho_a\Lambda_a^\ell)\bigr|^2.
\label{eq:supp-single-block-score}
\end{equation}

This factorization is only across distinct hidden blocks: a multiqubit
$\Lambda_a^\ell$ need not factorize within the possibly entangled state
$\rho_a$.  The transported Paulis also inherit the hidden symplectic
commutation relations.  If $v_a^\ell$ labels $\Lambda_a^\ell$, then operators
from distinct blocks commute and
\begin{equation}
Q_a^\ell Q_a^{\ell'}
=
(-1)^{[v_a^\ell,v_a^{\ell'}]}
Q_a^{\ell'}Q_a^\ell,
\end{equation}
because Clifford conjugation preserves commutators.  In particular, for a
one-qubit hidden block the transported $X,Y,Z$ Pauli directions form its Pauli frame:
distinct Pauli directions anticommute, frames on different qubits commute, and
Eq.~\eqref{eq:supp-block-factorization-score} factors every score over the
participating frame elements.

Since every factor in Eq.~\eqref{eq:supp-block-factorization-score} lies in
$[0,1]$, each block component dominates the product:
\begin{equation}
(a,\ell)\in\supp_Q(P)
\quad\Longrightarrow\quad
s_\rho(P)\le s_\rho(Q_a^\ell).
\end{equation}
For $s_\rho(P)>0$, equality holds exactly when every other participating
factor has unit score.  Unit-score factors also explain stabilizer dressing:
if $S\odot P$ is the phase-free product and $s_\rho(S)=1$, then
\begin{equation}
s_\rho(S\odot P)=s_\rho(P).
\end{equation}
Indeed, the canonical Hermitian representative $S_{\mathrm H}$ acts as a
fixed sign on the support of $\rho$, while
$S_{\mathrm H}P_{\mathrm H}$ differs from $(S\odot P)_{\mathrm H}$ only by a
Pauli phase; squared absolute expectations remove both factors.

Thus product scores give only a within-product comparison, not a global block
ordering: unrelated components may appear in any order, composites may tie a
component through unit-score factors, and stabilizer dressing may create an
entire degenerate family.  Certified peeling removes the last ambiguity,
symplectic compatibility supports residual recovery, and mixed cumulants are
still needed to identify the hidden-block grouping.

\subsection{Two-copy Bell-sampling score estimator}
\label{sec:supp-bell-estimator}

All fixed or adaptive structural score queries use one common record of
two-copy Bell measurements.

In one round, take two copies $\rho\otimes\rho$ and perform a Bell-basis
measurement on every pair of corresponding qubits.  Thus $M$ Bell rounds
always means $M$ independent copy pairs and $2M$ physical copies; all later
resource ledgers use this convention.  Equivalently, for each $i\in[n]$
jointly measure the commuting observables
\begin{equation}
\sigma_i^{x,(1)}\sigma_i^{x,(2)},
\qquad
\sigma_i^{y,(1)}\sigma_i^{y,(2)},
\qquad
\sigma_i^{z,(1)}\sigma_i^{z,(2)}.
\label{eq:supp-bell-observables}
\end{equation}

Write the recorded eigenvalues as
\begin{equation}
B_{i,j}^\alpha\in\{-1,+1\},
\qquad
i\in[n],\quad \alpha\in\{x,y,z\},\quad j\in[M].
\end{equation}
All three constrained correlator eigenvalues are recorded so a Pauli support
selects its factors directly.

For any $P\in\Pc_n$, define its round-$j$ parity and empirical score by
\begin{equation}
b_j(P)
:=
\prod_{(i,\alpha)\in\supp(P)}B_{i,j}^\alpha,
\qquad
\hat s(P)
:=
\frac1M\sum_{j=1}^M b_j(P).
\label{eq:supp-score-estimator}
\end{equation}
The estimator may lie in $[-1,1]$ although $s_\rho(P)\in[0,1]$; also,
the empty product gives $b_j(I)=1$.

The estimator is the mean of the parity function $b_j(P)$, not the
frequency of one Bell-outcome string.  Many distinct Bell outcomes can
contribute the same parity, and it is this signed moment of the outcome
distribution that equals the squared Pauli expectation.

Related Bell and Pauli methods recover stabilizers, predict observables, or
learn structured nonstabilizer families
\cite{Montanaro2017,KingGossetKothariBabbush2025,GrossNezamiWalter2021,GrewalIyerKretschmerLiang2024,HuangKuengPreskill2021,HangleiterGullans2024,lai2022learning,leone2024};
here Bell sampling is only a uniform score-estimation primitive for arbitrary
mixed states.

\begin{lemma}[Bell-estimator unbiasedness]
\label{lem:supp-bell-product-expectation}
For every $P\in\Pc_n$, the estimator in
Eq.~\eqref{eq:supp-score-estimator} is unbiased:
\begin{equation}
\mathbb E[\hat s(P)]=s_\rho(P).
\end{equation}
\end{lemma}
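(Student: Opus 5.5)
By linearity it suffices to show that a single round satisfies $\mathbb E[b_j(P)]=s_\rho(P)$. The plan is to express the parity $b_j(P)$ as the outcome of one two-copy observable and compute its expectation on $\rho\otimes\rho$.

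\textbf{Step 1: rewrite the round parity as one observable.} For each qubit $i$, the three observables in Eq.~\eqref{eq:supp-bell-observables} commute and are jointly measured. The recorded eigenvalues $B_{i,j}^\alpha$ are therefore the joint eigenvalues of $\sigma_i^{\alpha,(1)}\sigma_i^{\alpha,(2)}$ in a common Bell eigenbasis. Observables on distinct qubit pairs also commute, so every product of recorded eigenvalues is the eigenvalue of the corresponding operator product. This gives
\begin{equation}
\mathbb E[b_j(P)]
=
\Tr\!\Bigl[(\rho\otimes\rho)\prod_{(i,\alpha)\in\supp(P)}\sigma_i^{\alpha,(1)}\sigma_i^{\alpha,(2)}\Bigr]
=
\Tr\!\bigl[(\rho\otimes\rho)(P^{(1)}\otimes P^{(2)})\bigr].
\end{equation}
Because $\supp(P)$ contains at most one label per qubit, the product reassembles into $P\otimes P$. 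Phases cancel, since the local factor on qubit $i$ appears identically on both copies, and $P\otimes P=P_{\mathrm H}\otimes P_{\mathrm H}$ for the canonical Hermitian representative.

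\textbf{Step 2: factor the expectation.} The trace factorizes as $\Tr(\rho P)\,\Tr(\rho P)=\Tr(\rho P)^2$. Since $\rho$ and $P_{\mathrm H}$ are Hermitian, $\Tr(\rho P)$ is real, so $\Tr(\rho P)^2=|\Tr(\rho P)|^2=s_\rho(P)$. The identity case is trivial, because $b_j(I)=1$.

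\textbf{Main care point.} The delicate part is Step 1. The three Bell observables on a qubit pair are not independent: $XX\cdot YY=-ZZ$. One must therefore argue that the recorded value $B^\alpha_{i,j}$ is the value of the operator $\sigma^{\alpha}\otimes\sigma^{\alpha}$ itself, with no additional product of the other two eigenvalues substituted for it. With that reading, the eigenvalue of a product of commuting observables equals the product of their eigenvalues, and the identity above follows without trouble. Only one label per qubit enters each parity, so the constraint never creates a sign ambiguity.
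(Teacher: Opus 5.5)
Your proof is correct and follows the paper's own argument. The paper likewise identifies the round parity with the outcome of the commuting product observable $\prod_{(i,\alpha)\in\supp(P)}\sigma_i^{\alpha,(1)}\sigma_i^{\alpha,(2)}=P^{(1)}P^{(2)}$, then factors $\Tr[(\rho\otimes\rho)P^{(1)}P^{(2)}]=\Tr(\rho P)^2=|\Tr(\rho P)|^2$ using Hermiticity. Your remark about the constraint $XX\cdot YY=-ZZ$ is a harmless clarification that the paper leaves implicit.
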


\begin{proof}
The measured correlators commute, so the parity is the outcome of their
product.  Grouping its factors by copy gives
\begin{equation}
\mathbb E[b_j(P)]
=
\Tr\!\left[
(\rho\otimes\rho)
\prod_{(i,\alpha)\in\supp(P)}
\sigma_i^{\alpha,(1)}\sigma_i^{\alpha,(2)}
\right].
\label{eq:supp-bell-joint-meas-identity}
\end{equation}
The product observable is
\begin{equation}
P^{(1)}P^{(2)},
\label{eq:supp-bell-product-observable}
\end{equation}
and therefore
\begin{equation}
\mathbb E[b_j(P)]
=
\Tr(\rho P)^2
=
|\Tr(\rho P)|^2
=
s_\rho(P),
\end{equation}
where Hermiticity makes $\Tr(\rho P)$ real.  Averaging proves the lemma.
\end{proof}

\begin{proposition}[Uniform Bell-sampling score concentration]
\label{prop:supp-uniform-concentration}
For $\zeta_{\mathrm{BS}}\in(0,1)$, the estimator in
Eq.~\eqref{eq:supp-score-estimator} satisfies
\begin{equation}
|\hat s(P)-s_\rho(P)|\le\tau(M,n,\zeta_{\mathrm{BS}}),
\qquad
\forall\,P\in\Pc_n,
\label{eq:supp-uniform-event}
\end{equation}
with probability at least $1-\zeta_{\mathrm{BS}}$, where
\begin{equation}
\tau(M,n,\zeta_{\mathrm{BS}})
:=
\sqrt{
\frac{2\log(2\cdot4^n/\zeta_{\mathrm{BS}})}{M}
}.
\label{eq:supp-tauM}
\end{equation}
When the parameters are clear from context, we write this quantity as
$\tau$.
\end{proposition}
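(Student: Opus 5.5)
The plan is a fixed-query Hoeffding bound followed by a union bound over the finite Pauli family $\Pc_n$; adaptivity of later queries plays no role, since the event controls all $4^n$ Paulis at once.

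\emph{Single Pauli.} Fix $P\in\Pc_n$. The rounds $j\in[M]$ use independent copy pairs $\rho\otimes\rho$, so the parities $b_1(P),\dots,b_M(P)$ are i.i.d. They take values in $\{-1,+1\}$, so each lies in an interval of length $2$. By Lemma~\ref{lem:supp-bell-product-expectation}, their common mean is $s_\rho(P)$. Hoeffding's inequality for the empirical mean of $M$ independent variables with range $2$ then gives
\begin{equation}
\Pr\!\left\{|\hat s(P)-s_\rho(P)|>\tau\right\}
\le 2\exp\!\left(-\frac{2M^2\tau^2}{M\cdot 2^2}\right)
=2\exp\!\left(-\frac{M\tau^2}{2}\right).
\end{equation}

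\emph{Union bound.} $\Pc_n$ has exactly $4^n$ elements. Summing the failure probabilities, the bad event for some $P$ has probability at most $2\cdot4^n\exp(-M\tau^2/2)$. Setting this equal to $\zeta_{\mathrm{BS}}$ and solving gives $\tau^2=2\log(2\cdot4^n/\zeta_{\mathrm{BS}})/M$, which is exactly Eq.~\eqref{eq:supp-tauM}. The identity $P=I$ is harmless: it has $b_j(I)=1$ deterministically, so it contributes zero failure probability and can be included in the count or not.

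\emph{Main subtlety.} The only point needing care is that Eq.~\eqref{eq:supp-uniform-event} is one event over all $P$. Any Pauli chosen later, including adaptively from the same transcript, is therefore covered without a further union bound, because the guarantee is simultaneous over the whole fixed family $\Pc_n$. One also has to confirm independence across rounds, which holds because each round consumes a fresh pair of copies. No genuine obstacle is expected; the sample bound $M=O((n+\log(1/\zeta))/\eta^2)$ quoted in the main text follows by inverting $\tau\le\eta$.
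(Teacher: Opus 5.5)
Your proposal is correct and follows essentially the same route as the paper. The paper's proof likewise applies Hoeffding's inequality to the i.i.d. $\pm1$ parities with mean $s_\rho(P)$ to get $2e^{-Mt^2/2}$, union-bounds over the $4^n$ phase-free Paulis, and sets $t=\tau(M,n,\zeta_{\mathrm{BS}})$ so the failure probability is $\zeta_{\mathrm{BS}}$.
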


\begin{proof}
For fixed $P$, Lemma~\ref{lem:supp-bell-product-expectation} and Hoeffding's
inequality~\cite{Hoeffding1963} give
$\Pr[|\hat s(P)-s_\rho(P)|\ge t]\le2e^{-Mt^2/2}$.  Union bounding over the
$4^n$ phase-free Paulis gives
\begin{equation}
\Pr\!\left[
\exists\,P\in\Pc_n:
|\hat s(P)-s_\rho(P)|\ge t
\right]
\le2\cdot4^n e^{-Mt^2/2}.
\end{equation}
Choosing $t=\tau(M,n,\zeta_{\mathrm{BS}})$ from
Eq.~\eqref{eq:supp-tauM} makes this $\zeta_{\mathrm{BS}}$.
\end{proof}

To obtain simultaneous additive score error at most $\eta>0$, it suffices
to choose
\begin{equation}
M
\ge
\frac{2}{\eta^2}
\log\!\left(\frac{2\cdot4^n}{\zeta_{\mathrm{BS}}}\right).
\end{equation}
By Eq.~\eqref{eq:supp-tauM}, the common record costs $O(Mn)$ storage and
supports every fixed or adaptively selected score query on the same event
\eqref{eq:supp-uniform-event}; exhaustive candidate search, when used below,
is a separate classical cost.  Thus the uniform score error uses
$M=O((n+\log(1/\zeta_{\mathrm{BS}}))/\eta^2)$ Bell rounds and, by the
Bell-round convention, $2M$ physical copies.

\section{Certified Stabilizer Peeling}
\label{sec:supp-stabilizer-peeling}

This section turns high Pauli scores into a certified reduction of the
learning problem.  Multiplication by near-stabilizer directions creates
score-degenerate or nearly degenerate physical representatives, which
motivates peeling those directions before recovering the residual frame.  We
first derive a deterministic approximate syndrome decomposition from
independent commuting high-score directions.  We then use noisy Bell-sampling
scores to certify a commuting high-score span, derive certified generator
deficits and a data-dependent peeling bound, and obtain a syndrome-product
fidelity guarantee with a dominant syndrome.

\subsection{Deterministic approximate stabilizer decomposition}
\label{sec:supp-deterministic-approximate-stabilizer-decomposition}

Independent commuting Pauli directions with scores close to one identify a
sign-consistent joint syndrome sector whose leakage is controlled by the
aggregate score deficits.  Diagonalizing those directions by a Clifford
therefore isolates the dominant syndrome sector and its associated normalized
conditioned residual state.

Throughout this section, \emph{fidelity} means root fidelity,
\begin{equation}
\F(\omega,\sigma)
:=
\left\|\sqrt{\omega}\sqrt{\sigma}\right\|_1,
\label{eq:supp-root-fidelity-convention}
\end{equation}
where
$\|A\|_1:=\Tr\sqrt{A^\dagger A}$ denotes the trace norm.  
Thus the Fuchs--van de Graaf inequalities~\cite{FuchsVanDeGraaf1999} take the form
\begin{equation}
1-\F(\omega,\sigma)
\le
\frac12\|\omega-\sigma\|_1
\le
\sqrt{1-\F(\omega,\sigma)^2}.
\label{eq:supp-root-fidelity-fvdg}
\end{equation}

For a rank-$t$ commuting family, define the syndrome and residual registers
by
\begin{equation}
\mathcal H_{\mathrm{syn}}
:=
(\mathbb C^2)^{\otimes t},
\qquad
\mathcal H_{\mathrm{res}}
:=
(\mathbb C^2)^{\otimes(n-t)}.
\label{eq:supp-syndrome-residual-registers}
\end{equation}
Thus the first $t$ diagonalized qubits form the syndrome register and the
remaining $n-t$ qubits form the residual register.
The integer $t$ is the peeling rank.
For $x\in\{0,1\}^t$, let
\begin{equation}
\Pi_x
:=
|x\rangle\!\langle x|\otimes I_{\mathrm{res}}
\label{eq:supp-joint-syndrome-projector}
\end{equation}
denote the corresponding joint syndrome-sector projector, where
$I_{\mathrm{res}}$ is the identity on $\mathcal H_{\mathrm{res}}$.  The case $t=n$
is included, with the residual register interpreted as the one-dimensional
zero-qubit system.  We use the convention $\Pc_0:=\{I\}$. 
Accordingly, every conclusion quantified over nonidentity residual Paulis is
vacuous when $t=n$.

\begin{proposition}[Deterministic approximate stabilizer decomposition]
\label{prop:supp-approximate-stabilizer-decomposition}
Let $g_1,\dots,g_t\in \Pc_n$ be independent commuting Pauli operators
and let $0\le \varepsilon_j^{\mathrm{stab}}\le 1$ satisfy
\begin{equation}
s_\rho(g_j)\ge 1-\varepsilon_j^{\mathrm{stab}},
\qquad
j=1,\dots,t.
\end{equation}
Then there exists an $n$-qubit Clifford unitary $U_{\mathrm{stab}}$ satisfying
\begin{equation}
U_{\mathrm{stab}}^\dagger g_jU_{\mathrm{stab}}=Z_j,
\qquad
j=1,\dots,t.
\label{eq:supp-stabilizer-diagonalization}
\end{equation}
For every such Clifford, there exist a bit string $b\in\{0,1\}^t$ and a
residual state $\rho_{\mathrm{res}}$ on the remaining $n-t$ qubits such that,
for
\begin{equation}
\widetilde\rho
:=
U_{\mathrm{stab}}^\dagger \rho\, U_{\mathrm{stab}},
\end{equation}
\begin{equation}
\F\!\left(
\widetilde\rho,\,
|b\rangle\!\langle b|\otimes \rho_{\mathrm{res}}
\right)
\ge
\sqrt{
\max\left\{0,1-\frac12\sum_{j=1}^t \varepsilon_j^{\mathrm{stab}}\right\}
}.
\label{eq:supp-stabilizer-fidelity}
\end{equation}
Here $b_j=0$ when $\Tr(\rho g_j)\ge0$ and $b_j=1$ otherwise; in particular,
a zero expectation is assigned the positive-sign bit $b_j=0$ by convention.
\end{proposition}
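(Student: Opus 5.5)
Existence of $U_{\mathrm{stab}}$ is the standard symplectic extension fact. The binary labels $v_1,\dots,v_t$ of $g_1,\dots,g_t$ are linearly independent and pairwise symplectically orthogonal, so they span an isotropic subspace. Symplectic Gram--Schmidt extends this to a symplectic basis $\{v_j,w_j\}_{j\le n}$ with $v_j$ assigned the role of $Z_j$ and $w_j$ that of $X_j$. The corresponding symplectic matrix lifts to a Clifford $U_{\mathrm{stab}}$ mapping $Z_j$ to $\pm g_j$. Any residual sign is removed by an extra Pauli conjugation (e.g.\ by $X_j$), so $U_{\mathrm{stab}}^\dagger g_j U_{\mathrm{stab}}=Z_j$ exactly. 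From here on I fix any such Clifford and work with $\widetilde\rho$.

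Since scores are invariant under the Clifford conjugation, $s_{\widetilde\rho}(Z_j)=s_\rho(g_j)\ge 1-\varepsilon_j^{\mathrm{stab}}$. Define $b_j$ by the sign of $\Tr(\rho g_j)=\Tr(\widetilde\rho Z_j)$, as in the statement. Then
\begin{equation}
|\Tr(\rho g_j)|\ge\sqrt{1-\varepsilon_j^{\mathrm{stab}}}\ge 1-\varepsilon_j^{\mathrm{stab}}.
\end{equation}
Let $\Pi^{(j)}:=(I+(-1)^{b_j}Z_j)/2$ be the single-qubit syndrome projector agreeing with $b_j$. Then
\begin{equation}
\Tr(\widetilde\rho\,\Pi^{(j)})=\tfrac12\bigl(1+|\Tr(\widetilde\rho Z_j)|\bigr)\ge 1-\tfrac12\varepsilon_j^{\mathrm{stab}},
\end{equation}
so the leakage out of sector $b_j$ is at most $\varepsilon_j^{\mathrm{stab}}/2$. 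The projectors commute, and $\Pi_b=\prod_j\Pi^{(j)}$. A union bound on the complements, $I-\Pi_b\le\sum_j(I-\Pi^{(j)})$, gives
\begin{equation}
p:=\Tr(\widetilde\rho\,\Pi_b)\ge 1-\tfrac12\sum_j\varepsilon_j^{\mathrm{stab}}.
\end{equation}
If the right-hand side is nonpositive the bound is trivial, which accounts for the $\max\{0,\cdot\}$. The case $p=0$ then needs only an arbitrary choice of $\rho_{\mathrm{res}}$.

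For $p>0$, set $\rho_{\mathrm{res}}:=\Tr_{\mathrm{syn}}(\Pi_b\widetilde\rho\,\Pi_b)/p$. Then $|b\rangle\!\langle b|\otimes\rho_{\mathrm{res}}=\Pi_b\widetilde\rho\,\Pi_b/p=:\sigma$. The remaining task, which I expect to be the main subtle step, is the fidelity between a state and its normalized projection: I aim to show $\F(\widetilde\rho,\sigma)\ge\sqrt p$. I would take a purification $|\psi\rangle$ of $\widetilde\rho$ on an enlarged space. Then $(\Pi_b\otimes I)|\psi\rangle/\sqrt p$ purifies $\sigma$, and its overlap with $|\psi\rangle$ is $\langle\psi|\Pi_b|\psi\rangle/\sqrt p=\sqrt p$. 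Uhlmann's theorem, in the root-fidelity convention of Eq.~\eqref{eq:supp-root-fidelity-convention}, gives $\F\ge\sqrt p$. Combining this with the bound on $p$ yields Eq.~\eqref{eq:supp-stabilizer-fidelity}. The points needing care are the operator union bound, which holds because the $\Pi^{(j)}$ commute, and the zero-expectation convention, which still gives $\Tr(\widetilde\rho\,\Pi^{(j)})=\tfrac12\ge 1-\tfrac12\varepsilon_j^{\mathrm{stab}}$ since then $\varepsilon_j^{\mathrm{stab}}\ge 1$.
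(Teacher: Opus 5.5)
Your proof is correct and follows the paper's argument essentially step for step: symplectic completion with single-qubit Pauli sign corrections, the marginal bound $\Tr(\widetilde\rho\,\Pi^{(j)})\ge 1-\tfrac12\varepsilon_j^{\mathrm{stab}}$, a union bound over syndrome bits, and the same conditioned residual block $\rho_{\mathrm{res}}=\langle b|\widetilde\rho|b\rangle/p$. The only difference is the last step: you lower-bound the fidelity by $\sqrt{p}$ using Uhlmann's theorem on the projected purification, whereas the paper gets the exact value $\F=\sqrt{p_b}$ directly from the identity $\sqrt{\widetilde\rho}\,\sigma_b\sqrt{\widetilde\rho}=X_b^2/p_b$ with $X_b=\sqrt{\widetilde\rho}\,\Pi_b\sqrt{\widetilde\rho}$; either suffices.
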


\begin{proof}
Because the $g_j$ are independent and pairwise commuting, symplectic
completion gives a Clifford that maps their phase-free labels to the first
$t$ computational $Z$ Pauli directions.  Its conjugated Hermitian representatives may
initially be $\pm Z_j$; composing with the appropriate single-qubit Pauli
corrections fixes these signs and gives
Eq.~\eqref{eq:supp-stabilizer-diagonalization}.  Fix such a Clifford, set
$\mu_j:=\Tr(\rho g_j)=\Tr(\widetilde\rho Z_j)$, and define
\begin{equation}
b_j
:=
\begin{cases}
0,&\mu_j\ge0,\\
1,&\mu_j<0
\end{cases}.
\label{eq:supp-deterministic-syndrome-bit}
\end{equation}
Then $(-1)^{b_j}\mu_j=|\mu_j|$, and the squared-score hypothesis gives
\begin{equation}
|\mu_j|
=
\sqrt{s_\rho(g_j)}
\ge
\sqrt{1-\varepsilon_j^{\mathrm{stab}}}
\ge
1-\varepsilon_j^{\mathrm{stab}}.
\label{eq:supp-generator-expectation-magnitude}
\end{equation}
The wrong-syndrome-bit probability is therefore
\begin{equation}
p_j^{\mathrm{err}}
=
\frac{1-|\mu_j|}{2}
\le
\frac{\varepsilon_j^{\mathrm{stab}}}{2}.
\label{eq:supp-incorrect-syndrome-bit}
\end{equation}
Writing $p_b:=\Tr(\widetilde\rho\Pi_b)$, a union bound gives
\begin{equation}
p_b
\ge
1-\sum_{j=1}^t p_j^{\mathrm{err}}
\ge
1-\frac12\sum_{j=1}^t\varepsilon_j^{\mathrm{stab}}.
\label{eq:supp-dominant-syndrome-weight}
\end{equation}
If $p_b>0$, define the normalized conditioned residual state associated with
the dominant syndrome sector by
\begin{equation}
\rho_{\mathrm{res}}
:=
\frac{\langle b|\widetilde\rho|b\rangle}{p_b}.
\label{eq:supp-conditioned-residual-state}
\end{equation}
This is a normalized block, not an instruction to postselect.  For
$X_b:=\sqrt{\widetilde\rho}\Pi_b\sqrt{\widetilde\rho}$ and
$\sigma_b:=|b\rangle\!\langle b|\otimes\rho_{\mathrm{res}}$, the identity
$\sqrt{\widetilde\rho}\sigma_b\sqrt{\widetilde\rho}=X_b^2/p_b$ yields
\begin{equation}
\F(\widetilde\rho,\sigma_b)
=
\Tr\sqrt{\frac{X_b^2}{p_b}}
=
\frac{\Tr X_b}{\sqrt{p_b}}
=
\sqrt{p_b}.
\label{eq:supp-conditioned-sector-fidelity}
\end{equation}
If $p_b=0$, Eq.~\eqref{eq:supp-dominant-syndrome-weight} forces the
right-hand side of Eq.~\eqref{eq:supp-stabilizer-fidelity} to be zero, so
any residual state suffices.  Otherwise
Eq.~\eqref{eq:supp-dominant-syndrome-weight} and the displayed identity
prove Eq.~\eqref{eq:supp-stabilizer-fidelity}.  For $t=0$, $b$ is empty,
$p_b=1$, and $\rho_{\mathrm{res}}=\widetilde\rho$.
\end{proof}

\subsection{Empirical span certification}
\label{sec:supp-empirical-span-certification}

Scores near a fixed threshold need not determine a unique empirical span, so
we certify equality of inner and outer empirical spans.  Fix
\begin{equation}
\frac12<h_{\min}<h_{\max}<1,
\qquad
\mathsf I_h:=[h_{\min},h_{\max}],
\qquad
\Delta_h:=h_{\max}-h_{\min}>0.
\label{eq:supp-peeling-threshold-window}
\end{equation}
For a target mesh width $\eta>0$, set
\begin{equation}
N_{\mathrm{grid}}:=\left\lceil\frac{\Delta_h}{\eta}\right\rceil,
\qquad
\eta_{\mathrm{grid}}:=\frac{\Delta_h}{N_{\mathrm{grid}}}\le\eta,
\qquad
\mathsf H_{\mathrm{peel}}
:=
\left\{
h_{\min}+j\eta_{\mathrm{grid}}
:
j=0,\ldots,N_{\mathrm{grid}}
\right\}.
\label{eq:supp-peeling-threshold-grid}
\end{equation}
Thus every point of $\mathsf I_h$ is within
$\eta_{\mathrm{grid}}/2$ of the grid.  Fix deterministic grid and Pauli
orders, the latter for basis tie-breaking.  For $M_1$ Bell rounds, set
\begin{equation}
\tau_1
:=
\tau(M_1,n,\zeta_{\mathrm{BS}}).
\label{eq:supp-peeling-score-radius}
\end{equation}
The corresponding uniform event is
\begin{equation}
\mathcal E_{\mathrm{BS}}
:=
\left\{
|\hat s(P)-s_\rho(P)|\le\tau_1
\text{ for every }P\in\Pc_n
\right\}.
\label{eq:supp-peeling-uniform-event}
\end{equation}
Proposition~\ref{prop:supp-uniform-concentration} gives
$\Pr(\mathcal E_{\mathrm{BS}})\ge1-\zeta_{\mathrm{BS}}$ using $2M_1$
copies.

For every real threshold $x$, define the true high-score set and its binary
span by
\begin{equation}
\mathcal T(x)
:=
\{P\in\Pc_n:s_\rho(P)\ge x\},
\qquad
V(x)
:=
\Span_{\Fc}\mathcal T(x).
\label{eq:supp-true-high-score-span}
\end{equation}
For every candidate $h\in\mathsf H_{\mathrm{peel}}$, define the empirical inner and outer
high-score sets by
\begin{equation}
\begin{aligned}
\widehat{\mathcal T}_{\mathrm{in}}(h)
&:=
\bigl\{
P\in\Pc_n:\hat s(P)\ge h+\tau_1
\bigr\},
\\
\widehat{\mathcal T}_{\mathrm{out}}(h)
&:=
\bigl\{
P\in\Pc_n:\hat s(P)\ge h-\tau_1
\bigr\}.
\end{aligned}
\label{eq:supp-empirical-inner-outer-sets}
\end{equation}
\begin{lemma}[Empirical span sandwich]
\label{lem:supp-empirical-span-sandwich}
On $\mathcal E_{\mathrm{BS}}$, every $h\in\mathsf H_{\mathrm{peel}}$ satisfies the set
inclusions
\begin{equation}
\begin{aligned}
\mathcal T(h+2\tau_1)
\subseteq
\widehat{\mathcal T}_{\mathrm{in}}(h)
\subseteq
\mathcal T(h)
\subseteq
\widehat{\mathcal T}_{\mathrm{out}}(h)
\subseteq
\mathcal T(h-2\tau_1),
\end{aligned}
\label{eq:supp-empirical-high-score-set-sandwich}
\end{equation}
and hence the span inclusions
\begin{equation}
\begin{aligned}
V(h+2\tau_1)
\subseteq
\Span_{\Fc}\widehat{\mathcal T}_{\mathrm{in}}(h)
\subseteq
V(h)
\subseteq
\Span_{\Fc}\widehat{\mathcal T}_{\mathrm{out}}(h)
\subseteq
V(h-2\tau_1).
\end{aligned}
\label{eq:supp-empirical-high-score-span-sandwich}
\end{equation}
\end{lemma}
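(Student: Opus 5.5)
The plan is to prove the four set inclusions in Eq.~\eqref{eq:supp-empirical-high-score-set-sandwich} one at a time. Each follows from a single triangle-inequality step on the uniform event $\mathcal E_{\mathrm{BS}}$ of Eq.~\eqref{eq:supp-peeling-uniform-event}. The span inclusions in Eq.~\eqref{eq:supp-empirical-high-score-span-sandwich} then follow from monotonicity of $\Span_{\Fc}$ under set inclusion. No property of the CEBP structure is needed; the argument is purely about the estimator. In particular, the fact that $h$ lies in the grid $\mathsf H_{\mathrm{peel}}$ plays no role beyond fixing $h$ before the inclusions are checked, so the statement in fact holds for every real $h$.

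Fix $h$ and work on $\mathcal E_{\mathrm{BS}}$, where $|\hat s(P)-s_\rho(P)|\le\tau_1$ for every $P\in\Pc_n$ simultaneously by Proposition~\ref{prop:supp-uniform-concentration}.

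\emph{First inclusion.} If $P\in\mathcal T(h+2\tau_1)$, then $\hat s(P)\ge s_\rho(P)-\tau_1\ge h+\tau_1$, so $P\in\widehat{\mathcal T}_{\mathrm{in}}(h)$.

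\emph{Second inclusion.} If $P\in\widehat{\mathcal T}_{\mathrm{in}}(h)$, then $s_\rho(P)\ge\hat s(P)-\tau_1\ge h$, so $P\in\mathcal T(h)$.

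\emph{Third inclusion.} If $P\in\mathcal T(h)$, then $\hat s(P)\ge s_\rho(P)-\tau_1\ge h-\tau_1$, so $P\in\widehat{\mathcal T}_{\mathrm{out}}(h)$.

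\emph{Fourth inclusion.} If $P\in\widehat{\mathcal T}_{\mathrm{out}}(h)$, then $s_\rho(P)\ge\hat s(P)-\tau_1\ge h-2\tau_1$, so $P\in\mathcal T(h-2\tau_1)$.

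Chaining these four gives Eq.~\eqref{eq:supp-empirical-high-score-set-sandwich}.

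For the span statement, I would note that $A\subseteq B$ implies $\Span_{\Fc}A\subseteq\Span_{\Fc}B$, where phase-free Paulis are identified with their binary labels $v\in\Fc^{2n}$ as in Sec.~\ref{sec:supp-pauli-notation}. Applying this to each inclusion, and using the definition $V(x)=\Span_{\Fc}\mathcal T(x)$ from Eq.~\eqref{eq:supp-true-high-score-span}, yields Eq.~\eqref{eq:supp-empirical-high-score-span-sandwich}.

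There is no real obstacle here. The only point needing care is that a single event $\mathcal E_{\mathrm{BS}}$ controls all $P$, and hence all grid thresholds $h$, at once. This is exactly what the uniform union bound in Proposition~\ref{prop:supp-uniform-concentration} supplies, so no extra union bound over the grid $\mathsf H_{\mathrm{peel}}$ is needed.
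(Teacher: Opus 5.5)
Your proof is correct and follows the same route as the paper's proof: on $\mathcal E_{\mathrm{BS}}$, each adjacent set inclusion comes from shifting the defining score inequality by $\tau_1$, and the span inclusions then follow from monotonicity of $\Span_{\Fc}$. Your side remarks are also accurate: the inclusions hold for every real $h$, and no additional union bound over $\mathsf H_{\mathrm{peel}}$ is required.
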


\begin{proof}
On $\mathcal E_{\mathrm{BS}}$, subtracting or adding $\tau_1$ to the
defining score inequalities gives each adjacent inclusion in
Eq.~\eqref{eq:supp-empirical-high-score-set-sandwich}.  Taking binary spans
gives Eq.~\eqref{eq:supp-empirical-high-score-span-sandwich}.
\end{proof}

The algorithm accepts only a matching inner/outer span.  Its direct
all-Pauli implementation is information-theoretic and may take exponential
classical time.

\begin{inlinealgorithm}{alg:supp-stabilizer-peeling}{Certified empirical stabilizer peeling}
\begin{algorithmic}[1]
\Require Empirical scores $\hat s(P)$ for $P\in\Pc_n$; a uniform error
radius $\tau_1$; numbers
$1/2<h_{\min}<h_{\max}<1$ defining
$\mathsf I_h=[h_{\min},h_{\max}]$; a finite grid
$\mathsf H_{\mathrm{peel}}\subset\mathsf I_h$ with a fixed
deterministic ordering; a fixed Pauli ordering for basis tie-breaking.
\Ensure Either failure or an accepted threshold $h$, peeling parameter
$\lambda:=1-h\in(0,1/2)$, an independent rank-$t$ basis
$g_1,\dots,g_t$, and a Clifford unitary $U_{\mathrm{stab}}$.
\For{each $h\in\mathsf H_{\mathrm{peel}}$ in the fixed order}
    \State Form $\widehat{\mathcal T}_{\mathrm{in}}(h)$ and
    $\widehat{\mathcal T}_{\mathrm{out}}(h)$ from
    Eq.~\eqref{eq:supp-empirical-inner-outer-sets}.
    \If{
    $
    \Span_{\Fc}\widehat{\mathcal T}_{\mathrm{in}}(h)
    =
    \Span_{\Fc}\widehat{\mathcal T}_{\mathrm{out}}(h)
    $
    }
        \State Choose
        $\{g_1,\dots,g_t\}\subseteq
        \widehat{\mathcal T}_{\mathrm{in}}(h)$ as the first independent basis
        of the common binary span in the fixed Pauli order.
        \If{some $g_j$ and $g_k$ anticommute}
            \State \textbf{continue} to the next threshold.
        \EndIf
        \If{$t=0$}
            \State Set $U_{\mathrm{stab}}:=I$.
        \Else
            \State Construct $U_{\mathrm{stab}}$ such that
            $U_{\mathrm{stab}}^\dagger g_jU_{\mathrm{stab}}=Z_j$ for
            $j=1,\dots,t$.
        \EndIf
        \State \Return
        $(h,\lambda:=1-h,\{g_1,\dots,g_t\},U_{\mathrm{stab}})$.
    \EndIf
\EndFor
\State \Return failure.
\end{algorithmic}
\end{inlinealgorithm}

It remains to show that a matching candidate exists and is commuting.  For
$c\in(h_{\min},h_{\max})$, define
\begin{equation}
V(c^+)
:=
\Span_{\Fc}\{P\in\Pc_n:s_\rho(P)>c\},
\end{equation}
and define the internal span-change set by
\begin{equation}
\mathsf C
:=
\left\{
c\in(h_{\min},h_{\max})
:
V(c^+)\subsetneq V(c)
\right\},
\label{eq:supp-span-change-point}
\end{equation}
so $c\in\mathsf C$ precisely when the Paulis of score exactly $c$ enlarge
the binary span.  Augment these internal changes by the window endpoints:
\begin{equation}
\mathsf B
:=
\mathsf C\cup\{h_{\min},h_{\max}\}.
\label{eq:supp-span-change-boundaries}
\end{equation}
The components of $\mathsf I_h\setminus\mathsf B$ are
\emph{span-stable}: $V(x)$ is constant on each one.

For $x_1>x_2$, monotonicity of the score cutoff gives
\begin{equation}
\mathcal T(x_1)\subseteq\mathcal T(x_2),
\qquad
V(x_1)\subseteq V(x_2).
\label{eq:supp-high-score-span-monotonicity}
\end{equation}
Every strict change increases $\dim_{\Fc}V(x)$ as the threshold decreases.
Since the ambient binary dimension is $2n$, $|\mathsf C|\le2n$ and there
are at most $2n+1$ span-stable components.

\begin{lemma}[Stable grid threshold]
\label{lem:supp-stable-grid-threshold}
Let $\mathsf I_h$, $\mathsf H_{\mathrm{peel}}$, $\eta$,
$\eta_{\mathrm{grid}}$, and $\tau_1$ be as above.
Assume the target mesh satisfies the sufficient design condition
\begin{equation}
\eta
\le
\frac{\Delta_h}{4(2n+1)},
\label{eq:supp-certified-peeling-grid-condition}
\end{equation}
so the actual mesh obeys
$\eta_{\mathrm{grid}}\le\eta\le \Delta_h/[4(2n+1)]$.  Assume also that the uniform
score radius satisfies
\begin{equation}
\tau_1
\le
\frac{\Delta_h}{8(2n+1)}.
\label{eq:supp-certified-peeling-tau-condition}
\end{equation}
Then there exists $h\in\mathsf H_{\mathrm{peel}}$ for which
$h-2\tau_1$ and $h+2\tau_1$ lie in the same span-stable component and,
in particular,
\begin{equation}
V(h+2\tau_1)
=
V(h-2\tau_1).
\label{eq:supp-certified-peeling-true-span-gap}
\end{equation}
\end{lemma}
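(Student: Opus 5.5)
This is a pigeonhole argument on the span-stable components of $\mathsf I_h$. By the discussion preceding the lemma, $\mathsf B=\mathsf C\cup\{h_{\min},h_{\max}\}$ has at most $2n+2$ points. These points cut $\mathsf I_h$ into at most $2n+1$ open intervals, and $V(x)$ is constant on each of them. The interval lengths sum to $\Delta_h$, so the longest one, call it $(c_1,c_2)$, satisfies
\begin{equation}
c_2-c_1\ge\frac{\Delta_h}{2n+1}.
\end{equation}
The goal is a grid point $h$ with $[h-2\tau_1,h+2\tau_1]\subset(c_1,c_2)$. The endpoints $h\pm2\tau_1$ then lie in one span-stable component, and Eq.~\eqref{eq:supp-certified-peeling-true-span-gap} follows at once.

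To find such an $h$, shrink the interval to its middle portion $[c_1+2\tau_1+\eta_{\mathrm{grid}}/4,\;c_2-2\tau_1-\eta_{\mathrm{grid}}/4]$. Its length is at least
\begin{equation}
\frac{\Delta_h}{2n+1}-4\tau_1-\frac{\eta_{\mathrm{grid}}}{2}.
\end{equation}
By Eq.~\eqref{eq:supp-certified-peeling-tau-condition}, $4\tau_1\le\Delta_h/[2(2n+1)]$. By Eq.~\eqref{eq:supp-certified-peeling-grid-condition}, $\eta_{\mathrm{grid}}\le\Delta_h/[4(2n+1)]$. So the length is at least $\Delta_h/[2(2n+1)]-\eta_{\mathrm{grid}}/2\ge\eta_{\mathrm{grid}}$. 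The grid $\mathsf H_{\mathrm{peel}}$ has spacing exactly $\eta_{\mathrm{grid}}$ and covers all of $\mathsf I_h$, with endpoints $h_{\min}$ and $h_{\max}$. The shrunken interval lies inside $\mathsf I_h$ and has length at least the spacing, so it contains a grid point $h$. That $h$ satisfies $c_1<h-2\tau_1<h+2\tau_1<c_2$, with strict inequalities because of the $\eta_{\mathrm{grid}}/4$ slack.

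The remaining point, which is the main subtlety, is to justify that $V$ is constant on each component even when the thresholds $h\pm2\tau_1$ are not span-change points themselves. For $x<y$ in one open component, monotonicity Eq.~\eqref{eq:supp-high-score-span-monotonicity} gives $V(y)\subseteq V(x)$. Suppose the inclusion were strict. Only finitely many distinct score values exist, so $V$ takes finitely many values and is left-continuous in the appropriate sense. Pick the largest score value $c\in[x,y)$, hence $c\in(x,y)$ or $c=x$, at which the span grows, i.e.\ $V(c^+)\subsetneq V(c)$. Such a $c$ would belong to $\mathsf C$, contradicting that the component contains no point of $\mathsf B$. The case $c=x$ is excluded the same way, since $x$ lies in the open component and hence in $(h_{\min},h_{\max})$.

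With this verified, $V(h+2\tau_1)=V(h-2\tau_1)$ holds for the chosen grid point, which completes the plan.
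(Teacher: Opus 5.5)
Your proposal is correct and follows the paper's proof: both pick a widest span-stable component, whose length is at least $\Delta_h/(2n+1)$ by the $|\mathsf C|\le 2n$ count, and locate a grid point farther than $2\tau_1$ from its boundary. The paper takes the grid point nearest the component's midpoint, while you shrink the component by $2\tau_1+\eta_{\mathrm{grid}}/4$ on each side and note that what remains is at least one mesh long. Your added check that $V$ is constant on each open component (using the finitely many score values and the definition of $\mathsf C$) correctly fills in a fact the paper only asserts before the lemma.
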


\begin{proof}
Choose a widest span-stable component $\mathcal I_\star$, of length $w$
and midpoint $h_\star$.  The component count and total length give
\begin{equation}
w
\ge
\frac{\Delta_h}{2n+1}.
\label{eq:supp-required-span-stable-width}
\end{equation}
A nearest grid point $h$ has
$|h-h_\star|\le\eta_{\mathrm{grid}}/2
\le\Delta_h/[8(2n+1)]$, so
\begin{equation}
\begin{aligned}
\operatorname{dist}(h,\partial\mathcal I_\star)
&\ge
\frac{w}{2}-\frac{\eta_{\mathrm{grid}}}{2}
\\
&\ge
\frac{\Delta_h}{2(2n+1)}-\frac{\Delta_h}{8(2n+1)}
=
\frac{3\Delta_h}{8(2n+1)}
>
\frac{\Delta_h}{4(2n+1)}
\ge
2\tau_1.
\end{aligned}
\end{equation}
Thus $[h-2\tau_1,h+2\tau_1]\subset\mathcal I_\star$, where $V(x)$ is
constant, proving Eq.~\eqref{eq:supp-certified-peeling-true-span-gap}.
\end{proof}

For each hidden block $a$, let
\begin{equation}
\mathcal A_a
:=
\left\{
U_{\mathrm c}
\bigl(R_{B_a}\otimes I_{B_a^c}\bigr)
U_{\mathrm c}^\dagger
:
R_{B_a}\in\Pc_{B_a}
\right\}
\subseteq
\Pc_n,
\end{equation}
the transformed phase-free single-block Pauli subspace.  We use the hidden
support decomposition and score factorization from
Lemma~\ref{lem:supp-hidden-block-pauli-decomposition} and
Eq.~\eqref{eq:supp-block-factorization-score}, with their signed/phase-free
convention.

\begin{lemma}[Blockwise high-score span]
\label{lem:supp-blockwise-high-score-span}
Assume that $\rho$ has the CEBP form of
Eq.~\eqref{eq:supp-clifford-encoded-block-product}.  For every real threshold
$h$, the true high-score span is generated by transformed single-block
high-score Paulis and satisfies
\begin{equation}
V(h)
=
\Span_{\Fc}
\left(
\bigcup_{a=1}^K
\bigl(\mathcal T(h)\cap\mathcal A_a\bigr)
\right).
\label{eq:supp-blockwise-high-score-span}
\end{equation}
\end{lemma}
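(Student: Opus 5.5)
We prove the two inclusions separately. The inclusion ``$\supseteq$'' is immediate: each $\mathcal T(h)\cap\mathcal A_a$ is a subset of $\mathcal T(h)$, so its span lies in $V(h)=\Span_{\Fc}\mathcal T(h)$. All the content is in ``$\subseteq$''. For that it suffices to show that every $P\in\mathcal T(h)$ lies in the right-hand span; the claim then follows by taking spans.

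Fix $P\in\mathcal T(h)$ and write it via Lemma~\ref{lem:supp-hidden-block-pauli-decomposition} as
\begin{equation*}
P=\chi(P)\prod_{(a,\ell)\in\supp_Q(P)}Q_a^\ell .
\end{equation*}
Modulo global phase, $P$ is then the $\Fc$-sum of the binary labels of the factors $Q_a^\ell$, and each factor lies in $\mathcal A_a$. Since the factors come from distinct blocks, they commute, so the phase-free product is exactly the binary sum of labels with no ordering ambiguity. It therefore suffices to show that every factor $Q_a^\ell$ with $(a,\ell)\in\supp_Q(P)$ itself lies in $\mathcal T(h)$.

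This follows from the domination inequality after Proposition~\ref{prop:supp-transformed-score-factorization}. Since each factor in Eq.~\eqref{eq:supp-block-factorization-score} lies in $[0,1]$, we have
\begin{equation*}
s_\rho(Q_a^\ell)\ge s_\rho(P)\ge h .
\end{equation*}
Hence $Q_a^\ell\in\mathcal T(h)\cap\mathcal A_a$, and $P$ lies in the span of such elements.

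Two edge cases need a remark. First, if $P=I$, its support is empty and its label is the zero vector, which lies in every span; this also covers thresholds $h\le 1$ where the identity has score $1$. Second, if $h\le 0$, then $\mathcal T(h)=\Pc_n$, and the argument still applies verbatim, since domination requires no positivity. No step is really an obstacle. The only care needed is in the bookkeeping of phase-free versus signed representatives: one must observe that the sign $\chi(P)$ and any Pauli phases drop out when passing to binary labels, so that the operator identity of the decomposition lemma becomes a genuine linear relation in $\Fc^{2n}$.
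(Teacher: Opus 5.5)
Your proposal is correct and follows the paper's proof essentially verbatim. Both decompose $P\in\mathcal T(h)$ into its transformed single-block factors and use the $[0,1]$-domination of the score product to place each factor in $\mathcal T(h)\cap\mathcal A_a$, and both note that the reverse inclusion is immediate because that union is contained in $\mathcal T(h)$.
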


\begin{proof}
If $P\in\mathcal T(h)$, every participating factor in
Eq.~\eqref{eq:supp-block-factorization-score} has score at least
$s_\rho(P)\ge h$, since all factors lie in $[0,1]$.  Hence the phase-free
components of $P$ lie in the displayed union and generate $P$.  This proves
one inclusion; the reverse follows because that union is contained in
$\mathcal T(h)$.
\end{proof}

\begin{lemma}[Anticommuting-Pauli score uncertainty]
\label{lem:supp-anticommuting-pauli-score-uncertainty}
For any state $\omega$ and any anticommuting $P,Q\in\Pc_n$,
\begin{equation}
s_\omega(P)+s_\omega(Q)\le1.
\label{eq:supp-anticommuting-score-uncertainty}
\end{equation}
\end{lemma}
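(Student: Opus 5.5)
Let $a:=\Tr(\omega P)$ and $b:=\Tr(\omega Q)$, which are real because the canonical representatives are Hermitian. The claim is $a^2+b^2\le1$. The plan is to reduce this to an operator bound on a single Hermitian observable built from $P$ and $Q$.

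The key observation is that anticommutation makes every real combination of $P$ and $Q$ square to a multiple of the identity. For $(\alpha,\beta)\in\mathbb R^2$ with $\alpha^2+\beta^2=1$, define
\begin{equation}
O:=\alpha P+\beta Q .
\end{equation}
Since $P^2=Q^2=I$ and $PQ+QP=0$, expanding gives
\begin{equation}
O^2=\alpha^2I+\beta^2I+\alpha\beta(PQ+QP)=I .
\end{equation}
Thus $O$ is Hermitian with eigenvalues in $\{\pm1\}$, so $\|O\|_\infty=1$. It follows that $|\Tr(\omega O)|\le1$ for every state $\omega$, that is, $|\alpha a+\beta b|\le1$.

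Now optimize over the direction. If $(a,b)=(0,0)$ the claim is trivial. Otherwise choose $(\alpha,\beta)=(a,b)/\sqrt{a^2+b^2}$. The bound then reads $\sqrt{a^2+b^2}\le1$, and squaring gives
\begin{equation}
s_\omega(P)+s_\omega(Q)=a^2+b^2\le1 .
\end{equation}

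A few points need care. The phase-free convention matters only through the choice of Hermitian representatives, and the scores do not depend on that choice. Anticommutation of the phase-free classes is the same as anticommutation of their representatives, by the relation $P(u)P(v)=(-1)^{[u,v]}P(v)P(u)$. Finally, the argument never uses purity of $\omega$. There is no real obstacle here: the only step requiring thought is realizing that the anticommutator term vanishes, so the combined observable $O$ is itself an involution.
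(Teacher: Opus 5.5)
Your proof is correct and follows essentially the same route as the paper: both use anticommutation to make a real combination of $P$ and $Q$ square to a multiple of the identity, bound its expectation by the operator norm, and then align the combination with $(\Tr(\omega P),\Tr(\omega Q))$. The only cosmetic difference is that you normalize $(\alpha,\beta)$ to unit length first, whereas the paper keeps $(a,b)=(\mu,\nu)$ and uses $\|aP+bQ\|_\infty=\sqrt{a^2+b^2}$.
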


\begin{proof}
Writing $\mu=\Tr(\omega P)$ and $\nu=\Tr(\omega Q)$, anticommutation gives
$(aP+bQ)^2=(a^2+b^2)I$.  Thus
$|a\mu+b\nu|\le\|aP+bQ\|_\infty=\sqrt{a^2+b^2}$.  Taking
$(a,b)=(\mu,\nu)$ (with the zero case immediate) yields
$\mu^2+\nu^2\le1$, which is
Eq.~\eqref{eq:supp-anticommuting-score-uncertainty}.
\end{proof}

Assume the CEBP model, the window and grid above, and
Eq.~\eqref{eq:supp-certified-peeling-grid-condition}.  Choose
\begin{equation}
M_1
\ge
\left\lceil
\frac{128(2n+1)^2}{\Delta_h^2}
\log\!\left(\frac{2\cdot 4^n}{\zeta_{\mathrm{BS}}}\right)
\right\rceil.
\label{eq:supp-certified-peeling-bell-round-count}
\end{equation}
Then $M_1$ Bell rounds use $2M_1$ copies and
\begin{equation}
\Pr(\mathcal E_{\mathrm{BS}})
\ge
1-\zeta_{\mathrm{BS}}.
\label{eq:supp-certified-peeling-event-probability}
\end{equation}
Run Algorithm~\ref{alg:supp-stabilizer-peeling} with this record and the
fixed grid and orders.

\begin{proposition}[Certified blockwise peeling span]
\label{prop:supp-certified-blockwise-peeling-span}
Under the assumptions and sampling condition stated above, on the event
$\mathcal E_{\mathrm{BS}}$, this run returns a threshold
$h\in\mathsf H_{\mathrm{peel}}$, a rank-$t$ basis $g_1,\dots,g_t$, and a Clifford
$U_{\mathrm{stab}}$ with the following properties.

The returned span satisfies
\begin{equation}
\Span_{\Fc}\{g_1,\dots,g_t\}
=
\Span_{\Fc}\widehat{\mathcal T}_{\mathrm{in}}(h)
=
V(h)
=
\Span_{\Fc}\widehat{\mathcal T}_{\mathrm{out}}(h).
\label{eq:supp-certified-peeling-returned-common-span}
\end{equation}
Setting $\lambda:=1-h\in(0,1/2)$, the basis is pairwise commuting and
satisfies
\begin{equation}
s_\rho(g_j)\ge 1-\lambda,
\qquad
j=1,\dots,t.
\label{eq:supp-certified-peeling-generator-score}
\end{equation}

If $t=0$, the returned basis is empty and the returned Clifford is
$U_{\mathrm{stab}}:=I$.  If $t>0$, the returned Clifford satisfies
\begin{equation}
U_{\mathrm{stab}}^\dagger g_jU_{\mathrm{stab}}=Z_j,
\qquad
j=1,\dots,t.
\end{equation}

The returned span also admits a blockwise basis: there exist independent
transformed single-block Paulis
$Q_{a(1)}^{\ell_1},\dots,Q_{a(t)}^{\ell_t}$, with
$Q_{a(i)}^{\ell_i}\in\mathcal T(h)\cap\mathcal A_{a(i)}$, that form a
basis of the same span:
\begin{equation}
\Span_{\Fc}\{g_1,\dots,g_t\}
=
\Span_{\Fc}
\bigl\{
Q_{a(i)}^{\ell_i}:\ i=1,\dots,t
\bigr\}.
\label{eq:supp-certified-peeling-single-block-span}
\end{equation}
\end{proposition}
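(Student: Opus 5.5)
Our plan is to run the algorithm's loop on $\mathcal E_{\mathrm{BS}}$ and show that the first threshold passing the span test also passes the commutation test. Then we read off the remaining properties from Lemmas~\ref{lem:supp-empirical-span-sandwich} and \ref{lem:supp-blockwise-high-score-span}. First we check the hypotheses of Lemma~\ref{lem:supp-stable-grid-threshold}. The choice of $M_1$ in Eq.~\eqref{eq:supp-certified-peeling-bell-round-count} gives $\tau_1=\sqrt{2\log(2\cdot4^n/\zeta_{\mathrm{BS}})/M_1}\le\Delta_h/[8(2n+1)]$, which is Eq.~\eqref{eq:supp-certified-peeling-tau-condition}. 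The grid condition is assumed. So some grid point $h^\star$ has $V(h^\star+2\tau_1)=V(h^\star-2\tau_1)$, and the span sandwich forces $\Span\widehat{\mathcal T}_{\mathrm{in}}(h^\star)=V(h^\star)=\Span\widehat{\mathcal T}_{\mathrm{out}}(h^\star)$. Hence at least one grid point passes the span-equality test.

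The key observation is that, for \emph{any} grid point $h$ passing the span test, Eq.~\eqref{eq:supp-empirical-high-score-span-sandwich} collapses to $\Span\widehat{\mathcal T}_{\mathrm{in}}(h)=V(h)=\Span\widehat{\mathcal T}_{\mathrm{out}}(h)$. This holds whether or not $h$ is the point $h^\star$ from the stable lemma, and it already gives Eq.~\eqref{eq:supp-certified-peeling-returned-common-span} for whichever $h$ is returned. We must then show the commutation check never fails at a span-passing $h$, so the loop returns at the first such $h$. The basis vectors $g_j$ are chosen in $\widehat{\mathcal T}_{\mathrm{in}}(h)\subseteq\mathcal T(h)$, so $s_\rho(g_j)\ge h>1/2$. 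Lemma~\ref{lem:supp-anticommuting-pauli-score-uncertainty} then forbids any two $g_j,g_k$ from anticommuting, since their scores would sum to more than $1$. This yields pairwise commutation and Eq.~\eqref{eq:supp-certified-peeling-generator-score} with $\lambda=1-h\in(0,1/2)$, because $h\in(1/2,1)$. Independence comes from the basis choice. The Clifford is $I$ when $t=0$; when $t>0$ it is supplied by the symplectic completion and sign correction in Proposition~\ref{prop:supp-approximate-stabilizer-decomposition}.

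For the blockwise basis, Lemma~\ref{lem:supp-blockwise-high-score-span} writes $V(h)$ as the span of $\bigcup_a(\mathcal T(h)\cap\mathcal A_a)$. We extract a maximal independent subset of that spanning set; it has exactly $t=\dim V(h)$ elements, each a nonidentity transformed single-block Pauli in $\mathcal T(h)\cap\mathcal A_{a(i)}$. By Eq.~\eqref{eq:supp-certified-peeling-returned-common-span} this span equals $\Span\{g_j\}$, which is Eq.~\eqref{eq:supp-certified-peeling-single-block-span}.

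The main subtlety is the step where the first span-passing grid point might precede $h^\star$. The argument above sidesteps it, because span equality alone pins every intermediate span to $V(h)$, and commutation is automatic from $h>1/2$. So the real work is only in confirming that $\tau_1$ meets the stability lemma's radius condition, which is the arithmetic check on $M_1$.
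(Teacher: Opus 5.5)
Your proposal is correct and follows essentially the same route as the paper: check that the choice of $M_1$ gives $\tau_1\le\Delta_h/[8(2n+1)]$, so the stable-grid lemma and the span sandwich produce a matching grid point; note that any matching candidate collapses the sandwich to $V(h)$ and automatically passes the commutation test because its inner-set members have score at least $h>1/2$, so the scan returns; then extract the single-block basis from Lemma~\ref{lem:supp-blockwise-high-score-span}. Your treatment of the Clifford step (symplectic completion followed by single-qubit sign corrections) also matches the paper's proof.
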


\begin{proof}
Equations~\eqref{eq:supp-certified-peeling-bell-round-count}
and~\eqref{eq:supp-tauM} give
$\tau_1\le\Delta_h/[8(2n+1)]$.  Hence
Lemmas~\ref{lem:supp-stable-grid-threshold}
and~\ref{lem:supp-empirical-span-sandwich} supply a matching-span grid
candidate on $\mathcal E_{\mathrm{BS}}$.

For any matching candidate, the middle span-sandwich inclusions identify the
common span with $V(h)$, proving
Eq.~\eqref{eq:supp-certified-peeling-returned-common-span}; then
Lemma~\ref{lem:supp-blockwise-high-score-span} supplies the independent
single-block basis in
Eq.~\eqref{eq:supp-certified-peeling-single-block-span}.  Every inner-set
member has score at least $h>1/2$, so
Lemma~\ref{lem:supp-anticommuting-pauli-score-uncertainty} makes that set
commuting.  Thus the scan cannot reject a matching candidate and returns by
the guaranteed one.

For the returned $h$, membership
$g_j\in\widehat{\mathcal T}_{\mathrm{in}}(h)$ gives
$s_\rho(g_j)\ge\hat s(g_j)-\tau_1\ge h=1-\lambda$, proving
Eq.~\eqref{eq:supp-certified-peeling-generator-score}.  The algorithm uses
$U_{\mathrm{stab}}=I$ when $t=0$; otherwise symplectic completion gives
$U_0^\dagger g_jU_0=\operatorname{sgn}_jZ_j$, and multiplying $U_0$ by
$X_j$ for each negative sign gives
$U_{\mathrm{stab}}^\dagger g_jU_{\mathrm{stab}}=Z_j$.
\end{proof}

\subsection{Peeling guarantee and dominant syndrome}
\label{sec:supp-peeling-guarantee-syndrome-recovery}

We now turn the certified generator scores into a syndrome-product fidelity
guarantee.

Condition on the Bell event $\mathcal E_{\mathrm{BS}}$ and fix the accepted
output $h$, $\lambda:=1-h$, $g_1,\dots,g_t$, and
$U_{\mathrm{stab}}$ returned by
Algorithm~\ref{alg:supp-stabilizer-peeling} under the hypotheses of
Proposition~\ref{prop:supp-certified-blockwise-peeling-span}.  All certified
consequences in the first part of this subsection concern this fixed output
on $\mathcal E_{\mathrm{BS}}$.

On $\mathcal E_{\mathrm{BS}}$,
Proposition~\ref{prop:supp-certified-blockwise-peeling-span} gives
$s_\rho(g_j)\ge1-\lambda$ for every returned generator.  Set
\begin{equation}
\varepsilon_{\mathrm{peel}}
:=
\frac{t\lambda}{2}.
\label{eq:supp-eps-peel}
\end{equation}

\begin{corollary}[Empirical stabilizer peeling]
\label{cor:supp-empirical-stabilizer-peeling}
\label{prop:supp-empirical-peeling}
Let $h,\lambda,g_1,\dots,g_t,U_{\mathrm{stab}}$ be the output of
Algorithm~\ref{alg:supp-stabilizer-peeling} under the hypotheses of
Proposition~\ref{prop:supp-certified-blockwise-peeling-span}, with
$\lambda=1-h$.  Then, on $\mathcal E_{\mathrm{BS}}$, the sign-defined bit
string $b\in\{0,1\}^t$ of
Eq.~\eqref{eq:supp-deterministic-syndrome-bit} and a residual state
$\rho_{\mathrm{res}}$ satisfy
\begin{equation}
\F\!\left(
U_{\mathrm{stab}}^\dagger \rho\, U_{\mathrm{stab}},\,
|b\rangle\!\langle b|\otimes \rho_{\mathrm{res}}
\right)
\ge
\sqrt{\max\{0,1-\varepsilon_{\mathrm{peel}}\}}.
\label{eq:supp-empirical-peeling-fidelity}
\end{equation}
Consequently, the conclusion holds with probability at least
$1-\zeta_{\mathrm{BS}}$.
\end{corollary}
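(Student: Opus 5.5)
This corollary is a direct combination of Proposition~\ref{prop:supp-certified-blockwise-peeling-span} with the deterministic decomposition of Proposition~\ref{prop:supp-approximate-stabilizer-decomposition}. I would fix the Bell record and work on the event $\mathcal E_{\mathrm{BS}}$. On this event, Proposition~\ref{prop:supp-certified-blockwise-peeling-span} guarantees three things: the run returns rather than failing, the returned basis $g_1,\dots,g_t$ is independent and pairwise commuting, and each generator satisfies $s_\rho(g_j)\ge1-\lambda$ with $\lambda=1-h\in(0,1/2)$. It also guarantees that $U_{\mathrm{stab}}^\dagger g_jU_{\mathrm{stab}}=Z_j$ whenever $t>0$. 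These are exactly the hypotheses of Proposition~\ref{prop:supp-approximate-stabilizer-decomposition}, with the uniform choice $\varepsilon_j^{\mathrm{stab}}:=\lambda\in[0,1]$.

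Next I would invoke the second part of Proposition~\ref{prop:supp-approximate-stabilizer-decomposition}, which holds for \emph{every} Clifford diagonalizing the $g_j$. In particular it applies to the returned $U_{\mathrm{stab}}$. It yields the sign-defined bit string $b$ of Eq.~\eqref{eq:supp-deterministic-syndrome-bit} and a residual state $\rho_{\mathrm{res}}$ whose fidelity lower bound is $\sqrt{\max\{0,1-\tfrac12\sum_j\lambda\}}=\sqrt{\max\{0,1-t\lambda/2\}}$. By Eq.~\eqref{eq:supp-eps-peel} this equals $\sqrt{\max\{0,1-\varepsilon_{\mathrm{peel}}\}}$, which is Eq.~\eqref{eq:supp-empirical-peeling-fidelity}.

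Two edge cases need a separate check. When $t=0$, the algorithm sets $U_{\mathrm{stab}}=I$, $b$ is empty, $\rho_{\mathrm{res}}=\rho$, and the fidelity is $1$, consistent with $\varepsilon_{\mathrm{peel}}=0$. When $t=n$, the zero-qubit residual convention already covers it. The probability statement then follows because the conclusion holds deterministically on $\mathcal E_{\mathrm{BS}}$, and Eq.~\eqref{eq:supp-certified-peeling-event-probability} gives $\Pr(\mathcal E_{\mathrm{BS}})\ge1-\zeta_{\mathrm{BS}}$.

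There is no real obstacle here. The only subtle point is that $b$ and $\rho_{\mathrm{res}}$ depend on the data-dependent output, since $g_j$ and $U_{\mathrm{stab}}$ are random. This is harmless because Proposition~\ref{prop:supp-approximate-stabilizer-decomposition} is deterministic and is applied pointwise to each realized output on the event.
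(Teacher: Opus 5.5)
Your proposal is correct and follows the paper's own proof: on $\mathcal E_{\mathrm{BS}}$ you pass the certified commuting generators with $s_\rho(g_j)\ge 1-\lambda$ to Proposition~\ref{prop:supp-approximate-stabilizer-decomposition} with $\varepsilon_j^{\mathrm{stab}}=\lambda$, identify $\tfrac12 t\lambda=\varepsilon_{\mathrm{peel}}$, and take the probability bound from $\Pr(\mathcal E_{\mathrm{BS}})\ge 1-\zeta_{\mathrm{BS}}$. Your explicit treatment of the $t=0$ and $t=n$ branches, and of the dependence of $b$ and $\rho_{\mathrm{res}}$ on the realized output, are harmless additions that the paper leaves implicit.
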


\begin{proof}
On $\mathcal E_{\mathrm{BS}}$, the returned generators are commuting and
satisfy $s_\rho(g_j)\ge1-\lambda$.  Apply
Proposition~\ref{prop:supp-approximate-stabilizer-decomposition} with
$\varepsilon_j^{\mathrm{stab}}=\lambda$ for all $j$ to obtain
Eq.~\eqref{eq:supp-empirical-peeling-fidelity}.
The event has probability at least $1-\zeta_{\mathrm{BS}}$, completing the
proof.
\end{proof}

Structural recovery uses only the generators, Clifford, accepted threshold,
error certificate, and comparison-state existence from
Eq.~\eqref{eq:supp-empirical-peeling-fidelity}; the learner recovers the
syndrome later by fresh signed Pauli measurements.

\section{Rank-Guided Recovery of Visible Pauli directions}
\label{sec:supp-rank-guided-recovery}

\subsection{Peeling-to-recovery interface}
\label{sec:supp-residual-representatives-score-control}

Condition on $\mathcal E_{\mathrm{BS}}$ and fix
$(h,\lambda:=1-h,t,g_1,\dots,g_t,U_{\mathrm{stab}})$ from
Proposition~\ref{prop:supp-certified-blockwise-peeling-span}.  Write
\begin{equation}
\mathcal S_{\mathrm{peel}}
:=
\Span_{\Fc}\{g_1,\dots,g_t\},
\qquad
m:=n-t,
\label{eq:supp-residual-qubit-number}
\end{equation}
and define the actual peeled state
\begin{equation}
\widetilde\rho
:=
U_{\mathrm{stab}}^\dagger\rho U_{\mathrm{stab}}.
\label{eq:supp-recovery-peeled-state}
\end{equation}
Corollary~\ref{cor:supp-empirical-stabilizer-peeling} supplies
$b\in\{0,1\}^t$ and a state $\rho_{\mathrm{res}}$ on the $m$ residual
qubits for the proof-only comparison
$|b\rangle\!\langle b|\otimes\rho_{\mathrm{res}}$; the protocol operates
only on $\widetilde\rho$.

For $A\in\Pc_n$, write
$s_{\mathrm{full}}(A):=s_{\widetilde\rho}(A)$.
Define the high-score residual-candidate class
\begin{equation}
\mathcal Q_{\mathrm{cand}}(\lambda)
:=
\{Q\in\Pc_n:s_\rho(Q)>\lambda\}.
\label{eq:supp-high-score-residual-candidate-class}
\end{equation}

\begin{proposition}[Residual representatives after peeling]
\label{prop:supp-residual-representatives-after-peeling}
For every $Q\in\mathcal Q_{\mathrm{cand}}(\lambda)$, the peeled Pauli
$R:=U_{\mathrm{stab}}^\dagger Q U_{\mathrm{stab}}$ commutes with
$Z_1,\dots,Z_t$.  Consequently, there exist
$c\in\Fc^t$, $P_{\mathrm{res}}\in\Pc_m$, and
$\chi_Q\in\{\pm1\}$ such that
\begin{equation}
R=
\chi_Q\bigl(
Z^c\otimes P_{\mathrm{res}}
\bigr),
\qquad
Z^c:=Z_1^{c_1}\cdots Z_t^{c_t}.
\label{eq:supp-peeled-pauli-residual-representative}
\end{equation}
\end{proposition}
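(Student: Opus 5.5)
The main step is to show that every $Q$ with $s_\rho(Q)>\lambda$ commutes with every returned generator $g_j$. Once that holds, the residual form follows from conjugating by $U_{\mathrm{stab}}$ and a short symplectic argument.

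\emph{Commutation with the generators.} Fix $Q\in\mathcal Q_{\mathrm{cand}}(\lambda)$ and $j\in[t]$, and suppose for contradiction that $Q$ and $g_j$ anticommute. Lemma~\ref{lem:supp-anticommuting-pauli-score-uncertainty} then gives $s_\rho(Q)+s_\rho(g_j)\le 1$. On $\mathcal E_{\mathrm{BS}}$, Proposition~\ref{prop:supp-certified-blockwise-peeling-span} gives $s_\rho(g_j)\ge 1-\lambda$. Together these force $s_\rho(Q)\le\lambda$, contradicting $s_\rho(Q)>\lambda$. Hence $Q$ commutes with every $g_j$. The strict inequality in the definition of $\mathcal Q_{\mathrm{cand}}(\lambda)$ is exactly what makes this contradiction go through.

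\emph{Transfer to the peeled frame.} Clifford conjugation preserves commutation relations, so $R=U_{\mathrm{stab}}^\dagger QU_{\mathrm{stab}}$ commutes with $U_{\mathrm{stab}}^\dagger g_jU_{\mathrm{stab}}=Z_j$ for each $j\le t$. Here we use that the returned Clifford diagonalizes the basis exactly, by Proposition~\ref{prop:supp-certified-blockwise-peeling-span}.

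\emph{Structure of $R$.} $R$ is a Hermitian Pauli operator, namely $\pm$ a canonical representative. Commuting with $Z_j$ means that its local factor on qubit $j$ commutes with $\sigma^z$, so that factor is $I$ or $\sigma^z$. In binary labels, this says $x_j=0$ for every $j\le t$. Set $c_j:=z_j$ for $j\le t$ and let $P_{\mathrm{res}}$ be the restriction of $R$ to the last $m$ qubits. This gives $R=\chi_Q(Z^c\otimes P_{\mathrm{res}})$. The sign $\chi_Q$ absorbs the difference between the conjugated operator and the canonical tensor-product representative. No extra phase from the $i^{\sum x_jz_j}$ convention arises on the first $t$ qubits, because $x_j=0$ there.

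\emph{Edge cases.} If $t=0$, the claim is immediate with empty $c$ and $P_{\mathrm{res}}=R$. If $t=n$, the same argument gives $R=\pm Z^c$, with $P_{\mathrm{res}}$ trivial on the zero-qubit register.
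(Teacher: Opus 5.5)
Your proposal is correct and follows the paper's own argument. It combines the anticommuting-score bound (Lemma~\ref{lem:supp-anticommuting-pauli-score-uncertainty}) with $s_\rho(Q)>\lambda$ and $s_\rho(g_j)\ge 1-\lambda$ to force $[Q,g_j]=0$, then uses conjugation by $U_{\mathrm{stab}}$ to leave only $I$ or $\sigma^z$ on the first $t$ qubits. Your explicit binary-label argument, the bookkeeping for the sign $\chi_Q$, and the $t=0$ and $t=n$ cases spell out details the paper leaves implicit.
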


\begin{proof}
The candidate and generator scores satisfy
$s_\rho(Q)>\lambda$ and $s_\rho(g_j)\ge1-\lambda$; hence
Lemma~\ref{lem:supp-anticommuting-pauli-score-uncertainty} forces
$[Q,g_j]=0$ for every $j$.  After conjugation, $R$ commutes with every
$Z_j$ and therefore has the form
\eqref{eq:supp-peeled-pauli-residual-representative}, including a harmless
Hermitian-representative sign.
\end{proof}

\subsection{Peeled hidden-block geometry}

Conjugation by $U_{\mathrm{stab}}^\dagger$ transports the hidden
single-block groups $\mathcal A_a$, their internal-product decomposition,
and the support factorization of
Lemma~\ref{lem:supp-hidden-block-pauli-decomposition}.  Write
\begin{equation}
\mathcal R_a
:=
U_{\mathrm{stab}}^\dagger\mathcal A_aU_{\mathrm{stab}},
\qquad
R_a^\ell
:=
U_{\mathrm{stab}}^\dagger Q_a^\ell U_{\mathrm{stab}},
\qquad
a\in[K],
\quad
\ell\in[4^{L_a}-1].
\label{eq:supp-peeled-block-operators}
\end{equation}
With all group operations phase-free, unitary invariance gives
$s_{\mathrm{full}}(R_a^\ell)=s_\rho(Q_a^\ell)$.

The peeled syndrome group is
\begin{equation}
\mathcal K
:=
U_{\mathrm{stab}}^\dagger
\mathcal S_{\mathrm{peel}}
U_{\mathrm{stab}}
=
\{Z^c\otimes I:c\in\Fc^t\}.
\label{eq:supp-peeled-kernel-blockwise-proof}
\end{equation}
Set
\begin{equation}
\mathcal S_a
:=
\mathcal S_{\mathrm{peel}}\cap\mathcal A_a,
\qquad
\mathcal K_a
:=
U_{\mathrm{stab}}^\dagger\mathcal S_aU_{\mathrm{stab}}
\subseteq\mathcal R_a.
\end{equation}
The blockwise basis in
Proposition~\ref{prop:supp-certified-blockwise-peeling-span} and transported
internal-product structure give
\begin{equation}
\mathcal K
=
\mathcal K_1\cdots\mathcal K_K
\label{eq:supp-blockwise-kernel-factorization}
\end{equation}
as an internal product.

Let
\begin{equation}
\mathcal Z_{\mathrm{pref}}
:=
\{Z^c\otimes P:c\in\Fc^t,\ P\in\Pc_m\}
\end{equation}
and define the residual quotient homomorphism
\begin{equation}
\pi:\mathcal Z_{\mathrm{pref}}\longrightarrow\Pc_m,
\qquad
\pi(Z^c\otimes P):=P.
\end{equation}
Then $\ker\pi=\mathcal K$.  For a Pauli subgroup $\mathcal G$, write
\begin{equation}
\pi(\mathcal G)
:=
\pi(\mathcal G\cap\mathcal Z_{\mathrm{pref}}).
\end{equation}
\begin{proposition}[Blockwise peeling preserves residual block separation]
\label{prop:supp-blockwise-peeling-preserves-residual-blocks}
The residual images of the hidden block algebras
remain separated:
\begin{equation}
\pi(\mathcal R_a)
\cap
\left\langle
\pi(\mathcal R_b): b\ne a
\right\rangle
=
\{I\},
\qquad
a\in[K].
\label{eq:supp-residual-block-separation}
\end{equation}
\end{proposition}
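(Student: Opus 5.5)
The plan is to work throughout with phase-free Paulis as binary vectors, so that every group below is an $\Fc$-subspace of $\Fc^{2n}$ and products are sums. Two structural facts are the backbone.

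First, the hidden block subspaces $\mathcal A_1,\dots,\mathcal A_K$ form an internal direct sum decomposition of $\Pc_n$. This is Lemma~\ref{lem:supp-hidden-block-pauli-decomposition}: every Pauli has a unique transformed block support with at most one factor per block. Conjugation by $U_{\mathrm{stab}}^\dagger$ is a linear isomorphism, so the peeled groups $\mathcal R_1,\dots,\mathcal R_K$ likewise decompose $\Pc_n$ as a direct sum.

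Second, the kernel factorizes blockwise: $\mathcal K=\mathcal K_1\cdots\mathcal K_K$ with $\mathcal K_a\subseteq\mathcal R_a$, as recorded in Eq.~\eqref{eq:supp-blockwise-kernel-factorization}. This comes from the blockwise basis of Proposition~\ref{prop:supp-certified-blockwise-peeling-span}. Combined with directness of the $\mathcal R_a$, it gives $\mathcal K\cap\mathcal R_a=\mathcal K_a$. I will use this in the form: an element of $\mathcal K$ has its $\mathcal R_a$-component in $\mathcal K_a$.

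The argument itself is a short diagram chase. Fix $a$ and take $x\in\pi(\mathcal R_a)\cap\langle\pi(\mathcal R_b):b\ne a\rangle$.

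1. Lift $x$ on the block-$a$ side. By definition there is $r_a\in\mathcal R_a\cap\mathcal Z_{\mathrm{pref}}$ with $\pi(r_a)=x$.

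2. Lift $x$ on the other side. Since $\pi$ is a homomorphism on the subgroup $\mathcal Z_{\mathrm{pref}}$, the group generated by the images $\pi(\mathcal R_b)$ equals the image of the group generated by the $\mathcal R_b\cap\mathcal Z_{\mathrm{pref}}$. Hence $x=\pi(y)$ with $y=\prod_{b\ne a}r_b$ and each $r_b\in\mathcal R_b\cap\mathcal Z_{\mathrm{pref}}$.

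3. Compare the lifts. The product $r_a\,y$ lies in $\mathcal Z_{\mathrm{pref}}$, and $\pi(r_a y)=x^2=I$ phase-free, so $r_a y\in\ker\pi=\mathcal K$.

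4. Decompose both ways. By the blockwise kernel factorization, $r_a y=k_1\cdots k_K$ with $k_c\in\mathcal K_c\subseteq\mathcal R_c$. The same element is also $r_a\cdot\prod_{b\ne a}r_b$, with one factor in each $\mathcal R_c$.

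5. Conclude. Uniqueness of the direct-sum decomposition gives $r_a=k_a\in\mathcal K_a\subseteq\mathcal K$. Therefore $x=\pi(r_a)=I$, which proves Eq.~\eqref{eq:supp-residual-block-separation}.

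The only step I expect to need care is justifying the blockwise kernel factorization with a genuinely direct product, that is, $\mathcal K_a=\mathcal K\cap\mathcal R_a$ rather than merely $\mathcal K=\sum_a\mathcal K_a$. I would derive this from Eq.~\eqref{eq:supp-certified-peeling-single-block-span}. Each basis element $Q_{a(i)}^{\ell_i}$ lies in a single $\mathcal A_{a(i)}$, so $\mathcal S_{\mathrm{peel}}=\bigoplus_a\mathrm{span}\{Q_{a(i)}^{\ell_i}:a(i)=a\}$. Each summand lies inside $\mathcal S_{\mathrm{peel}}\cap\mathcal A_a=\mathcal S_a$. Conversely, directness of the $\mathcal A_a$ forces $\mathcal S_a$ to equal that summand. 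Conjugating by $U_{\mathrm{stab}}^\dagger$ then transfers this to the $\mathcal K_a$. Sign and phase issues never arise, since every identity above is phase-free and $\pi$ is defined on phase-free classes.
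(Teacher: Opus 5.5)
Your proof is correct and is essentially the paper's argument: lift to $\mathcal R_a\cap\mathcal Z_{\mathrm{pref}}$, observe that the product lands in $\ker\pi=\mathcal K$, split it using $\mathcal K=\mathcal K_1\cdots\mathcal K_K$, and invoke uniqueness of the transported internal-product decomposition. Your final paragraph usefully derives the blockwise kernel factorization from the blockwise peeling basis, which the paper only cites; note that $\mathcal K_a=\mathcal K\cap\mathcal R_a$ holds by definition, so the substantive fact your argument establishes is the spanning statement $\mathcal K=\sum_a\mathcal K_a$.
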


\begin{proof}
If $\prod_a\bar R_a=I$ with
$\bar R_a\in\pi(\mathcal R_a)$, choose lifts
$R_a\in\mathcal R_a\cap\mathcal Z_{\mathrm{pref}}$.  Then
$\prod_aR_a\in\ker\pi=\mathcal K$, so
Eq.~\eqref{eq:supp-blockwise-kernel-factorization} gives
$\prod_aR_a=\prod_aK_a$ for some
$K_a\in\mathcal K_a\subseteq\mathcal R_a$.  Hence
$\prod_aR_aK_a^{-1}=I$.  The transported internal-product property forces
$R_a=K_a$ blockwise, and therefore $\bar R_a=\pi(R_a)=I$ for every $a$.
\end{proof}

Thus the residual block images commute across blocks, are individually
multiplication-closed, and give every nonidentity member a unique block
label.  We use these three consequences in the recovery induction.

\subsection{Visible residual directions}

A peeled block Pauli $R_a^\ell$ has a residual representative precisely
when $R_a^\ell\in\mathcal Z_{\mathrm{pref}}$.  In that case there are unique
$c(a,\ell)\in\Fc^t$, $\bar R_a^\ell\in\Pc_m$, and
$\chi_{a,\ell}\in\{\pm1\}$ such that
\begin{equation}
R_a^\ell
=
\chi_{a,\ell}
\bigl(Z^{c(a,\ell)}\otimes\bar R_a^\ell\bigr).
\label{eq:supp-visible-residual-representative}
\end{equation}
We call the direction \emph{visible} when $\bar R_a^\ell\ne I$ and
\emph{prefix-only} when $\bar R_a^\ell=I$.

If a visible $R_a^\ell$ had
$s_{\mathrm{full}}(R_a^\ell)\ge h=1-\lambda$, then
$Q_a^\ell\in\mathcal T(h)\cap\mathcal A_a$ would lie in the certified span
$V(h)=\mathcal S_{\mathrm{peel}}$.  Its peeled image would be prefix-only,
contradicting visibility.  Hence
\begin{equation}
s_{\mathrm{full}}(R_a^\ell)
\le
1-\lambda,
\label{eq:supp-post-peeling-upper-block}
\end{equation}
for every visible $R_a^\ell$; this is the full-score ceiling used by
ranking.

The transported support from
Lemma~\ref{lem:supp-hidden-block-pauli-decomposition} is the unique
$\supp_R(A)$ contained in
\begin{equation}
\bigcup_{a=1}^K
\bigl(\{a\}\times[4^{L_a}-1]\bigr)
\label{eq:supp-R-support-domain}
\end{equation}
with at most one label per block and satisfying
\begin{equation}
A
=
\chi_R(A)
\prod_{(a,\ell)\in\supp_R(A)}R_a^\ell,
\qquad
\chi_R(A)\in\{\pm1\}.
\label{eq:supp-block-proof-support-decomp}
\end{equation}
The sign compares Hermitian
representatives, while support and span statements are phase-free.  Its
transported score factorization is
\begin{equation}
s_{\mathrm{full}}(A)
=
\prod_{(a,\ell)\in\supp_R(A)}
s_{\mathrm{full}}(R_a^\ell).
\label{eq:supp-high-score-visible-decomp-factorization}
\end{equation}

Let
\begin{equation}
\mathcal V
:=
\{(a,\ell):R_a^\ell\text{ is visible}\}
\end{equation}
be the visible label set.  A visible residual decomposition of
$P\in\Pc_m\setminus\{I\}$ is a subset
$S\subseteq \mathcal V$
containing at most one label from each hidden block and satisfying, up to
phase,
\begin{equation}
P
=
\prod_{(a,\ell)\in S}
\bar R_a^\ell .
\label{eq:supp-visible-residual-decomp}
\end{equation}
When such subsets exist, choose one with minimum cardinality, using a fixed
deterministic ordering to break ties, and denote it by
$\supp_{\bar R}(P)$.  We call $P$ a \emph{visible residual composite} if
\begin{equation}
|\supp_{\bar R}(P)|\ge 2.
\end{equation}
Prefix-only factors, for which $\bar R_a^\ell=I$, are discarded in
$\supp_{\bar R}(P)$.

\begin{proposition}[High-score full Paulis have visible residual decompositions]
\label{prop:supp-high-score-visible-residual-decomposition}
Let $A\in\Pc_n$ be a full peeled Pauli of the form
\begin{equation}
A=Z^c\otimes P,
\qquad
P\in\Pc_m\setminus\{I\}.
\end{equation}
If
\begin{equation}
s_{\mathrm{full}}(A)>\lambda,
\label{eq:supp-high-score-visible-decomp-condition}
\end{equation}
then $P$ admits a visible residual decomposition in the sense of
Eq.~\eqref{eq:supp-visible-residual-decomp}.
\end{proposition}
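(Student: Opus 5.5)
We decompose $A$ in the transported hidden-block family and show that every participating factor already has a residual representative. By Eq.~\eqref{eq:supp-block-proof-support-decomp}, write $A=\chi_R(A)\prod_{(a,\ell)\in\supp_R(A)}R_a^\ell$ with at most one label per hidden block. By the factorization \eqref{eq:supp-high-score-visible-decomp-factorization}, each factor satisfies $s_{\mathrm{full}}(R_a^\ell)\ge s_{\mathrm{full}}(A)>\lambda$, since all factors lie in $[0,1]$. Unitary invariance then gives $s_\rho(Q_a^\ell)=s_{\mathrm{full}}(R_a^\ell)>\lambda$, so $Q_a^\ell\in\mathcal Q_{\mathrm{cand}}(\lambda)$.

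Proposition~\ref{prop:supp-residual-representatives-after-peeling} now shows that each $R_a^\ell$ commutes with $Z_1,\dots,Z_t$. Hence $R_a^\ell\in\mathcal Z_{\mathrm{pref}}$ and has the representative \eqref{eq:supp-visible-residual-representative}. Apply the homomorphism $\pi$, which is phase-free and multiplicative on $\mathcal Z_{\mathrm{pref}}$, to the decomposition of $A$. Since $\pi(A)=P$, this gives $P=\prod_{(a,\ell)\in\supp_R(A)}\bar R_a^\ell$ up to phase. Discard the prefix-only factors, those with $\bar R_a^\ell=I$, and let $S$ be the remaining labels. By definition every label in $S$ is visible, so $S\subseteq\mathcal V$. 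Also, $S$ contains at most one label per block because $\supp_R(A)$ does. Finally, $P=\prod_{(a,\ell)\in S}\bar R_a^\ell$ still holds. Since $P\neq I$, the set $S$ is nonempty. Thus $S$ is a visible residual decomposition in the sense of Eq.~\eqref{eq:supp-visible-residual-decomp}, and a minimum-cardinality one exists by finiteness.

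The only step needing care is that every factor of $A$, not just $A$ itself, lies in $\mathcal Z_{\mathrm{pref}}$, so that $\pi$ may be applied termwise. A priori one factor could fail to commute with the $Z_j$ while another factor in the product compensates for it. This is excluded because each factor separately has score above $\lambda$, and the anticommutation uncertainty bound (Lemma~\ref{lem:supp-anticommuting-pauli-score-uncertainty}, as used in Proposition~\ref{prop:supp-residual-representatives-after-peeling}) then forces each factor to commute with the generators. The remaining points are phase bookkeeping: signs $\chi$ play no role because all statements are phase-free.
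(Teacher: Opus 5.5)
Your proposal is correct and follows the paper's proof essentially step for step: factor-wise score domination from the block factorization, unitary invariance to place each $Q_a^\ell$ in $\mathcal Q_{\mathrm{cand}}(\lambda)$ so that the residual-representative proposition applies to every factor separately, then comparing residual parts (your $\pi$), discarding prefix-only factors, and using one label per block together with $P\ne I$ for nonemptiness. Your remark that each factor, not merely $A$, must lie in $\mathcal Z_{\mathrm{pref}}$ is exactly why the paper applies that proposition factor by factor.
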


\begin{proof}
Equation~\eqref{eq:supp-high-score-visible-decomp-factorization} and
$s_{\mathrm{full}}(A)>\lambda$ imply
$s_{\mathrm{full}}(R_a^\ell)>\lambda$ for every factor.  Since this equals
$s_\rho(Q_a^\ell)$,
Proposition~\ref{prop:supp-residual-representatives-after-peeling} applies to
each factor.  Multiplying the resulting prefix--residual representatives and
comparing residual parts with $A=Z^c\otimes P$ expresses $P$ as the product
of the nonidentity $\bar R_a^\ell$.  The transported support contains at most
one label per block, so this is a visible residual decomposition; it is
nonempty because $P\ne I$.
\end{proof}

\subsection{Rank-guided symplectic recovery}

Conditioned on the fixed peeling output, take $2M_2$ fresh copies of
$\rho$, apply $U_{\mathrm{stab}}^\dagger$ to every copy, and pair them into
$M_2$ Bell-sampling rounds.  These are operational copies of $\widetilde\rho$;
the procedure never prepares or postselects $\rho_{\mathrm{res}}$.
For $A\in\Pc_n$, let $\hat s_{\mathrm{full}}(A)$ denote the Bell-score
estimator from these $M_2$ fresh peeled Bell rounds on $\widetilde\rho$, so it
estimates $s_{\mathrm{full}}(A)=s_{\widetilde\rho}(A)$.

Set
\begin{equation}
\tau_{\mathrm{rank}}
:=
\tau(M_2,n,\zeta_{\mathrm{rank}})
\end{equation}
and define the separate recovery concentration event
\begin{equation}
\mathcal E_{\mathrm{rank}}
:=
\left\{
|\hat s_{\mathrm{full}}(A)-s_{\mathrm{full}}(A)|
\le
\tau_{\mathrm{rank}}
\text{ for every }A\in\Pc_n
\right\}.
\label{eq:supp-block-uniform-score-event}
\end{equation}
For every fixed peeling record and resulting $U_{\mathrm{stab}}$,
Proposition~\ref{prop:supp-uniform-concentration} gives
\begin{equation}
\Pr(\mathcal E_{\mathrm{rank}}\mid\text{fixed peeling output})
\ge
1-\zeta_{\mathrm{rank}}.
\end{equation}
Thus $\mathcal E_{\mathrm{rank}}$ is distinct from
$\mathcal E_{\mathrm{BS}}$ and uses only the fresh recovery rounds.

For a recovery threshold $\theta\in(0,1)$, define
\begin{equation}
\mathcal L_\theta
:=
\{A\in\Pc_n\setminus\{I\}:\hat s_{\mathrm{full}}(A)\ge\theta\}.
\end{equation}
Order $\mathcal L_\theta$ by decreasing empirical full score, breaking
ties by a fixed deterministic ordering of $\Pc_n$.

A recovered sector is either a singleton
$\mathcal G_\alpha=\{\widehat R_\alpha^x\}$ or a completed triple
\begin{equation}
\mathcal G_\alpha
=
\{\widehat R_\alpha^x,\widehat R_\alpha^z,\widehat R_\alpha^y\},
\qquad
\widehat R_\alpha^y
=
\widehat R_\alpha^x\widehat R_\alpha^z
\end{equation}
in the phase-free residual Pauli group.

For a sector collection $\mathfrak G$, define its independent-axis set and
recovered symplectic span by
\begin{equation}
\begin{aligned}
\operatorname{Ax}(\mathfrak G)
&:=
\{\widehat R_\alpha^x:\mathcal G_\alpha\text{ is a singleton}\}
\\
&\quad\cup
\{\widehat R_\alpha^x,\widehat R_\alpha^z:
\mathcal G_\alpha\text{ is a completed triple}\},
\\
V(\mathfrak G)
&:=
\Span_{\Fc}\operatorname{Ax}(\mathfrak G).
\end{aligned}
\end{equation}
We call $\mathfrak G$ \emph{symplectically valid} when the three
nonidentity members of each completed triple obey the displayed
Pauli-frame relation and are pairwise anticommuting, distinct sectors
commute elementwise, and the Pauli directions in
$\operatorname{Ax}(\mathfrak G)$ are linearly independent.

For a symplectically valid collection and residual Pauli $P$, define the
completed-sector dressing
\begin{equation}
\operatorname{Dress}_{\mathfrak G}(P)
:=
P
\prod_{\beta:\mathcal G_\beta
=\{\widehat R_\beta^x,\widehat R_\beta^z,\widehat R_\beta^y\}}
(\widehat R_\beta^x)^{[P,\widehat R_\beta^z]}
(\widehat R_\beta^z)^{[P,\widehat R_\beta^x]},
\label{eq:supp-triple-dressing-map}
\end{equation}
where $[\cdot,\cdot]$ is the symplectic commutation form.  For
$P^\circ:=\operatorname{Dress}_{\mathfrak G}(P)$, set
\begin{equation}
A_{\mathfrak G}(P^\circ)
:=
\{\gamma:\mathcal G_\gamma=\{\widehat R_\gamma^x\}
\text{ and }[P^\circ,\widehat R_\gamma^x]=1\}.
\label{eq:supp-singleton-anticommutation-set}
\end{equation}
When this set is nonempty, pivot rebasing chooses its fixed first element
$\alpha$, replaces
$\widehat R_\gamma^x$ by
$\widehat R_\gamma^x\widehat R_\alpha^x$ for every
$\gamma\in A_{\mathfrak G}(P^\circ)\setminus\{\alpha\}$, and
then completes the pivot with
$\widehat R_\alpha^z:=P^\circ$ and
$\widehat R_\alpha^y:=\widehat R_\alpha^x\widehat R_\alpha^z$.

\begin{inlinealgorithm}{alg:supp-block-rank-guided-recovery}{Rank-Guided Recovery of Visible Single-Block Pauli Sectors}
\begin{algorithmic}[1]
\Require
$\mathcal L_\theta$ from the $M_2$ fresh Bell rounds.
\Ensure A recovered sector collection $\mathfrak G$.

\State Initialize $\mathfrak G\gets\emptyset$.

\For{each $A\in\mathcal L_\theta$ in the fixed ranked order}
    \If{$A$ cannot be written, up to phase, as $Z^c\otimes P$ with
    $c\in\Fc^t$ and $P\in\Pc_m$}
        \State \textbf{continue}
    \EndIf
    \If{$P=I$}
        \State \textbf{continue}
    \EndIf
    \If{$P\in V(\mathfrak G)$}
        \State \textbf{continue}
    \EndIf

    \State Set $P^\circ\gets \operatorname{Dress}_{\mathfrak G}(P)$ and
    $B\gets A_{\mathfrak G}(P^\circ)$.

    \If{$B=\emptyset$}
        \State Append the singleton sector
        $\mathcal G_{\mathrm{new}}:=\{P^\circ\}$
        to $\mathfrak G$.
    \Else
        \State Let $\alpha$ be the fixed first element of $B$.
        \For{each $\gamma\in B\setminus\{\alpha\}$}
            \State Rebase the singleton axis
            $\widehat R_\gamma^x\gets \widehat R_\gamma^x\widehat R_\alpha^x$.
        \EndFor
        \State Complete the pivot singleton by setting
        $\widehat R_\alpha^z\gets P^\circ, \qquad \widehat R_\alpha^y\gets \widehat R_\alpha^x\widehat R_\alpha^z$.
        \State Replace
        $
        \mathcal G_\alpha=\{\widehat R_\alpha^x\}
        $
        by
        $\mathcal G_\alpha := \{\widehat R_\alpha^x,\widehat R_\alpha^z,\widehat R_\alpha^y\}$.
    \EndIf
\EndFor

\State \Return $\mathfrak G$.
\end{algorithmic}
\end{inlinealgorithm}

\subsection{Ranking and recovery correctness}

Assume the threshold margin
\begin{equation}
\theta-\tau_{\mathrm{rank}}>\lambda
\label{eq:supp-visible-threshold-condition}
\end{equation}
and the ranking margin
\begin{equation}
\lambda(\theta-\tau_{\mathrm{rank}})
-2\tau_{\mathrm{rank}}>0.
\label{eq:supp-block-ranking-gap-condition}
\end{equation}
For the remainder of this subsection, condition on the distinct events
$\mathcal E_{\mathrm{BS}}$ and $\mathcal E_{\mathrm{rank}}$.

The threshold margin first gives the required visible decompositions.  If
$A=Z^c\otimes P\in\mathcal L_\theta$ with $P\ne I$, then
\begin{equation}
s_{\mathrm{full}}(A)
\ge
\hat s_{\mathrm{full}}(A)-\tau_{\mathrm{rank}}
\ge
\theta-\tau_{\mathrm{rank}}
>
\lambda.
\end{equation}
Proposition~\ref{prop:supp-high-score-visible-residual-decomposition}
therefore gives a visible residual decomposition of $P$.

\begin{lemma}[Ranking gap for visible composites]
\label{lem:supp-visible-composite-ranking-gap}
Let $A=Z^c\otimes P\in\mathcal L_\theta$, with $P\neq I$, and suppose
that $P$ is a visible residual composite.  Then the decomposition induced
by $\supp_R(A)$ has the property that every visible component $(a,\ell)$
satisfies
\begin{equation}
\hat s_{\mathrm{full}}(R_a^\ell)>\hat s_{\mathrm{full}}(A).
\label{eq:supp-block-proof-empirical-gap}
\end{equation}
In particular, each such visible component also survives the recovery
threshold.
\end{lemma}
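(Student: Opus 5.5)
The plan is to compare true scores through the factorization of Eq.~\eqref{eq:supp-high-score-visible-decomp-factorization}. Then transfer the comparison to empirical scores using the uniform radius $\tau_{\mathrm{rank}}$ on $\mathcal E_{\mathrm{rank}}$.

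Write $A=\chi_R(A)\prod_{(a,\ell)\in\supp_R(A)}R_a^\ell$. We have $s_{\mathrm{full}}(A)\ge\theta-\tau_{\mathrm{rank}}>\lambda$ by the threshold margin. Every factor therefore has score $>\lambda$, and by Proposition~\ref{prop:supp-residual-representatives-after-peeling} every factor lies in $\mathcal Z_{\mathrm{pref}}$. Each factor is thus either visible or prefix-only. Because $P$ is a visible residual composite, $|\supp_{\bar R}(P)|\ge2$.

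The first key step is to show that the visible labels in $\supp_R(A)$ number at least two. Their residual parts multiply to $P$, since prefix-only factors contribute trivially under $\pi$. They also carry at most one label per block, so they form a visible residual decomposition of $P$. Its cardinality is at least the minimum $|\supp_{\bar R}(P)|\ge2$.

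Now fix a visible component $(a,\ell)$. Pick another visible component $(a',\ell')\ne(a,\ell)$ of the same decomposition. By Eq.~\eqref{eq:supp-post-peeling-upper-block} it satisfies $s_{\mathrm{full}}(R_{a'}^{\ell'})\le 1-\lambda$. The factorization, with all remaining factors in $[0,1]$, then gives
\begin{equation}
s_{\mathrm{full}}(A)\le s_{\mathrm{full}}(R_a^\ell)\,(1-\lambda),
\end{equation}
which rearranges to
\begin{equation}
s_{\mathrm{full}}(R_a^\ell)-s_{\mathrm{full}}(A)\ge\lambda\, s_{\mathrm{full}}(R_a^\ell)\ge\lambda\, s_{\mathrm{full}}(A)\ge\lambda(\theta-\tau_{\mathrm{rank}}).
\end{equation}
On $\mathcal E_{\mathrm{rank}}$ each empirical score is within $\tau_{\mathrm{rank}}$ of its true score, so
\begin{equation}
\hat s_{\mathrm{full}}(R_a^\ell)-\hat s_{\mathrm{full}}(A)\ge\lambda(\theta-\tau_{\mathrm{rank}})-2\tau_{\mathrm{rank}}>0
\end{equation}
by the ranking margin~\eqref{eq:supp-block-ranking-gap-condition}. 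This is Eq.~\eqref{eq:supp-block-proof-empirical-gap}. Survival of the threshold follows at once: $\hat s_{\mathrm{full}}(R_a^\ell)>\hat s_{\mathrm{full}}(A)\ge\theta$, and $R_a^\ell\ne I$ because it is visible, so $R_a^\ell\in\mathcal L_\theta$.

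The main obstacle is the first key step, namely that the decomposition induced by $\supp_R(A)$ really contains a second visible factor. A naive worry is that prefix-only factors absorb part of $P$, leaving only one visible component. That cannot happen, because prefix-only factors have trivial image under $\pi$, so the visible factors alone reproduce $P$. Minimality of $\supp_{\bar R}(P)$ then forces at least two of them. This step requires being careful that phases and the prefix $Z^c$ do not interfere. The proof is phase-free, and $\pi$ is a homomorphism on $\mathcal Z_{\mathrm{pref}}$, so the comparison of residual parts is legitimate.
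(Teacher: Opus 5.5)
Your proposal is correct and follows essentially the same route as the paper. Both arguments factorize $s_{\mathrm{full}}(A)$ over $\supp_R(A)$ and use the post-peeling ceiling $s_{\mathrm{full}}\le 1-\lambda$ on a second visible factor to obtain the true gap $\lambda(\theta-\tau_{\mathrm{rank}})$. Both then transfer this gap to empirical scores on $\mathcal E_{\mathrm{rank}}$ via the ranking margin. Your explicit appeal to the minimality of $\supp_{\bar R}(P)$, to guarantee that a second visible factor exists, simply spells out what the paper asserts in one line.
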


\begin{proof}
By Eq.~\eqref{eq:supp-visible-threshold-condition},
$\theta-\tau_{\mathrm{rank}}>\lambda$.  Thus
Proposition~\ref{prop:supp-high-score-visible-residual-decomposition} ensures
that $P$ has a visible residual decomposition.  Expanding $A$ with
$\supp_R(A)$ and removing prefix-only factors gives such a decomposition.
Because $P$ is a visible residual composite, this visible part contains at
least two labels.

Fix a visible label $(a,\ell)\in\supp_R(A)$.  The block-factorization identity
in the full peeled basis gives
\begin{equation}
\begin{aligned}
&s_{\mathrm{full}}(R_a^\ell)
\left(1-
\prod_{\substack{(b,\ell')\in\supp_R(A)\\(b,\ell')\ne(a,\ell)}}
s_{\mathrm{full}}(R_b^{\ell'})
\right)
\\
&\qquad =
s_{\mathrm{full}}(R_a^\ell)-s_{\mathrm{full}}(A).
\end{aligned}
\end{equation}
The product over the other labels contains at least one visible nonidentity
factor, so the post-peeling separation condition and the bound
$s_{\mathrm{full}}(\cdot)\le1$ imply
\begin{equation}
s_{\mathrm{full}}(R_a^\ell)-s_{\mathrm{full}}(A)
\ge
\lambda\,s_{\mathrm{full}}(R_a^\ell).
\end{equation}
Moreover,
\begin{equation}
s_{\mathrm{full}}(R_a^\ell)
\ge
s_{\mathrm{full}}(A)
\ge
\hat s_{\mathrm{full}}(A)-\tau_{\mathrm{rank}}
\ge
\theta-\tau_{\mathrm{rank}} .
\end{equation}
Combining the last two displays gives
\begin{equation}
s_{\mathrm{full}}(R_a^\ell)-s_{\mathrm{full}}(A)
\ge
\lambda(\theta-\tau_{\mathrm{rank}}).
\label{eq:supp-block-proof-gap}
\end{equation}
Applying Eq.~\eqref{eq:supp-block-uniform-score-event} to both $A$ and
$R_a^\ell$ gives Eq.~\eqref{eq:supp-block-proof-empirical-gap} by
Eq.~\eqref{eq:supp-block-ranking-gap-condition}.  Since
$\hat s_{\mathrm{full}}(R_a^\ell)>\hat s_{\mathrm{full}}(A)\ge\theta$,
each such component also survives the threshold.
\end{proof}

\begin{definition}[Block-pure recovered sector]
A recovered sector is \emph{block-pure} if all of its nonidentity operators
belong to $\pi(\mathcal R_a)$ for one hidden block $a$.
\end{definition}

\begin{proposition}[Recovery validity, block purity, and completeness]
\label{prop:supp-block-triple-single-block}
Let
\begin{equation}
\mathfrak G
=
\{\mathcal G_1,\dots,\mathcal G_L\}
\end{equation}
be the output of Algorithm~\ref{alg:supp-block-rank-guided-recovery}.  Then
every recovered sector is block-pure; every completed triple obeys
$\widehat R_\alpha^y=\widehat R_\alpha^x\widehat R_\alpha^z$ phase-free and consists
of three pairwise anticommuting operators; distinct sectors commute
elementwise; and the Pauli directions in $\operatorname{Ax}(\mathfrak G)$ are linearly
independent over $\Fc$.  Moreover, every threshold-surviving nonidentity
residual action is generated by the recovered sectors:
\begin{equation}
\begin{aligned}
&\{P\in\Pc_m\setminus\{I\}:\exists c\in\Fc^t
\text{ such that }
\hat s_{\mathrm{full}}(Z^c\otimes P)\ge\theta\}
\\
&\qquad\subseteq
V(\mathfrak G).
\end{aligned}
\label{eq:supp-threshold-span-completeness}
\end{equation}
\end{proposition}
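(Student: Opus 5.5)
We argue by induction over the ranked scan of $\mathcal L_\theta$, conditioned on $\mathcal E_{\mathrm{BS}}\cap\mathcal E_{\mathrm{rank}}$. The invariant after each processed candidate is the full conclusion of the proposition for the current collection, restricted to candidates already scanned. Concretely: every sector is block-pure; triples satisfy the frame relation and anticommute pairwise; distinct sectors commute elementwise; $\operatorname{Ax}(\mathfrak G)$ is independent; and every already-scanned $P$ lies in $V(\mathfrak G)$. The base case $\mathfrak G=\emptyset$ is vacuous. Completeness \eqref{eq:supp-threshold-span-completeness} then follows at the end. Each surviving $P$ is either skipped because $P\in V(\mathfrak G)$, or causes an append or completion that places $P$ (up to dressing by elements of $V(\mathfrak G)$) into the span. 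Rebasing $\widehat R^x_\gamma\gets\widehat R^x_\gamma\widehat R^x_\alpha$ is an invertible change of basis, so it never shrinks $V(\mathfrak G)$. Candidates without a prefix form, or with $P=I$, do not enter the left side of \eqref{eq:supp-threshold-span-completeness} at all.

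The algebraic part of the inductive step is a symplectic Gram--Schmidt check. Dressing by the completed triples makes $P^\circ$ commute with every $\widehat R^x_\beta$ and $\widehat R^z_\beta$, using the pairwise anticommutation within each triple. Hence $P^\circ$ can anticommute only with singleton axes, which is exactly the set $B=A_{\mathfrak G}(P^\circ)$.

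\begin{itemize}
\item If $B=\emptyset$, the new singleton commutes with every sector. It is independent of $\operatorname{Ax}(\mathfrak G)$ because $P\notin V(\mathfrak G)$ and dressing only adds span elements.
\item If $B\neq\emptyset$, rebasing makes every $\gamma\in B\setminus\{\alpha\}$ commute with $P^\circ$, since the two anticommutation signs cancel. Setting $\widehat R^z_\alpha=P^\circ$ then gives an anticommuting pair. Independence persists because the rebasing is unimodular and $P^\circ\notin V(\mathfrak G)$.
\end{itemize}

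The main obstacle is block purity, which must hold for the new or modified sector after each step. My plan is to prove a stronger claim: whenever the algorithm uses a candidate $P\notin V(\mathfrak G)$, $P$ is not a visible residual composite. Suppose it were. Lemma~\ref{lem:supp-visible-composite-ranking-gap} shows that every visible component $\bar R^\ell_a$ of $P$ has strictly larger empirical full score, so each was scanned earlier and survived the threshold. By the inductive completeness, each component already lies in $V(\mathfrak G)$, and therefore so does $P$, a contradiction. Two points need care here:
\begin{itemize}
\item the component is scanned as the full operator $R^\ell_a=\chi Z^{c(a,\ell)}\otimes\bar R^\ell_a$, which is in prefix form;
\item its residual action is $\bar R^\ell_a$.
\end{itemize}
Thus the used $P$ equals a single $\bar R^\ell_a\in\pi(\mathcal R_a)$ for a unique block $a$, by Proposition~\ref{prop:supp-blockwise-peeling-preserves-residual-blocks}.

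It remains to show that dressing and rebasing stay within one block. Dressing multiplies $P$ only by triple members whose commutation with $P$ is nontrivial. Operators from different residual blocks commute, because their lifts commute and the prefix is diagonal. Block purity of the existing sectors therefore means only same-block triple members enter, so $P^\circ\in\pi(\mathcal R_a)$. By the same reasoning, all $\gamma\in B$ are singleton sectors of block $a$, so the rebased products stay in $\pi(\mathcal R_a)$. Multiplicative closure of $\pi(\mathcal R_a)$ together with the unique block label from Proposition~\ref{prop:supp-blockwise-peeling-preserves-residual-blocks} gives block purity of every updated sector. This closes the induction.
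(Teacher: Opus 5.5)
Your proposal is correct and follows essentially the same route as the paper: the same induction over the ranked list $\mathcal L_\theta$ with the same invariant, and the ranking-gap lemma combined with the fact that $V(\mathfrak G)$ never shrinks to rule out visible composites. The dressing and pivot-rebasing checks are also the same, relying on cross-block commutation, same-block closure, and unique block labels from residual block separation.
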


\begin{proof}
Induct over the ranked list $\mathcal L_\theta$, fixing the visible
decomposition of each non-prefix survivor.  The invariant is: all current
sectors are block-pure; completed triples are Pauli frames; distinct sectors
commute; their Pauli directions are independent; and every processed nonidentity residual
action lies in $V:=V(\mathfrak G)$.  It holds initially.  We use throughout
the cross-block commutation, same-block closure, and block-label uniqueness
following
Proposition~\ref{prop:supp-blockwise-peeling-preserves-residual-blocks}.

Let $A$ be next.  The recovery event and threshold margin give
$s_{\mathrm{full}}(A)\ge\theta-\tau_{\mathrm{rank}}>\lambda$.
Proposition~\ref{prop:supp-residual-representatives-after-peeling}
therefore gives $A=Z^c\otimes P$ up to phase.  The algorithm skips $P=I$
and $P\in V$, so consider $P\notin V$.

Suppose, for contradiction, that $P$ is a visible residual composite.
Let $S_A$ be the non-prefix labels in $\supp_R(A)$, so phase-free
$P=\prod_{(a,\ell)\in S_A}\bar R_a^\ell$ and $|S_A|\ge2$.
Lemma~\ref{lem:supp-visible-composite-ranking-gap} places every corresponding
$R_a^\ell$ strictly before $A$ in $\mathcal L_\theta$.  Their nonidentity
residual representatives lie in $V$ by the invariant, so their product
$P$ lies in $V$, a contradiction.  Thus $P$ is a single visible
$\bar R_a^\ell$, with unique hidden-block label $a$.

In dressing, cross-block Pauli directions have zero exponent and same-block-products stay
in $\pi(\mathcal R_a)$; hence $P^\circ$ remains block-$a$ pure and commutes
with every completed triple.  Because the dressing factors lie in $V$,
\begin{equation}
V+\Span_{\Fc}\{P^\circ\}
=
V+\Span_{\Fc}\{P\}.
\end{equation}
Thus $P^\circ\ne I$ and $P^\circ\notin V$.

If $P^\circ$ commutes with every singleton sector, the algorithm appends
the block-pure singleton $\{P^\circ\}$, preserving commutation and
independence while adding the direction $P$.

Otherwise, let the algorithm choose a pivot among the singleton sectors
that anticommute with $P^\circ$.  By uniqueness of nonidentity block
labels and cross-block commutation, every such singleton is block-$a$
pure.  Rebasing each nonpivot singleton by multiplication with the pivot
is an invertible same-block basis change; the rebased axis commutes with
$P^\circ$ because both factors anticommute with it, and other cross-sector
commutators remain zero.  Completing the pivot sets
$\widehat R_\alpha^z=P^\circ$ and
$\widehat R_\alpha^y=\widehat R_\alpha^x\widehat R_\alpha^z$.  The resulting
$x$, $z$, and $y$ operators anticommute pairwise and remain block-$a$
pure by same-block multiplication closure.  The rebasing preserves the
old span, and $P^\circ\notin V$ makes the new $z$ axis independent of all
previous Pauli directions.  Thus completion adds $\Span_{\Fc}\{P\}$ and preserves the
invariant.

At termination every nonidentity survivor has entered the span or was already
in it, proving Eq.~\eqref{eq:supp-threshold-span-completeness}; the other
invariant clauses give the stated validity and block-purity conclusions.
\end{proof}

The recovery stage uses exactly $M_2$ fresh two-copy Bell rounds, hence
$2M_2$ fresh copies of $\rho$, with $U_{\mathrm{stab}}^\dagger$ applied
before measurement.  Freshness and the conditional recovery guarantee give
\begin{equation}
\Pr(\mathcal E_{\mathrm{BS}}\cap\mathcal E_{\mathrm{rank}})
\ge
1-\zeta_{\mathrm{BS}}-\zeta_{\mathrm{rank}}.
\end{equation}
Although these rounds estimate all full-Pauli scores simultaneously,
forming and sorting $\mathcal L_\theta$ still scans $4^n$ phase-free
Paulis, so the classical runtime is not optimized.

Finally, block purity does not imply one recovered sector per hidden block.
Several mutually commuting recovered sectors can carry the same unique
hidden-block label, so the rank-guided output may over-refine one hidden
block.  The next section uses mixed cumulants to group such block-pure
sectors into the hidden-block structure.

\section{Hidden Block Grouping by Mixed Cumulants}
\label{sec:supp-cumulant-grouping}

\subsection{Grouping setup and target sector partition}

Condition throughout on successful certified peeling and on the recovery
event of Proposition~\ref{prop:supp-block-triple-single-block}.  Thus every
recovered sector is block-pure.  Write the recovered sectors as
\begin{equation}
\mathfrak G=\{\mathcal G_1,\dots,\mathcal G_L\},
\end{equation}
where $L\le m\le n$.  All operators in these sectors are residual Paulis on
the $m$ unpeeled qubits.  Each sector is either a singleton axis or a
completed Pauli triple; distinct sectors commute elementwise, and the
recovered binary Pauli directions are independent.

For each sector $\mathcal G_\alpha$, let $a(\alpha)$ be its unique
transformed hidden-block label, and define the true recovered-sector blocks
and their represented labels by
\begin{equation}
\mathcal B_a^\star
:=
\{\alpha\in[L]:a(\alpha)=a\},
\qquad
\mathcal A^\star
:=
\{a\in[K]:\mathcal B_a^\star\ne\varnothing\},
\qquad
K^\star:=|\mathcal A^\star|.
\label{eq:supp-true-recovered-sector-blocks}
\end{equation}
They form the target partition
\begin{equation}
[L]
=
\bigsqcup_{a\in\mathcal A^\star}\mathcal B_a^\star.
\label{eq:supp-true-recovered-sector-partition}
\end{equation}
If $L\le1$, this partition is already known and there is nothing to group;
henceforth assume $L\ge2$.

Choose a maximum grouping order $\ell_{\mathrm{grp}}$ satisfying
\begin{equation}
\ell_{\mathrm{grp}}
\ge
\max_{a\in\mathcal A^\star}|\mathcal B_a^\star|.
\label{eq:supp-grouping-order-promise}
\end{equation}
The recovered-sector cap follows from
Proposition~\ref{prop:supp-block-triple-single-block}.  Indeed, fix a hidden
block $a$ of size $L_a$.  Its recovered triple sectors contribute mutually
orthogonal hyperbolic planes, while its recovered singleton Pauli directions are
independent and span an isotropic subspace in the symplectic complement of
those planes.  If there are $x$ triples and $s$ singletons, this gives
$s\le L_a-x$ and hence
$|\mathcal B_a^\star|=x+s\le L_a$.  Thus, when the available block-size
prior gives $L_a\le d$, the known recovered-sector cap is
\begin{equation}
\max_{a\in\mathcal A^\star}|\mathcal B_a^\star|\le d,
\end{equation}
and we use the specialization $\ell_{\mathrm{grp}}=d$.  If
$\ell_{\mathrm{grp}}=1$, every true recovered-sector block is a singleton,
so again no tests or merges are needed.  The nontrivial analysis below
therefore assumes $\ell_{\mathrm{grp}}\ge2$.

For $A\subseteq[L]$, define the generated residual Pauli group
\begin{equation}
\mathcal P(A)
:=
\left\langle
\bigcup_{\alpha\in A}\mathcal G_\alpha
\right\rangle,
\label{eq:supp-cluster-pauli-set}
\end{equation}
where Pauli phases are ignored and the recovered-sector generator convention
is retained.  Each recovered sector is either a singleton, which supplies
one independent binary Pauli axis, or a completed Pauli triple, whose three
nonidentity elements have only two independent binary Pauli directions.  Consequently,
the generators contributed by $A$ have binary rank at most $2|A|$, and
therefore
\begin{equation}
|\mathcal P(A)|
\le
2^{2|A|}
=
4^{|A|}.
\label{eq:supp-generated-group-cardinality}
\end{equation}

For $P_0\in\Pc_m$ and $\chi\in\{\pm1\}$, define the residual signed moment
on signed Hermitian Paulis by
\begin{equation}
\mu_{\mathrm{res}}(\chi P_0)
:=
\Tr\!\left[
\widetilde\rho\,
(I^{\otimes t}\otimes \chi P_0)
\right]
=
\chi\,\mu_{\mathrm{res}}(P_0).
\label{eq:supp-residual-signed-moment}
\end{equation}
When $t=0$, this is the usual signed Pauli moment.  We use canonical Hermitian
representatives of phase-free Paulis.  Every product used for a subset moment
is reduced to its canonical representative together with its known Hermitian
sign.  Changing the sign of one displayed tuple member changes the full mixed
cumulant only by the same overall sign, so the absolute-value tests below are
representative-independent.

\subsection{Mixed cumulants and moment perturbations}
\label{sec:supp-mixed-cumulants}
Pairwise cumulants can miss higher-order dependence, as the next subsection
shows.  We therefore use standard classical mixed cumulants of the commuting
Pauli tuples tested by grouping.  Their partition and logarithmic
generating-function identities are standard
\cite{doubilet1972foundations,smith2020multivariateKstatistics}; the new
ingredient below is their robust transfer through peeling.

\begin{definition}[Mixed cumulant of a labeled commuting tuple]
\label{def:supp-mixed-cumulant-sector-set}
Let $\rho$ be a state and let
$\mathbf O=(O_i)_{i\in[q]}$
be a finite labeled tuple of mutually commuting operators.  For commuting
observables this is the classical joint cumulant of their joint outcomes.
Position labels are retained even when operator values repeat.  For
$U\subseteq[q]$, set
\begin{equation}
M_\rho(U;\mathbf O)
:=
\Tr\!\left(\rho\prod_{i\in U}O_i\right),
\label{eq:supp-signed-pauli-moment-general}
\end{equation}
with the convention $M_\rho(\varnothing;\mathbf O):=1$.

Its \emph{mixed cumulant} is
\begin{equation}
\begin{aligned}
\kappa_\rho(\mathbf O)
:={}&
\sum_{\mathfrak p\in \Pi([q])}
(|\mathfrak p|-1)!(-1)^{|\mathfrak p|-1}
\prod_{E\in \mathfrak p}
M_\rho(E;\mathbf O),
\end{aligned}
\label{eq:supp-general-mixed-cumulant}
\end{equation}
where $\Pi([q])$ denotes the set of all partitions of the position labels
$[q]$ into nonempty disjoint blocks.  Thus coincident operator values
$O_i=O_j$ remain distinct labeled entries when $i\ne j$.
\end{definition}

We write
$M_\rho(O_1,\dots,O_q):=M_\rho([q];\mathbf O)$ and
$\kappa_\rho(O_1,\dots,O_q):=\kappa_\rho(\mathbf O)$, always retaining the
position labels.  For commuting residual Paulis $O_1,\dots,O_q\in\Pc_m$,
write
\begin{equation}
\kappa_{\mathrm{res}}(O_1,\dots,O_q)
:=
\kappa_{\widetilde\rho}
\bigl(
I^{\otimes t}\otimes O_1,\dots,
I^{\otimes t}\otimes O_q
\bigr),
\label{eq:supp-residual-cumulant-shorthand}
\end{equation}
computed from the signed subset moments using the chosen Hermitian
representatives.  The following one-sided implication is the
product-split vanishing principle used for grouping
\cite{doubilet1972foundations,smith2020multivariateKstatistics}, specialized
to the labeled subset moments used here.

\begin{proposition}[Product-split moment factorization forces cumulant vanishing]
\label{prop:supp-nonzero-cumulant-certifies-no-product-split}
Let $\rho$ be a state and let
$\mathbf O=(O_i)_{i\in[q]}$ be a nonempty finite labeled tuple of mutually
commuting operators.  Suppose there exists a nontrivial
split
\begin{equation}
[q]=\mathsf U\sqcup\mathsf V,
\qquad \mathsf U\neq\varnothing,
\qquad \mathsf V\neq\varnothing,
\end{equation}
such that the moment family factorizes across this split: for every
$U\subseteq[q]$,
\begin{equation}
M_\rho(U;\mathbf O)
=
M_\rho(U\cap\mathsf U;\mathbf O)
M_\rho(U\cap\mathsf V;\mathbf O).
\label{eq:supp-moment-factorization-split}
\end{equation}
Then
$\kappa_\rho(\mathbf O)=0$.
\end{proposition}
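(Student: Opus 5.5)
The plan is to work purely combinatorially with the moment--cumulant formula of Definition~\ref{def:supp-mixed-cumulant-sector-set}, without needing any operator structure beyond the factorization hypothesis. I would use the standard generating-function route. Introduce formal commuting indicator variables $x_1,\dots,x_q$ with $x_i^2=0$, and the multilinear moment polynomial
\begin{equation}
F(x):=\sum_{U\subseteq[q]} M_\rho(U;\mathbf O)\prod_{i\in U}x_i .
\end{equation}
The partition formula in Eq.~\eqref{eq:supp-general-mixed-cumulant} is exactly the coefficient of $x_1x_2\cdots x_q$ in $\log F(x)$. To verify this, expand $\log(1+y)=\sum_{r\ge1}(-1)^{r-1}y^r/r$ with $y=F-1$. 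Then collect the square-free terms. An ordered $r$-tuple of disjoint nonempty subsets covering $[q]$ occurs $r!$ times per unordered partition, which turns $1/r$ into $(r-1)!$. Nilpotency $x_i^2=0$ kills all overlapping terms. I would state this identity as a short lemma; it is the only real computation in the argument.

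Next, use the factorization hypothesis Eq.~\eqref{eq:supp-moment-factorization-split}. Every $U$ splits uniquely as $(U\cap\mathsf U)\sqcup(U\cap\mathsf V)$, and $M_\rho(\varnothing;\mathbf O)=1$ handles the degenerate pieces. Hence $F(x)=F_{\mathsf U}(x_{\mathsf U})\,F_{\mathsf V}(x_{\mathsf V})$, where each factor involves only the variables of its own side and has constant term $1$. Because both factors are of the form $1+{}$nilpotent, the logarithm is additive: $\log F=\log F_{\mathsf U}+\log F_{\mathsf V}$. Additivity holds in the truncated commutative ring of square-free polynomials, since the formal series identity $\log((1+a)(1+b))=\log(1+a)+\log(1+b)$ is valid for nilpotent $a,b$.

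Finally, $\log F_{\mathsf U}$ contains no monomial involving any $x_j$ with $j\in\mathsf V$, and symmetrically for $\log F_{\mathsf V}$. Since both $\mathsf U$ and $\mathsf V$ are nonempty, the full monomial $x_1\cdots x_q$ appears in neither term. Its coefficient, which is $\kappa_\rho(\mathbf O)$, therefore vanishes.

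The main obstacle is only bookkeeping in the first step, namely matching the $(|\mathfrak p|-1)!(-1)^{|\mathfrak p|-1}$ weights to the log expansion. Labeled positions require no special care, because the variables are indexed by positions and not by operator values. As an alternative that avoids the series, one can invert the relation via Möbius inversion on the partition lattice, $M(U)=\sum_{\mathfrak p\in\Pi(U)}\prod_{E}\kappa(E)$. One then shows by induction on $|U|$ that every cumulant of a subtuple straddling the split vanishes. The series argument is cleaner, so I would use that one.
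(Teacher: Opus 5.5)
Your proposal is correct and matches the paper's proof: the paper also forms the multilinear moment generating function $F_{\mathbf O}(z)$, identifies $\kappa_\rho(\mathbf O)$ with the coefficient of $z_1\cdots z_q$ in $\log F_{\mathbf O}$, and uses the split hypothesis to write $F_{\mathbf O}=F_{\mathsf U}F_{\mathsf V}$. It then concludes from additivity of the logarithm that this coefficient vanishes. The only difference is that you verify the log-coefficient identity explicitly (including the nilpotent truncation and the $r!/r=(r-1)!$ bookkeeping), whereas the paper simply cites it as a standard formal-power-series fact.
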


\begin{proof}
For formal variables indexed by the position labels, define
\begin{equation}
F_{\mathbf O}(z)
:=
\sum_{U\subseteq[q]}
M_\rho(U;\mathbf O)
\prod_{i\in U}z_i .
\end{equation}
The standard formal-power-series identity gives
$\kappa_\rho(\mathbf O)=[z_1\cdots z_q]\log F_{\mathbf O}(z)$.
Equation~\eqref{eq:supp-moment-factorization-split} implies
$F_{\mathbf O}=F_{\mathsf U}F_{\mathsf V}$, hence
$\log F_{\mathbf O}=\log F_{\mathsf U}+\log F_{\mathsf V}$.
Neither summand contains every position variable, so the displayed
coefficient vanishes.
\end{proof}

A nonzero cumulant therefore rules out the tested product split.  This
certifies only the selected tuple (or its generated algebras); a zero
cumulant does not imply state factorization.

For $q\ge 1$, define
\begin{equation}
\Gamma_q
:=
\sum_{\mathfrak p\in\Pi([q])}
(|\mathfrak p|-1)!\,|\mathfrak p|
=
\sum_{k=1}^q S(q,k)k!,
\label{eq:supp-cumulant-gamma-s}
\end{equation}
where $S(q,k)$ is a Stirling number of the second kind.
The constants $\Gamma_q$ are the ordered Bell, or Fubini, numbers.  They are
distinct from the ordinary Bell numbers
$B_q:=|\Pi([q])|$, which count unordered set partitions
\cite{smith2020multivariateKstatistics} and enter the runtime analysis below.

\begin{proposition}[Bounded-moment cumulant perturbation]
\label{prop:supp-bounded-moment-cumulant-perturbation}
Let $M$ and $\widehat M$ be two moment families indexed by the nonempty
subsets of $[q]$, and suppose
\begin{equation}
|M(U)|,|\widehat M(U)|\le1,
\qquad
\varnothing\ne U\subseteq[q].
\end{equation}
If, for some $\tau_\mu>0$, uniformly over these subsets,
\begin{equation}
|\widehat M(U)-M(U)|\le\tau_\mu,
\end{equation}
then the corresponding cumulants obey
\begin{equation}
|\widehat\kappa-\kappa|
\le
\Gamma_q\tau_\mu.
\label{eq:supp-bounded-moment-cumulant-perturbation}
\end{equation}
\end{proposition}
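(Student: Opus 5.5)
The plan is to work directly from the partition formula in Eq.~\eqref{eq:supp-general-mixed-cumulant}, applied to both moment families with the same coefficients:
\begin{equation}
\widehat\kappa-\kappa
=
\sum_{\mathfrak p\in\Pi([q])}
(|\mathfrak p|-1)!(-1)^{|\mathfrak p|-1}
\left(
\prod_{E\in\mathfrak p}\widehat M(E)
-
\prod_{E\in\mathfrak p}M(E)
\right).
\end{equation}
Every block $E$ of a partition is nonempty, so only the nonempty-subset moments enter. The hypotheses $|M(E)|,|\widehat M(E)|\le1$ and $|\widehat M(E)-M(E)|\le\tau_\mu$ therefore apply to every factor that appears.

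For a fixed partition with blocks $E_1,\dots,E_k$, I will bound the product difference by the standard telescoping identity
\begin{equation}
\prod_{j=1}^k\widehat M(E_j)-\prod_{j=1}^k M(E_j)
=
\sum_{j=1}^k
\left(\prod_{i<j}\widehat M(E_i)\right)
\bigl(\widehat M(E_j)-M(E_j)\bigr)
\left(\prod_{i>j}M(E_i)\right).
\end{equation}
Each of the $k$ terms has all surrounding factors bounded by one in modulus, so the whole difference is at most $k\tau_\mu=|\mathfrak p|\tau_\mu$.

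Next I apply the triangle inequality over partitions and drop the sign $(-1)^{|\mathfrak p|-1}$. This gives
\begin{equation}
|\widehat\kappa-\kappa|\le\tau_\mu\sum_{\mathfrak p\in\Pi([q])}(|\mathfrak p|-1)!\,|\mathfrak p|=\Gamma_q\tau_\mu,
\end{equation}
by the definition in Eq.~\eqref{eq:supp-cumulant-gamma-s}. This is exactly Eq.~\eqref{eq:supp-bounded-moment-cumulant-perturbation}.

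For completeness, I will also verify the second form of $\Gamma_q$ stated in Eq.~\eqref{eq:supp-cumulant-gamma-s}. Grouping partitions by their number of blocks $k$ gives $\sum_k S(q,k)(k-1)!\,k=\sum_k S(q,k)k!$.

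There is no real obstacle here. The one point needing care is to confirm that the empty-set convention $M(\varnothing)=1$ never appears as a perturbed factor. This holds because partitions contain only nonempty blocks, so the uniform bound over nonempty subsets suffices.
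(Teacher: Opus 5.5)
Your proof is correct and follows the paper's argument exactly: apply the partition formula to both moment families with the same coefficients, telescope each block product using that every other factor has modulus at most one to get the bound $|\mathfrak p|\tau_\mu$, and sum with weights $(|\mathfrak p|-1)!$ to obtain $\Gamma_q\tau_\mu$. Your remarks that only nonempty blocks enter and that $\sum_k S(q,k)(k-1)!\,k=\sum_k S(q,k)k!$ are also correct.
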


\begin{proof}
For each $\mathfrak p\in\Pi([q])$, telescope its product over the
$|\mathfrak p|$ blocks.  Since every other factor has magnitude at most one,
the product changes by at most $|\mathfrak p|\tau_\mu$.  Summing with the
cumulant coefficients gives
\begin{equation}
|\widehat\kappa-\kappa|
\le
\sum_{\mathfrak p\in\Pi([q])}
(|\mathfrak p|-1)!\,|\mathfrak p|\,\tau_\mu
=
\Gamma_q\tau_\mu,
\end{equation}
as claimed.
\end{proof}

Both the ordered numbers $\Gamma_q$ and ordinary Bell numbers
$B_q:=|\Pi([q])|$ are increasing, so their maxima through the grouping order
occur at $\ell_{\mathrm{grp}}$.
The ordered-Bell exponential generating function is $(2-e^x)^{-1}$
\cite{doubilet1972foundations}, and hence $\Gamma_q
\sim
\frac{q!}{2(\log 2)^{q+1}}$ holds.

\subsection{Why pairwise cumulants are insufficient}
\label{sec:supp-pairwise-cumulants-insufficient}
The following recovered-sector example shows why order two does not suffice
\cite{smith2020multivariateKstatistics}.  Consider the $3$-qubit peeled
state

\begin{align}
\widetilde\rho_{\mathrm{ex}}
=
\frac{1}{32}
\Bigl(
&13\,|000\rangle\!\langle 000|
\notag\\
&+
5\!\!\sum_{x\in\{001,010,100\}}\!\!|x\rangle\!\langle x|
\notag\\
&+
\!\!\sum_{x\in\{011,101,110,111\}}\!\!|x\rangle\!\langle x|
\Bigr).
\label{eq:supp-example-state-three-sectors}
\end{align}
Here $t=0$, every Pauli containing $X$ or $Y$ has score zero, and
$s(Z_i)=1/4$, $s(Z_iZ_j)=1/16$, while $s(Z_1Z_2Z_3)=0$.  Thus for
$\theta\in(0,1/4)$ rank-guided recovery inserts the independent Pauli directions and
skips any surviving two-body products already in their span, giving
$\mathfrak G=\{\{Z_1\},\{Z_2\},\{Z_3\}\}$.

Relabel the Pauli directions as $P_\alpha,P_\beta,P_\gamma$.  The complete moments used
below are
\begin{equation}
M(P_u)=\frac12,
\qquad
M(P_u,P_v)=\frac14\quad(u\ne v),
\qquad
M(P_\alpha,P_\beta,P_\gamma)=0,
\end{equation}
and hence
\begin{align*}
\kappa(P_u,P_v)
&=\frac14-\frac12\cdot\frac12=0,
\qquad u\ne v,\\
\kappa(P_\alpha,P_\beta,P_\gamma)
&=0-3\cdot\frac14\cdot\frac12
+2\left(\frac12\right)^3=-\frac18\ne0.
\end{align*}
All three sectors belong to one true block.  Across any split
$\{u\}\sqcup\{v,w\}$, the allowed witness $(P_u,P_vP_w)$ likewise has
cumulant $-1/8$.  The block is therefore $\eta_{\mathrm{irr}}$-irreducible
for $0\le\eta_{\mathrm{irr}}<1/8$ when $\ell_{\mathrm{grp}}\ge3$, yet its
singleton clusters have no pairwise edge: order three merges them, whereas
order two cannot start.  We next transfer such full-lift witnesses through
peeling.

\subsection{Recovered sectors, full lifts, and the peeling floor}
\label{sec:supp-recovered-blocks-full-lifts-peeling-floor}

Recall the peeled hidden-block Pauli groups $\mathcal R_a$ and the residual
quotient $\pi$ from the definitions preceding
Proposition~\ref{prop:supp-blockwise-peeling-preserves-residual-blocks}.
Block-purity gives
\begin{equation}
\bigcup_{\alpha\in\mathcal B_a^\star}\mathcal G_\alpha
\subseteq
\pi(\mathcal R_a).
\end{equation}
Because this is a quotient statement, fix a hidden-block lift
$Z^{c(\widehat R)}\otimes\widehat R\in\mathcal R_{a(\alpha)}$ for every
recovered generator $\widehat R\in\mathcal G_\alpha$.  For
$A\subseteq\mathcal B_a^\star$ and $O\in\mathcal P(A)$, multiply the lifts
in any fixed generator expression for $O$ and track the Hermitian Pauli sign.
This gives a permitted full lift
\begin{equation}
O^\uparrow
=
Z^{c(O)}\otimes O
\in
\mathcal R_a,
\qquad
c(O)\in\mathbb F_2^t,
\label{eq:supp-generated-full-lift}
\end{equation}
where membership is phase-free and the equality includes the tracked sign.
The lift need not be canonical: another expression may change its prefix and
sign correction, but not its residual representative, and all bounds below
are uniform over these choices.  For several true blocks, construct and
multiply the lifts blockwise.

Distinct sectors commute elementwise, so tuples
$O_j\in\mathcal P(C_j)$ from disjoint sector sets commute.  Their permitted
full lifts also commute whenever each $C_j$ lies in one true block.

\begin{proposition}[Full-lift moment factorization across hidden blocks]
\label{prop:supp-full-lift-moment-factorization}
Let $C_1,\dots,C_q$ be disjoint sector sets such that
$C_j\subseteq\mathcal B_{a_j}^\star$, choose
$O_j\in\mathcal P(C_j)$, and let
$O_j^\uparrow=Z^{c_j}\otimes O_j\in\mathcal R_{a_j}$ be Hermitian full lifts.
For every $U\subseteq[q]$, the full moment factors after grouping the tuple
members by their hidden-block label:
\begin{equation}
\Tr\!\left(
\widetilde\rho\prod_{j\in U}O_j^\uparrow
\right)
=
\prod_{a\in\{a_j:j\in U\}}
\Tr\!\left(
\widetilde\rho
\prod_{\substack{j\in U\\a_j=a}}O_j^\uparrow
\right).
\label{eq:supp-true-block-moment-factorization}
\end{equation}
The empty-set moment is one on both sides.
\end{proposition}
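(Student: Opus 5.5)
The plan is to pull every full lift back to the hidden CEBP frame, where the product structure across blocks is exact, and read off the factorization there. Since $\widetilde\rho=U_{\mathrm{stab}}^\dagger\rho U_{\mathrm{stab}}$ and $\rho=U_{\mathrm c}\sigma U_{\mathrm c}^\dagger$ with $\sigma=\bigotimes_a\rho_a$, set $W:=U_{\mathrm c}^\dagger U_{\mathrm{stab}}$. Then $\Tr(\widetilde\rho X)=\Tr(\sigma\,WXW^\dagger)$ for every operator $X$.

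By definition $\mathcal R_a=U_{\mathrm{stab}}^\dagger\mathcal A_aU_{\mathrm{stab}}$, and $\mathcal A_a$ consists of the $U_{\mathrm c}$-transported Paulis $R_{B_a}\otimes I_{B_a^c}$. Hence each Hermitian full lift $O_j^\uparrow\in\mathcal R_{a_j}$, which is phase-free in $\mathcal R_{a_j}$ and carries a tracked sign, satisfies
\[
W O_j^\uparrow W^\dagger=\pm\,\Lambda_j\otimes I_{B_{a_j}^c}
\]
for a Hermitian Pauli $\Lambda_j$ supported on $B_{a_j}$. The overall sign is absorbed into the Hermitian operator $\Lambda_j$. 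The identity $\pm$ possibility is also allowed, since a lift may be prefix-only in principle.

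Fix $U\subseteq[q]$. Conjugation by $W$ is multiplicative, so
\[
W\Bigl(\prod_{j\in U}O_j^\uparrow\Bigr)W^\dagger=\prod_{j\in U}\bigl(\Lambda_j\otimes I_{B_{a_j}^c}\bigr).
\]
Operators on distinct blocks act on disjoint tensor factors and therefore commute. We can thus reorder the product, keeping the original relative order of factors within each block, to obtain
\[
\bigotimes_{a\in\{a_j:j\in U\}}\Bigl(\prod_{j\in U,\,a_j=a}\Lambda_j\Bigr)\otimes I_{\mathrm{rest}}.
\]
The lifts $O_j^\uparrow$ pairwise commute (as noted before the statement), so the ordering inside each block is in any case immaterial. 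The expectation in the product state $\sigma$ then factorizes into $\prod_a\Tr\bigl(\rho_a\prod_{j:a_j=a}\Lambda_j\bigr)$, and identity blocks contribute $\Tr\rho_b=1$.

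Applying the same computation to each inner product $\prod_{j\in U,a_j=a}O_j^\uparrow$ gives exactly $\Tr\bigl(\rho_a\prod_{j:a_j=a}\Lambda_j\bigr)$, which identifies each factor with the right-hand side of Eq.~\eqref{eq:supp-true-block-moment-factorization}. For the empty set $U=\varnothing$, both sides equal one by convention.

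The only delicate point is sign and phase bookkeeping. The membership $O_j^\uparrow\in\mathcal R_{a_j}$ is phase-free, while the moment uses Hermitian representatives. I would handle this by fixing the signed operators $O_j^\uparrow$ once and never passing to phase-free classes. Every identity above is then an operator identity between the same signed products on both sides, so no sign discrepancy can arise. Because the argument is uniform over lift choices, it also covers the non-canonical prefixes $c_j$.
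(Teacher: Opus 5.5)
Your proposal is correct and follows the same route as the paper: pull each signed lift back by conjugation with $U_{\mathrm c}^\dagger U_{\mathrm{stab}}$ to a Hermitian Pauli supported on $B_{a_j}$, then factor the expectation in the latent product state $\bigotimes_a\rho_a$. You spell out the cross-block reordering and the sign bookkeeping that the paper compresses into a single sentence.
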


\begin{proof}
Conjugating $O_j^\uparrow\in\mathcal R_{a_j}$ back through
$U_{\mathrm{stab}}^\dagger U_{\mathrm c}$ gives a Pauli on $B_{a_j}$.
The latent state $\bigotimes_a\rho_a$ therefore factors the expectation of
each subset product by its distinct labels, including all tracked signs.
Conjugation invariance gives
Eq.~\eqref{eq:supp-true-block-moment-factorization}.
\end{proof}

The algorithm drops the $Z$ prefixes on which exact factorization holds.
The next lemma bounds that loss uniformly.

\begin{lemma}[Signed prefix stability]
\label{lem:supp-signed-prefix-stability}
Let $\sigma=|b\rangle\!\langle b|\otimes\rho_{\mathrm{res}}$ be the
product state from Proposition~\ref{prop:supp-empirical-peeling}, so that
\begin{equation}
\|\widetilde\rho-\sigma\|_1\le 2\sqrt{\varepsilon_{\mathrm{peel}}}.
\end{equation}
For every $c\in\mathbb F_2^t$ and every signed Hermitian residual Pauli
$P=\pm P_0$, $P_0\in\Pc_m$,
\begin{equation}
\left|
\Tr\!\left[\widetilde\rho\,(Z^c\otimes P)\right]
-
(-1)^{b\cdot c}
\mu_{\mathrm{res}}(P)
\right|
\le
4\sqrt{\varepsilon_{\mathrm{peel}}}.
\label{eq:supp-signed-prefix-stability}
\end{equation}
Moreover, let $O_1,\dots,O_q\in\Pc_m$ be pairwise commuting residual
Paulis with full lifts
\begin{equation}
O_j^\uparrow=Z^{c_j}\otimes O_j,
\qquad j=1,\dots,q,
\end{equation}
and define $\kappa_{\mathrm{full}}(O_1^\uparrow,\dots,O_q^\uparrow)$ from the full signed
moments $\Tr(\widetilde\rho\prod_{j\in U}O_j^\uparrow)$.  Set
\begin{equation}
\varsigma_{\mathbf c}
:=
\prod_{j=1}^q(-1)^{b\cdot c_j}.
\end{equation}
Then
\begin{equation}
\begin{aligned}
\left|
\kappa_{\mathrm{full}}(O_1^\uparrow,\dots,O_q^\uparrow)
-
\varsigma_{\mathbf c}
\kappa_{\mathrm{res}}(O_1,\dots,O_q)
\right|
&\le
4\Gamma_q\sqrt{\varepsilon_{\mathrm{peel}}}.
\end{aligned}
\label{eq:supp-cumulant-prefix-stability}
\end{equation}
\end{lemma}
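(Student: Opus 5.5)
The plan has two parts. We first prove the single-moment bound \eqref{eq:supp-signed-prefix-stability} by comparing $\widetilde\rho$ with the product comparison state $\sigma$. We then transfer it to cumulants through Proposition~\ref{prop:supp-bounded-moment-cumulant-perturbation}.

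For the first part, take a Hermitian Pauli $O=Z^c\otimes P$, so $\|O\|_\infty=1$. Then
\[
\bigl|\Tr(\widetilde\rho\,O)-\Tr(\sigma O)\bigr|\le\|\widetilde\rho-\sigma\|_1\le 2\sqrt{\varepsilon_{\mathrm{peel}}}.
\]
Since $\sigma=|b\rangle\!\langle b|\otimes\rho_{\mathrm{res}}$, we have $\Tr(\sigma O)=(-1)^{b\cdot c}\Tr(\rho_{\mathrm{res}}P)$. It remains to compare $\Tr(\rho_{\mathrm{res}}P)$ with $\mu_{\mathrm{res}}(P)=\Tr[\widetilde\rho(I\otimes P)]$. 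Applying the same trace-norm estimate to the Hermitian observable $I^{\otimes t}\otimes P$ gives
\[
\bigl|\mu_{\mathrm{res}}(P)-\Tr(\rho_{\mathrm{res}}P)\bigr|\le 2\sqrt{\varepsilon_{\mathrm{peel}}},
\]
because $\Tr[\sigma(I\otimes P)]=\Tr(\rho_{\mathrm{res}}P)$. The triangle inequality then yields the bound $4\sqrt{\varepsilon_{\mathrm{peel}}}$. The sign of $P=\pm P_0$ factors out linearly on both sides, so signed representatives cause no difficulty. The trace-norm hypothesis itself follows from Corollary~\ref{cor:supp-empirical-stabilizer-peeling} via the Fuchs--van de Graaf inequality \eqref{eq:supp-root-fidelity-fvdg}: $\tfrac12\|\cdot\|_1\le\sqrt{1-\F^2}\le\sqrt{\varepsilon_{\mathrm{peel}}}$.

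For the cumulant statement, fix a nonempty $U\subseteq[q]$. The pairwise-commuting lifts multiply to
\[
\prod_{j\in U}O_j^\uparrow=Z^{c_U}\otimes \prod_{j\in U}O_j,\qquad c_U:=\sum_{j\in U}c_j,
\]
with no extra sign, because $Z$ prefixes commute with each other and with the residual factors, and the residual factors commute. Applying the first part with $c=c_U$ and $P=\prod_{j\in U}O_j$, a signed Hermitian residual Pauli, gives
\[
\Bigl|M_{\mathrm{full}}(U)-\Bigl(\prod_{j\in U}(-1)^{b\cdot c_j}\Bigr)M_{\mathrm{res}}(U)\Bigr|\le 4\sqrt{\varepsilon_{\mathrm{peel}}}.
\]
Here the sign factorizes because $(-1)^{b\cdot c_U}=\prod_{j\in U}(-1)^{b\cdot c_j}$.

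Define the rescaled residual moment family $\widehat M(U):=\bigl(\prod_{j\in U}\epsilon_j\bigr)M_{\mathrm{res}}(U)$, where $\epsilon_j:=(-1)^{b\cdot c_j}$. This is exactly the moment family of the tuple $(\epsilon_jO_j)_j$. Cumulants are multilinear in the tuple entries, since every partition term contains each label exactly once. Hence the cumulant of $\widehat M$ equals $\varsigma_{\mathbf c}\kappa_{\mathrm{res}}(O_1,\dots,O_q)$. Both families have magnitude at most $1$, being expectations of Hermitian Paulis, and they differ by at most $\tau_\mu=4\sqrt{\varepsilon_{\mathrm{peel}}}$ on every nonempty $U$. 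Proposition~\ref{prop:supp-bounded-moment-cumulant-perturbation} then gives \eqref{eq:supp-cumulant-prefix-stability}.

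The main point needing care is the sign bookkeeping. First, the product of the lifts must equal the lift of the product with prefix $c_U$ and no hidden phase; this holds because the residual operators commute and the prefixes are diagonal. Second, the per-subset sign must be multiplicative, so that it can be absorbed as a relabeling $O_j\mapsto\epsilon_jO_j$ and pulled out of the cumulant as the global factor $\varsigma_{\mathbf c}$.
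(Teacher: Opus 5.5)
Your proof is correct and follows essentially the same route as the paper. You compare both $Z^c\otimes P$ and $I^{\otimes t}\otimes P$ against the product state $\sigma$ to get the $4\sqrt{\varepsilon_{\mathrm{peel}}}$ moment bound, apply it to every subset product, absorb the multiplicative prefix signs via multilinearity to obtain $\varsigma_{\mathbf c}\kappa_{\mathrm{res}}$, and finish with Proposition~\ref{prop:supp-bounded-moment-cumulant-perturbation}. Your extra checks are details the paper leaves implicit: that the commuting lifts multiply to $Z^{c_U}\otimes\prod_{j\in U}O_j$ with no phase, and the Fuchs--van de Graaf derivation of the trace-norm hypothesis.
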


\begin{proof}
For $P^{(c)}:=Z^c\otimes P$ and
$P^{(0)}:=I^{\otimes t}\otimes P$, their expectations in $\sigma$ differ
by the sign $(-1)^{b\cdot c}$.  Hence
\begin{align*}
&
\left|
\Tr(\widetilde\rho P^{(c)})
-
(-1)^{b\cdot c}\Tr(\widetilde\rho P^{(0)})
\right| \\
&\qquad\le
\left|\Tr((\widetilde\rho-\sigma)P^{(c)})\right|
+
\left|\Tr((\widetilde\rho-\sigma)P^{(0)})\right|
\le
4\sqrt{\varepsilon_{\mathrm{peel}}},
\end{align*}
which proves Eq.~\eqref{eq:supp-signed-prefix-stability}.

Apply the same bound to every nonempty subset product.  Its prefix sign is
$\prod_{j\in U}(-1)^{b\cdot c_j}$; multilinearity therefore makes the
cumulant of the sign-twisted residual moments
$\varsigma_{\mathbf c}\kappa_{\mathrm{res}}(O_1,\dots,O_q)$.
All moments have magnitude at most one, so
Proposition~\ref{prop:supp-bounded-moment-cumulant-perturbation} with
tolerance $4\sqrt{\varepsilon_{\mathrm{peel}}}$ proves
Eq.~\eqref{eq:supp-cumulant-prefix-stability}.
\end{proof}

Set
\begin{equation}
\beta_{\mathrm{peel}}
:=
4\Gamma_{\ell_{\mathrm{grp}}}
\sqrt{\varepsilon_{\mathrm{peel}}}.
\label{eq:supp-beta-peel-cumulants}
\end{equation}
This is both the cross-block residual-cumulant bound and the detection floor
created by peeling leakage.

\begin{corollary}[Cross-block residual cumulant floor]
\label{cor:supp-cross-block-cumulant-floor}
Let $2\le q\le\ell_{\mathrm{grp}}$.  Suppose
$C_1,\dots,C_q\subseteq[L]$ are pairwise disjoint sector-index sets, each
contained inside a true recovered-sector block, but
$C_1\cup\cdots\cup C_q$ is not contained inside a single true
recovered-sector block.  Then for every
\begin{equation}
O_j\in\mathcal P(C_j),
\qquad j=1,\dots,q,
\end{equation}
we have
\begin{equation}
\left|
\kappa_{\mathrm{res}}(O_1,\dots,O_q)
\right|
\le
4\Gamma_q\sqrt{\varepsilon_{\mathrm{peel}}}
\le
\beta_{\mathrm{peel}}.
\label{eq:supp-cross-block-cumulant-floor}
\end{equation}
If $\varepsilon_{\mathrm{peel}}=0$, the residual cross-block cumulant
vanishes exactly.
\end{corollary}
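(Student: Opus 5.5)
The argument runs through the full lifts. First compare the residual cumulant to the cumulant of the full lifts using Lemma~\ref{lem:supp-signed-prefix-stability}. Then show that the full-lift cumulant vanishes exactly by combining Proposition~\ref{prop:supp-full-lift-moment-factorization} with Proposition~\ref{prop:supp-nonzero-cumulant-certifies-no-product-split}.

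\emph{Lifts and the split.} Each $C_j$ lies in one true block $\mathcal B_{a_j}^\star$, so for $O_j\in\mathcal P(C_j)$ we fix a permitted Hermitian full lift $O_j^\uparrow=Z^{c_j}\otimes O_j\in\mathcal R_{a_j}$ as in Eq.~\eqref{eq:supp-generated-full-lift}. The $C_j$ are disjoint and distinct sectors commute elementwise, so the $O_j$ pairwise commute. Their lifts also commute: lifts in the same block inherit commutation from their residual parts, because the $Z$ prefixes commute with everything of the form $Z^{c}\otimes P$, and lifts in distinct blocks commute across hidden blocks. Since $\bigcup_j C_j$ is not contained in a single true block, the labels $a_j$ are not all equal. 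Choose one label $a$ and set
\[
\mathsf U:=\{j:a_j=a\},\qquad \mathsf V:=[q]\setminus\mathsf U .
\]
Both sets are nonempty.

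\emph{Vanishing of the full cumulant.} Proposition~\ref{prop:supp-full-lift-moment-factorization} gives, for every $U\subseteq[q]$, a factorization of the full moment of $\prod_{j\in U}O_j^\uparrow$ over distinct labels. Regrouping the labels into $a$ versus all others and applying the same proposition to $U\cap\mathsf U$ and $U\cap\mathsf V$ separately yields the split factorization Eq.~\eqref{eq:supp-moment-factorization-split}. Proposition~\ref{prop:supp-nonzero-cumulant-certifies-no-product-split} then gives
\[
\kappa_{\mathrm{full}}(O_1^\uparrow,\dots,O_q^\uparrow)=0 .
\]
The only thing to check is that the Hermitian signs tracked in the lifts are consistent between the two sides. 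This holds because both sides are computed with the same product representatives.

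\emph{Transfer to the residual cumulant.} By Eq.~\eqref{eq:supp-cumulant-prefix-stability}, with $|\varsigma_{\mathbf c}|=1$,
\[
|\kappa_{\mathrm{res}}(O_1,\dots,O_q)|\le 4\Gamma_q\sqrt{\varepsilon_{\mathrm{peel}}}.
\]
Because $\Gamma_q$ is increasing and $q\le\ell_{\mathrm{grp}}$, this is at most $\beta_{\mathrm{peel}}$. This needs $\kappa_{\mathrm{res}}$ to be defined with the same Hermitian representatives used for the lifts' residual parts; otherwise only a global sign changes, which the absolute value ignores.

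\emph{Exact case.} When $\varepsilon_{\mathrm{peel}}=0$, the bound is zero, so the residual cross-block cumulant vanishes exactly. (Directly: the peeled state is then exactly $|b\rangle\!\langle b|\otimes\rho_{\mathrm{res}}$, so the prefix-stability error is zero.)

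The main obstacle is the bookkeeping in the vanishing step: verifying that the blockwise factorization of Proposition~\ref{prop:supp-full-lift-moment-factorization} really implies the two-part split identity for every subset $U$, including sign conventions and empty intersections, which equal one by convention.
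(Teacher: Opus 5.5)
Your proposal is correct and follows essentially the same route as the paper's proof. You choose permitted full lifts, use the full-lift moment factorization across hidden blocks to obtain a nontrivial product split, and conclude that the full-lift cumulant vanishes by the product-split vanishing proposition. You then transfer to the residual cumulant via the signed prefix-stability bound, using $\Gamma_q\le\Gamma_{\ell_{\mathrm{grp}}}$ for the $\beta_{\mathrm{peel}}$ inequality. Your explicit split $\mathsf U=\{j:a_j=a\}$ and the regrouping check spell out what the paper leaves implicit.
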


\begin{proof}
Choose permitted lifts from
Eq.~\eqref{eq:supp-generated-full-lift}.  Their subset moments factor across
a nontrivial hidden-block split by
Proposition~\ref{prop:supp-full-lift-moment-factorization}; hence
Proposition~\ref{prop:supp-nonzero-cumulant-certifies-no-product-split}
implies
\begin{equation}
\kappa_{\mathrm{full}}(O_1^\uparrow,\dots,O_q^\uparrow)=0.
\end{equation}
Now apply Eq.~\eqref{eq:supp-cumulant-prefix-stability}.
\end{proof}

This intrinsic bound will rule out false merges once the algorithmic
partition and its true-block containment invariant are introduced.

\subsection{Residual irreducibility and hierarchical grouping}
\label{sec:supp-hierarchical-cumulant-grouping-algorithm}

Block-purity says only that a sector belongs to one hidden-block algebra.  It
does not imply that the dependence visible in the recovered algebras is
irreducible; nor does purity of a latent physical block preclude a product
split relative to the particular recovered subalgebras.  Exact grouping
therefore needs the following intrinsic post-peeling identifiability
condition.

\begin{definition}[Residual $\eta_{\mathrm{irr}}$-irreducibility]
\label{def:supp-eta-irreducible-hidden-blocks}
For $\eta_{\mathrm{irr}}\ge0$, the true recovered-sector partition is
residually $\eta_{\mathrm{irr}}$-irreducible if, for every
$a\in\mathcal A^\star$ and every nontrivial partition
\begin{equation}
\mathcal B_a^\star
=
\mathcal S_1\sqcup\cdots\sqcup\mathcal S_r,
\qquad r\ge 2,
\end{equation}
there exist an order
\begin{equation}
2\le q\le\min\{r,\ell_{\mathrm{grp}}\},
\end{equation}
distinct indices
$i_1,\dots,i_q\in [r]$,
and generated Pauli operators
\begin{equation}
O_\ell
\in
\mathcal P(\mathcal S_{i_\ell})\setminus\{I\},
\qquad
\ell=1,\dots,q,
\end{equation}
such that
\begin{equation}
\left|
\kappa_{\mathrm{res}}(O_1,\dots,O_q)
\right|
>
\eta_{\mathrm{irr}} .
\label{eq:supp-emp-prop-eta-irreducible}
\end{equation}
\end{definition}

Let $\mathfrak C$ denote the current partition into clusters.  Its active
clusters are
\begin{equation}
\mathfrak C_{<\ell_{\mathrm{grp}}}
:=
\{C\in\mathfrak C: |C|<\ell_{\mathrm{grp}}\}.
\label{eq:supp-active-cluster-set}
\end{equation}
Whenever the true-block containment invariant holds, these current clusters
are pairwise disjoint sector-index sets of the kind used in
Proposition~\ref{prop:supp-full-lift-moment-factorization} and
Corollary~\ref{cor:supp-cross-block-cumulant-floor}.

A \emph{cumulant interface} $\kappa^\sharp$ assigns a real value
$\kappa^\sharp(O_1,\dots,O_q)$ to every admissible labeled commuting tuple.
For $q\ge 2$, the interface-dependent $q$-uniform test hypergraph
$H_q(\mathfrak C;\kappa^\sharp)$ has vertex set
$\mathfrak C_{<\ell_{\mathrm{grp}}}$ and contains a hyperedge
$\{C_1,\dots,C_q\}$ when the $C_j$'s are distinct,
$|C_1\cup\cdots\cup C_q|\le\ell_{\mathrm{grp}}$, and there exist
$O_j\in\mathcal P(C_j)\setminus\{I\}$ such that
\begin{equation}
\left|
\kappa^\sharp(O_1,\dots,O_q)
\right|
>
\eta_{\mathrm{test}} .
\end{equation}
Identity operators are excluded from tuple searches.  The selected operators
commute because the selected clusters are pairwise disjoint and every
recovered sector in one cluster commutes elementwise with every recovered
sector in another; therefore the groups generated by different selected
clusters commute.  Their cumulants are computed from signed Pauli moments
using Eqs.~\eqref{eq:supp-signed-pauli-moment-general}
and~\eqref{eq:supp-general-mixed-cumulant}.

\begin{inlinealgorithm}{alg:supp-hierarchical-cumulant-grouping}{Hierarchical Cumulant Grouping of Recovered Sectors}
\begin{algorithmic}[1]
\Require
Recovered sectors $\mathfrak G=\{\mathcal G_1,\dots,\mathcal G_L\}$,
a maximum grouping order $\ell_{\mathrm{grp}}\ge1$ satisfying
Eq.~\eqref{eq:supp-grouping-order-promise}, a selected cumulant interface
$\kappa^\sharp$, and a test threshold $\eta_{\mathrm{test}}\ge0$.

\Ensure
A partition of the sectors into estimated recovered-sector blocks.

\State Initialize the current partition $\mathfrak C\gets \{\{1\},\dots,\{L\}\}$.

\State Set $q\gets 2$.

\While{$q\le\ell_{\mathrm{grp}}$}
    \State Build $H_q(\mathfrak C;\kappa^\sharp)$ using the hyperedge rule above.

    \State Let $\mathcal D_1,\dots,\mathcal D_r$ be the nontrivial connected components of $H_q(\mathfrak C;\kappa^\sharp)$, where nontrivial means that the component contains at least one hyperedge.

    \If{$r=0$}
        \State Set $q\gets q+1$.
    \Else
        \For{$j=1,\dots,r$}
            \State Merge all clusters in $\mathcal D_j$ into
            $C_j^{\mathrm{new}}:=\bigcup_{C\in\mathcal D_j} C$.
        \EndFor

        \State Replace every cluster contained in some $\mathcal D_j$ by the corresponding merged cluster $C_j^{\mathrm{new}}$.

        \State Reset $q\gets 2$.
    \EndIf
\EndWhile

\State \Return $\mathfrak C$.
\end{algorithmic}
\end{inlinealgorithm}

The exact-data specialization selects
$\kappa^\sharp=\kappa_{\mathrm{res}}$ and
$\eta_{\mathrm{test}}=0$.  The empirical implementation selects
$\kappa^\sharp=\widehat\kappa_{\mathrm{res}}$ at its calibrated positive
threshold.  Thus both implementations use the same algorithmic map while
constructing formally distinct interface-dependent hypergraphs.

Hyperedges that overlap are closed under connected components before any
merge, so a chain of witnessed dependencies becomes one new cluster.  This
merge enlarges its generated group from the separate $\mathcal P(C)$'s to
the group generated by their union.  Products newly available in that larger
group can expose lower-order witnesses that were absent before the merge.
Accordingly, every merge resets $q$ to two.  The algorithm terminates only
after one complete scan from order two through $\ell_{\mathrm{grp}}$ finds no
hyperedge.

\subsection{Correctness under uniform cumulant accuracy}
\label{sec:supp-correctness-empirical-complexity}

Let $\widehat\kappa_{\mathrm{res}}$ denote the empirical cumulant interface.
Run Algorithm~\ref{alg:supp-hierarchical-cumulant-grouping} with
$\kappa^\sharp=\widehat\kappa_{\mathrm{res}}$.  The queried
cumulants satisfy the uniform accuracy event at tolerance
$\tau_\kappa>0$ when
\begin{equation}
\left|
\widehat\kappa_{\mathrm{res}}(O_1,\dots,O_q)
-
\kappa_{\mathrm{res}}(O_1,\dots,O_q)
\right|
\le
\tau_\kappa
\label{eq:supp-emp-cumulant-uniform-event}
\end{equation}
for every admissible tuple queried while constructing the empirical
hypergraphs.  This deterministic event is the only estimator property used
in the correctness argument.

\begin{lemma}[No false merges under uniform cumulant accuracy]
\label{lem:supp-cumulant-no-false-merge}
Suppose Eq.~\eqref{eq:supp-emp-cumulant-uniform-event} holds and
\begin{equation}
\beta_{\mathrm{peel}}+\tau_\kappa<\eta_{\mathrm{test}}.
\end{equation}
Then every cluster produced by
Algorithm~\ref{alg:supp-hierarchical-cumulant-grouping} is contained in one
true recovered-sector block; in particular, its output refines the true
partition.
\end{lemma}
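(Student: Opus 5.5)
We argue by induction over the merge steps of Algorithm~\ref{alg:supp-hierarchical-cumulant-grouping}, maintaining the \emph{true-block containment invariant}: every current cluster $C\in\mathfrak C$ is contained in a single true recovered-sector block $\mathcal B_a^\star$. The base case holds trivially, since the initial partition $\{\{1\},\dots,\{L\}\}$ consists of singletons and every recovered sector carries a unique hidden-block label $a(\alpha)$ by block purity (Proposition~\ref{prop:supp-block-triple-single-block}). Since the algorithm changes $\mathfrak C$ only by merging connected components of a test hypergraph, it suffices to show that, under the invariant, every hyperedge of $H_q(\mathfrak C;\widehat\kappa_{\mathrm{res}})$ joins clusters inside one true block. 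Connectedness then propagates this: each hyperedge lies in one block, overlapping hyperedges share a cluster, and clusters are block-pure, so the whole component lies in one block.

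For the hyperedge claim we use the contrapositive. Suppose $\{C_1,\dots,C_q\}$ are distinct active clusters, each in a single true block, with $C_1\cup\cdots\cup C_q$ not contained in one true block. Then for every choice $O_j\in\mathcal P(C_j)\setminus\{I\}$, Corollary~\ref{cor:supp-cross-block-cumulant-floor} applies and gives $|\kappa_{\mathrm{res}}(O_1,\dots,O_q)|\le 4\Gamma_q\sqrt{\varepsilon_{\mathrm{peel}}}\le\beta_{\mathrm{peel}}$. Two facts license this. First, the clusters are pairwise disjoint, as elements of a partition. Second, $2\le q\le\ell_{\mathrm{grp}}$, which holds because the while-loop only builds $H_q$ for such $q$; the monotonicity of $\Gamma_q$ then gives the final inequality. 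The uniform accuracy event \eqref{eq:supp-emp-cumulant-uniform-event} covers every queried tuple, so
\begin{equation}
|\widehat\kappa_{\mathrm{res}}(O_1,\dots,O_q)|\le\beta_{\mathrm{peel}}+\tau_\kappa<\eta_{\mathrm{test}}.
\end{equation}
Hence no witness exists and the tuple is not a hyperedge.

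After a merge the new partition again satisfies the invariant, and the reset to $q=2$ does not affect the argument, because each subsequent scan is again over clusters satisfying the invariant. Termination is clear: each merge strictly decreases the number of clusters, and a scan without merges ends the loop. By induction, every output cluster lies in one true block, i.e.\ the output refines the partition \eqref{eq:supp-true-recovered-sector-partition}.

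The only delicate point is that Corollary~\ref{cor:supp-cross-block-cumulant-floor} was stated for arbitrary disjoint sector sets inside true blocks, and the active clusters must fit that hypothesis exactly. This holds precisely because of the invariant, which is why it has to be carried inductively rather than checked only at the end. The cumulants themselves are also representative-independent, since sign changes of tuple members only flip the overall sign of the cumulant and the tests use absolute values.
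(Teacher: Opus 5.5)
Your proposal is correct and follows essentially the same argument as the paper's proof. Both carry the true-block containment invariant inductively and use Corollary~\ref{cor:supp-cross-block-cumulant-floor} together with the uniform accuracy event to rule out any cross-block hyperedge, so connected components, and hence merges, stay inside a single true block.
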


\begin{proof}
This containment holds for the initial singleton clusters.  If it holds
before a scan, then for clusters $C_1,\dots,C_q$ not all in one true block,
Corollary~\ref{cor:supp-cross-block-cumulant-floor} and the uniform event give
\begin{equation}
\left|\widehat\kappa_{\mathrm{res}}(O_1,\dots,O_q)\right|
\le\beta_{\mathrm{peel}}+\tau_\kappa<\eta_{\mathrm{test}}
\end{equation}
for every admissible candidate $O_j\in\mathcal P(C_j)\setminus\{I\}$.
No empirical hyperedge therefore crosses true blocks, and neither can a
connected component of such edges.  Every merge preserves containment.
\end{proof}

\begin{proposition}[Correctness under a uniform cumulant event]
\label{prop:supp-hierarchical-cumulant-correctness}
Assume that the true recovered-sector partition is residually
$\eta_{\mathrm{irr}}$-irreducible.  Suppose the empirical cumulant estimates
queried by Algorithm~\ref{alg:supp-hierarchical-cumulant-grouping} satisfy
Eq.~\eqref{eq:supp-emp-cumulant-uniform-event}, and choose the test threshold so
that
\begin{equation}
\beta_{\mathrm{peel}}+\tau_\kappa
<
\eta_{\mathrm{test}}
<
\eta_{\mathrm{irr}}-\tau_\kappa .
\label{eq:supp-hidden-grouping-threshold-window}
\end{equation}
Such a threshold exists exactly when
\begin{equation}
\beta_{\mathrm{peel}}+2\tau_\kappa
<
\eta_{\mathrm{irr}}.
\end{equation}
Then Algorithm~\ref{alg:supp-hierarchical-cumulant-grouping}, run with
$\kappa^\sharp=\widehat\kappa_{\mathrm{res}}$, outputs the true
recovered-sector partition:
\begin{equation}
\widehat{\mathfrak C}
=
\{\mathcal B_a^\star:a\in\mathcal A^\star\}.
\end{equation}
\end{proposition}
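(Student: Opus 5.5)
The plan is to prove two inclusions: no cluster of the output crosses true blocks, and no true block is split across several output clusters.

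\emph{Existence of a threshold.} A real $\eta_{\mathrm{test}}$ satisfying Eq.~\eqref{eq:supp-hidden-grouping-threshold-window} exists exactly when the open interval $(\beta_{\mathrm{peel}}+\tau_\kappa,\eta_{\mathrm{irr}}-\tau_\kappa)$ is nonempty. This is equivalent to $\beta_{\mathrm{peel}}+2\tau_\kappa<\eta_{\mathrm{irr}}$, so this part is immediate.

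\emph{No crossing and termination.} The left inequality of the window is the hypothesis of Lemma~\ref{lem:supp-cumulant-no-false-merge}. That lemma shows every cluster ever produced lies inside one true block $\mathcal B_a^\star$. Hence the output refines $\{\mathcal B_a^\star\}$. I will also note that the algorithm terminates: each merge strictly reduces the number of clusters, and between merges $q$ only increases up to $\ell_{\mathrm{grp}}$. The output is therefore a well-defined partition, and a full scan from $q=2$ to $\ell_{\mathrm{grp}}$ finds no hyperedge in it.

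\emph{No splitting (the main step).} Suppose for contradiction that some true block $\mathcal B_a^\star$ is covered by $r\ge2$ output clusters $\mathcal S_1,\dots,\mathcal S_r$. By refinement these clusters partition $\mathcal B_a^\star$ nontrivially. Residual $\eta_{\mathrm{irr}}$-irreducibility then supplies an order $2\le q\le\min\{r,\ell_{\mathrm{grp}}\}$, distinct indices $i_1,\dots,i_q$, and nonidentity $O_\ell\in\mathcal P(\mathcal S_{i_\ell})$ with $|\kappa_{\mathrm{res}}(O_1,\dots,O_q)|>\eta_{\mathrm{irr}}$. By the uniform event Eq.~\eqref{eq:supp-emp-cumulant-uniform-event}, $|\widehat\kappa_{\mathrm{res}}|>\eta_{\mathrm{irr}}-\tau_\kappa>\eta_{\mathrm{test}}$.

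It remains to check that $\{\mathcal S_{i_1},\dots,\mathcal S_{i_q}\}$ is an admissible hyperedge in $H_q$ of the final partition, which would contradict termination. The conditions are:
\begin{itemize}
\item the clusters are distinct and pairwise disjoint;
\item the selected operators commute, because the clusters are disjoint;
\item each $|\mathcal S_{i_\ell}|<\ell_{\mathrm{grp}}$ (active), and $|\bigcup_\ell \mathcal S_{i_\ell}|\le\ell_{\mathrm{grp}}$.
\end{itemize}
The size conditions hold because the union lies in $\mathcal B_a^\star$, whose size is at most $\ell_{\mathrm{grp}}$ by Eq.~\eqref{eq:supp-grouping-order-promise}. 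Activity holds because $r\ge2$ forces every part to be a proper subset of $\mathcal B_a^\star$, hence of size $<\ell_{\mathrm{grp}}$. The tuple is also queried by the algorithm, so the uniform event applies to it.

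The main obstacle is exactly this admissibility bookkeeping, and in particular ensuring that the tuple is queried by the final scan. Queried tuples are those of the final partition's hypergraph, and the uniform event is stated for all queried tuples, so this holds. Combining both inclusions gives $\widehat{\mathfrak C}=\{\mathcal B_a^\star:a\in\mathcal A^\star\}$.
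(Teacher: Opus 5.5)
Your proposal is correct and follows essentially the same route as the paper. Like the paper, you get containment from Lemma~\ref{lem:supp-cumulant-no-false-merge}, termination from the strict decrease in cluster count at each merge, and exclusion of any residual split from the irreducibility witness, which is an admissible hyperedge among active clusters in the final complete scan. One minor point: the selected operators commute because recovered sectors in distinct clusters commute elementwise, not merely because the clusters are disjoint.
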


\begin{proof}
Lemma~\ref{lem:supp-cumulant-no-false-merge} supplies the containment
invariant.  For progress, suppose some $a\in\mathcal A^\star$ remains split
into its current clusters,
\begin{equation}
\mathcal B_a^\star
=
\mathcal S_1\sqcup\cdots\sqcup\mathcal S_r,
\qquad r\ge 2.
\end{equation}
Every $\mathcal S_i$ is active because it is a proper subset of
$\mathcal B_a^\star$ and
$|\mathcal B_a^\star|\le\ell_{\mathrm{grp}}$.  Residual
$\eta_{\mathrm{irr}}$-irreducibility supplies
\begin{equation}
2\le q\le\min\{r,\ell_{\mathrm{grp}}\},
\qquad
i_1,\dots,i_q\in[r],
\end{equation}
and operators
\begin{equation}
O_\ell
\in
\mathcal P(\mathcal S_{i_\ell})\setminus\{I\},
\qquad
\ell=1,\dots,q,
\end{equation}
such that
\begin{equation}
\left|
\kappa_{\mathrm{res}}(O_1,\dots,O_q)
\right|
>
\eta_{\mathrm{irr}} .
\end{equation}
$|\mathcal S_{i_1}\cup\cdots\cup\mathcal S_{i_q}|\le|\mathcal B_a^\star|
\le\ell_{\mathrm{grp}}$, so this nonidentity operator tuple is an admissible
hyperedge candidate and is queried when its order is scanned.
By Eqs.~\eqref{eq:supp-emp-cumulant-uniform-event}
and~\eqref{eq:supp-hidden-grouping-threshold-window},
\begin{equation}
\left|
\widehat\kappa_{\mathrm{res}}(O_1,\dots,O_q)
\right|
>
\eta_{\mathrm{irr}}-\tau_\kappa
>
\eta_{\mathrm{test}} .
\end{equation}
Thus it is a hyperedge of
$H_q(\mathfrak C;\widehat\kappa_{\mathrm{res}})$ and merges at least two
current pieces.  Each merge lowers the cluster count, so the algorithm
terminates after at most $L-1$ merge rounds and one final complete scan.  A
remaining split would supply the displayed witness in that scan, a
contradiction.  Together with containment this gives
\begin{equation}
\widehat{\mathfrak C}
=
\{\mathcal B_a^\star:a\in\mathcal A^\star\}.
\end{equation}
\end{proof}

At the exact-data endpoint, if $\varepsilon_{\mathrm{peel}}=0$ and the true
partition is residually $0$-irreducible, run the same algorithm with
$\kappa^\sharp=\kappa_{\mathrm{res}}$ and $\eta_{\mathrm{test}}=0$.
Then $\beta_{\mathrm{peel}}=0$, cross-block cumulants vanish, and every
remaining within-block split has a strictly nonzero witness, so the preceding
no-false-merge and progress arguments recover the true partition.  This is
the endpoint of that proof, not a substitution into the strict window
\eqref{eq:supp-hidden-grouping-threshold-window}.

The remaining task is estimator-specific: construct an empirical cumulant
interface satisfying Eq.~\eqref{eq:supp-emp-cumulant-uniform-event}, then
count the copies and classical work required under each permitted measurement
model.

\subsection{Empirical cumulant estimation and copy complexity}

For the empirical bounds in this subsection, let $\tau_\kappa>0$ and
$\delta_{\mathrm{tuple}},
\delta_{\mathrm{grp}}^{\mathrm{ordinary}}\in(0,1)$.
Condition on the completed peeling and recovery transcript.  For every
admissible tuple, each signed subset product is a known sign times a Pauli
in the group generated by the sectors in that subset.  The ordinary
baseline estimates these adaptive tuples on separate fresh batches after
$U_{\mathrm{stab}}^\dagger$, or equivalently measures the pulled-back
Paulis on $\rho$.

\subsubsection{Ordinary adaptive cumulant estimation}

We begin with a fixed commuting tuple and use the same joint-measurement
record to estimate all of its signed subset moments.

\begin{proposition}[Ordinary estimator for one commuting cumulant]
\label{prop:supp-commuting-cumulant-sample-count}
Fix a labeled commuting tuple of $q$ Hermitian Pauli observables and a target
cumulant error $\tau_\kappa>0$.  Jointly measuring that tuple on
$N_{\mathrm{cum}}$ fresh copies and forming every nonempty empirical subset
moment gives
\begin{equation}
|\widehat\kappa-\kappa|\le\tau_\kappa
\end{equation}
with probability at least $1-\delta_{\mathrm{tuple}}$ whenever
\begin{equation}
N_{\mathrm{cum}}
\ge
\frac{2\Gamma_q^2}{\tau_\kappa^2}
\log\!\left(
\frac{2(2^q-1)}{\delta_{\mathrm{tuple}}}
\right).
\label{eq:supp-fixed-tuple-ordinary-copies}
\end{equation}
In particular,
\begin{equation}
N_{\mathrm{cum}}
=
O\!\left(
\frac{\Gamma_q^2}{\tau_\kappa^2}
\log\frac{2^q}{\delta_{\mathrm{tuple}}}
\right).
\end{equation}
\end{proposition}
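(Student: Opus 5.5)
The plan is to reduce the cumulant guarantee to uniform accuracy of the $2^q-1$ nonempty subset moments. From there Hoeffding plus a union bound finishes, and Proposition~\ref{prop:supp-bounded-moment-cumulant-perturbation} converts moment accuracy into cumulant accuracy.

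\textbf{Step 1: the joint measurement.} Because the $q$ Hermitian Paulis $O_1,\dots,O_q$ pairwise commute, they can be measured jointly on each copy. For instance, a Clifford maps them to $Z$-type operators, or one measures a common eigenbasis. Each copy $k\in[N_{\mathrm{cum}}]$ then yields a vector of outcomes $(o_{1,k},\dots,o_{q,k})\in\{\pm1\}^q$. Their joint distribution is the classical distribution whose mixed moments reproduce the quantum ones:
\begin{equation}
\mathbb E\Bigl[\prod_{i\in U}o_{i,k}\Bigr]=\Tr\Bigl(\rho\prod_{i\in U}O_i\Bigr)=M_\rho(U;\mathbf O).
\end{equation}
This holds because the product of commuting observables is diagonal in the joint eigenbasis, with eigenvalue equal to the product of the individual eigenvalues. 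Position labels with repeated operator values are harmless, since the same outcome is simply reused.

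\textbf{Step 2: empirical moments.} For each $U$, define the empirical moment
\begin{equation}
\widehat M(U):=\frac{1}{N_{\mathrm{cum}}}\sum_{k}\prod_{i\in U}o_{i,k}.
\end{equation}
This is an average of i.i.d.\ $\pm1$ variables, so it is unbiased for $M_\rho(U;\mathbf O)$ and satisfies $|\widehat M(U)|\le1$. The true moment also satisfies $|M_\rho(U;\mathbf O)|\le1$.

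\textbf{Step 3: concentration.} Hoeffding's inequality gives
\begin{equation}
\Pr\bigl[|\widehat M(U)-M(U)|>\tau_\mu\bigr]\le 2e^{-N_{\mathrm{cum}}\tau_\mu^2/2}.
\end{equation}
A union bound over the $2^q-1$ nonempty subsets shows that all moments are simultaneously $\tau_\mu$-accurate with probability at least $1-2(2^q-1)e^{-N_{\mathrm{cum}}\tau_\mu^2/2}$.

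\textbf{Step 4: transfer to the cumulant.} Set $\tau_\mu:=\tau_\kappa/\Gamma_q$. Proposition~\ref{prop:supp-bounded-moment-cumulant-perturbation} applies, because both moment families are bounded by one. On the good event it gives $|\widehat\kappa-\kappa|\le\Gamma_q\tau_\mu=\tau_\kappa$, where $\widehat\kappa$ is computed from the $\widehat M(U)$ via Eq.~\eqref{eq:supp-general-mixed-cumulant}.

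\textbf{Step 5: the sample count.} Requiring the failure probability to be at most $\delta_{\mathrm{tuple}}$ means $2(2^q-1)e^{-N_{\mathrm{cum}}\tau_\kappa^2/(2\Gamma_q^2)}\le\delta_{\mathrm{tuple}}$. Solving gives exactly Eq.~\eqref{eq:supp-fixed-tuple-ordinary-copies}. The $O(\cdot)$ form follows from $2(2^q-1)\le 2^{q+1}$.

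The main point needing care is Step 1. We must check that one joint measurement record gives unbiased estimates of all subset moments simultaneously, including sign conventions for the Hermitian representatives of the products. Because the outcomes are $\pm1$ eigenvalues of the chosen signed representatives, any known sign attached to a subset product factors out consistently. This does not affect the bounds in Steps 3 and 4.
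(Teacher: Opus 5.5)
Your proposal is correct and matches the paper's proof step for step. Both arguments obtain unbiased $\pm1$ subset-product estimators for all $2^q-1$ nonempty moments from one joint measurement record, apply Hoeffding with a union bound at moment tolerance $\tau_\kappa/\Gamma_q$, and then convert to cumulant error via Proposition~\ref{prop:supp-bounded-moment-cumulant-perturbation}; your explicit check of the Hoeffding constant and the sample-count algebra is accurate.
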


\begin{proof}
One joint outcome $\mathsf X\in\{\pm1\}^q$ supplies the unbiased subset
estimators $\prod_{j\in U}\mathsf X_j$.  Hoeffding's inequality
\cite{Hoeffding1963} and a union bound over the $2^q-1$ nonempty subsets give
\begin{equation}
|\widehat M(U)-M(U)|
\le
\frac{\tau_\kappa}{\Gamma_q}
\qquad
\text{for all }\varnothing\ne U\subseteq[q]
\end{equation}
under Eq.~\eqref{eq:supp-fixed-tuple-ordinary-copies}.  Since the moments lie
in $[-1,1]$,
Proposition~\ref{prop:supp-bounded-moment-cumulant-perturbation} gives the
claimed cumulant error.
\end{proof}

There are at most $L$ adaptive phases.  An order-$q$ phase examines at most
$\binom Lq$ cluster collections, each with at most
$4^{|C_1\cup\cdots\cup C_q|}\le4^{\ell_{\mathrm{grp}}}$ operator tuples.
Thus the realized query count satisfies
\begin{align}
N_{\mathrm{test}}
&\le N_{\mathrm{test}}^{\max}
:=
L4^{\ell_{\mathrm{grp}}}
\sum_{q=2}^{\min\{\ell_{\mathrm{grp}},L\}}\binom{L}{q},
\label{eq:supp-number-adaptive-cumulant-tests}\\
N_{\mathrm{test}}^{\max}
&\le
\ell_{\mathrm{grp}}n^{\ell_{\mathrm{grp}}+1}4^{\ell_{\mathrm{grp}}},
\label{eq:supp-number-adaptive-cumulant-tests-simplified}
\end{align}
using at most $\ell_{\mathrm{grp}}$ summands and
$\binom Lq\le L^q\le n^{\ell_{\mathrm{grp}}}$.

For each adaptive query, condition on the preceding history and use a fresh
batch, so Proposition~\ref{prop:supp-commuting-cumulant-sample-count}
applies to the now-fixed tuple.  Allocate
$\delta_{\mathrm{tuple}}:=
\delta_{\mathrm{grp}}^{\mathrm{ordinary}}/N_{\mathrm{test}}^{\max}$.
A conditional union bound over at most
$N_{\mathrm{test}}^{\max}$ realized tests is therefore valid.  The
sufficient total is
\begin{equation}
N_{\mathrm{copy}}^{\mathrm{ordinary}}
\le
N_{\mathrm{test}}^{\max}
\left\lceil
\frac{2\Gamma_{\ell_{\mathrm{grp}}}^2}{\tau_\kappa^2}
\log\!\left(
\frac{
2(2^{\ell_{\mathrm{grp}}}-1)
N_{\mathrm{test}}^{\max}
}{\delta_{\mathrm{grp}}^{\mathrm{ordinary}}}
\right)
\right\rceil.
\end{equation}
Consequently, the generic asymptotic ordinary copy bound is
\begin{equation}
N_{\mathrm{copy}}^{\mathrm{ordinary}}
=
O\!\left(
N_{\mathrm{test}}^{\max}
\frac{\Gamma_{\ell_{\mathrm{grp}}}^2}{\tau_\kappa^2}
\log\frac{
2^{\ell_{\mathrm{grp}}}N_{\mathrm{test}}^{\max}
}{\delta_{\mathrm{grp}}^{\mathrm{ordinary}}}
\right).
\label{eq:supp-ordinary-grouping-generic-copies}
\end{equation}

\paragraph{Coherent simultaneous-estimation comparison.}
Under the stronger coherent-memory access model of
Ref.~\cite{HuangKuengPreskill2021}, one may instead estimate, before
grouping, every signed moment generated by at most
$\ell_{\mathrm{grp}}$ recovered sectors, with failure budget
$\delta_{\mathrm{grp}}^{\mathrm{coherent}}\in(0,1)$.  With
$\ell_{\mathrm{eff}}:=\min\{\ell_{\mathrm{grp}},L\}$, the family has size at
most
$\sum_{\nu=0}^{\ell_{\mathrm{eff}}}\binom L\nu4^\nu$; for
$\ell_{\mathrm{eff}}\ge1$ this is at most
$(4eL/\ell_{\mathrm{eff}})^{\ell_{\mathrm{eff}}}$, while for
$\ell_{\mathrm{eff}}=0$ there is no nontrivial moment to estimate.
For $\ell_{\mathrm{eff}}\ge1$, taking moment tolerance
$\tau_\kappa/\Gamma_{\ell_{\mathrm{grp}}}$, clipping to $[-1,1]$, and using
the known subset-product signs gives the same uniform cumulant event, with
\begin{equation}
N_{\mathrm{copy}}^{\mathrm{coherent}}
=
O\!\left(
\frac{\Gamma_{\ell_{\mathrm{grp}}}^4}{\tau_\kappa^4}
\log\!\left[
\frac{1}{\delta_{\mathrm{grp}}^{\mathrm{coherent}}}
\sum_{\nu=0}^{\ell_{\mathrm{eff}}}\binom L\nu4^\nu
\right]
\right).
\label{eq:supp-coherent-grouping-generic-copies}
\end{equation}
All correctness arguments are unchanged.  This alternative assumes coherent
storage and collective access; it implies neither a smaller classical runtime
nor a coherent-circuit bound.

\subsubsection{Classical runtime}

Grouping enumerates
$O(\ell_{\mathrm{grp}}n^{\ell_{\mathrm{grp}}+1}4^{\ell_{\mathrm{grp}}})$
operator tuples.  For order $q$, canonical subset products cost $O(n2^q)$
and the $B_q=|\Pi([q])|$ partition terms cost $O(qB_q)$
\cite{smith2020multivariateKstatistics}.  Since $q\le n$,
$B_q\ge2^{q-1}$, $B_q\le B_{\ell_{\mathrm{grp}}}$, and
$B_{\ell_{\mathrm{grp}}}\le
\ell_{\mathrm{grp}}^{\ell_{\mathrm{grp}}}$,
\begin{equation}
T_{\mathrm{class}}
=
O\!\left(
\ell_{\mathrm{grp}}n^{\ell_{\mathrm{grp}}+2}
4^{\ell_{\mathrm{grp}}}B_{\ell_{\mathrm{grp}}}
\right)
=
O\!\left(
\ell_{\mathrm{grp}}n^{\ell_{\mathrm{grp}}+2}
(4\ell_{\mathrm{grp}})^{\ell_{\mathrm{grp}}}
\right).
\end{equation}
For $\ell_{\mathrm{grp}}=d$, this gives
\begin{equation}
T_{\mathrm{class}}
=
O\!\left(
d\,n^{d+2}(4d)^d
\right).
\label{eq:supp-runtime-emp-cumulant-asymptotic}
\end{equation}
This counts tuple/partition enumeration, Pauli arithmetic, and cumulant
evaluation, but not measurement-record aggregation.

\subsection{Grouping with a guessed scale}
\label{sec:supp-guessed-scale-overrefinement}

When $\eta_{\mathrm{irr}}$ is unknown, grouping can instead be calibrated at
a guessed scale $\eta_s>0$.  The resulting guarantee is deliberately
one-sided: it prevents cross-block merges and certifies weak remaining
dependence, but it need not recover the exact partition.

\begin{proposition}[Certified over-refined grouping at a guessed scale]
\label{prop:supp-guessed-scale-overrefined-grouping}
Let $\eta_s>0$.  Run
Algorithm~\ref{alg:supp-hierarchical-cumulant-grouping} with
$\kappa^\sharp=\widehat\kappa_{\mathrm{res}}$ and threshold
\begin{equation}
\eta_{\mathrm{test}}:=\frac{\eta_s}{2},
\qquad
\tau_\kappa:=\frac{\eta_s}{4},
\end{equation}
and suppose the queried cumulants satisfy the uniform event
\eqref{eq:supp-emp-cumulant-uniform-event}.  If
$\beta_{\mathrm{peel}}<\frac{\eta_s}{4}$, then no merge crosses the true
recovered-sector partition and the output
is a potentially non-exact refinement
\begin{equation}
\widehat{\mathfrak C}
=
\{C_{a,u}:a\in\mathcal A^\star,\ u=1,\dots,r_a\},
\qquad
\mathcal B_a^\star
=
\bigsqcup_{u=1}^{r_a}C_{a,u}.
\label{eq:supp-guessed-scale-refined-partition}
\end{equation}
Moreover, for every $a\in\mathcal A^\star$, every
$T\subseteq[r_a]$ with $|T|\ge2$, and every choice
\begin{equation}
O_u\in\mathcal P(C_{a,u})\setminus\{I\},
\qquad u\in T,
\end{equation}
one has
\begin{equation}
\left|
\kappa_{\mathrm{res}}(O_u:u\in T)
\right|
\le
\xi_s,
\qquad
\xi_s:=\frac{3\eta_s}{4}.
\label{eq:supp-guessed-scale-small-final-cumulants}
\end{equation}
\end{proposition}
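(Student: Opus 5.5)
The plan has two parts: first show that no merge crosses true blocks, using Lemma~\ref{lem:supp-cumulant-no-false-merge}, and then read the small-cumulant bound off the termination condition of Algorithm~\ref{alg:supp-hierarchical-cumulant-grouping}.

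\emph{No false merges.} With $\eta_{\mathrm{test}}=\eta_s/2$ and $\tau_\kappa=\eta_s/4$, the hypothesis $\beta_{\mathrm{peel}}<\eta_s/4$ gives
\[
\beta_{\mathrm{peel}}+\tau_\kappa<\eta_s/2=\eta_{\mathrm{test}}.
\]
Lemma~\ref{lem:supp-cumulant-no-false-merge} then applies verbatim. Every output cluster lies inside one true recovered-sector block, which yields the refinement form \eqref{eq:supp-guessed-scale-refined-partition}. Termination is automatic, since each merge strictly lowers the cluster count and there are finitely many tuples.

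\emph{Small remaining cumulants.} At termination the final complete scan over orders $2,\dots,\ell_{\mathrm{grp}}$ finds no hyperedge. Fix $a$, a set $T\subseteq[r_a]$ with $|T|\ge2$, and nonidentity $O_u\in\mathcal P(C_{a,u})$. I will check that the tuple is admissible.
\begin{itemize}
\item Its order $q=|T|$ satisfies $q\le\ell_{\mathrm{grp}}$, because the clusters are disjoint subsets of $\mathcal B_a^\star$, so $|T|\le|\bigcup_{u\in T}C_{a,u}|\le|\mathcal B_a^\star|\le\ell_{\mathrm{grp}}$ by \eqref{eq:supp-grouping-order-promise}.
\item The union-size condition holds by the same chain of inequalities.
\item Each $C_{a,u}$ is active, since $|C_{a,u}|<|\mathcal B_a^\star|\le\ell_{\mathrm{grp}}$ whenever $r_a\ge2$.
\end{itemize}
The tuple was therefore queried in the last scan and was not a hyperedge, so $|\widehat\kappa_{\mathrm{res}}(O_u:u\in T)|\le\eta_{\mathrm{test}}$. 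The uniform event \eqref{eq:supp-emp-cumulant-uniform-event} then gives
\[
|\kappa_{\mathrm{res}}(O_u:u\in T)|\le\eta_s/2+\eta_s/4=3\eta_s/4=\xi_s .
\]

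The main obstacle is exactly this admissibility check, in particular activity and the union bound. It is needed because the uniform event covers only tuples actually queried. The case $r_a=1$ is vacuous since it allows no $|T|\ge2$, and the edge cases $L\le1$ or $\ell_{\mathrm{grp}}=1$ are vacuous by the stated conventions.
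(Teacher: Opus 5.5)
Your proposal is correct and follows the paper's proof essentially step for step. The refinement claim comes from Lemma~\ref{lem:supp-cumulant-no-false-merge} via $\beta_{\mathrm{peel}}+\tau_\kappa<\eta_s/2=\eta_{\mathrm{test}}$, and the ceiling $3\eta_s/4$ comes from checking that every cross-piece tuple is admissible (order and union size at most $|\mathcal B_a^\star|\le\ell_{\mathrm{grp}}$, and activity of proper pieces when $r_a\ge2$), so that the final hyperedge-free scan plus the uniform event bound its true cumulant.
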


\begin{proof}
Here
$\beta_{\mathrm{peel}}+\tau_\kappa<\eta_s/2=\eta_{\mathrm{test}}$, so
Lemma~\ref{lem:supp-cumulant-no-false-merge} proves the refinement claim.
For the terminal certificate, fix
$a\in\mathcal A^\star$ and $T\subseteq[r_a]$ with
$|T|=q\ge2$.  Since
$\bigsqcup_{u\in T}C_{a,u}\subseteq\mathcal B_a^\star$ and
$|\mathcal B_a^\star|\le\ell_{\mathrm{grp}}$, we have
$q\le\ell_{\mathrm{grp}}$ and
$|\bigcup_{u\in T}C_{a,u}|\le\ell_{\mathrm{grp}}$.  If $r_a>1$, every
$C_{a,u}$ is a proper subset of
$\mathcal B_a^\star$ and hence active.  Thus the collection is tested during the final
complete scan.  Because that scan has no hyperedge at order
$q\le\ell_{\mathrm{grp}}$, every corresponding empirical cumulant satisfies
\begin{equation}
\left|
\widehat\kappa_{\mathrm{res}}(O_u:u\in T)
\right|
\le
\eta_{\mathrm{test}}
=
\frac{\eta_s}{2}.
\end{equation}
The uniform event then gives
\begin{equation}
\left|
\kappa_{\mathrm{res}}(O_u:u\in T)
\right|
\le
\frac{\eta_s}{2}
+
\frac{\eta_s}{4}
=
\frac{3\eta_s}{4},
\end{equation}
which proves Eq.~\eqref{eq:supp-guessed-scale-small-final-cumulants}.
\end{proof}

The ordinary or coherent construction supplies the required event by
substituting $\tau_\kappa=\eta_s/4$ into, respectively,
Eqs.~\eqref{eq:supp-ordinary-grouping-generic-copies}
or~\eqref{eq:supp-coherent-grouping-generic-copies}; these are alternative
access models for the same certified refinement.

\section{Clifford Localization and Block-Product Approximation}
\label{sec:supp-block-normalization}

\subsection{Interface from recovery and grouping}
\label{sec:supp-localization-interface}

Under the ranking margins
\eqref{eq:supp-visible-threshold-condition}
and~\eqref{eq:supp-block-ranking-gap-condition}, and the applicable grouping
hypotheses, let $\mathcal E_{\mathrm{grp}}$ be the uniform cumulant event
\eqref{eq:supp-emp-cumulant-uniform-event} (or the sure event when grouping
is skipped), and condition on
$\mathcal E_{\mathrm{str}}:=\mathcal E_{\mathrm{BS}}\cap
\mathcal E_{\mathrm{rank}}\cap\mathcal E_{\mathrm{grp}}$.
The visible transcript contains $t$, $m=n-t$, the generators and
$U_{\mathrm{stab}}$, the recovered sectors and residual representatives,
and the computable empirical partition $\widehat{\mathfrak C}$.

By Proposition~\ref{prop:supp-hierarchical-cumulant-correctness} or
Proposition~\ref{prop:supp-guessed-scale-overrefined-grouping}, no empirical
group crosses a true recovered-sector block.  Thus only in the proof may we
write
\begin{equation}
\widehat{\mathfrak C}
=
\{C_{a,h}:a\in\mathcal A^\star,\ h=1,\dots,r_a\},
\qquad
\mathcal B_a^\star
=
\bigsqcup_{h=1}^{r_a}C_{a,h}.
\label{eq:supp-localization-hidden-refinement}
\end{equation}
The labels $a$, map $C\mapsto a$, hidden aggregates, comparison states,
$U_{\mathrm c}$, and hidden partition are proof-only; only the visible
transcript computes the registers $J_C$ and structural certificate.

We also centralize the trivial peeling branch here: if
$\varepsilon_{\mathrm{peel}}\ge1$, use the universal certificate
$E_{\mathrm{struct}}^{\mathrm{cert}}=2$.  All nontrivial structural analysis
below assumes $\varepsilon_{\mathrm{peel}}<1$, while branch-explicit theorem
statements retain the universal case.

\subsection{Simultaneous Clifford localization of recovered algebras}
\label{sec:supp-simultaneous-clifford-localization}

We first turn the delocalized empirical algebras into disjoint physical
registers using only the recovered Pauli labels.
For every $C\in\widehat{\mathfrak C}$, define its phase-free generated
Pauli group and binary symplectic subspace by
\begin{equation}
\begin{aligned}
\mathcal P(C)
&:=
\left\langle\bigcup_{\alpha\in C}\mathcal G_\alpha\right\rangle,
\\
V_C
&:=
\Span_{\Fc}
\operatorname{Ax}\!\left(\{\mathcal G_\alpha:\alpha\in C\}\right)
\subseteq\Fc^{2m}.
\end{aligned}
\label{eq:supp-empirical-group-algebras}
\end{equation}
Pauli phases are ignored.  Distinct recovered sectors commute elementwise, so
\begin{equation}
V_C\perp_{\mathrm{sp}}V_{C'},
\qquad C\ne C'.
\label{eq:supp-empirical-group-symplectic-orthogonality}
\end{equation}
Moreover, the Pauli directions in $\operatorname{Ax}(\mathfrak G)$ are globally
linearly independent by
Proposition~\ref{prop:supp-block-triple-single-block}.  Since distinct
empirical groups use disjoint subsets of these Pauli directions,
\begin{equation}
\sum_{C\in\widehat{\mathfrak C}}V_C
=
\bigoplus_{C\in\widehat{\mathfrak C}}V_C.
\label{eq:supp-empirical-group-direct-sum}
\end{equation}
This argument applies equally to two empirical groups contained in the same
true hidden block; it does not invoke hidden-block separation.

Let the restricted symplectic form on $V_C$ have a nondegenerate part of
dimension $2h_C$ and radical dimension $q_C$, where
\begin{equation}
\operatorname{rad}(V_C)
:=
\left\{
v\in V_C :
[v,w]=0
\text{ for every } w\in V_C
\right\}.  
\end{equation}
A symplectic space
containing $V_C$ must contain one conjugate partner for every radical
direction, so the required register size is
\begin{equation}
k_C:=h_C+q_C.
\label{eq:supp-empirical-register-size}
\end{equation}
Each completed triple and each singleton sector contributes one localized
qubit, so $k_C=|C|$.  The recovered-sector cap following
Eq.~\eqref{eq:supp-grouping-order-promise} and
Proposition~\ref{prop:supp-block-triple-single-block} therefore give, for
$C=C_{a,h}$,
\begin{equation}
k_C\le d,
\qquad
\sum_{h=1}^{r_a}k_{C_{a,h}}
=
|\mathcal B_a^\star|
\le
L_a
\le
d.
\label{eq:supp-empirical-register-size-bounds}
\end{equation}
Because every empirical group is nonempty, this also implies
\begin{equation}
r_a
\le
\sum_{h=1}^{r_a}k_{C_{a,h}}
\le L_a\le d.
\label{eq:supp-refined-piece-count-bound}
\end{equation}
This derives the refined-piece bound before it is used below.

\begin{proposition}[Simultaneous Clifford localization of empirical groups]
\label{prop:supp-simultaneous-empirical-localization}
From the recovered binary Pauli labels, one can compute a residual Clifford
$U_{\mathrm{rec}}$ and a disjoint decomposition
\begin{equation}
[m]
=
\left(
\bigsqcup_{C\in\widehat{\mathfrak C}}J_C
\right)
\sqcup J_{\mathrm{aux}},
\qquad
|J_C|=k_C,
\label{eq:supp-empirical-localized-register-decomposition}
\end{equation}
such that, for every empirical group $C$,
\begin{equation}
U_{\mathrm{rec}}^\dagger\mathcal P(C)U_{\mathrm{rec}}
\subseteq\Pc(J_C). 
\label{eq:supp-simultaneous-empirical-localization}
\end{equation}
The binary symplectic linear algebra uses
$O(m^3+Lm^2)$ operations over $\Fc$, and the associated Clifford can be
synthesized in polynomial classical time.
\end{proposition}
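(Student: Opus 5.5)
The plan is to build a single symplectic basis of $\Fc^{2m}$ adapted to the orthogonal direct sum \eqref{eq:supp-empirical-group-direct-sum}. Then take $U_{\mathrm{rec}}$ to be the Clifford whose tableau maps the standard symplectic basis onto it, qubit by qubit. Fix the visible data: the axis sets $\operatorname{Ax}(\{\mathcal G_\alpha:\alpha\in C\})$ for each $C\in\widehat{\mathfrak C}$. These are linearly independent globally and mutually symplectically orthogonal across groups by \eqref{eq:supp-empirical-group-symplectic-orthogonality}.

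\emph{Step 1: symplectic pairs inside each $V_C$.} Within one $V_C$, every completed triple already gives a hyperbolic pair $(\widehat R_\alpha^x,\widehat R_\alpha^z)$ with $[\widehat R_\alpha^x,\widehat R_\alpha^z]=1$. Distinct sectors commute elementwise, so these pairs are mutually orthogonal, and every singleton axis commutes with all of them and with the other singletons. Hence $V_C=H_C\oplus \operatorname{rad}(V_C)$, where $H_C$ is spanned by the triple pairs ($h_C$ of them) and the radical is spanned by the $q_C$ singleton axes. This gives $k_C=|C|$ as claimed before the statement, with no Gram--Schmidt needed.

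\emph{Step 2: simultaneous conjugate partners (main obstacle).} Let $W:=\bigoplus_C V_C$ and let $R:=\bigcup_C\operatorname{rad}(V_C)$ be the set of all singleton axes. Each singleton $r$ needs a partner $f_r$ with four properties: $[f_r,r]=1$; $f_r$ is orthogonal to every other singleton; $f_r$ is orthogonal to every triple pair; and all partners commute among themselves. Orthogonality to every triple pair keeps the groups separated. I would first compute $H:=\bigoplus_C H_C$, which is nondegenerate, and work in $H^{\perp_{\mathrm{sp}}}$, which is also nondegenerate of dimension $2(m-\sum_C h_C)$. All singleton axes lie in it and form an isotropic independent set there, by the independence of $\operatorname{Ax}(\mathfrak G)$. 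The standard isotropic-extension lemma then applies. Solve the linear system $[f,r']=\delta_{rr'}$ over $r'\in R$, restricted to $H^{\perp}$; it is solvable because the form on $H^\perp$ is nondegenerate and $R$ is independent. Then fix pairwise partner commutation by the usual symplectic Gram--Schmidt correction $f_r\mapsto f_r+\sum_{r'}[f_r,f_{r'}]\,r'$, done sequentially. This preserves $[f_r,r']=\delta_{rr'}$, since the added terms are isotropic singleton axes. The point to check carefully is that partners of group $C$ commute with $V_{C'}$ for $C'\neq C$. This holds by construction, because $f_r$ is orthogonal to all of $H$ and to all singletons other than $r$. So $\widetilde V_C:=V_C+\Span\{f_r:r\in\operatorname{rad}(V_C)\}$ is nondegenerate of dimension $2k_C$, and the $\widetilde V_C$ are mutually orthogonal.

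\emph{Step 3: completion, register assignment, and cost.} Set $\widetilde W:=\bigoplus_C\widetilde V_C$, which is nondegenerate. Its complement $\widetilde W^{\perp}$ is nondegenerate of dimension $2(m-\sum_C k_C)$, and I would pick any symplectic basis of it for $J_{\mathrm{aux}}$. Order the qubits, assigning to each $C$ a block $J_C$ of $k_C$ consecutive indices and the remainder to $J_{\mathrm{aux}}$, which gives \eqref{eq:supp-empirical-localized-register-decomposition}. The collected pairs form a symplectic basis of $\Fc^{2m}$, so the matrix sending the standard pair $(x_j,z_j)$ to the $j$-th pair is symplectic. By the standard tableau correspondence it is realized by a Clifford $U_{\mathrm{rec}}$, with signs fixed by Pauli corrections; the correspondence is that of \cite{AaronsonGottesman2004,DehaeneDeMoor2003}, and synthesis in polynomial time is standard. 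Then $U_{\mathrm{rec}}^\dagger$ maps every axis of $C$ into $\Span\{x_j,z_j:j\in J_C\}$, and by closure the whole group $\mathcal P(C)$ lands in $\Pc(J_C)$, giving \eqref{eq:supp-simultaneous-empirical-localization}.

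For the cost, computing $H^\perp$, solving the partner system and computing $\widetilde W^\perp$ are Gaussian eliminations on $2m\times 2m$ systems, each $O(m^3)$. The Gram--Schmidt corrections involve $L$ vectors of length $2m$ against $O(m)$ others, contributing $O(Lm^2)$. This yields the stated $O(m^3+Lm^2)$ bound.
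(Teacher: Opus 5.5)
Your proposal is correct and follows essentially the same route as the paper: isolate the nondegenerate part $H$ spanned by the hyperbolic pairs, build mutually commuting conjugate partners for the whole isotropic family of radical (singleton) axes inside $H^{\perp_{\mathrm{sp}}}$ by symplectic Gram--Schmidt, complete to a symplectic basis of $\Fc^{2m}$, and assign the resulting pieces to disjoint registers. You also show explicitly that the triple pairs already supply the hyperbolic pairs and that the singleton axes span $\operatorname{rad}(V_C)$, and you give a concrete partner-correction formula; these fill in details the paper leaves implicit rather than change the argument.
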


\begin{proof}
For each $C$, choose hyperbolic pairs
\[
(x_{C,1},z_{C,1}),\dots,(x_{C,h_C},z_{C,h_C})
\]
for the nondegenerate part of $V_C$ and a basis
$u_{C,1},\dots,u_{C,q_C}$ of its radical.  By
Eqs.~\eqref{eq:supp-empirical-group-symplectic-orthogonality}
and~\eqref{eq:supp-empirical-group-direct-sum}, the union of these vectors
over all $C$ is independent, the displayed hyperbolic pairs have their
prescribed pairings, and every radical vector is orthogonal to every
vector in the union.

We supply the missing conjugate partners simultaneously.  Let
\begin{equation}
H
:=
\Span_{\Fc}
\{x_{C,i},z_{C,i}:C\in\widehat{\mathfrak C},\ i\in[h_C]\}.
\end{equation}
The prescribed pairings make $H$ nondegenerate.  Pass to its symplectic
orthogonal complement $H^{\perp_{\mathrm{sp}}}$.  All $u_{C,j}$ lie in
$H^{\perp_{\mathrm{sp}}}$ and form one independent isotropic family
there.  Symplectic Gram--Schmidt, applied to that entire family, produces
vectors $v_{C,j}$ satisfying
\begin{equation}
[u_{C,j},v_{C',j'}]
=
\delta_{C,C'}\delta_{j,j'},
\qquad
[v_{C,j},v_{C',j'}]=0,
\end{equation}
while keeping every $v_{C,j}$ orthogonal to the previously chosen
hyperbolic planes.  Thus
\begin{equation}
W_C
:=
\Span_{\Fc}
\{x_{C,i},z_{C,i},u_{C,j},v_{C,j}:i\in[h_C],\,j\in[q_C]\}
\end{equation}
is a $2k_C$-dimensional nondegenerate symplectic subspace, the $W_C$ are
pairwise symplectically orthogonal, and $V_C\subseteq W_C$.  This also
proves $\sum_Ck_C\le m$.

Extend the union of the bases of the $W_C$ to a symplectic basis of
$\Fc^{2m}$.  Map the basis of each $W_C$ to the standard $X$--$Z$ pairs
on a dedicated $k_C$-qubit set $J_C$, and map the remaining pairs to
$J_{\mathrm{aux}}$.  The resulting symplectic transformation is realized
by a Clifford $U_{\mathrm{rec}}$ and gives
Eqs.~\eqref{eq:supp-empirical-localized-register-decomposition}
and~\eqref{eq:supp-simultaneous-empirical-localization}.

All steps use only the binary labels of the recovered Pauli directions.  Gaussian
elimination together with symplectic Gram--Schmidt finds the adapted bases
and the simultaneous radical partners in $O(m^3+Lm^2)$ binary operations.
Standard symplectic Clifford synthesis then produces $U_{\mathrm{rec}}$
in polynomial classical time.
\end{proof}

\subsection{Recovered hidden-local algebras and truncation}
\label{sec:supp-hidden-local-truncation}

We now use proof-only hidden objects.  Recall that
\[
\rho
=
U_{\mathrm c}
\left(\bigotimes_{a=1}^K\rho_a\right)
U_{\mathrm c}^\dagger,
\qquad
\rho_a\in\D\!\left((\mathbb C^2)^{\otimes L_a}\right).
\]
For each hidden block $B_a$, define the prefix-only hidden-local subgroup
\begin{equation}
\mathcal Z_a^{\mathrm{peel}}
:=
\left\{
\Lambda\in\Pc_{B_a}:
U_{\mathrm{stab}}^\dagger
U_{\mathrm c}
\bigl(\Lambda\otimes I_{B_a^c}\bigr)
U_{\mathrm c}^\dagger
U_{\mathrm{stab}}
\in\mathcal K
\right\},
\label{eq:supp-prefix-only-local-subgroup}
\end{equation}
where
$\mathcal K=\{Z^c\otimes I:c\in\Fc^t\}$ is the peeled syndrome
subgroup.  Equivalently, these are the hidden-local Paulis whose peeled
images have trivial residual component.
For every $C_{a,h}$ and every $P\in\mathcal P(C_{a,h})$, choose a
hidden-local full lift
$\Lambda_a^\uparrow(P)\in\Pc_{B_a}$ satisfying, phase-free,
\begin{equation}
U_{\mathrm{stab}}^\dagger
U_{\mathrm c}
\left(\Lambda_a^\uparrow(P)\otimes I_{B_a^c}\right)
U_{\mathrm c}^\dagger
U_{\mathrm{stab}}
=
Z^{c(P)}\otimes P
\label{eq:supp-hidden-local-full-lift}
\end{equation}
for some $c(P)\in\Fc^t$.  Let
\begin{equation}
\mathcal H_a
:=
\left\langle
\mathcal Z_a^{\mathrm{peel}},\
\Lambda_a^\uparrow(P):
h\in[r_a],\ P\in\mathcal P(C_{a,h})
\right\rangle
\subseteq\Pc_{B_a},
\label{eq:supp-recovered-local-subgroup}
\end{equation}
and set $\mathcal H_a=\mathcal Z_a^{\mathrm{peel}}$ when
$a\notin\mathcal A^\star$.  These definitions are analytical: neither the
hidden assignment nor the lifts are available to the learner.

The group in Eq.~\eqref{eq:supp-recovered-local-subgroup} is independent
of every lift choice.  Indeed, if $\Lambda_a^\uparrow(P)$ and
$\widetilde{\Lambda}_a^\uparrow(P)$ are two hidden-local lifts of the same
residual representative, their peeled images are
$Z^c\otimes P$ and $Z^{c'}\otimes P$ phase-free.  Their quotient is the
prefix-only hidden-local Pauli whose peeled image is
$Z^{c+c'}\otimes I$.  Hence
\begin{equation}
\Lambda_a^\uparrow(P)\widetilde{\Lambda}_a^\uparrow(P)
\in\mathcal Z_a^{\mathrm{peel}},
\end{equation}
and adjoining the full prefix-only subgroup makes the generated
$\mathcal H_a$ unchanged.

Let
\begin{equation}
\mathfrak H_a
:=
\Span_{\mathbb C}\{\Lambda:\Lambda\in\mathcal H_a\}.
\end{equation}
Because $\mathcal H_a$ is a phase-free Pauli subgroup containing the
identity, products of representatives differ from representatives in
$\mathcal H_a$ only by scalar phases, and adjoints remain in the same
span.   If
$W_a^{\mathrm{rec}}\subseteq\Fc^{2L_a}$ is its binary label subspace,
define
\begin{equation}
\mathbb E_a(X)
:=
\frac{1}{|(W_a^{\mathrm{rec}})^{\perp_{\mathrm{sp}}}|}
\sum_{q\in(W_a^{\mathrm{rec}})^{\perp_{\mathrm{sp}}}}
P(q)XP^\dagger(q),
\label{eq:supp-hidden-pauli-conditional-expectation}
\end{equation}
where one Pauli representative is chosen for each label in the sum.
This Pauli twirl is completely positive, trace preserving, and unital.
It fixes exactly the Paulis whose labels lie in
$((W_a^{\mathrm{rec}})^{\perp_{\mathrm{sp}}})^{\perp_{\mathrm{sp}}}
=W_a^{\mathrm{rec}}$, so it is the trace-preserving conditional
expectation onto the operator algebra $\mathfrak H_a$.

Write
\begin{equation}
\rho_a
=
2^{-L_a}
\sum_{\Lambda\in\Pc_{B_a}}
\varphi_a(\Lambda)\Lambda,
\qquad
\varphi_a(\Lambda):=\Tr(\rho_a\Lambda),
\label{eq:supp-hidden-block-pauli-expansion-normalization}
\end{equation}
and define
\begin{equation}
\rho_a^{\mathrm{rec}}
:=
\mathbb E_a(\rho_a)
=
2^{-L_a}
\sum_{\Lambda\in\mathcal H_a}
\varphi_a(\Lambda)\Lambda.
\label{eq:supp-truncated-hidden-block-state}
\end{equation}
The coefficient-normalized missed Pauli mass and its trace-norm proxy are
\begin{equation}
\varepsilon_a^{\mathrm{miss}}
:=
\sum_{\Lambda\in\Pc_{B_a}\setminus\mathcal H_a}
|\varphi_a(\Lambda)|^2,
\qquad
E_{\mathrm{miss},a}:=\sqrt{\varepsilon_a^{\mathrm{miss}}}.
\label{eq:supp-blockwise-missed-pauli-mass}
\end{equation}
Complete positivity and trace preservation of $\mathbb E_a$ make
$\rho_a^{\mathrm{rec}}$ a density operator, while its fixed-point action
preserves the Pauli coefficients in $\mathcal H_a$ and removes all others.
Pauli orthogonality gives
\begin{equation}
\|\rho_a-\rho_a^{\mathrm{rec}}\|_2^2
=
2^{-L_a}\varepsilon_a^{\mathrm{miss}},
\qquad
\|\rho_a-\rho_a^{\mathrm{rec}}\|_1
\le2^{L_a/2}\|\rho_a-\rho_a^{\mathrm{rec}}\|_2
=E_{\mathrm{miss},a}.
\label{eq:supp-blockwise-truncation-trace-distance}
\end{equation}

Define the global truncated product model and its aggregate error by
\begin{equation}
\rho^{\mathrm{rec}}
:=
U_{\mathrm c}
\left(\bigotimes_{a=1}^K\rho_a^{\mathrm{rec}}\right)
U_{\mathrm c}^\dagger,
\qquad
E_{\mathrm{miss}}
:=
\sum_{a=1}^KE_{\mathrm{miss},a}.
\label{eq:supp-global-truncated-product-model}
\end{equation}
The hidden partition is unchanged, and telescoping the tensor product gives
\begin{equation}
\|\rho-\rho^{\mathrm{rec}}\|_1
\le E_{\mathrm{miss}}.
\label{eq:supp-hidden-product-truncation-bound}
\end{equation}

\subsection{Threshold control of missed Pauli mass}
\label{sec:supp-missed-mass-threshold}

We next control the Pauli coefficients removed by the blockwise
conditional expectations using the recovery thresholds.
Set
\begin{equation}
\theta_{\mathrm{rec}}
:=
\theta+\tau_{\mathrm{rank}}.
\end{equation}
Every hidden-local Pauli is exactly one of the following: prefix-only,
invisible non-prefix, or visible.  Prefix-only directions have peeled
image $Z^c\otimes I$ and are retained in
$\mathcal Z_a^{\mathrm{peel}}\subseteq\mathcal H_a$.  Consequently, the
omitted directions split disjointly into two classes.  An invisible
non-prefix direction has peeled image outside
$\mathcal Z_{\mathrm{pref}}$ and hence has no prefix-compatible residual
representative under the quotient map $\pi$.
A visible omitted direction has peeled image
$Z^c\otimes\bar R_a^\ell$ with $\bar R_a^\ell\ne I$, but its residual
representative is outside $V(\mathfrak G)$.  The last assertion follows
from residual block separation,
Proposition~\ref{prop:supp-blockwise-peeling-preserves-residual-blocks},
and the block purity in
Proposition~\ref{prop:supp-block-triple-single-block}: a block-$a$
direction in the global recovered span is generated entirely by recovered
directions from block $a$.  Its full lift therefore belongs to
$\mathcal H_a$ up to a retained prefix.

Let $\mathcal M_a^{\mathrm{inv}}$ and
$\mathcal M_a^{\mathrm{vis}}$ denote the two omitted classes.  They are
disjoint and exhaust
$\Pc_{B_a}\setminus\mathcal H_a$; in particular, neither the identity nor
any element of the peeling subgroup is counted.

\begin{proposition}[Threshold certificate for missed Pauli mass]
\label{prop:supp-missed-pauli-threshold-certificate}
Under the recovery concentration event and the margin hypotheses of
Proposition~\ref{prop:supp-block-triple-single-block}, every hidden block,
including one with no recovered residual direction, satisfies
\begin{align}
\varepsilon_a^{\mathrm{miss}}
&\le
|\mathcal M_a^{\mathrm{inv}}|\,\lambda
+
|\mathcal M_a^{\mathrm{vis}}|\,\theta_{\mathrm{rec}}
\nonumber\\
&\le
\bigl(4^{L_a}-|\mathcal H_a|\bigr)\theta_{\mathrm{rec}},
\label{eq:supp-block-threshold-missing-bound}\\
E_{\mathrm{miss}}
&\le
\sum_{a=1}^K
\sqrt{
\bigl(4^{L_a}-|\mathcal H_a|\bigr)\theta_{\mathrm{rec}}
}.
\label{eq:supp-global-missed-mass-sharp-certificate}
\end{align}
Consequently, using $K\le n$ and $L_a\le d$,
\begin{equation}
\sum_{a=1}^K\varepsilon_a^{\mathrm{miss}}
\le n4^d\theta_{\mathrm{rec}},
\qquad
E_{\mathrm{miss}}
\le n2^d\sqrt{\theta_{\mathrm{rec}}}.
\label{eq:supp-global-missed-mass-computable-certificate}
\end{equation}
\end{proposition}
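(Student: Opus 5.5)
The proof bounds each omitted coefficient $|\varphi_a(\Lambda)|^2$ individually and then sums over the omitted set. For each hidden-local $\Lambda\in\Pc_{B_a}\setminus\mathcal H_a$, its transported full Pauli $A_\Lambda:=U_{\mathrm{stab}}^\dagger U_{\mathrm c}(\Lambda\otimes I_{B_a^c})U_{\mathrm c}^\dagger U_{\mathrm{stab}}$ has, by unitary invariance and Eq.~\eqref{eq:supp-single-block-score}, score
\[
s_{\mathrm{full}}(A_\Lambda)=s_\rho\bigl(U_{\mathrm c}(\Lambda\otimes I)U_{\mathrm c}^\dagger\bigr)=|\varphi_a(\Lambda)|^2 .
\]
So it suffices to show the following two bounds:
\begin{itemize}
\item $s_{\mathrm{full}}(A_\Lambda)\le\lambda$ for $\Lambda\in\mathcal M_a^{\mathrm{inv}}$;
\item $s_{\mathrm{full}}(A_\Lambda)<\theta_{\mathrm{rec}}$ for $\Lambda\in\mathcal M_a^{\mathrm{vis}}$.
\end{itemize}
Since $\lambda<1/2$ and the margin \eqref{eq:supp-visible-threshold-condition} gives $\lambda<\theta-\tau_{\mathrm{rank}}\le\theta_{\mathrm{rec}}$, both bounds are at most $\theta_{\mathrm{rec}}$. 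The two classes partition $\Pc_{B_a}\setminus\mathcal H_a$, so summing gives the first line of Eq.~\eqref{eq:supp-block-threshold-missing-bound}. Bounding both cardinalities together by $|\Pc_{B_a}\setminus\mathcal H_a|=4^{L_a}-|\mathcal H_a|$ gives the second line.

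\textbf{Invisible class.} For $\Lambda\in\mathcal M_a^{\mathrm{inv}}$, $A_\Lambda\notin\mathcal Z_{\mathrm{pref}}$, so $A_\Lambda$ anticommutes with some $Z_j$. Pulling back by $U_{\mathrm{stab}}$, this means $U_{\mathrm c}(\Lambda\otimes I)U_{\mathrm c}^\dagger$ anticommutes with some generator $g_j$. By Proposition~\ref{prop:supp-certified-blockwise-peeling-span}, $s_\rho(g_j)\ge1-\lambda$. Lemma~\ref{lem:supp-anticommuting-pauli-score-uncertainty} then gives score $\le\lambda$; this is the contrapositive of Proposition~\ref{prop:supp-residual-representatives-after-peeling}.

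\textbf{Visible class.} For $\Lambda\in\mathcal M_a^{\mathrm{vis}}$, $A_\Lambda=Z^c\otimes P$ with $P\ne I$ and $P\notin V(\mathfrak G)$. Argue by contraposition: suppose $s_{\mathrm{full}}(A_\Lambda)\ge\theta_{\mathrm{rec}}=\theta+\tau_{\mathrm{rank}}$. On $\mathcal E_{\mathrm{rank}}$ this gives $\hat s_{\mathrm{full}}(Z^c\otimes P)\ge\theta$. Completeness, Eq.~\eqref{eq:supp-threshold-span-completeness}, then forces $P\in V(\mathfrak G)$.

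The remaining step is to show that this places $\Lambda$ in $\mathcal H_a$, contradicting the choice of $\Lambda$; this is the classification argument stated just before the proposition. Write $P$ as a product over sectors. Residual block separation (Proposition~\ref{prop:supp-blockwise-peeling-preserves-residual-blocks}), together with block purity, shows that the non-block-$a$ factors multiply to the identity. Hence $P\in\mathcal P(\mathcal B_a^\star)=\langle\mathcal P(C_{a,h})\rangle_h$. The product of the chosen lifts $\Lambda_a^\uparrow$ then agrees with $\Lambda$ up to an element of $\mathcal Z_a^{\mathrm{peel}}$, so $\Lambda\in\mathcal H_a$. Blocks with $a\notin\mathcal A^\star$ are covered by the same argument: there no visible direction can lie in $V(\mathfrak G)$, so every visible omitted $\Lambda$ has score below $\theta_{\mathrm{rec}}$.

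\textbf{Aggregate bounds.} Taking square roots and summing over $a$ gives Eq.~\eqref{eq:supp-global-missed-mass-sharp-certificate}. For Eq.~\eqref{eq:supp-global-missed-mass-computable-certificate}, use $4^{L_a}-|\mathcal H_a|\le4^{d}$ and $K\le n$. This yields $\sum_a\varepsilon_a^{\mathrm{miss}}\le n4^d\theta_{\mathrm{rec}}$, and $\sqrt{4^d\theta_{\mathrm{rec}}}=2^d\sqrt{\theta_{\mathrm{rec}}}$ per block gives $E_{\mathrm{miss}}\le n2^d\sqrt{\theta_{\mathrm{rec}}}$.

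The main obstacle is the visible-class step. One must check that membership $P\in V(\mathfrak G)$, which is a statement about the global span, genuinely forces $\Lambda\in\mathcal H_a$, handling prefixes and phases carefully. Everything else is direct.
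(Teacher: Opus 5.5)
Your proposal is correct and follows the paper's proof essentially step for step. The invisible class is handled by the anticommutation/uncertainty contrapositive of Proposition~\ref{prop:supp-residual-representatives-after-peeling}. The visible class is handled by $\mathcal E_{\mathrm{rank}}$ together with the threshold-span completeness of Eq.~\eqref{eq:supp-threshold-span-completeness}. You then use residual block separation and block purity to show that membership $P\in V(\mathfrak G)$ would put $\Lambda$ in $\mathcal H_a$; the paper gives that argument in the paragraph just before the proposition rather than inside its proof.
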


\begin{proof}
For an invisible direction, the contrapositive of
Proposition~\ref{prop:supp-residual-representatives-after-peeling} gives
\begin{equation}
|\varphi_a(\Lambda)|^2
=
s_{\mathrm{full}}(R_a^\ell)
\le\lambda.
\end{equation}
For a visible omitted direction, if
$s_{\mathrm{full}}(R_a^\ell)>\theta+\tau_{\mathrm{rank}}$, then on
$\mathcal E_{\mathrm{rank}}$ its empirical score exceeds $\theta$, and
Eq.~\eqref{eq:supp-threshold-span-completeness} would put
$\bar R_a^\ell$ in $V(\mathfrak G)$, a contradiction.  Hence
\begin{equation}
|\varphi_a(\Lambda)|^2
=
s_{\mathrm{full}}(R_a^\ell)
\le\theta+\tau_{\mathrm{rank}}
=\theta_{\mathrm{rec}}.
\end{equation}
The threshold margin
$\theta-\tau_{\mathrm{rank}}>\lambda$ implies
$\lambda<\theta_{\mathrm{rec}}$.  Summing the two exact classes proves the
first line and then the coarser blockwise bound.  The argument also covers
$a\notin\mathcal A^\star$, for which
$\mathcal H_a=\mathcal Z_a^{\mathrm{peel}}$.  Finally,
$4^{L_a}-|\mathcal H_a|\le4^d$ and $K\le n$ give the two known-parameter
certificates.
\end{proof}

\subsection{Localized true-block-product approximation}
\label{sec:supp-localized-true-block-product}

We now compare the actual state in computable localized coordinates with a
proof-only product over aggregate true-block registers.
Let
\begin{equation}
\bar U_{\mathrm{rec}}
:=
I^{\otimes t}\otimes U_{\mathrm{rec}},
\qquad
\rho_{\mathrm{loc}}
:=
\bar U_{\mathrm{rec}}^\dagger
U_{\mathrm{stab}}^\dagger\rho U_{\mathrm{stab}}
\bar U_{\mathrm{rec}}.
\label{eq:supp-computable-localized-state}
\end{equation}
This is the actual computable localized state.  Its physical partition is
the peeled register $[t]$, the empirical residual registers
$\{J_C:C\in\widehat{\mathfrak C}\}$, and $J_{\mathrm{aux}}$.

For analysis only, use Eq.~\eqref{eq:supp-localization-hidden-refinement}
to define
\begin{equation}
J_a
:=
\bigsqcup_{h=1}^{r_a}J_{C_{a,h}},
\qquad
a\in\mathcal A^\star.
\label{eq:supp-proof-only-hidden-register-aggregation}
\end{equation}
Neither the assignment $C\mapsto a$ nor the aggregate register $J_a$ is
known to the learner.

\begin{proposition}[Localized true-block-product comparison]
\label{prop:supp-localized-true-block-product}
There exist a syndrome $b\in\{0,1\}^t$ and states
\[
\omega_a\in\D\!\left((\mathbb C^2)^{\otimes |J_a|}\right),
\qquad a\in\mathcal A^\star,
\]
such that the proof-only comparison state
\begin{equation}
\Omega_{\mathrm{true}}
:=
|b\rangle\!\langle b|
\otimes
\left(\bigotimes_{a\in\mathcal A^\star}\omega_a\right)
\otimes
\frac{I_{J_{\mathrm{aux}}}}{2^{|J_{\mathrm{aux}}|}}
\label{eq:supp-localized-true-block-model}
\end{equation}
satisfies
\begin{equation}
\|\rho_{\mathrm{loc}}-\Omega_{\mathrm{true}}\|_1
\le
2\sqrt{\varepsilon_{\mathrm{peel}}}
+
E_{\mathrm{miss}}.
\label{eq:supp-block-normalized-close-product-form}
\end{equation}
\end{proposition}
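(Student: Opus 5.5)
The plan is a triangle inequality through the truncated hidden model $\rho^{\mathrm{rec}}$ of Eq.~\eqref{eq:supp-global-truncated-product-model}:
\[
\|\rho_{\mathrm{loc}}-\Omega_{\mathrm{true}}\|_1
\le
\|\rho_{\mathrm{loc}}-\rho^{\mathrm{rec}}_{\mathrm{loc}}\|_1
+\|\rho^{\mathrm{rec}}_{\mathrm{loc}}-\Omega_{\mathrm{true}}\|_1,
\qquad
\rho^{\mathrm{rec}}_{\mathrm{loc}}
:=\bar U_{\mathrm{rec}}^\dagger U_{\mathrm{stab}}^\dagger\rho^{\mathrm{rec}}U_{\mathrm{stab}}\bar U_{\mathrm{rec}} .
\]
Unitary invariance of the trace norm and Eq.~\eqref{eq:supp-hidden-product-truncation-bound} bound the first term by $E_{\mathrm{miss}}$. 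The work is therefore to show that $\rho^{\mathrm{rec}}_{\mathrm{loc}}$ is within $2\sqrt{\varepsilon_{\mathrm{peel}}}$ of a state of the form \eqref{eq:supp-localized-true-block-model}.

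\textbf{Step 1 (Pauli support of the truncated model).} By Eq.~\eqref{eq:supp-truncated-hidden-block-state}, $\rho^{\mathrm{rec}}$ is a combination of products $\prod_a U_{\mathrm c}\Lambda_aU_{\mathrm c}^\dagger$ with $\Lambda_a\in\mathcal H_a$, and its coefficients factor as $\prod_a\varphi_a(\Lambda_a)$. I will push each generator of $\mathcal H_a$ through $U_{\mathrm{stab}}^\dagger$ and then $\bar U_{\mathrm{rec}}^\dagger$.
\begin{itemize}
\item A prefix-only generator in $\mathcal Z_a^{\mathrm{peel}}$ maps to $Z^c\otimes I$, and $\bar U_{\mathrm{rec}}$ leaves it unchanged.
\item A lift $\Lambda_a^\uparrow(P)$ with $P\in\mathcal P(C_{a,h})$ maps to $Z^{c(P)}\otimes P$. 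By Proposition~\ref{prop:supp-simultaneous-empirical-localization}, $U_{\mathrm{rec}}^\dagger P\,U_{\mathrm{rec}}\in\Pc(J_{C_{a,h}})\subseteq\Pc(J_a)$.
\end{itemize}
Hence every localized image of $\mathcal H_a$ lies in $\{Z^c\otimes Q_a\otimes I_{J_{\mathrm{aux}}}: Q_a\in\Pc(J_a)\}$, where each $Q_a$ acts only on $J_a$ (trivially when $a\notin\mathcal A^\star$). Consequently $\rho^{\mathrm{rec}}_{\mathrm{loc}}$ carries no nontrivial Pauli weight on $J_{\mathrm{aux}}$. It also commutes with every $Z_j$ on the syndrome register, so it is block diagonal in the syndrome basis: $\rho^{\mathrm{rec}}_{\mathrm{loc}}=\sum_x |x\rangle\!\langle x|\otimes\tau_x\otimes I_{J_{\mathrm{aux}}}/2^{|J_{\mathrm{aux}}|}$.

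\textbf{Step 2 (extracting the comparison state).} I will take $b$ to be the dominant syndrome and mimic the proof of Proposition~\ref{prop:supp-approximate-stabilizer-decomposition}, now applied to $\rho^{\mathrm{rec}}$. Because the $\rho_a^{\mathrm{rec}}$ are a product across hidden blocks, the conditioned block $\langle b|\cdot|b\rangle$ should factor over the registers $J_a$. The key point is that, by Eq.~\eqref{eq:supp-blockwise-kernel-factorization}, the syndrome projector $\Pi_b$ is a product of blockwise projectors lying in the $\mathcal Z_a^{\mathrm{peel}}$ factors. Conditioning a product state on a product projector gives a product state, so I set $\omega_a$ to be the normalized conditioned block-$a$ factor. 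Each $\Pi_b$-factor is a projector in the commutative algebra generated by the $\mathcal K_a$, and such conditioning stays inside $\mathfrak H_a$.

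\textbf{Step 3 (the error from conditioning).} The retained prefix coefficients satisfy $\varphi_a(\Lambda)=\Tr(\rho_a\Lambda)$ for $\Lambda\in\mathcal Z_a^{\mathrm{peel}}$, because $\mathbb E_a$ fixes them. Hence the generator scores $s(g_j)\ge1-\lambda$ are the same for $\rho^{\mathrm{rec}}$ as for $\rho$, and so is the bit assignment. Repeating the fidelity computation gives root fidelity at least $\sqrt{1-\varepsilon_{\mathrm{peel}}}$. Fuchs--van de Graaf then gives trace-norm error at most $2\sqrt{\varepsilon_{\mathrm{peel}}}$.

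\textbf{Main obstacle.} The hardest part is the factorization in Step 2. The syndrome of $g_j$ need not split along blocks generator by generator; only the span does, via the blockwise basis of Proposition~\ref{prop:supp-certified-blockwise-peeling-span}. I will therefore re-express $\Pi_b$ in the blockwise basis $\{Q_{a(i)}^{\ell_i}\}$, which is an invertible $\Fc$-linear change of syndrome labels. I will check that the dominant-syndrome weight bound is unaffected, since it is computed directly from $\Pi_b$ in the original labels.

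If the factorization proves awkward, the fallback is to condition $\rho^{\mathrm{rec}}_{\mathrm{loc}}$ on $b$ directly. The conditioned residual lies in the span of $\bigotimes_a\Pc(J_a)$, with coefficients equal to products of blockwise quantities normalized by $p_b$. Product structure would then follow from $\Pi_b$ being a product over blocks in the latent frame.
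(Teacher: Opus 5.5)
Your proposal is correct and follows essentially the same route as the paper. Both arguments run a triangle inequality through $\rho^{\mathrm{rec}}$ and show that the blockwise conditional expectations preserve the peeling-subgroup expectations, so the dominant syndrome $b$ and the bound $p_b\ge1-\varepsilon_{\mathrm{peel}}$ carry over. Both then factorize $\Pi_b$ over hidden blocks, localize every retained block-$a$ Pauli onto $J_a$ (leaving no weight on $J_{\mathrm{aux}}$), and finish with the $\sqrt{p_b}$ root-fidelity identity plus Fuchs--van de Graaf. The only omission is the trivial branch $\varepsilon_{\mathrm{peel}}\ge1$, where $p_b$ may vanish and the bound holds for arbitrary $\omega_a$; the paper handles this case separately before the main argument.
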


\begin{proof}
If $\varepsilon_{\mathrm{peel}}\ge1$, choose arbitrary states $\omega_a$;
then the universal trace-norm bound gives
$\|\rho_{\mathrm{loc}}-\Omega_{\mathrm{true}}\|_1\le2
\le2\sqrt{\varepsilon_{\mathrm{peel}}}+E_{\mathrm{miss}}$.
Hence assume $\varepsilon_{\mathrm{peel}}<1$, when $b$ is the dominant
syndrome.
Let $\widetilde\rho^{\mathrm{rec}}
:=
U_{\mathrm{stab}}^\dagger\rho^{\mathrm{rec}}U_{\mathrm{stab}}$. 
First, truncation preserves the complete peeling subgroup.  Let
$\mathcal S_a=\mathcal S_{\mathrm{peel}}\cap\mathcal A_a$ as in the
peeled hidden-block geometry.  The internal-product factorization
$\mathcal S_{\mathrm{peel}}=\mathcal S_1\cdots\mathcal S_K$ means that
every phase-free $g\in\mathcal S_{\mathrm{peel}}$ has a unique product
of block-local factors.  Choose the signed Hermitian representative
$\widetilde g$ induced by the commuting selected generators
$g_1,\dots,g_t$.  After applying $U_{\mathrm{stab}}^\dagger$, every
block-local factor is prefix-only, so its hidden-local preimage belongs
to $\mathcal Z_a^{\mathrm{peel}}\subseteq\mathcal H_a$.  The blockwise
conditional expectations therefore preserve every product in the
peeling subgroup and hence
\begin{equation}
\Tr(\rho^{\mathrm{rec}}\widetilde g)
=
\Tr(\rho\widetilde g),
\qquad
g\in\mathcal S_{\mathrm{peel}}.
\label{eq:supp-full-peeling-subgroup-moment-preservation}
\end{equation}
The Fourier formula therefore preserves the complete syndrome distribution.
For its dominant string $b$, let
\begin{equation}
\Pi_b
:=
\prod_{j=1}^t
\frac{I+(-1)^{b_j}g_j}{2}
=
2^{-t}
\sum_{c\in\Fc^t}
(-1)^{b\cdot c}g(c),
\qquad
g(c):=\prod_{j=1}^tg_j^{c_j}.
\label{eq:supp-phase-safe-syndrome-projector}
\end{equation}
Here the $g_j$ are their signed Hermitian commuting representatives, so
the second expression is the Fourier expansion over the signed
representatives $g(c)$.  Set
\begin{equation}
p_b
:=
\Tr(\Pi_b\rho^{\mathrm{rec}})
=
\Tr(\Pi_b\rho).
\end{equation}
Certified peeling gives
\begin{equation}
p_b
\ge
1-\varepsilon_{\mathrm{peel}}
>
0.
\label{eq:supp-positive-compressed-syndrome-weight}
\end{equation}

In peeled coordinates,
$U_{\mathrm{stab}}^\dagger\Pi_bU_{\mathrm{stab}}
=|b\rangle\!\langle b|\otimes I$.  Because the peeling subgroup is the
internal product of its block-local subgroups, the signed representative
and the character $g(c)\mapsto(-1)^{b\cdot c}$ restrict blockwise, and
\begin{equation}
U_{\mathrm c}^\dagger\Pi_bU_{\mathrm c}
=
\bigotimes_{a=1}^K\Pi_{a,b},
\qquad
\Pi_{a,b}
:=
\frac{1}{|\mathcal S_a|}
\sum_{g_a\in\mathcal S_a}
\chi_b(\widetilde g_a)
U_{\mathrm c}^\dagger\widetilde g_aU_{\mathrm c},
\label{eq:supp-blockwise-syndrome-projector-factorization}
\end{equation}
where $\widetilde g_a$ is the signed representative induced by the
selected generators and $\chi_b(\widetilde g_a)$ is the restricted
character.
Thus the unnormalized compressed truncated model factorizes exactly:
\begin{equation}
U_{\mathrm c}^\dagger
\Pi_b\rho^{\mathrm{rec}}\Pi_b
U_{\mathrm c}
=
\bigotimes_{a=1}^K
\Pi_{a,b}\rho_a^{\mathrm{rec}}\Pi_{a,b}.
\label{eq:supp-unnormalized-compressed-product-factorization}
\end{equation}
Its positive trace $p_b$ is the product of the blockwise traces, so
normalization preserves the factorization.  Define
$\rho_{\mathrm{res}}^{\mathrm{rec}}$ by
\begin{equation}
\frac{
U_{\mathrm{stab}}^\dagger
\Pi_b\rho^{\mathrm{rec}}\Pi_b
U_{\mathrm{stab}}
}{p_b}
=
|b\rangle\!\langle b|
\otimes
\rho_{\mathrm{res}}^{\mathrm{rec}}.
\label{eq:supp-normalized-compressed-residual-state}
\end{equation}

Equation~\eqref{eq:supp-full-peeling-subgroup-moment-preservation} also
preserves the signed expectations of the selected generators.  Applying
the certified peeling comparison to the truncated model with the same
dominant syndrome gives
\begin{equation}
\left\|
\widetilde\rho^{\mathrm{rec}}
-
|b\rangle\!\langle b|
\otimes\rho_{\mathrm{res}}^{\mathrm{rec}}
\right\|_1
\le
2\sqrt{\varepsilon_{\mathrm{peel}}}.
\label{eq:supp-truncated-model-peeling-bound}
\end{equation}

Syndrome compression replaces each retained prefix $Z^c$ by
$(-1)^{b\cdot c}$.  Every remaining block-$a$ Pauli lies in
$\langle\mathcal P(C_{a,h}):h\in[r_a]\rangle$, which
Proposition~\ref{prop:supp-simultaneous-empirical-localization} maps into
$J_a$; unrepresented blocks reduce to scalars.  Thus
$U_{\mathrm{rec}}^\dagger\rho_{\mathrm{res}}^{\mathrm{rec}}U_{\mathrm{rec}}$
factors over the $J_a$, defining $\omega_a$, while the absence of retained
nonidentity coefficients on $J_{\mathrm{aux}}$ gives its maximally mixed
factor.  This proves Eq.~\eqref{eq:supp-localized-true-block-model}.

Finally, Eq.~\eqref{eq:supp-hidden-product-truncation-bound}, unitary
invariance, Eq.~\eqref{eq:supp-truncated-model-peeling-bound}, and the
triangle inequality give
Eq.~\eqref{eq:supp-block-normalized-close-product-form}, with each error
included once.
\end{proof}

In the nontrivial exact-grouping branch, every represented hidden block
supplies one empirical group $C=\mathcal B_a^\star$.  Then $r_a=1$ and
$J_a=J_C$, so Eq.~\eqref{eq:supp-localized-true-block-model} is already a
product over the computable empirical registers.  Its split error is zero,
regardless of whether $E_{\mathrm{miss}}$ vanishes.

\subsection{Approximation under certified over-refinement}
\label{sec:supp-certified-overrefinement-product}

We first treat guessed-scale over-refinement and then consolidate every
branch.  By the trivial-branch convention of
Sec.~\ref{sec:supp-localization-interface}, assume
$\varepsilon_{\mathrm{peel}}<1$ and suppose that a represented hidden block
is split as in
Eq.~\eqref{eq:supp-localization-hidden-refinement}.  Write
\begin{equation}
J_{a,h}:=J_{C_{a,h}},
\qquad
k_{a,h}:=|J_{a,h}|,
\qquad
k_a:=|J_a|=\sum_{h=1}^{r_a}k_{a,h},
\end{equation}
and define the marginal
\begin{equation}
\omega_{a,h}
:=
\Tr_{J_a\setminus J_{a,h}}(\omega_a).
\end{equation}
Define the localized empirical subgroups
\begin{equation}
G_{a,h}
:=
U_{\mathrm{rec}}^\dagger
\mathcal P(C_{a,h})
U_{\mathrm{rec}}
\subseteq
\Pc(J_{a,h}).
\label{eq:supp-localized-empirical-subgroups}
\end{equation}
The disjoint $J_{a,h}$ make tuples drawn from distinct $G_{a,h}$ commuting,
exactly as required by the terminal tests.

Keep the terminal guessed-scale ceiling $\xi_s$ from
Eq.~\eqref{eq:supp-guessed-scale-small-final-cumulants} distinct from the
effective localized tolerance
\begin{equation}
\xi_{\mathrm{eff}}
:=
\xi_s+\beta_{\mathrm{peel}}.
\label{eq:supp-effective-localized-cumulant-tolerance}
\end{equation}
For the calibrated guessed-scale specialization,
$\xi_s=3\eta_s/4$ and
$\beta_{\mathrm{peel}}<\eta_s/4$, so
$\xi_{\mathrm{eff}}<\eta_s$.

For $q\ge1$ and $x\ge0$, define
\begin{equation}
F_q(x):=(1+x)^{2^q-q-1}-1,
\label{eq:supp-split-cumulant-factor}
\end{equation}
and let
\begin{equation}
N_{\mathrm{cross},a}
:=4^{k_a}-1-\sum_{h=1}^{r_a}(4^{k_{a,h}}-1)
\label{eq:supp-split-cross-pauli-count}
\end{equation}
be the exact number of phase-free Pauli strings acting nontrivially on at
least two refined pieces.

\begin{proposition}[Localized product approximation under over-refinement]
\label{prop:supp-localized-overrefined-product}
Assume the guessed-scale over-refinement branch and
$\varepsilon_{\mathrm{peel}}<1$.
For each $a\in\mathcal A^\star$, every $T\subseteq[r_a]$ with $|T|\ge2$,
and
\begin{equation}
P_h\in G_{a,h}\setminus\{I\},
\qquad h\in T,
\end{equation}
one has
\begin{equation}
\left|
\kappa_{\omega_a}(P_h:h\in T)
\right|
\le
\xi_{\mathrm{eff}}.
\label{eq:supp-localized-split-cumulants}
\end{equation}
For every $a\in\mathcal A^\star$, one also has
\begin{equation}
\left\|\omega_a-\bigotimes_{h=1}^{r_a}\omega_{a,h}\right\|_1
\le
\sqrt{N_{\mathrm{cross},a}}\,F_{r_a}(\xi_{\mathrm{eff}}).
\label{eq:supp-split-product-error-bound}
\end{equation}
\end{proposition}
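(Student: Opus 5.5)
The proof splits into the two claims of the statement. First I transfer the terminal cumulant certificate to the localized comparison states $\omega_a$. Then I convert small cumulants into small trace distance by expanding $\omega_a$ in the Pauli basis on $J_a$.

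\emph{Step 1 (cumulant transfer).} Fix $a$, $T\subseteq[r_a]$ with $|T|=q\ge2$, and $P_h\in G_{a,h}\setminus\{I\}$. Set $O_h:=U_{\mathrm{rec}}P_hU_{\mathrm{rec}}^\dagger\in\mathcal P(C_{a,h})\setminus\{I\}$. Proposition~\ref{prop:supp-guessed-scale-overrefined-grouping} gives $|\kappa_{\mathrm{res}}(O_h:h\in T)|\le\xi_s$.

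Next I compare $\kappa_{\mathrm{res}}$, which is computed on the actual $\widetilde\rho$ with identity prefix, with $\kappa_{\omega_a}(P_h)$. To do this, I use the construction in Proposition~\ref{prop:supp-localized-true-block-product}. Every subset product $\prod_{h\in U}P_h$ is a Pauli in $\mathcal P(C_{a,\cdot})$ localized on $J_a$. Its expectation in $\omega_a$ equals its expectation in $\rho_{\mathrm{res}}^{\mathrm{rec}}$. The truncation $\mathbb E_a$ retains this coefficient, because the full lift lies in $\mathcal H_a$. Hence the expectation equals $(-1)^{b\cdot c}\Tr[\widetilde\rho\,(Z^c\otimes\cdot)]/p_b$-compressed data of the true state.

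I would argue more directly as follows. The true-lift moments $\Tr(\widetilde\rho\,O_U^\uparrow)$ coincide with the corresponding moments of $\widetilde\rho^{\mathrm{rec}}$, since $\mathbb E_a$ fixes them. The signed prefix stability of Lemma~\ref{lem:supp-signed-prefix-stability} then controls the passage from the full lifts to both $\omega_a$ (through $\widetilde\rho^{\mathrm{rec}}$, with the bound of Eq.~\eqref{eq:supp-truncated-model-peeling-bound}) and $\kappa_{\mathrm{res}}$, each at moment tolerance $\le2\sqrt{\varepsilon_{\mathrm{peel}}}$. With a common sign twist $\varsigma_{\mathbf c}$ this gives $|\kappa_{\omega_a}-\varsigma\,\kappa_{\mathrm{res}}|\le 4\Gamma_q\sqrt{\varepsilon_{\mathrm{peel}}}\le\beta_{\mathrm{peel}}$. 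Combining with $\xi_s$ yields Eq.~\eqref{eq:supp-localized-split-cumulants}. The delicate point is bookkeeping the moment error so that it totals $4\sqrt{\varepsilon_{\mathrm{peel}}}$ rather than $8$. I would route both comparisons through the single product state $|b\rangle\langle b|\otimes\rho^{\mathrm{rec}}_{\mathrm{res}}$.

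\emph{Step 2 (cumulants to product error).} Write $\omega_a=2^{-k_a}\sum_P\varphi(P)P$ over $P=\bigotimes_hP_h$, with $P_h\in\Pc(J_{a,h})$. By construction, only $P_h\in G_{a,h}$ carry nonzero coefficients; coefficients outside vanish, since $\omega_a$ lives in the localized recovered algebra. The product state $\bigotimes_h\omega_{a,h}$ has coefficients $\prod_h\varphi(P_h)$. These agree with $\varphi(P)$ whenever at most one $P_h\ne I$. The difference is therefore supported on the $N_{\mathrm{cross},a}$ cross strings.

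For such a string with nontrivial positions $T$, the moment–cumulant formula gives
\[
\varphi(P)-\prod_{h\in T}\varphi(P_h)=\sum_{\mathfrak p\ne\text{singletons}}\prod_{E\in\mathfrak p}\kappa(E).
\]
Here each cumulant with $|E|\ge2$ is bounded by $\xi_{\mathrm{eff}}$ (Step 1), and singleton cumulants are bounded by one. I then bound this sum uniformly by $F_{r_a}(\xi_{\mathrm{eff}})$. Because all entries of a labeled tuple come from distinct pieces, the relevant blocks are the subsets of $T$ of size $\ge2$, which number at most $2^{r_a}-r_a-1$. Expanding $\prod_{E}(1+\xi)-1$ over collections of disjoint such subsets dominates the partition sum, giving $|\Delta(P)|\le F_{r_a}(\xi_{\mathrm{eff}})$.

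Finally, Hilbert–Schmidt orthogonality gives
\[
\|\omega_a-\otimes_h\omega_{a,h}\|_2^2=2^{-k_a}\sum_{\mathrm{cross}}|\Delta(P)|^2\le 2^{-k_a}N_{\mathrm{cross},a}F^2.
\]
Then $\|\cdot\|_1\le2^{k_a/2}\|\cdot\|_2$ gives Eq.~\eqref{eq:supp-split-product-error-bound}.

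I expect the main obstacle to be the combinatorial domination of the partition sum by $F_{r_a}$: showing that every non-singleton partition of $T$ appears as a term in the expansion of $(1+\xi)^{2^{q}-q-1}$, with coefficients $\le1$.
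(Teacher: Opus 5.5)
Your overall plan and your Step~2 match the paper's proof: support of $\omega_a$ in $\prod_h G_{a,h}$, the moment--cumulant expansion, domination by $F_q\le F_{r_a}$, and Hilbert--Schmidt orthogonality. The combinatorial step you flag as the main obstacle is routine. A non-discrete partition of $T$ is determined by its nonempty family of disjoint nonsingleton blocks, and its singleton factors have modulus at most one.

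\textbf{The gap is in Step~1, specifically the constant.} As you justify it, the passage from the full-lift moments $\Tr(\widetilde\rho\,O_U^\uparrow)$ to $\kappa_{\mathrm{res}}$ goes through Lemma~\ref{lem:supp-signed-prefix-stability}. That lemma costs $4\sqrt{\varepsilon_{\mathrm{peel}}}$ per moment, not $2\sqrt{\varepsilon_{\mathrm{peel}}}$. Adding the (correct) $2\sqrt{\varepsilon_{\mathrm{peel}}}$ passage from the full lifts to $\omega_a$ gives only $|\kappa_{\omega_a}-\kappa_{\mathrm{res}}|\le 6\Gamma_q\sqrt{\varepsilon_{\mathrm{peel}}}$. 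This exceeds $\beta_{\mathrm{peel}}=4\Gamma_{\ell_{\mathrm{grp}}}\sqrt{\varepsilon_{\mathrm{peel}}}$ when $|T|=\ell_{\mathrm{grp}}$, which the statement allows (a size-$d$ block split into $d$ pieces). Your proposed repair, routing both comparisons through $|b\rangle\!\langle b|\otimes\rho^{\mathrm{rec}}_{\mathrm{res}}$, cannot be carried out on the $\kappa_{\mathrm{res}}$ side with what you have stated. The reason is that $\kappa_{\mathrm{res}}$ is built from identity-prefix moments of the actual $\widetilde\rho$, and you have only shown that truncation fixes the full-lift moments.

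\textbf{The missing observation, which is the paper's route, is that truncation also fixes the identity-prefix moments.} Write $O_S:=\prod_{h\in S}O_h$ and let $O_S^\uparrow=Z^{c_S}\otimes O_S$ be its full lift. Then $I^{\otimes t}\otimes O_S=(Z^{c_S}\otimes I)\,O_S^\uparrow$. The factor $Z^{c_S}\otimes I$ lies in $\mathcal K=\mathcal K_1\cdots\mathcal K_K$, so it splits into block factors whose hidden preimages lie in $\mathcal Z_b^{\mathrm{peel}}\subseteq\mathcal H_b$. Hence $\mu_{\mathrm{res}}(O_S)=\Tr[\widetilde\rho^{\mathrm{rec}}(I^{\otimes t}\otimes O_S)]$. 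The $J_a$-factorization gives $\Tr(\omega_a\prod_{h\in S}P_h)=\Tr[(|b\rangle\!\langle b|\otimes\rho^{\mathrm{rec}}_{\mathrm{res}})(I^{\otimes t}\otimes O_S)]$. A single application of Eq.~\eqref{eq:supp-truncated-model-peeling-bound} therefore gives moment tolerance $2\sqrt{\varepsilon_{\mathrm{peel}}}$, and so $|\kappa_{\omega_a}-\kappa_{\mathrm{res}}|\le2\Gamma_q\sqrt{\varepsilon_{\mathrm{peel}}}\le\beta_{\mathrm{peel}}$, with no lifts or sign twists at all.

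\textbf{Alternatively, you can keep your full-lift route} and replace the lemma by a sharper estimate. The operator $Z^{c}-(-1)^{b\cdot c}I$ is $-2(-1)^{b\cdot c}$ times a diagonal syndrome projector dominated by $I-|b\rangle\!\langle b|$. This gives $|\Tr[\widetilde\rho(Z^{c}\otimes O)]-(-1)^{b\cdot c}\Tr[\widetilde\rho(I^{\otimes t}\otimes O)]|\le2(1-p_b)\le2\varepsilon_{\mathrm{peel}}$. Since $\varepsilon_{\mathrm{peel}}<1$, this delivers your claimed total of $4\sqrt{\varepsilon_{\mathrm{peel}}}$. With either fix, the rest of your argument goes through.
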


\begin{proof}
\emph{Cumulant transfer.}
Set $O_h:=U_{\mathrm{rec}}P_hU_{\mathrm{rec}}^\dagger$.  For every
nonempty $S\subseteq T$, closure of $\mathcal H_a$ supplies a full lift
with peeled image
\begin{equation}
U_{\mathrm{stab}}^\dagger
U_{\mathrm c}
\left(\Lambda_{a,S}^\uparrow\otimes I_{B_a^c}\right)
U_{\mathrm c}^\dagger
U_{\mathrm{stab}}
=
Z^{c_S}\otimes\prod_{h\in S}O_h
\end{equation}
phase-free.  Multiplication by the retained full prefix-only element removes
$Z^{c_S}$, so truncation preserves the signed residual subset moment:
\begin{equation}
\Tr\!\left[
\widetilde\rho^{\mathrm{rec}}
\left(I^{\otimes t}\otimes\prod_{h\in S}O_h\right)
\right]
=
\Tr\!\left[
\widetilde\rho
\left(I^{\otimes t}\otimes\prod_{h\in S}O_h\right)
\right].
\label{eq:supp-tested-moment-truncation-preservation}
\end{equation}
The prefix-subgroup argument makes this independent of the lift choice.
Proposition~\ref{prop:supp-localized-true-block-product} identifies these
localized subset moments with those of $\omega_a$, while
Eq.~\eqref{eq:supp-truncated-model-peeling-bound} changes each by at most
$2\sqrt{\varepsilon_{\mathrm{peel}}}$.  Therefore
Proposition~\ref{prop:supp-bounded-moment-cumulant-perturbation} gives
\begin{equation}
\left|
\kappa_{\omega_a}(P_h:h\in T)
-
\kappa_{\mathrm{res}}(O_h:h\in T)
\right|
\le
2\Gamma_{|T|}\sqrt{\varepsilon_{\mathrm{peel}}}
\le
\beta_{\mathrm{peel}}.
\end{equation}
Combining this with the terminal residual ceiling $\xi_s$ proves
Eq.~\eqref{eq:supp-localized-split-cumulants}.

\emph{Product approximation.}
The retained-coefficient argument in
Proposition~\ref{prop:supp-localized-true-block-product} gives
$\operatorname{supp}_{\Pc}(\omega_a)
\subseteq\prod_{h=1}^{r_a}G_{a,h}$, where the product is unambiguous on
the disjoint registers and
$\operatorname{supp}_{\Pc}(\omega_a)
:=\{P\in\Pc(J_a):\Tr(\omega_aP)\ne0\}$.
Fix a phase-free Pauli string that acts nontrivially on the pieces indexed
by $T$, with $q:=|T|\ge2$, and write its factors as $P_h$, $h\in T$.
If some $P_h\notin G_{a,h}$, both its joint coefficient and the
corresponding product-of-marginals coefficient vanish.  Otherwise, set
\begin{equation}
M(S)
:=
\Tr\!\left(\omega_a\prod_{h\in S}P_h\right),
\qquad S\subseteq T.
\end{equation}
The moment--cumulant relation gives
\begin{equation}
M(T)
=
\sum_{\mathfrak p\in\Pi(T)}
\prod_{E\in\mathfrak p}
\kappa_{\omega_a}(P_h:h\in E).
\end{equation}
The discrete partition is the product of the one-piece marginal
coefficients.  A remaining partition with $j$ nonsingleton blocks has
magnitude at most $\xi_{\mathrm{eff}}^j$.  Since there are
$2^q-q-1$ nonsingleton subsets of $T$, discarding the disjointness
constraint gives
\begin{align}
\left|
M(T)-\prod_{h\in T}M(\{h\})
\right|
&\le
\sum_{j\ge1}
\binom{2^q-q-1}{j}\xi_{\mathrm{eff}}^j
\nonumber\\
&=
F_q(\xi_{\mathrm{eff}})
\le
F_{r_a}(\xi_{\mathrm{eff}}).
\label{eq:supp-split-moment-factorization-error}
\end{align}

Let
$\Delta_a:=\omega_a-\bigotimes_h\omega_{a,h}$.
All Pauli coefficients of $\Delta_a$ supported on at most one piece
vanish, and there are exactly $N_{\mathrm{cross},a}$ remaining strings.
Pauli orthogonality and
$\|\Delta_a\|_1\le2^{k_a/2}\|\Delta_a\|_2$ therefore give
\begin{equation}
\|\Delta_a\|_1
\le
\sqrt{N_{\mathrm{cross},a}}\,F_{r_a}(\xi_{\mathrm{eff}}),
\end{equation}
proving the claim with the paper's trace-norm convention.
\end{proof}

For each block with $r_a\ge2$, choose one empirical group $C=C_{a,h}$ as
an anchor.  Dropping the non-anchor contributions and using $k_a\le d$
gives
\begin{align}
N_{\mathrm{cross},a}
&=
4^{k_a}-1
-
\sum_{h=1}^{r_a}(4^{k_{a,h}}-1)
\nonumber\\
&\le
4^{k_a}-4^{k_C}
\nonumber\\
&\le
4^d-4^{k_C}.
\end{align}
The anchors are distinct, $r_a\le d$, and $F_q(x)$ is increasing in both
arguments.  Blockwise application followed by tensor-product telescoping
therefore gives the direct proof-only-to-computable chain
\begin{align}
\sum_{a\in\mathcal A^\star}
\left\|\omega_a-\bigotimes_{h=1}^{r_a}\omega_{a,h}\right\|_1
&\le
F_d(\xi_{\mathrm{eff}})
\sum_{\substack{a\in\mathcal A^\star\\r_a\ge2}}
\sqrt{N_{\mathrm{cross},a}}
\nonumber\\
&\le
E_{\mathrm{split}}^{\mathrm{cert}}
:=
F_d(\xi_{\mathrm{eff}})
\sum_{C\in\widehat{\mathfrak C}}
\sqrt{\max\{0,4^d-4^{k_C}\}}.
\label{eq:supp-computable-split-error-certificate}
\end{align}
The left-hand aggregation and anchors use the hidden assignment only in the
proof; the final expression depends solely on learner-visible register sizes,
$d$, and $\xi_{\mathrm{eff}}$.  Since $\xi_{\mathrm{eff}}<\eta_s$, replacing
it by $\eta_s$ gives the coarser bound used in the calibrated ledger.

Combine the universal branch, the exact branch with zero split error, and
the guessed-scale certificate into the single branchwise quantity
\begin{equation}
E_{\mathrm{struct}}^{\mathrm{cert}}
:=
\begin{cases}
2,
&\text{if }\varepsilon_{\mathrm{peel}}\ge1,\\[1mm]
2\sqrt{\varepsilon_{\mathrm{peel}}}
+n2^d\sqrt{\theta_{\mathrm{rec}}},
&\text{if }\varepsilon_{\mathrm{peel}}<1
\text{ under the exact-grouping hypotheses},\\[2mm]
2\sqrt{\varepsilon_{\mathrm{peel}}}
+n2^d\sqrt{\theta_{\mathrm{rec}}}
+E_{\mathrm{split}}^{\mathrm{cert}},
&\text{if }\varepsilon_{\mathrm{peel}}<1
\text{ in the guessed-scale branch}.
\end{cases}
\label{eq:supp-localization-structural-certificate}
\end{equation}

Let $b$ be the certified-peeling string and define the proof-only empirical
factors once, branchwise, by
\begin{equation}
\omega_C
:=
\begin{cases}
\text{an arbitrary state in }
\D\!\left((\mathbb C^2)^{\otimes |J_C|}\right),
&\varepsilon_{\mathrm{peel}}\ge1,\\[1mm]
\omega_a,
&\varepsilon_{\mathrm{peel}}<1\text{ under the exact-grouping hypotheses, }
C=\mathcal B_a^\star,\\[1mm]
\Tr_{J_a\setminus J_C}(\omega_a),
&\varepsilon_{\mathrm{peel}}<1\text{ under guessed-scale over-refinement, }
C=C_{a,h}.
\end{cases}
\label{eq:supp-branchwise-empirical-factors}
\end{equation}
These are proof-only witnesses because the assignment $C\mapsto a$ is
hidden, whereas every supporting register $J_C$ is learner-computable.

\begin{theorem}[Empirical block-product approximation]
\label{thm:supp-empirical-block-product-approximation}
Let the empirical factors be those in
Eq.~\eqref{eq:supp-branchwise-empirical-factors}.  Then
\begin{equation}
\Omega_{\mathrm{emp}}
:=
\left(|b\rangle\!\langle b|\right)_{[t]}
\otimes
\left(
\bigotimes_{C\in\widehat{\mathfrak C}}
\left(\omega_C\right)_{J_C}
\right)
\otimes
\frac{I_{J_{\mathrm{aux}}}}{2^{|J_{\mathrm{aux}}|}}
\label{eq:supp-empirical-block-product-witness}
\end{equation}
is defined in every branch and satisfies the single certificate
\begin{equation}
\|\rho_{\mathrm{loc}}-\Omega_{\mathrm{emp}}\|_1
\le E_{\mathrm{struct}}^{\mathrm{cert}}.
\end{equation}
\end{theorem}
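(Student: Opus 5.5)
The plan is to treat the three branches of Eq.~\eqref{eq:supp-localization-structural-certificate} separately, matching each with the corresponding definition of $\omega_C$ in Eq.~\eqref{eq:supp-branchwise-empirical-factors}. In every branch the two nontrivial ingredients are already available. The first is the localized true-block comparison of Proposition~\ref{prop:supp-localized-true-block-product}, which bounds $\|\rho_{\mathrm{loc}}-\Omega_{\mathrm{true}}\|_1$ by $2\sqrt{\varepsilon_{\mathrm{peel}}}+E_{\mathrm{miss}}$. The second is the computable missed-mass certificate $E_{\mathrm{miss}}\le n2^d\sqrt{\theta_{\mathrm{rec}}}$ of Proposition~\ref{prop:supp-missed-pauli-threshold-certificate}. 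What remains is to pass from $\Omega_{\mathrm{true}}$, which is a product over the proof-only aggregates $J_a$, to $\Omega_{\mathrm{emp}}$, which is a product over the computable registers $J_C$. Throughout, I will use that the registers $[t]$, $J_C$, and $J_{\mathrm{aux}}$ partition $[n]$ (Proposition~\ref{prop:supp-simultaneous-empirical-localization}), so $\Omega_{\mathrm{emp}}$ is a well-defined state in every branch.

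\emph{Universal branch.} If $\varepsilon_{\mathrm{peel}}\ge1$, the $\omega_C$ are arbitrary states and $\Omega_{\mathrm{emp}}$ is a density operator. The bound $\|\rho_{\mathrm{loc}}-\Omega_{\mathrm{emp}}\|_1\le2$ is the trivial trace-norm bound for two states.

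\emph{Exact-grouping branch.} By Proposition~\ref{prop:supp-hierarchical-cumulant-correctness}, $\widehat{\mathfrak C}=\{\mathcal B_a^\star\}$. Hence $r_a=1$, $J_a=J_{\mathcal B_a^\star}$, and $\omega_C=\omega_a$, so $\Omega_{\mathrm{emp}}=\Omega_{\mathrm{true}}$ exactly. The bound then follows by combining the two ingredients above.

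\emph{Guessed-scale branch.} Here $\omega_C=\omega_{a,h}$ is the marginal of $\omega_a$ on $J_{a,h}$. Then $\Omega_{\mathrm{emp}}$ is obtained from $\Omega_{\mathrm{true}}$ by replacing each factor $\omega_a$ with $\bigotimes_h\omega_{a,h}$; blocks with $r_a=1$ are unchanged. I will telescope this replacement across the tensor product. Since every factor is a state, $\|\bigotimes_a X_a-\bigotimes_a Y_a\|_1\le\sum_a\|X_a-Y_a\|_1$, and the syndrome and auxiliary factors coincide in the two states. This gives
\[
\|\Omega_{\mathrm{true}}-\Omega_{\mathrm{emp}}\|_1\le\sum_{a:\,r_a\ge2}\Bigl\|\omega_a-\bigotimes_{h=1}^{r_a}\omega_{a,h}\Bigr\|_1 .
\]
By Proposition~\ref{prop:supp-localized-overrefined-product} and the anchor chain in Eq.~\eqref{eq:supp-computable-split-error-certificate}, the right-hand side is at most $E_{\mathrm{split}}^{\mathrm{cert}}$. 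A final triangle inequality, together with Proposition~\ref{prop:supp-localized-true-block-product} and the missed-mass certificate, gives the third line of the certificate.

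The only delicate point is bookkeeping. I must check that the proof-only aggregate ordering of registers in $\Omega_{\mathrm{true}}$ agrees with the empirical register layout, which it does because $J_a=\bigsqcup_h J_{C_{a,h}}$. I must also confirm that unrepresented blocks contribute only scalars, so that they add no factor beyond the auxiliary maximally mixed state. Both facts are already recorded in the proof of Proposition~\ref{prop:supp-localized-true-block-product}, so I expect no genuine obstacle beyond assembling these earlier bounds.
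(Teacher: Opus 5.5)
Your proposal is correct and follows the paper's proof essentially step for step. The universal branch uses the trace-norm diameter $2$. The exact branch identifies $\Omega_{\mathrm{emp}}$ with $\Omega_{\mathrm{true}}$ and combines the localized true-block-product comparison with the missed-mass certificate $E_{\mathrm{miss}}\le n2^d\sqrt{\theta_{\mathrm{rec}}}$. The guessed-scale branch adds the cost $E_{\mathrm{split}}^{\mathrm{cert}}$ of replacing each $\omega_a$ by its marginal product, via tensor-product telescoping. The only difference is that you write out the telescoping inequality explicitly, whereas the paper cites the already-assembled split-error chain, which contains that step.
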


\begin{proof}
If $\varepsilon_{\mathrm{peel}}\ge1$, the trace-norm diameter gives the
universal bound $2$.  Otherwise,
Proposition~\ref{prop:supp-localized-true-block-product} contributes
$2\sqrt{\varepsilon_{\mathrm{peel}}}+E_{\mathrm{miss}}$.  Exact grouping
already places one factor on each empirical register.  In the guessed-scale
branch, replacing each $\omega_a$ by its marginal product adds at most
$E_{\mathrm{split}}^{\mathrm{cert}}$ by
Eq.~\eqref{eq:supp-computable-split-error-certificate}.  Finally use
Eq.~\eqref{eq:supp-global-missed-mass-computable-certificate}; the three
cases are exactly the branchwise definition of
$E_{\mathrm{struct}}^{\mathrm{cert}}$.
\end{proof}

Thus a larger guessed scale reduces estimation cost but permits a larger
split certificate; a smaller scale does the reverse and requires a lower
peeling floor.

\section{Recovered Block Tomography}
\label{sec:supp-block-tomography}
\label{sec:supp-deferred-syndrome-sign-recovery}

On $\mathcal E_{\mathrm{str}}$, the structural handoff is
$(U_{\mathrm{stab}},U_{\mathrm{rec}},\widehat{\mathfrak C},
\{(J_C,k_C)\}_C,J_{\mathrm{aux}})$ with $k_C\le d$, together with the
computable certificate
\eqref{eq:supp-localization-structural-certificate}; the proof-only witness
$\Omega_{\mathrm{emp}}$ is supplied by
Theorem~\ref{thm:supp-empirical-block-product-approximation}.  Only syndrome
signs and empirical-register marginals remain to be measured.

\subsection{Syndrome recovery and empirical marginals}

Set $\bar U_{\mathrm{rec}}=I^{\otimes t}\otimes U_{\mathrm{rec}}$,
$\widehat K=|\widehat{\mathfrak C}|$, and choose $\varepsilon_{\mathrm{tom}}>0$
and $\zeta_{\mathrm{sgn}},\zeta_{\mathrm{tom}}\in(0,1)$.  Sign measurements
depend only on the peeling output, so their
analysis may be deferred past construction of the remaining transcript
although the end-to-end algorithm performs them earlier.

Recall the localized state $\rho_{\mathrm{loc}}$ from
Eq.~\eqref{eq:supp-computable-localized-state}.  For every learner-known
empirical register, define its marginal by
\begin{equation}
\nu_C
:=
\Tr_{\overline{J_C}}(\rho_{\mathrm{loc}}),
\qquad
C\in\widehat{\mathfrak C},
\label{eq:supp-actual-empirical-marginals}
\end{equation}
where the complement is taken in the full peeled-and-residual system.
The operational residual product target is
\begin{equation}
\nu_{\mathrm{emp}}
:=
\left(
\bigotimes_{C\in\widehat{\mathfrak C}}\nu_C
\right)
\otimes
\frac{I_{J_{\mathrm{aux}}}}{2^{|J_{\mathrm{aux}}|}}.
\label{eq:supp-operational-empirical-product-target}
\end{equation}
The maximally mixed auxiliary factor is prescribed by the structural
model; it need not equal the true $J_{\mathrm{aux}}$-marginal of
$\rho_{\mathrm{loc}}$.

Bell sampling retains only
$s_\rho(g_j)=|\Tr(\rho g_j)|^2$, so it cannot determine the syndrome signs.
Fresh signed measurements recover the sign-defined bits using the zero-safe
convention
\begin{equation}
(-1)^{b_j}\Tr(\rho g_j)=|\Tr(\rho g_j)|.
\label{eq:supp-peeled-syndrome-sign-convention}
\end{equation}
On $\mathcal E_{\mathrm{BS}}$,
Eq.~\eqref{eq:supp-certified-peeling-generator-score} and
$h\in\mathsf I_h=[h_{\min},h_{\max}]$ give the nonzero bound
\begin{equation}
|\Tr(\rho g_j)|^2\ge h_{\min},
\qquad
|\Tr(\rho g_j)|
\ge
\sqrt{h_{\min}}.
\label{eq:supp-selected-generator-certified-magnitude}
\end{equation}

\begin{proposition}[Syndrome-sign recovery]
\label{prop:supp-peeled-syndrome-sign-recovery}
Fix the learner-visible peeling output, with selected generators
$g_1,\dots,g_t$ and peeling interval
$\mathsf I_h=[h_{\min},h_{\max}]\subset(1/2,1)$.  For every $j$, measure
$g_j$ on $M_{\mathrm{sgn}}$ fresh copies of $\rho$, form the empirical
mean $\widehat m_j$, and set
\begin{equation}
\widehat b_j
:=
\begin{cases}
0,&\widehat m_j\ge0,\\
1,&\widehat m_j<0.
\end{cases}
\end{equation}
If $t\ge1$ and
\begin{equation}
M_{\mathrm{sgn}}
\ge
\frac{2}{h_{\min}}
\log\frac{2n}{\zeta_{\mathrm{sgn}}},
\label{eq:supp-syndrome-sign-sample-choice}
\end{equation}
then, on $\mathcal E_{\mathrm{BS}}$ and conditional on the fixed peeling
output, $\widehat b=b$ with probability at least
$1-\zeta_{\mathrm{sgn}}$.  Direct Pauli measurements use $N_{\mathrm{sgn}}=tM_{\mathrm{sgn}}$ copies. 
If $t=0$, the syndrome string is empty, recovery is vacuously exact, and
$N_{\mathrm{sgn}}=0$.
\end{proposition}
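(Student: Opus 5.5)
The plan is to apply Hoeffding's inequality to each generator separately, combine it with the certified magnitude bound \eqref{eq:supp-selected-generator-certified-magnitude}, and finish with a union bound over the $t\le n$ generators. Throughout, condition on $\mathcal E_{\mathrm{BS}}$ and on the fixed peeling output. The generators $g_1,\dots,g_t$ are then deterministic Hermitian Paulis, and the $M_{\mathrm{sgn}}$ copies used for each $g_j$ are fresh and independent of everything fixed so far. The $t=0$ case is immediate: the syndrome string is empty and no copies are used. So assume $t\ge1$.

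Fix $j$ and write $\mu_j:=\Tr(\rho g_j)$. Measuring $g_j$ on $M_{\mathrm{sgn}}$ fresh copies gives i.i.d.\ outcomes in $\{\pm1\}$ with mean $\mu_j$, so $\widehat m_j$ is an unbiased estimate of $\mu_j$. By \eqref{eq:supp-selected-generator-certified-magnitude}, $|\mu_j|\ge\sqrt{h_{\min}}>0$, and in particular $\mu_j\neq0$.

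\begin{itemize}
\item \emph{Sign agreement.} If $|\widehat m_j-\mu_j|<|\mu_j|$, then $\widehat m_j$ lies strictly on the same side of zero as $\mu_j$. Under the zero-safe convention \eqref{eq:supp-peeled-syndrome-sign-convention}, $b_j=0$ exactly when $\mu_j>0$ (because $\mu_j\ne0$), and the estimator sets $\widehat b_j=0$ exactly when $\widehat m_j\ge0$. Hence $\widehat b_j=b_j$ on this event. A nonnegative $\widehat m_j$ with negative $\mu_j$ is impossible here, since that would require deviation at least $|\mu_j|$.
\item \emph{Single-generator failure.} Hoeffding's inequality for $[-1,1]$-valued variables gives
\[
\Pr\bigl[|\widehat m_j-\mu_j|\ge\sqrt{h_{\min}}\bigr]\le 2\exp\!\bigl(-M_{\mathrm{sgn}}h_{\min}/2\bigr).
\]
Under \eqref{eq:supp-syndrome-sign-sample-choice}, the right-hand side is at most $\zeta_{\mathrm{sgn}}/n$.
\item \emph{Union bound.} Summing over the $t\le n$ generators bounds the total failure probability by $t\,\zeta_{\mathrm{sgn}}/n\le\zeta_{\mathrm{sgn}}$. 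Off this failure event, $\widehat b=b$.
\end{itemize}

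The copy count is $N_{\mathrm{sgn}}=tM_{\mathrm{sgn}}$, because each generator is measured directly on its own batch of $M_{\mathrm{sgn}}$ copies. No single step is a real obstacle; the only delicate points are bookkeeping. First, the string $b$ here must be the sign-defined string of \eqref{eq:supp-deterministic-syndrome-bit}, so that it matches the dominant syndrome used in Corollary~\ref{cor:supp-empirical-stabilizer-peeling}. Second, the certified lower bound on $|\mu_j|$ holds only on $\mathcal E_{\mathrm{BS}}$, which is why the probability statement is conditional on the fixed peeling output.
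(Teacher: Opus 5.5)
Your proof is correct and follows the paper's argument: per-generator Hoeffding concentration on the fresh $\{\pm1\}$ outcomes, combined with the certified magnitude $|\Tr(\rho g_j)|\ge\sqrt{h_{\min}}$ on $\mathcal E_{\mathrm{BS}}$, then a union bound over $t\le n$. The only difference is cosmetic. The paper bounds each sign error by the one-sided tail $\exp(-M_{\mathrm{sgn}}h_{\min}/2)\le\zeta_{\mathrm{sgn}}/(2n)$, which leaves a factor-two slack, whereas your two-sided bound $2\exp(-M_{\mathrm{sgn}}h_{\min}/2)\le\zeta_{\mathrm{sgn}}/n$ uses up that factor exactly.
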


\begin{proof}
On $\mathcal E_{\mathrm{BS}}$,
Eqs.~\eqref{eq:supp-peeled-syndrome-sign-convention}
and~\eqref{eq:supp-selected-generator-certified-magnitude}, followed by
Hoeffding concentration for the fresh $\{\pm1\}$ outcomes, give
\begin{equation}
\Pr(\widehat b_j\ne b_j\mid\text{fixed transcript})
\le
\exp\!\left(-\frac{M_{\mathrm{sgn}}h_{\min}}{2}\right).
\end{equation}
Equation~\eqref{eq:supp-syndrome-sign-sample-choice} and a union bound over
$t\le n$ prove the claim; the case $t=0$ is immediate.
\end{proof}

\subsection{Tomography of empirical registers}

With the syndrome signs recovered, the learner reconstructs the marginal
on every empirical register.
Choose $\varepsilon_C>0$ and $\zeta_C\in(0,1)$ for
$C\in\widehat{\mathfrak C}$ with
\begin{equation}
\sum_{C\in\widehat{\mathfrak C}}\varepsilon_C
\le\varepsilon_{\mathrm{tom}},
\qquad
\sum_{C\in\widehat{\mathfrak C}}\zeta_C
\le\zeta_{\mathrm{tom}}.
\label{eq:supp-empirical-tomography-local-budgets}
\end{equation}
All local Paulis may be measured after the known inverse Cliffords, or
equivalently as the pulled-back Pauli on $\rho$,
\begin{equation}
U_{\mathrm{stab}}\bar U_{\mathrm{rec}}
\left(I^{\otimes t}\otimes P\right)
\bar U_{\mathrm{rec}}^\dagger U_{\mathrm{stab}}^\dagger,
\label{eq:supp-empirical-pauli-pullback}
\end{equation}
where $P$ is embedded as the identity on residual registers outside
$J_C$.

\begin{proposition}[Brute-force tomography of empirical-register marginals]
\label{prop:supp-block-product-bruteforce-tomography}
Given fresh copies of the localized state $\rho_{\mathrm{loc}}$ and budgets satisfying
Eq.~\eqref{eq:supp-empirical-tomography-local-budgets}, there is a direct
Pauli tomography procedure that outputs physical states
$\widehat\nu_C\in\D((\mathbb C^2)^{\otimes k_C})$.  Write
\begin{equation}
\widehat\nu_{\mathrm{emp}}
:=
\left(
\bigotimes_{C\in\widehat{\mathfrak C}}\widehat\nu_C
\right)
\otimes
\frac{I_{J_{\mathrm{aux}}}}{2^{|J_{\mathrm{aux}}|}}.
\end{equation}
With probability at least $1-\zeta_{\mathrm{tom}}$, simultaneously for
every $C\in\widehat{\mathfrak C}$,
\begin{equation}
\left\|
\widehat\nu_C-\nu_C
\right\|_1
\le
\varepsilon_C,
\end{equation}
and consequently
\begin{equation}
\left\|
\widehat\nu_{\mathrm{emp}}-\nu_{\mathrm{emp}}
\right\|_1
\le
\varepsilon_{\mathrm{tom}}.
\label{eq:supp-block-product-tomography-error}
\end{equation}
When $\widehat K\ge1$, the procedure uses
\begin{equation}
N_{\mathrm{bp}}
=
O\!\left(
\displaystyle
\max_{C\in\widehat{\mathfrak C}}
\frac{(4^{k_C}-1)^2}{\varepsilon_C^2}
\log\frac{4^{k_C}}{\zeta_C}
\right)
\label{eq:supp-block-product-tomography-general-cost}
\end{equation}
fresh copies, equivalently global joint-measurement rounds.

If $\widehat K=0$, no marginal or auxiliary register is tomographed:
set $N_{\mathrm{bp}}=0$ and use the empty product
$\widehat\nu_{\mathrm{emp}}=\nu_{\mathrm{emp}}
=I_{J_{\mathrm{aux}}}/2^{|J_{\mathrm{aux}}|}$, interpreted as the scalar
$1$ when $J_{\mathrm{aux}}=\varnothing$.  When $\widehat K\ge1$, define
$k_{\max}^{\mathrm{emp}}:=\max_C k_C$ and choose the uniform budgets
$\varepsilon_C=\varepsilon_{\mathrm{tom}}/\widehat K$ and
$\zeta_C=\zeta_{\mathrm{tom}}/\widehat K$.  Then
\begin{equation}
N_{\mathrm{bp}}
=
O\!\left(
\frac{\widehat K^2 16^{k_{\max}^{\mathrm{emp}}}}
{\varepsilon_{\mathrm{tom}}^2}
\log\frac{\widehat K4^{k_{\max}^{\mathrm{emp}}}}
{\zeta_{\mathrm{tom}}}
\right)
=O\!\left(
\frac{\widehat K^2 16^d}
{\varepsilon_{\mathrm{tom}}^2}
\log\frac{\widehat K4^d}{\zeta_{\mathrm{tom}}}
\right), 
\qquad
\label{eq:supp-block-product-tomography-max-block-cost}
\end{equation}
as $k_{\max}^{\mathrm{emp}}\le d$. 
\end{proposition}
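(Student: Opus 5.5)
We estimate every nontrivial Pauli coefficient of each marginal $\nu_C$ and then project the resulting linear-inversion estimate onto the set of density matrices. Fix a learner-visible register $J_C$ with $k_C\le d$ qubits. Its Pauli expansion is $\nu_C=2^{-k_C}\sum_{P\in\Pc(J_C)}\Tr(\nu_C P)P$. For a Pauli $P$ on $J_C$, the coefficient $\Tr(\nu_C P)$ equals $\Tr[\rho_{\mathrm{loc}}(I\otimes P)]$. This is the expectation on $\rho$ of the pulled-back Pauli in Eq.~\eqref{eq:supp-empirical-pauli-pullback}.

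\emph{Measurement schedule.} Fix an ordering of the $4^{k_{\max}^{\mathrm{emp}}}-1$ nonidentity Pauli strings on $k_{\max}^{\mathrm{emp}}$ qubits, and restrict it to each $J_C$ by cycling through that register's own $4^{k_C}-1$ strings. In round $r$ we measure, on one fresh copy of $\rho_{\mathrm{loc}}$, the $r$-th chosen Pauli on every register $J_C$ simultaneously. The registers are disjoint, so these observables commute and one joint measurement returns a $\pm1$ outcome for each register. Each nonidentity Pauli on $J_C$ then receives at least $N_{\mathrm{bp}}/(4^{k_{\max}^{\mathrm{emp}}}-1)$ samples. This is why the copy count scales with the maximum local schedule rather than the sum over registers.

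\emph{Coefficient accuracy and trace-norm error.} Hoeffding's inequality with a union bound over the $4^{k_C}-1$ Paulis of $J_C$ gives uniform coefficient error $\epsilon_C'$ with probability $1-\zeta_C$, provided $N_{\mathrm{bp}}/(4^{k_C}-1)\gtrsim \epsilon_C'^{-2}\log(4^{k_C}/\zeta_C)$. Let $\widetilde\nu_C$ be the linear-inversion estimate. By Pauli orthogonality,
\begin{equation}
\|\widetilde\nu_C-\nu_C\|_1\le 2^{k_C/2}\|\widetilde\nu_C-\nu_C\|_2\le\sqrt{4^{k_C}-1}\,\epsilon_C'.
\end{equation}
We define $\widehat\nu_C$ as a trace-norm-closest density matrix to $\widetilde\nu_C$; it exists by compactness and is computable by a semidefinite program. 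Since $\nu_C$ is itself a state, the projection step costs at most a factor two:
\begin{equation}
\|\widehat\nu_C-\nu_C\|_1\le\|\widehat\nu_C-\widetilde\nu_C\|_1+\|\widetilde\nu_C-\nu_C\|_1\le2\|\widetilde\nu_C-\nu_C\|_1.
\end{equation}
Choosing $\epsilon_C'=\varepsilon_C/(2\sqrt{4^{k_C}-1})$ and substituting into the Hoeffding requirement gives $N_{\mathrm{bp}}\gtrsim(4^{k_C}-1)^2\varepsilon_C^{-2}\log(4^{k_C}/\zeta_C)$ for register $C$. Maximizing over $C$ yields Eq.~\eqref{eq:supp-block-product-tomography-general-cost}. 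A union bound over registers gives simultaneous success with probability at least $1-\sum_C\zeta_C\ge1-\zeta_{\mathrm{tom}}$.

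\emph{Assembly and cost.} The product bound Eq.~\eqref{eq:supp-block-product-tomography-error} follows by telescoping the tensor product. Replace one factor at a time; all other factors are states of unit trace norm, and the identical maximally mixed $J_{\mathrm{aux}}$ factor contributes nothing. The error is therefore at most $\sum_C\varepsilon_C\le\varepsilon_{\mathrm{tom}}$. The case $\widehat K=0$ is vacuous. Substituting the uniform budgets $\varepsilon_C=\varepsilon_{\mathrm{tom}}/\widehat K$ and $\zeta_C=\zeta_{\mathrm{tom}}/\widehat K$, together with $k_C\le d$, gives Eq.~\eqref{eq:supp-block-product-tomography-max-block-cost}.

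The main technical point is independence. Each register's estimator uses the $J_C$-outcomes of the shared rounds. Every round is a fresh copy, so for a fixed register and a fixed Pauli those outcomes are i.i.d., and Hoeffding applies per coefficient. Correlations between registers are harmless because only a union bound is used across registers.
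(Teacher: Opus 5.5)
Your proposal is correct and follows essentially the same argument as the paper: a joint round-robin Pauli schedule on the disjoint registers so that the copy count is the maximum local schedule rather than the sum, Hoeffding with a union bound per register, linear inversion, projection to a physical state at a factor-two cost, and tensor-product telescoping. The one difference is the projection step. You take a trace-norm-closest density matrix, whereas the paper uses the Hilbert--Schmidt projection (eigenvalues projected onto the simplex) and reserves $\varepsilon_C/2$ for finite-precision projection error. If your projection is also computed only numerically, shrink $\epsilon_C'$ slightly to leave the same slack; alternatively, use the normalized positive part of $\widetilde\nu_C$, which attains the trace-norm minimum exactly.
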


\begin{proof}
The stated empty-product convention handles $\widehat K=0$.  Assume
$\widehat K\ge1$, fix $C\in\widehat{\mathfrak C}$, and set
\begin{align}
N_C^{\mathrm{Pauli}}&:=4^{k_C}-1,
&
\tau_C^{\mathrm{tom}}
&:=\frac{\varepsilon_C}{4\sqrt{N_C^{\mathrm{Pauli}}}},
\nonumber\\
M_C^{\mathrm{Pauli}}
&:=
\left\lceil
\frac{2}{(\tau_C^{\mathrm{tom}})^2}
\log\frac{2N_C^{\mathrm{Pauli}}}{\zeta_C}
\right\rceil
=
O\!\left(
\frac{N_C^{\mathrm{Pauli}}}{\varepsilon_C^2}
\log\frac{4^{k_C}}{\zeta_C}
\right),
\nonumber\\
L_C
&:=
N_C^{\mathrm{Pauli}}M_C^{\mathrm{Pauli}}
=
O\!\left(
\frac{(4^{k_C}-1)^2}{\varepsilon_C^2}
\log\frac{4^{k_C}}{\zeta_C}
\right),
&
L_{\max}
&:=
\max_{C\in\widehat{\mathfrak C}}L_C.
\end{align}
Here $M_C^{\mathrm{Pauli}}$ is the Hoeffding sample count for each
$\mu_C(P):=\Tr(\nu_CP)$, estimated by $\widehat\mu_C(P)$.  In each
of the $L_{\max}$ fresh-copy rounds, jointly measure the next scheduled Pauli
on every active disjoint register.  Hence copies are the maximum of the local
schedules, not their sum; Eq.~\eqref{eq:supp-empirical-pauli-pullback}
implements the schedule on $\rho$.  With probability at least $1-\zeta_C$,
\begin{equation}
|\widehat\mu_C(P)-\mu_C(P)|
\le\tau_C^{\mathrm{tom}}
\quad\text{for every }P\ne I_{J_C}.
\end{equation}
Same-round cross-register correlations are harmless to these coefficientwise
bounds.  A union bound over registers proves
Eq.~\eqref{eq:supp-block-product-tomography-general-cost}.

Pauli linear inversion~\cite{AltepeterJamesKwiat2004} gives
\begin{equation}
\widehat\nu_C^{\mathrm{lin}}
:=
2^{-k_C}
\left(
I_{J_C}
+
\sum_{P\in\Pc(J_C)\setminus\{I_{J_C}\}}
\widehat\mu_C(P)P
\right).
\end{equation}
It is Hermitian and trace one, though not necessarily positive.  On the
uniform moment event, Pauli orthogonality gives
\begin{equation}
\left\|\widehat\nu_C^{\mathrm{lin}}-\nu_C\right\|_2
\le2^{-k_C/2}\sqrt{N_C^{\mathrm{Pauli}}}\,\tau_C^{\mathrm{tom}}.
\end{equation}
Let $\nu_C^\star$ be its exact Hilbert--Schmidt projection onto the density
operators, obtained by projecting its eigenvalues onto the probability
simplex, and return a finite-precision physical approximation satisfying
\begin{equation}
\left\|\widehat\nu_C-\nu_C^\star\right\|_1
\le
\frac{\varepsilon_C}{2}.
\label{eq:supp-empirical-projection-numerical-precision}
\end{equation}
The minimizing property and $\|X\|_1\le2^{k_C/2}\|X\|_2$ give
\begin{align}
\left\|\nu_C^\star-\nu_C\right\|_1
&\le
2^{1+k_C/2}
\left\|\widehat\nu_C^{\mathrm{lin}}-\nu_C\right\|_2
\le
2\sqrt{N_C^{\mathrm{Pauli}}}\,\tau_C^{\mathrm{tom}}
=
\frac{\varepsilon_C}{2},\\
\left\|\widehat\nu_C-\nu_C\right\|_1
&\le
\left\|\widehat\nu_C-\nu_C^\star\right\|_1
+\left\|\nu_C^\star-\nu_C\right\|_1
\le
\varepsilon_C,
\end{align}
using Eq.~\eqref{eq:supp-empirical-projection-numerical-precision}.

A union bound and tensor-product telescoping yield
\begin{equation}
\left\|\bigotimes_C\widehat\nu_C-\bigotimes_C\nu_C\right\|_1
\le
\sum_C\left\|\widehat\nu_C-\nu_C\right\|_1
\le
\varepsilon_{\mathrm{tom}}.
\end{equation}
Appending the same normalized identity on $J_{\mathrm{aux}}$ does not
change the trace norm.  This proves
Eq.~\eqref{eq:supp-block-product-tomography-error}; substituting the
uniform budgets and maximizing over the empirical sizes proves
Eq.~\eqref{eq:supp-block-product-tomography-max-block-cost}.
\end{proof}

Thus $N_{\mathrm{bp}}$ counts global rounds.  Storage, naive assembly, and
dense projection remain exponential only in $k_C\le d$;
$J_{\mathrm{aux}}$ is never tomographed.

\section{End-to-End Block-Product Tomography}
\label{sec:supp-end-to-end}

This section composes the preceding learner-visible interfaces using the
standing fresh-pool and conditional-transcript conventions; the final union
bound charges each stage budget once.

\subsection{End-to-End Protocol and Output}

For the generic block-product schedule in this subsection, assume
$2\le d\le n$.  The $d=1$ case, for which grouping is unnecessary, is
treated separately in
Sec.~\ref{sec:supp-single-qubit-hidden-block-specialization}.

The learner is given \(n\), target trace-norm accuracy
\(\varepsilon\in(0,1)\), target failure probability
\(\delta\in(0,1)\), and a known hidden-block-size cap \(d\), so every hidden
block has size at most \(d\).  It has independent-copy access to the unknown
CEBP state and can perform the two-copy Bell measurements and adaptive
commuting Pauli measurements used above.  Before sampling, it derives
compatible thresholds, tolerances, accuracy and failure budgets, and
reserves five disjoint sample pools at learner-known worst-case sizes.  Once
the structural transcript is available, the learner uses only the
transcript-specific portion of the reserved sign and tomography pools.  The
calibrated choices below are one fully learner-known implementation; the
abstract theorem isolates the stagewise properties that this schedule
verifies.  All comparisons and matrix projections use enough classical bit
precision to meet
Eq.~\eqref{eq:supp-empirical-projection-numerical-precision}.

The canonical compact physical output is
\begin{equation}
\left(
U_{\mathrm{stab}},
\bar U_{\mathrm{rec}},
\widehat b,
\{(J_C,\widehat\nu_C)\}_{C\in\widehat{\mathfrak C}},
J_{\mathrm{aux}}
\right),
\label{eq:supp-abstract-output-tuple}
\end{equation}
which represents the factorized estimator returned below.

\begin{inlinealgorithm}{alg:supp-block-product-main-tomography}{End-to-end block-product tomography}
\begin{algorithmic}[1]
\Require Independent-copy access to \(\rho\); \(n\), target accuracy
\(\varepsilon\), target failure probability \(\delta\), and a known valid
hidden-block-size cap \(2\le d\le n\).
\Ensure Either \textsc{failure} or a compact physical factorized estimator
\(\widehat\rho_{\mathrm{bp}}\).
\State Compute the learner-known worst-case parameter schedule in
Sec.~\ref{sec:supp-calibrated-schedule} and reserve its five fresh pools.
\State From $M_1$ Bell rounds, run certified peeling; reject any visible
precondition failure.  Recover the syndrome signs immediately on
$tM_{\mathrm{sgn}}$ fresh copies (zero when $t=0$).
\State From $M_2$ fresh peeled Bell rounds, run rank-guided recovery at
$\theta$ to obtain the learner-visible sectors $\mathfrak G$.
\State Estimate the required signed cumulants through order $d$ on at most
$N_{\mathrm{grp}}$ fresh copies, group the sectors, and compute
$U_{\mathrm{rec}}$, the disjoint $J_C$, and $J_{\mathrm{aux}}$ by
simultaneous localization.
\State On the final fresh pool, jointly tomograph the disjoint $J_C$ by
Proposition~\ref{prop:supp-block-product-bruteforce-tomography}, using the
transcript-specific portion of the reserved copies and returning
finite-precision physical factors $\widehat\nu_C$.
\State Set $\bar U_{\mathrm{rec}}:=I^{\otimes t}\otimes U_{\mathrm{rec}}$
and $\widehat\rho_{\mathrm{loc}}:=|\widehat b\rangle\!\langle\widehat b|
\otimes(\bigotimes_C\widehat\nu_C)\otimes
I_{J_{\mathrm{aux}}}/2^{|J_{\mathrm{aux}}|}$.
\State Output $\widehat\rho_{\mathrm{bp}}
:=
U_{\mathrm{stab}}\bar U_{\mathrm{rec}}
\widehat\rho_{\mathrm{loc}}
\bar U_{\mathrm{rec}}^\dagger U_{\mathrm{stab}}^\dagger$,
together with its compact description
Eq.~\eqref{eq:supp-abstract-output-tuple}.
\end{algorithmic}
\end{inlinealgorithm}

\subsection{Abstract End-to-End Guarantee}

Use distinct failure budgets
\(\zeta_{\mathrm{peel}},\zeta_{\mathrm{rank}},
\zeta_{\mathrm{grp}},\zeta_{\mathrm{sgn}},\zeta_{\mathrm{tom}}\).
Set \(\zeta_{\mathrm{BS}}:=\zeta_{\mathrm{peel}}\).
Write \(\widehat K=|\widehat{\mathfrak C}|\) for the empirical-group
count and \(k_C=|J_C|\) for each empirical-register dimension.
Use the structural event $\mathcal E_{\mathrm{str}}$ from
Sec.~\ref{sec:supp-localization-interface} and define the ledgers
\begin{align}
\zeta_{\mathrm{str}}
&:=\zeta_{\mathrm{peel}}+\zeta_{\mathrm{rank}}+\zeta_{\mathrm{grp}},
&N_{\mathrm{str}}&:=2M_1+2M_2+N_{\mathrm{grp}},
\label{eq:supp-end-to-end-structural-ledger}\\
\zeta_{\mathrm{cond}}
&:=\zeta_{\mathrm{sgn}}+\zeta_{\mathrm{tom}},
&N_{\mathrm{cond}}^{\mathrm{real}}&:=tM_{\mathrm{sgn}}+N_{\mathrm{bp}},
\label{eq:supp-end-to-end-conditional-ledger}\\
N_{\mathrm{tot}}^{\mathrm{real}}
&:=N_{\mathrm{str}}+N_{\mathrm{cond}}^{\mathrm{real}}
=2M_1+2M_2+N_{\mathrm{grp}}+tM_{\mathrm{sgn}}+N_{\mathrm{bp}}.
\label{eq:supp-block-product-main-copy-count}
\end{align}
Here \(M_1\) and \(M_2\) count Bell-measurement rounds, while \(2M_1\)
and \(2M_2\) count state copies.
When \(\widehat K\ge1\), the abstract theorem uses the uniform local budgets
\(\varepsilon_C=\varepsilon_{\mathrm{tom}}/\widehat K\) and
\(\zeta_C=\zeta_{\mathrm{tom}}/\widehat K\).

\begin{theorem}[Abstract block-product tomography guarantee]
\label{thm:supp-abstract-block-product-tomography}
Let $\rho$ be an $n$-qubit CEBP state with known valid hidden-block-size
cap $2\le d\le n$.  Set $\ell_{\mathrm{grp}}=d$ and run
Algorithm~\ref{alg:supp-block-product-main-tomography}.

Assume the following stage conditions.
\begin{enumerate}
\item The initial Bell schedule satisfies
Proposition~\ref{prop:supp-certified-blockwise-peeling-span} with budget
$\zeta_{\mathrm{peel}}$.

\item The fresh residual Bell schedule satisfies
Proposition~\ref{prop:supp-uniform-concentration} with budget
$\zeta_{\mathrm{rank}}$ and tolerance
$\tau_{\mathrm{rank}}=\tau(M_2,n,\zeta_{\mathrm{rank}})$.  For every
$h\in\mathsf H_{\mathrm{peel}}$, with $\lambda_h:=1-h$,
\[
\theta-\tau_{\mathrm{rank}}>\lambda_h,
\qquad
\lambda_h(\theta-\tau_{\mathrm{rank}})
-2\tau_{\mathrm{rank}}>0.
\]
Thus Proposition~\ref{prop:supp-block-triple-single-block} applies on that
uniform event.

\item The applicable uniform grouping event has budget
$\zeta_{\mathrm{grp}}$ and satisfies the hypotheses of
Proposition~\ref{prop:supp-hierarchical-cumulant-correctness} or
Proposition~\ref{prop:supp-guessed-scale-overrefined-grouping} for the
selected exact or guessed-scale branch, respectively.

\item For every resulting structural transcript, the remaining schedules
satisfy Propositions~\ref{prop:supp-peeled-syndrome-sign-recovery}
and~\ref{prop:supp-block-product-bruteforce-tomography}, the budgets
\eqref{eq:supp-empirical-tomography-local-budgets}, and the numerical
allowance \eqref{eq:supp-empirical-projection-numerical-precision}.
\end{enumerate}

Then, with probability at least
$1-\zeta_{\mathrm{str}}-\zeta_{\mathrm{cond}}$, the algorithm does not
reject and outputs the compact physical description
\eqref{eq:supp-abstract-output-tuple}, which specifies a physical estimator
$\widehat\rho_{\mathrm{bp}}$
satisfying
\begin{equation}
\left\|\widehat\rho_{\mathrm{bp}}-\rho\right\|_1
\le
(\widehat K+1)E_{\mathrm{struct}}^{\mathrm{cert}}
+\varepsilon_{\mathrm{tom}}.
\label{eq:supp-block-product-total-tomography-error}
\end{equation}
Its realized copy count is Eq.~\eqref{eq:supp-block-product-main-copy-count}.
\end{theorem}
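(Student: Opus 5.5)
The proof composes the stagewise guarantees on a single good event and then bounds the error by a triangle inequality in localized coordinates.

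\emph{Probability ledger.} Condition 1 gives $\mathcal E_{\mathrm{BS}}$ with probability at least $1-\zeta_{\mathrm{peel}}$. On this event the peeling scan does not reject, by Proposition~\ref{prop:supp-certified-blockwise-peeling-span}. Conditional on the peeling output, condition 2 gives $\mathcal E_{\mathrm{rank}}$ with probability at least $1-\zeta_{\mathrm{rank}}$. The margins are required for every $h\in\mathsf H_{\mathrm{peel}}$, so they hold for the realized $\lambda=1-h$, and Proposition~\ref{prop:supp-block-triple-single-block} applies. Conditional on that transcript, condition 3 gives $\mathcal E_{\mathrm{grp}}$ with budget $\zeta_{\mathrm{grp}}$. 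Together these produce $\mathcal E_{\mathrm{str}}$. Conditional on the structural transcript, Propositions~\ref{prop:supp-peeled-syndrome-sign-recovery} and~\ref{prop:supp-block-product-bruteforce-tomography} fail with probability at most $\zeta_{\mathrm{sgn}}$ and $\zeta_{\mathrm{tom}}$. Every stage uses a fresh pool, so a chained conditional union bound gives total failure probability at most $\zeta_{\mathrm{str}}+\zeta_{\mathrm{cond}}$. The copy count is read off the ledger.

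\emph{Error decomposition.} Trace norm is unitarily invariant, so $\|\widehat\rho_{\mathrm{bp}}-\rho\|_1=\|\widehat\rho_{\mathrm{loc}}-\rho_{\mathrm{loc}}\|_1$. On the sign event $\widehat b=b$, hence $\widehat\rho_{\mathrm{loc}}=|b\rangle\!\langle b|\otimes\widehat\nu_{\mathrm{emp}}$. Write
\[
\|\widehat\rho_{\mathrm{loc}}-\rho_{\mathrm{loc}}\|_1\le\|\widehat\nu_{\mathrm{emp}}-\nu_{\mathrm{emp}}\|_1+\||b\rangle\!\langle b|\otimes\nu_{\mathrm{emp}}-\Omega_{\mathrm{emp}}\|_1+\|\Omega_{\mathrm{emp}}-\rho_{\mathrm{loc}}\|_1 .
\]
The first term is at most $\varepsilon_{\mathrm{tom}}$ by Eq.~\eqref{eq:supp-block-product-tomography-error}. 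The third is at most $E_{\mathrm{struct}}^{\mathrm{cert}}$ by Theorem~\ref{thm:supp-empirical-block-product-approximation}.

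\emph{Main obstacle: the middle term.} The estimator uses the actual marginals $\nu_C$, not the proof-only factors $\omega_C$. Both states carry the same $|b\rangle\!\langle b|$ and the same maximally mixed $J_{\mathrm{aux}}$ factor, so the middle term reduces to $\|\bigotimes_C\nu_C-\bigotimes_C\omega_C\|_1\le\sum_C\|\nu_C-\omega_C\|_1$ by telescoping. Let $\omega_C'$ be the $J_C$-marginal of $\Omega_{\mathrm{emp}}$. It equals $\omega_C$ whenever $\omega_C$ is a state, including the arbitrary choice in the universal branch. Partial trace is contractive in trace norm, so
\[
\|\nu_C-\omega_C\|_1=\bigl\|\Tr_{\overline{J_C}}(\rho_{\mathrm{loc}}-\Omega_{\mathrm{emp}})\bigr\|_1\le E_{\mathrm{struct}}^{\mathrm{cert}} .
\]
Summing over the $\widehat K$ empirical registers gives $\widehat K\,E_{\mathrm{struct}}^{\mathrm{cert}}$. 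Adding the three terms yields Eq.~\eqref{eq:supp-block-product-total-tomography-error}.

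The remaining care is bookkeeping. Trivial branches ($t=0$, $\widehat K=0$, $\varepsilon_{\mathrm{peel}}\ge1$) follow from the stated conventions. The numerical projection allowance was already absorbed into $\varepsilon_C$.
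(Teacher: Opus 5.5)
Your proposal is correct and follows essentially the same route as the paper. Both arguments use a sequential conditional union bound over the five fresh pools, unitary invariance to pass to localized coordinates, and a triangle inequality through $\Omega_{\mathrm{emp}}$, where partial-trace contractivity and tensor-product telescoping give the $\widehat K E_{\mathrm{struct}}^{\mathrm{cert}}$ cost of replacing the proof-only factors $\omega_C$ by the actual marginals $\nu_C$; your explicit three-term split ($\varepsilon_{\mathrm{tom}}$, plus $\widehat K E_{\mathrm{struct}}^{\mathrm{cert}}$, plus $E_{\mathrm{struct}}^{\mathrm{cert}}$) is the paper's decomposition written out in more detail.
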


\begin{proof}
Apply the end-to-end sampling convention sequentially.  The first condition
gives $\mathcal E_{\mathrm{BS}}$ with failure
$\zeta_{\mathrm{peel}}$ and a certified peeling transcript.  Conditional on
that transcript, the second gives $\mathcal E_{\mathrm{rank}}$ with failure
$\zeta_{\mathrm{rank}}$ and, by
Proposition~\ref{prop:supp-block-triple-single-block}, complete block-pure
visible sectors.  Conditional once more, the third gives
$\mathcal E_{\mathrm{grp}}$ with failure $\zeta_{\mathrm{grp}}$ and the
applicable grouping conclusion.  Proposition
\ref{prop:supp-simultaneous-empirical-localization},
Proposition~\ref{prop:supp-missed-pauli-threshold-certificate}, and
Theorem~\ref{thm:supp-empirical-block-product-approximation} then produce
the structural handoff and
$\|\rho_{\mathrm{loc}}-\Omega_{\mathrm{emp}}\|_1
\le E_{\mathrm{struct}}^{\mathrm{cert}}$.
Partial-trace contractivity and tensor-product telescoping imply
\begin{equation}
\|\nu_C-\omega_C\|_1\le E_{\mathrm{struct}}^{\mathrm{cert}},
\qquad
\left\|\bigotimes_C\nu_C-\bigotimes_C\omega_C\right\|_1
\le\widehat K E_{\mathrm{struct}}^{\mathrm{cert}}.
\end{equation}
Thus comparison with $\rho_{\mathrm{loc}}$ costs
$(\widehat K+1)E_{\mathrm{struct}}^{\mathrm{cert}}$; this generic
multiplier is necessary because omitted hidden-parent directions and
radical-completion coordinates need not be recovered directions.

Conditional on the fixed handoff, the fresh syndrome and tomography
schedules fail with probability at most
$\zeta_{\mathrm{cond}}=\zeta_{\mathrm{sgn}}+\zeta_{\mathrm{tom}}$
(with no syndrome charge when $t=0$).  On their success events,
$\widehat b=b$ and
Eq.~\eqref{eq:supp-block-product-tomography-error} adds at most
$\varepsilon_{\mathrm{tom}}$, proving the localized error in the theorem.
The localized target and returned estimator satisfy
\begin{equation}
\rho
=
U_{\mathrm{stab}}\bar U_{\mathrm{rec}}
\rho_{\mathrm{loc}}
\bar U_{\mathrm{rec}}^\dagger U_{\mathrm{stab}}^\dagger,
\qquad
\widehat\rho_{\mathrm{bp}}
:=
U_{\mathrm{stab}}\bar U_{\mathrm{rec}}
\widehat\rho_{\mathrm{loc}}
\bar U_{\mathrm{rec}}^\dagger U_{\mathrm{stab}}^\dagger.
\end{equation}
Unitary invariance therefore transfers the conditional bound directly to
Eq.~\eqref{eq:supp-block-product-total-tomography-error}.

The sequential union bound charges the five budgets once, giving
$1-\zeta_{\mathrm{str}}-\zeta_{\mathrm{cond}}$.  The Bell-round convention
and the three remaining stage counts give
Eq.~\eqref{eq:supp-block-product-main-copy-count}.  For $t=0$, the sign term
and its failure charge are absent.
\end{proof}

A literal implementation includes exhaustive all-Pauli Bell-record work and
the $n^{O(d)}$ grouping and localized postprocessing quantified above; its
classical runtime is not optimized.  The output remains factorized, and the
state-copy cost of disjoint-register tomography is the maximum local schedule,
not the sum.

\subsection{Calibrated Parameter Schedule}
\label{sec:supp-calibrated-schedule}

\subsubsection{Learner-Known Calibration}
\label{sec:supp-learner-known-calibration}

We now give a baseline schedule using only \(n,\varepsilon,\delta\), and
the known cap \(d\).  Replacing a supplied cap by \(\min\{d,n\}\) if
necessary, this generic schedule assumes \(2\le d\le n\); the complete
\(d=1\) parameter range is handled separately in
Sec.~\ref{sec:supp-single-qubit-hidden-block-specialization}.  The grouping order is
\(\ell_{\mathrm{grp}}:=d\).  Thus the SM symbol \(d\) is the main-text
prior \(d\).  The dependency order below is: a learner-known
structural multiplier, then the target certificate allocation, grouping
scale, peeling floor, and finally the Bell tolerances and copy counts.

Recall that \(\Gamma_d\) is the ordered-Bell cumulant perturbation constant
defined in Eq.~\eqref{eq:supp-cumulant-gamma-s}.  Recall also from
Eq.~\eqref{eq:supp-split-cumulant-factor} that
\(F_d(x)=(1+x)^{2^d-d-1}-1\).  The upstream geometry gives
\begin{equation}
\widehat K\le L\le n,\qquad
k_C\le d,\qquad
\sum_{C\subseteq\mathcal B_a^\star}k_C\le L_a\le d.
\end{equation}
Thus the learner-known bounds
\begin{equation}
R_{\mathrm{ub}}:=n+1,
\qquad
A_d:=n2^d,
\qquad
q_d:=2^d-d-1
\label{eq:supp-calibrated-known-bounds}
\end{equation}
have the following separate roles.  The quantity \(R_{\mathrm{ub}}\) bounds
\(\widehat K+1\).  The quantity \(A_d\) bounds the missed-mass and
computable splitting prefactors; in particular,
\(\sum_a2^{L_a}\le n2^d=A_d\).  The positive quantity \(q_d\) is the
exponent in \(F_d\).  Thus no hidden partition statistic is an input.

Set
\begin{align}
\varepsilon_{\mathrm{tom}}
&:=\frac{\varepsilon}{4},
&
\theta_0
&:=
\frac{\varepsilon^2}
{128R_{\mathrm{ub}}^2A_d^2},
\label{eq:supp-calibrated-accuracy-scales}\\
\eta_s
&:=
\left(
1+\frac{\varepsilon}{8R_{\mathrm{ub}}A_d}
\right)^{1/q_d}-1,
&
\lambda_0
&:=
\min\left\{
\frac{\theta_0}{4},
\frac{\varepsilon^2}{128nR_{\mathrm{ub}}^2},
\frac{\eta_s^2}{512n\Gamma_d^2}
\right\}.
\label{eq:supp-calibrated-grouping-peeling-scales}
\end{align}
Choose
\begin{equation}
\theta:=\theta_0,\qquad
\eta_{\mathrm{test}}:=\frac{\eta_s}{2},\qquad
\tau_\kappa:=\frac{\eta_s}{4},
\label{eq:supp-calibrated-thresholds}
\end{equation}
and set the signed subset-moment tolerance to
$\tau_\mu
:=
\frac{\tau_\kappa}{\Gamma_d}
=
\frac{\eta_s}{4\Gamma_d}$.  Also set
$
\mathsf I_h
:=
\left[1-\lambda_0,1-\frac{\lambda_0}{2}\right],
\qquad
\Delta_h:=\frac{\lambda_0}{2}
$,
with the grid of
Eq.~\eqref{eq:supp-peeling-threshold-grid} constructed at target mesh
$\eta_{\mathrm{grid}}^{\mathrm{target}}
:=
\frac{\Delta_h}{4(2n+1)}$.
Allocate
\begin{equation}
\zeta_{\mathrm{peel}}
=\zeta_{\mathrm{rank}}
=\zeta_{\mathrm{grp}}
=\zeta_{\mathrm{sgn}}
=\zeta_{\mathrm{tom}}
:=
\frac{\delta}{5}.
\label{eq:supp-calibrated-failure-allocation}
\end{equation}
The numerical projection allowance is already included in
\(\varepsilon_{\mathrm{tom}}\) through the local
\(\varepsilon_C\)'s; it is not added again.

\subsubsection{Stagewise Sample Schedules}
\label{sec:supp-calibrated-stagewise-schedules}

The two Bell-round counts may be chosen as
\begin{align}
M_1
&:=
\left\lceil
\frac{512(2n+1)^2}{\lambda_0^2}
\log\frac{2\cdot4^n}{\zeta_{\mathrm{peel}}}
\right\rceil,
\label{eq:supp-calibrated-first-bell-rounds}\\
M_2
&:=
\left\lceil
\frac{128}{\lambda_0^2\theta_0^2}
\log\frac{2\cdot4^n}{\zeta_{\mathrm{rank}}}
\right\rceil.
\label{eq:supp-calibrated-second-bell-rounds}
\end{align}
Their tolerances are
\(\tau_1=\tau(M_1,n,\zeta_{\mathrm{peel}})\) and
\(\tau_{\mathrm{rank}}=\tau(M_2,n,\zeta_{\mathrm{rank}})\).
For \(h_{\min}=1-\lambda_0\), choose
\begin{equation}
M_{\mathrm{sgn}}
:=
\left\lceil
\frac{2}{h_{\min}}
\log\frac{2n}{\zeta_{\mathrm{sgn}}}
\right\rceil.
\label{eq:supp-calibrated-sign-count}
\end{equation}
This number is used only when \(t\ge1\).

Reserve the learner-known test bound
\begin{equation}
N_{\mathrm{test}}^{\mathrm{ub}}
:=
n4^d
\sum_{q=2}^{d}\binom nq
\label{eq:supp-calibrated-test-bound}
\end{equation}
and take
\begin{equation}
N_{\mathrm{grp}}
:=
N_{\mathrm{test}}^{\mathrm{ub}}
\left\lceil
\frac{32\Gamma_d^2}{\eta_s^2}
\log\!\left(
\frac{2(2^d-1)N_{\mathrm{test}}^{\mathrm{ub}}}
{\zeta_{\mathrm{grp}}}
\right)
\right\rceil.
\label{eq:supp-calibrated-grouping-copies}
\end{equation}
For every realized tuple, the ordinary estimator targets signed subset
moments at tolerance \(\tau_\mu\); the perturbation bound
Eq.~\eqref{eq:supp-bounded-moment-cumulant-perturbation} then gives
cumulant error at most \(\tau_\kappa\).  Under the end-to-end sampling
convention, Eq.~\eqref{eq:supp-ordinary-grouping-generic-copies} gives
\(
\Pr(\mathcal E_{\mathrm{grp}}\mid\text{fixed preceding successful
transcript})\ge1-\zeta_{\mathrm{grp}}
\).
After \(\widehat{\mathfrak C}\) is known, if
\(\widehat K\ge1\), set
$
\varepsilon_C:=\frac{\varepsilon_{\mathrm{tom}}}{\widehat K},
\qquad
\zeta_C:=\frac{\zeta_{\mathrm{tom}}}{\widehat K}
$
and use Proposition~\ref{prop:supp-block-product-bruteforce-tomography},
whose disjoint-register schedule defines $N_{\mathrm{bp}}$ by the maximum in
Eq.~\eqref{eq:supp-block-product-tomography-general-cost}.  If
\(\widehat K=0\), set \(N_{\mathrm{bp}}:=0\).
For deterministic pre-reservation, define
\begin{align}
N_{\mathrm P}^{\mathrm{wc}}
&:=4^d-1,
&
\tau_{\mathrm{bp}}^{\mathrm{wc}}
&:=\frac{\varepsilon_{\mathrm{tom}}}
{4n\sqrt{N_{\mathrm P}^{\mathrm{wc}}}},
\nonumber\\
M_{\mathrm P}^{\mathrm{wc}}
&:=\left\lceil
\frac{2}{(\tau_{\mathrm{bp}}^{\mathrm{wc}})^2}
\log\frac{2nN_{\mathrm P}^{\mathrm{wc}}}{\zeta_{\mathrm{tom}}}
\right\rceil,
&
N_{\mathrm{bp}}^{\mathrm{wc}}
&:=N_{\mathrm P}^{\mathrm{wc}}M_{\mathrm P}^{\mathrm{wc}}.
\end{align}
Then $N_{\mathrm{bp}}\le N_{\mathrm{bp}}^{\mathrm{wc}}$ by
$\widehat K\le n$ and $k_C\le d$.  The learner reserves
\begin{equation}
N_{\mathrm{tot}}^{\mathrm{bp}}
:=N_{\mathrm{tot}}^{\mathrm{wc}}
:=2M_1+2M_2+N_{\mathrm{grp}}+nM_{\mathrm{sgn}}
+N_{\mathrm{bp}}^{\mathrm{wc}}
\label{eq:supp-block-product-main-copy-count-worst-case}
\end{equation}
copies before sampling and consumes only the transcript-specific
$N_{\mathrm{tot}}^{\mathrm{real}}\le N_{\mathrm{tot}}^{\mathrm{wc}}$.

\subsubsection{Correctness of the Calibrated Schedule}
\label{sec:supp-calibrated-correctness}

\begin{proposition}[Calibrated end-to-end block-product tomography]
\label{prop:supp-calibrated-block-product-tomography}
Let $2\le d\le n$.  For every \(n\)-qubit CEBP state satisfying
$L_a\le d$ for every hidden block $a$,
the preceding schedule with \(\ell_{\mathrm{grp}}=d\) is executable from
the learner-known inputs \(n,\varepsilon,\delta,d\).  With probability at
least \(1-\delta\), it outputs the compact physical tuple in
Eq.~\eqref{eq:supp-abstract-output-tuple}, which specifies
\(\widehat\rho_{\mathrm{bp}}\) without a dense global matrix, and
\begin{equation}
\left\|\widehat\rho_{\mathrm{bp}}-\rho\right\|_1
\le\varepsilon.
\label{eq:supp-calibrated-block-product-accuracy}
\end{equation}
For the ordinary adaptive grouping implementation,
\begin{align}
N_{\mathrm{tot}}^{\mathrm{bp}}
=\widetilde O\!\Bigg(
\frac{
n^3\Gamma_d^4R_{\mathrm{ub}}^8A_d^8
q_d^{\,4}
}{\varepsilon^8}
+
\frac{
d^2R_{\mathrm{ub}}^2n^{d+3}4^d
q_d^{\,2}
}{\varepsilon^2}
\left(
\frac{4d^2}{e^2(\log 2)^2}
\right)^d
+
\frac{n^2 16^d}{\varepsilon^2}
+n
\Bigg),
\label{eq:supp-calibrated-block-product-copy-count}
\end{align}
Here \(\widetilde O\) suppresses logarithms in the displayed parameters and
\(\delta^{-1}\).  In particular,
\begin{equation}
N_{\mathrm{tot}}^{\mathrm{bp}}
=
\widetilde O\!\left[
2^{O(d\log d)}
\left(
\frac{n^{19}}{\varepsilon^8}
+
\frac{n^{d+5}}{\varepsilon^2}
\right)
\right],
\label{eq:supp-calibrated-block-product-copy-count-simplified}
\end{equation}
which for fixed $d\ge2$ is
$\widetilde O(n^{19}\varepsilon^{-8}+n^{d+5}\varepsilon^{-2})$.
\end{proposition}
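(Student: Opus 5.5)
The plan is to verify, one by one, that the calibrated schedule of Sec.~\ref{sec:supp-calibrated-schedule} meets the four stage conditions of Theorem~\ref{thm:supp-abstract-block-product-tomography}, in the guessed-scale branch. That theorem then yields the bound \eqref{eq:supp-block-product-total-tomography-error} with probability at least $1-5\cdot\delta/5=1-\delta$. What remains afterwards is to show that the right-hand side is at most $\varepsilon$, and to bound the reserved count \eqref{eq:supp-block-product-main-copy-count-worst-case}. Executability is immediate, since every quantity in \eqref{eq:supp-calibrated-known-bounds}--\eqref{eq:supp-calibrated-grouping-copies} depends only on $(n,d,\varepsilon,\delta)$.

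\textbf{Stage conditions.} These are checked in the following order.
\begin{enumerate}
\item \emph{Peeling.} With $\Delta_h=\lambda_0/2$, the choice \eqref{eq:supp-calibrated-first-bell-rounds} is exactly \eqref{eq:supp-certified-peeling-bell-round-count}. Also $h_{\min}=1-\lambda_0>1/2$, because $\lambda_0\le\theta_0/4<1/2$. Hence Proposition~\ref{prop:supp-certified-blockwise-peeling-span} applies.
\item \emph{Recovery.} From \eqref{eq:supp-calibrated-second-bell-rounds} and \eqref{eq:supp-tauM} we get $\tau_{\mathrm{rank}}\le\lambda_0\theta_0/8$. Every grid $h$ has $\lambda_h\in[\lambda_0/2,\lambda_0]$, and therefore
\[
\theta-\tau_{\mathrm{rank}}\ge\tfrac{7}{8}\theta_0>\theta_0/4\ge\lambda_h,\qquad
\lambda_h(\theta-\tau_{\mathrm{rank}})\ge\tfrac{7}{16}\lambda_0\theta_0>2\tau_{\mathrm{rank}}.
\]
These are the two recovery margins.
\item \emph{Grouping.} We need $\beta_{\mathrm{peel}}<\eta_s/4$ for Proposition~\ref{prop:supp-guessed-scale-overrefined-grouping}. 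Since $t\le n$, we have $\varepsilon_{\mathrm{peel}}\le n\lambda_0/2\le\eta_s^2/(1024\Gamma_d^2)$, which gives $\beta_{\mathrm{peel}}\le\eta_s/8$. The uniform cumulant event with budget $\delta/5$ follows from \eqref{eq:supp-ordinary-grouping-generic-copies} at $\tau_\kappa=\eta_s/4$. This needs $N_{\mathrm{test}}^{\max}\le N_{\mathrm{test}}^{\mathrm{ub}}$, which holds because $L\le n$.
\item \emph{Sign and tomography.} These are immediate from \eqref{eq:supp-calibrated-sign-count} and the worst-case reservation, using $\widehat K\le n$ and $k_C\le d$.
\end{enumerate}

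\textbf{Accuracy.} Insert these bounds into \eqref{eq:supp-localization-structural-certificate}, using $\widehat K+1\le R_{\mathrm{ub}}$ and $\varepsilon_{\mathrm{peel}}\le\varepsilon^2/(256R_{\mathrm{ub}}^2)$. Each term of $R_{\mathrm{ub}}E_{\mathrm{struct}}^{\mathrm{cert}}$ is then controlled as follows.
\begin{itemize}
\item The peeling term satisfies $2R_{\mathrm{ub}}\sqrt{\varepsilon_{\mathrm{peel}}}\le\varepsilon/8$.
\item Using $\theta_{\mathrm{rec}}\le 2\theta_0$, the missed-mass term satisfies $R_{\mathrm{ub}}A_d\sqrt{\theta_{\mathrm{rec}}}\le\varepsilon/8$.
\item For the splitting term, $E_{\mathrm{split}}^{\mathrm{cert}}\le F_d(\eta_s)\sum_C 2^d\le A_dF_d(\eta_s)$, and the definition of $\eta_s$ gives $A_dF_d(\eta_s)=\varepsilon/(8R_{\mathrm{ub}})$, so this term contributes $\varepsilon/8$.
\end{itemize}
Together with $\varepsilon_{\mathrm{tom}}=\varepsilon/4$, the total error is at most $5\varepsilon/8<\varepsilon$.

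\textbf{Copy count.} This is the main obstacle, because of the bookkeeping through $\eta_s$ and $\Gamma_d$. I would handle $\eta_s$ first. By concavity, $(1+x)^{1/q}-1\ge x/(2q)$ for $x\le1$, so
\[
\eta_s\ge \frac{\varepsilon}{16q_dR_{\mathrm{ub}}A_d},\qquad
\lambda_0^{-1}=O\!\left(\frac{R_{\mathrm{ub}}^2A_d^2}{\varepsilon^2}+\frac{n\Gamma_d^2q_d^2R_{\mathrm{ub}}^2A_d^2}{\varepsilon^2}\right).
\]
The individual pools then scale as follows.
\begin{itemize}
\item $M_1=\widetilde O(n^3\lambda_0^{-2})$ and $M_2=\widetilde O(n\lambda_0^{-2}\theta_0^{-2})$. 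The dominant contribution is $M_2$, which behaves like $n\cdot n^2\Gamma_d^4q_d^4R_{\mathrm{ub}}^8A_d^8/\varepsilon^8$; this gives the first displayed term.
\item $N_{\mathrm{grp}}=\widetilde O(n^{d+1}4^d\Gamma_d^2/\eta_s^2)$. Inserting $\Gamma_d\lesssim d!/(\log2)^{d+1}$ together with Stirling's formula produces the factor $(4d^2/(e^2(\log2)^2))^d$ and the second term.
\item $N_{\mathrm{bp}}^{\mathrm{wc}}=\widetilde O(n^216^d/\varepsilon^2)$.
\item $nM_{\mathrm{sgn}}=\widetilde O(n)$.
\end{itemize}
Finally, the powers of $n$ are collected using $R_{\mathrm{ub}}=n+1$ and $A_d=n2^d$: the first term scales as $n^{3+8+8}=n^{19}$ and the second as $n^{d+5}$. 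All $d$-dependence is absorbed into $2^{O(d\log d)}$, which gives \eqref{eq:supp-calibrated-block-product-copy-count-simplified}.
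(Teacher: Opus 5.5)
Your proposal is correct and follows the paper's own proof essentially step for step. You verify the peeling, ranking and guessed-scale grouping margins from the calibrated $M_1$, $M_2$, $\lambda_0$, $\eta_s$; you allocate $\varepsilon/(8R_{\mathrm{ub}})$ to each of the peeling, missed-mass and splitting terms, so the abstract theorem yields $5\varepsilon/8$; and you substitute $\lambda_0^{-2}$ and $\eta_s^{-2}$ into the pool sizes to obtain the $n^{19}$ and $n^{d+5}$ terms.

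The differences are cosmetic: your $7/8$ in place of the paper's $31/32$, your explicit concavity bound $\eta_s\ge\varepsilon/(16q_dR_{\mathrm{ub}}A_d)$, and checking the recovery margins at every grid threshold rather than only the accepted one. The only hypothesis you leave implicit is the grouping-order promise $\max_a|\mathcal B_a^\star|\le d=\ell_{\mathrm{grp}}$. The paper also just cites this from the earlier recovered-sector cap, so it is not a gap.
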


\begin{proof}
\leavevmode
\paragraph{Feasibility and ranking/grouping margins.}
The choices of $M_1$ and the target mesh give
$\tau_1\le\lambda_0/[16(2n+1)]
=\Delta_h/[8(2n+1)]$, so
Proposition~\ref{prop:supp-certified-blockwise-peeling-span} applies.
Moreover, $\lambda_0\le\theta_0/4<1/2$, and its accepted value obeys
\begin{equation}
\frac{\lambda_0}{2}\le\lambda\le\lambda_0.
\label{eq:supp-calibrated-accepted-lambda-range}
\end{equation}

The choice of $M_2$ gives
\begin{equation}
\tau_{\mathrm{rank}}\le\frac{\lambda_0\theta_0}{8},
\qquad
\theta_0-\tau_{\mathrm{rank}}
\ge\frac{31}{32}\theta_0
>\frac{\theta_0}{4}\ge\lambda.
\end{equation}
Together with the lower bound in
Eq.~\eqref{eq:supp-calibrated-accepted-lambda-range},
\begin{align*}
\lambda(\theta_0-\tau_{\mathrm{rank}})
-2\tau_{\mathrm{rank}}
&\ge
\frac{\lambda_0}{2}(\theta_0-\tau_{\mathrm{rank}})
-\frac{\lambda_0\theta_0}{4}
=\frac{\lambda_0\theta_0}{4}
-\frac{\lambda_0\tau_{\mathrm{rank}}}{2}
>0.
\end{align*}
Thus both rank-recovery margins hold.

Certified peeling and the third branch of $\lambda_0$ give
\begin{align}
\varepsilon_{\mathrm{peel}}
&=\frac{t\lambda}{2}
\le\frac{n\lambda_0}{2}
\le
\frac{\varepsilon^2}{256R_{\mathrm{ub}}^2}
<1,
\nonumber\\
\beta_{\mathrm{peel}}
&=4\Gamma_d\sqrt{\varepsilon_{\mathrm{peel}}}
\le\frac{\eta_s}{8}
<\frac{\eta_s}{4}.
\end{align}
Because \(\ell_{\mathrm{grp}}=d\) and every true recovered-sector block has
at most \(d\) sectors, the grouping-order promise holds; the last display
and
Eq.~\eqref{eq:supp-calibrated-thresholds} verify the guessed-scale
no-false-merge conditions.

\paragraph{Accuracy allocation.}
Set $\theta_{\mathrm{rec}}:=\theta_0+\tau_{\mathrm{rank}}$.  The peeling
bound, $\tau_{\mathrm{rank}}\le\theta_0$, and
Eq.~\eqref{eq:supp-global-missed-mass-computable-certificate} give
\begin{align}
2\sqrt{\varepsilon_{\mathrm{peel}}}
&\le\frac{\varepsilon}{8R_{\mathrm{ub}}},
\label{eq:supp-calibrated-peeling-allocation}\\
n2^d\sqrt{\theta_{\mathrm{rec}}}
&\le A_d\sqrt{2\theta_0}
=\frac{\varepsilon}{8R_{\mathrm{ub}}}.
\label{eq:supp-calibrated-missed-allocation}
\end{align}
In the guessed-scale branch,
\(\xi_{\mathrm{eff}}<\eta_s\), \(\widehat K\le n\), and \(k_C\le d\).
Therefore
\begin{align}
E_{\mathrm{split}}^{\mathrm{cert}}
&\le F_d(\eta_s)\sum_{C\in\widehat{\mathfrak C}}2^d
\le A_dF_d(\eta_s)
\notag\\
&= A_d\left[(1+\eta_s)^{q_d}-1\right]
=\frac{\varepsilon}{8R_{\mathrm{ub}}}.
\label{eq:supp-calibrated-split-allocation}
\end{align}
Equations
\eqref{eq:supp-calibrated-peeling-allocation}--\eqref{eq:supp-calibrated-split-allocation}
and
Eq.~\eqref{eq:supp-localization-structural-certificate} yield
$
E_{\mathrm{struct}}^{\mathrm{cert}}
\le
\frac{3\varepsilon}{8R_{\mathrm{ub}}}
$.
The empirical-register size bound \(k_C\le d\) verifies the tomography
dimension premise.  Its local numerical projection error is already
included in each \(\varepsilon_C\).  Since
\(\widehat K+1\le R_{\mathrm{ub}}\), the abstract theorem gives
\begin{equation}
(\widehat K+1)E_{\mathrm{struct}}^{\mathrm{cert}}
+
\varepsilon_{\mathrm{tom}}
\le
\frac{3\varepsilon}{8}+\frac{\varepsilon}{4}
=\frac{5\varepsilon}{8}
<\varepsilon,
\end{equation}
leaving explicit accuracy slack.

\paragraph{Five-stage failure accounting.}
Under the sampling convention of Sec.~\ref{sec:supp-end-to-end}, the peeling,
rank, sign, and tomography stages invoke
Propositions~\ref{prop:supp-certified-blockwise-peeling-span},
\ref{prop:supp-block-triple-single-block},
\ref{prop:supp-peeled-syndrome-sign-recovery}, and
\ref{prop:supp-block-product-bruteforce-tomography}; grouping invokes
Proposition~\ref{prop:supp-guessed-scale-overrefined-grouping}.  The
conditional failure budgets are the five
quantities in
Eq.~\eqref{eq:supp-calibrated-failure-allocation}; a sequential union bound
sums them to $\delta$.  For $t=0$, the sign event and its copies are absent.

\paragraph{Resource substitution and simplification.}
For
\(x:=\varepsilon/(8R_{\mathrm{ub}}A_d)\le1\), the definition of
\(\eta_s\) gives
\(\eta_s^{-p}=O((R_{\mathrm{ub}}A_dq_d/\varepsilon)^p)\)
for \(p=2,4\), while
\(\theta_0^{-1}=O(R_{\mathrm{ub}}^2A_d^2/\varepsilon^2)\).
Substitution in the three defining branches of $\lambda_0$ gives
\begin{equation}
\lambda_0^{-2}
=
O\!\left(
\frac{
n^2\Gamma_d^4R_{\mathrm{ub}}^4A_d^4
q_d^{\,4}
}{\varepsilon^4}
\right).
\label{eq:supp-calibrated-lambda-inverse-bound}
\end{equation}
Keeping the uniform-all-Pauli logarithms visible, the two Bell pools obey
\begin{align}
M_1
&=
O\!\left(
n^2\lambda_0^{-2}
\left[n+\log\frac1{\zeta_{\mathrm{peel}}}\right]
\right),
\label{eq:supp-calibrated-first-bell-audit}\\
M_2
&=
O\!\left(
\lambda_0^{-2}\theta_0^{-2}
\left[n+\log\frac1{\zeta_{\mathrm{rank}}}\right]
\right).
\label{eq:supp-calibrated-second-bell-audit}
\end{align}
Equations~\eqref{eq:supp-calibrated-lambda-inverse-bound} and the two audits
give
\begin{equation}
M_2
=
\widetilde O\!\left(
\frac{
n^3\Gamma_d^4R_{\mathrm{ub}}^8A_d^8
q_d^{\,4}
}{\varepsilon^8}
\right).
\end{equation}
This term dominates $M_1$ for the stated parameter range and gives the first
line of
Eq.~\eqref{eq:supp-calibrated-block-product-copy-count}.  Substituting the bound on
\(\eta_s^{-2}\) into the ordinary generic grouping count gives the second
line.  More explicitly, for $\ell_{\mathrm{grp}}=d$,
$\tau_\kappa=\eta_s/4$, and failure budget
$\delta_{\mathrm{grp}}^{\mathrm{ordinary}}$, the bounds
$N_{\mathrm{test}}^{\max}=O(dn^{d+1}4^d)$ and
$\Gamma_d=O(\sqrt d(d/(e\log2))^d)$ give
\begin{equation}
N_{\mathrm{copy}}^{\mathrm{ordinary}}
=O\!\left[
\frac{d^2n^{d+1}}{\eta_s^2}
\left(\frac{4d^2}{e^2(\log2)^2}\right)^d
\log\!\left(
\frac{d\,n^{d+1}8^d}{\delta_{\mathrm{grp}}^{\mathrm{ordinary}}}
\right)
\right].
\label{eq:supp-copy-count-emp-cumulant-asymptotic}
\end{equation}
This gives the second line directly.  For tomography, substitute
$\widehat K\le n$, $k_C\le d$, and the
uniform local budgets into
Eq.~\eqref{eq:supp-block-product-tomography-max-block-cost}.  The joint
maximum schedule gives
\begin{equation}
N_{\mathrm{bp}}
\le N_{\mathrm{bp}}^{\mathrm{wc}}
=
O\!\left(
\frac{n^2 16^d}{\varepsilon_{\mathrm{tom}}^2}
\log\frac{n4^d}{\zeta_{\mathrm{tom}}}
\right).
\end{equation}
Together with \(tM_{\mathrm{sgn}}\le nM_{\mathrm{sgn}}\) and
Eq.~\eqref{eq:supp-calibrated-sign-count}, this gives the last line.
Adding the five disjoint pools proves both
\(N_{\mathrm{tot}}^{\mathrm{real}}\le N_{\mathrm{tot}}^{\mathrm{wc}}\)
and the displayed deterministic complexity.
Finally,
$R_{\mathrm{ub}}=O(n)$, $A_d=n2^d$, $q_d\le2^d$, and
$\Gamma_d=2^{O(d\log d)}$.  The Bell and ordinary-grouping terms in
Eq.~\eqref{eq:supp-calibrated-block-product-copy-count} therefore become
\begin{equation}
\frac{n^{19}}{\varepsilon^8}2^{O(d\log d)}
\qquad\text{and}\qquad
\frac{n^{d+5}}{\varepsilon^2}2^{O(d\log d)}.
\end{equation}
For $d\ge2$, $n\ge d$, and $0<\varepsilon<1$, the tomography and syndrome
terms are lower order and are absorbed by these two contributions, proving
Eq.~\eqref{eq:supp-calibrated-block-product-copy-count-simplified} with the
same logarithmic convention.
\end{proof}

\subsection{Single-Qubit Hidden-Block Specialization}
\label{sec:supp-single-qubit-hidden-block-specialization}

For \(d=1\), take every recovered sector as a singleton group.
Proposition~\ref{prop:supp-block-triple-single-block} makes each recovered
sector block-pure; distinct recovered sectors commute elementwise and their
recovered Pauli directions are linearly independent.  If two distinct sectors belonged
to one hidden qubit, they would supply two independent nonidentity one-qubit
Paulis, which anticommute, whereas Clifford conjugation preserves their
commutator---a contradiction.  Hence at most one sector belongs to each
hidden qubit, simultaneous localization assigns every sector a one-qubit
register \(J_C\), and
\[
\widehat K\le n,
\qquad
N_{\mathrm{grp}}=0,
\qquad
E_{\mathrm{split}}^{\mathrm{cert}}=0.
\]
Thus the only algorithmic distinction from the generic pipeline is that
no grouping is needed.  The same unconditional empirical-register
tomography is used after localization; the analysis below gives a sharper
\(d=1\) error bound and calibration.

Fix a successful nontrivial peeling-and-recovery transcript with
\(\varepsilon_{\mathrm{peel}}<1\), and define
$
p_b:=\Tr\!\left[\rho_{\mathrm{loc}}\left(
|b\rangle\!\langle b|_{[t]}\otimes I_{[m]}\right)\right]
$.
Equation~\eqref{eq:supp-positive-compressed-syndrome-weight} gives
\(p_b\ge1-\varepsilon_{\mathrm{peel}}>0\) on this branch.
For every represented block \(J_a=J_C\), while unrepresented directions
lie in \(J_{\mathrm{aux}}\).  Hence
Proposition~\ref{prop:supp-localized-true-block-product} gives
\begin{equation}
\Omega_{\mathrm{true}}=|b\rangle\!\langle b|_{[t]}\otimes
\bigotimes_{C\in\widehat{\mathfrak C}}\omega_C\otimes
\frac{I_{J_{\mathrm{aux}}}}{2^{|J_{\mathrm{aux}}|}},
\qquad \omega_C\in\D(\mathbb C^2).
\end{equation}
Let \(\nu_C\) be the unconditional localized marginal from
Eq.~\eqref{eq:supp-actual-empirical-marginals}, and set
\(\theta_{\mathrm{rec}}:=\theta+\tau_{\mathrm{rank}}\).
For each nonidentity one-qubit Pauli \(P\) on \(J_C\), define
\begin{equation}
m_{C,P}:=\Tr\!\left[\rho_{\mathrm{loc}}\left(
I_{[t]}\otimes P_{J_C}\otimes I_{[m]\setminus J_C}\right)\right]
=\Tr(\nu_C P).
\label{eq:supp-single-qubit-unconditional-moment}
\end{equation}
There are two cases, since a recovered sector can be a completed triple
or a single retained axis.

First suppose that
\(Q:=U_{\mathrm{rec}}(P_{J_C}\otimes I)U_{\mathrm{rec}}^\dagger\)
belongs to \(\mathcal P(C)\).  Then \(Q\) has a retained lift
\(Z^{c_0}\otimes Q\).  In the Fourier expansion
\eqref{eq:supp-phase-safe-syndrome-projector}, every component factors as
\(Z^c\otimes Q=(Z^{c+c_0}\otimes I)(Z^{c_0}\otimes Q)\).
The complete peeling subgroup is retained blockwise and the retained
hidden-local subgroup is closed under its multiplication.  Truncation
therefore preserves the syndrome-projected moment and, by
Eq.~\eqref{eq:supp-full-peeling-subgroup-moment-preservation}, its weight
\(p_b\).  The construction of \(\Omega_{\mathrm{true}}\) consequently gives
\begin{equation}
\begin{aligned}
m_{C,P}&=p_b\Tr(\omega_C P)+r_{C,P},\\
|r_{C,P}|&\le1-p_b,\\
|m_{C,P}-\Tr(\omega_C P)|&\le2(1-p_b)
\le2\varepsilon_{\mathrm{peel}}.
\end{aligned}
\end{equation}
Here \(r_{C,P}\) is the expectation of \(P\) on the complementary
syndrome subspace: the corresponding positive, subnormalized marginal
has trace \(1-p_b\), and \(\|P\|_\infty=1\).

If \(Q\notin\mathcal P(C)\), it is outside the full recovered span
\(V(\mathfrak G)\): the other groups are localized on disjoint registers.
Threshold-span completeness
\eqref{eq:supp-threshold-span-completeness} and the recovery concentration
event \eqref{eq:supp-block-uniform-score-event}, applied to the
prefix-identity lift of \(Q\), imply
\begin{equation}
|m_{C,P}|^2\le\theta+\tau_{\mathrm{rank}}
=\theta_{\mathrm{rec}},
\qquad
\Tr(\omega_C P)=0.
\end{equation}
The second equality follows because \(\omega_C\) has Pauli support only
in \(U_{\mathrm{rec}}^\dagger\mathcal P(C)U_{\mathrm{rec}}\).
These unretained coordinates must still be measured in unconditional
tomography; their bias need not vanish when
\(\varepsilon_{\mathrm{peel}}=0\).

For one-qubit states, the trace norm of their difference equals the
Euclidean distance between their Bloch vectors.  Splitting the at most
three coordinates into the two cases above gives
\begin{equation}
\|\nu_C-\omega_C\|_1
\le2\sqrt3\,\varepsilon_{\mathrm{peel}}
+\sqrt{3\theta_{\mathrm{rec}}}.
\end{equation}
Tensor-product telescoping over \(\widehat K\le n\) therefore yields
\begin{equation}
\left\|\bigotimes_C\nu_C-\bigotimes_C\omega_C\right\|_1
\le2\sqrt3\,n\varepsilon_{\mathrm{peel}}
+n\sqrt{3\theta_{\mathrm{rec}}}.
\end{equation}

Operationally, the learner applies the two known inverse Cliffords and
measures all three Pauli coordinates on every \(J_C\), using the same
unconditional disjoint-register schedule as
Proposition~\ref{prop:supp-block-product-bruteforce-tomography} with
\(k_C=1\).  Every tomography outcome is used, independently of syndrome
outcomes.  For \(\widehat K\ge1\), use
\(\varepsilon_C=\varepsilon_{\mathrm{tom}}/\widehat K\) and
\(\zeta_C=\zeta_{\mathrm{tom}}/\widehat K\), and project the estimated
Bloch vectors onto the Bloch ball with the numerical allowance
\eqref{eq:supp-empirical-projection-numerical-precision}.  This gives
physical estimates \(\widehat\nu_C\) satisfying
\(\sum_C\|\widehat\nu_C-\nu_C\|_1\le\varepsilon_{\mathrm{tom}}\)
with failure at most \(\zeta_{\mathrm{tom}}\).  Keep the separate syndrome
estimate \(\widehat b\), and assemble
\begin{equation}
\widehat\Omega_{1\mathrm q}
:=|\widehat b\rangle\!\langle\widehat b|_{[t]}
\otimes\bigotimes_C\widehat\nu_C
\otimes\frac{I_{J_{\mathrm{aux}}}}{2^{|J_{\mathrm{aux}}|}}.
\end{equation}
On the successful syndrome and tomography events,
\begin{equation}
\|\widehat\Omega_{1\mathrm q}-\Omega_{\mathrm{true}}\|_1
\le2\sqrt3\,n\varepsilon_{\mathrm{peel}}
+n\sqrt{3\theta_{\mathrm{rec}}}+\varepsilon_{\mathrm{tom}}.
\end{equation}
Write \(N_{1\mathrm q}\) for this \(d=1\) specialization of the same
unconditional local tomography copy count.  The three global Pauli
settings measure disjoint registers jointly, so copies are the maximum
of their local schedules, not their sum.  For \(\widehat K=0\), set
\(N_{1\mathrm q}=0\) and use the empty product with the maximally mixed
auxiliary output.

A subgroup of the one-qubit Pauli group has size \(1\), \(2\), or \(4\), so at most three nonidentity Pauli directions are omitted on each hidden qubit; Proposition~\ref{prop:supp-missed-pauli-threshold-certificate} therefore gives
$
E_{\mathrm{miss}}\le n\sqrt{3\theta_{\mathrm{rec}}}
$.
Define \(\widehat\rho_{1\mathrm q}:=U_{\mathrm{stab}}\bar U_{\mathrm{rec}}\widehat\Omega_{1\mathrm q}\bar U_{\mathrm{rec}}^\dagger U_{\mathrm{stab}}^\dagger\).  The preceding reconstruction, Proposition~\ref{prop:supp-localized-true-block-product}, and unitary invariance yield
\begin{equation}
\|\widehat\rho_{1\mathrm q}-\rho\|_1
\le2\sqrt{\varepsilon_{\mathrm{peel}}}
+2n\sqrt{3\theta_{\mathrm{rec}}}
+2\sqrt3\,n\varepsilon_{\mathrm{peel}}+\varepsilon_{\mathrm{tom}}.
\label{eq:supp-single-qubit-error-bound}
\end{equation}
This direct comparison avoids the generic \((\widehat K+1)\) marginal-product multiplier.

\begin{corollary}[Single-qubit hidden-block tomography]
\label{cor:supp-single-qubit-hidden-block-tomography}
Let \(n\ge1\), \(\varepsilon,\delta\in(0,1)\), and \(d=1\).  With probability at least \(1-\delta\), the specialized pipeline returns a physical compact factorized estimator \(\widehat\rho_{1\mathrm q}\) satisfying \(\|\widehat\rho_{1\mathrm q}-\rho\|_1\le\varepsilon\).  It uses no grouping and
\begin{equation}
N_{\mathrm{tot}}^{(d=1)}=2M_1+2M_2+tM_{\mathrm{sgn}}+N_{1\mathrm q}
=\widetilde O\!\left(\frac{n^9}{\varepsilon^8}\right).
\label{eq:supp-single-qubit-copy-count}
\end{equation}
\end{corollary}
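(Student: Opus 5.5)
The proof will combine the single-qubit error bound Eq.~\eqref{eq:supp-single-qubit-error-bound} with a calibration adapted to $d=1$, following the structure of Proposition~\ref{prop:supp-calibrated-block-product-tomography}. Since $N_{\mathrm{grp}}=0$ and $E_{\mathrm{split}}^{\mathrm{cert}}=0$, the grouping scale $\eta_s$ and its constraint on $\lambda_0$ are dropped, which removes the $\Gamma_d$ and $n^{d+5}$ contributions.

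\textbf{Parameter choice.} Set $\varepsilon_{\mathrm{tom}}:=\varepsilon/4$ and $\theta_0:=\varepsilon^2/(c\,n^2)$ for a suitable absolute constant $c$, so that $2n\sqrt{3\theta_{\mathrm{rec}}}\le 2n\sqrt{6\theta_0}\le\varepsilon/4$, using $\tau_{\mathrm{rank}}\le\theta_0$. Set
\[
\lambda_0:=\min\{\theta_0/4,\ \varepsilon^2/(c'n)\}.
\]
Then $\varepsilon_{\mathrm{peel}}\le n\lambda_0/2$ gives $2\sqrt{\varepsilon_{\mathrm{peel}}}\le\varepsilon/8$. Moreover, $\varepsilon_{\mathrm{peel}}\le\varepsilon^2/(2c')$, so $2\sqrt3\,n\varepsilon_{\mathrm{peel}}\le\sqrt3\,n^2\lambda_0\le\varepsilon/8$ once $\lambda_0\le\theta_0/4$ with $c$ large. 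Summing the four terms gives at most $5\varepsilon/8<\varepsilon$.

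\textbf{Verification and probability.} Take the peeling window $\mathsf I_h=[1-\lambda_0,1-\lambda_0/2]$, and take $M_1$ and $M_2$ as in Eqs.~\eqref{eq:supp-calibrated-first-bell-rounds}--\eqref{eq:supp-calibrated-second-bell-rounds}. The feasibility argument in the proof of Proposition~\ref{prop:supp-calibrated-block-product-tomography} then applies verbatim; it never used $d\ge2$. It yields the following:
\begin{itemize}
\item $\tau_1\le\Delta_h/[8(2n+1)]$, so Proposition~\ref{prop:supp-certified-blockwise-peeling-span} applies;
\item $\lambda\in[\lambda_0/2,\lambda_0]$;
\item $\tau_{\mathrm{rank}}\le\lambda_0\theta_0/8$, so both ranking margins hold and Proposition~\ref{prop:supp-block-triple-single-block} applies.
\end{itemize}
Sign recovery uses Proposition~\ref{prop:supp-peeled-syndrome-sign-recovery} with $h_{\min}>1/2$. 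Tomography uses the $k_C=1$ schedule with $\widehat K\le n$, giving $N_{1\mathrm q}=O((n^2/\varepsilon^2)\log(n/\delta))$. Allocate failure budget $\delta/4$ to each of the four stages (peeling, ranking, signs, tomography), and conclude with a sequential union bound as in Theorem~\ref{thm:supp-abstract-block-product-tomography}.

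\textbf{Copy count.} We have $\theta_0^{-1}=O(n^2/\varepsilon^2)$ and $\lambda_0^{-1}=O(n^2/\varepsilon^2)$, the latter dominated by the $\theta_0/4$ branch. This gives:
\begin{itemize}
\item $M_1=O(n^2\lambda_0^{-2}\cdot n)=\widetilde O(n^7/\varepsilon^4)$;
\item $M_2=O(\lambda_0^{-2}\theta_0^{-2}\,n)=\widetilde O(n^9/\varepsilon^8)$;
\item $tM_{\mathrm{sgn}}=\widetilde O(n)$ and $N_{1\mathrm q}=\widetilde O(n^2/\varepsilon^2)$, both lower order.
\end{itemize}
Hence $N_{\mathrm{tot}}^{(d=1)}=\widetilde O(n^9/\varepsilon^8)$.

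\textbf{Main obstacle.} The delicate point is the term $2\sqrt3\,n\varepsilon_{\mathrm{peel}}$. It is linear in $\varepsilon_{\mathrm{peel}}$ but carries a factor $n$, and we must confirm that $\lambda_0\le\theta_0/4=O(\varepsilon^2/n^2)$ makes it small without forcing any stronger $\lambda_0$ that would raise the exponent of $n$ beyond $9$. By the estimate above it does: $n\varepsilon_{\mathrm{peel}}\le n^2\lambda_0/2=O(\varepsilon^2)$. So no extra branch is needed, and $M_2$ stays the bottleneck.
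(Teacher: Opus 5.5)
Your proposal is correct and follows the paper's proof essentially step for step. The paper likewise sets $\varepsilon_{\mathrm{tom}}=\varepsilon/4$, $\theta_1=\varepsilon^2/(128n^2)$ and $\lambda_1=\theta_1/4$, so your extra branch $\varepsilon^2/(c'n)$ is unnecessary; it then reuses the $M_1,M_2,M_{\mathrm{sgn}}$ formulas and margin checks, assigns failure budget $\delta/4$ to each of the four stages, substitutes into Eq.~\eqref{eq:supp-single-qubit-error-bound}, and identifies $M_2=\widetilde O(n^9/\varepsilon^8)$ as the dominant term. The only slip is arithmetic: your four bounds $\varepsilon/8+\varepsilon/4+\varepsilon/8+\varepsilon/4$ sum to $3\varepsilon/4$, not $5\varepsilon/8$, which is still below $\varepsilon$.
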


\begin{proof}
Give peeling, rank recovery, syndrome recovery, and unconditional one-qubit
tomography failure budget $\delta/4$ each.  Use
Eqs.~\eqref{eq:supp-calibrated-first-bell-rounds},
\eqref{eq:supp-calibrated-second-bell-rounds}, and
\eqref{eq:supp-calibrated-sign-count} with $(\lambda_0,\theta_0)$ replaced
by $(\lambda_1,\theta_1)$.  The complete specialized calibration is
\begin{equation}
\begin{aligned}
\varepsilon_{\mathrm{tom}}&:=\frac{\varepsilon}{4},
&
\theta=\theta_1&:=\frac{\varepsilon^2}{128n^2},
&
\lambda_1&:=\frac{\theta_1}{4},
\\
h_{\min}&:=1-\lambda_1,
&
\frac{\lambda_1}{2}&\le\lambda\le\lambda_1,
&
\tau_{\mathrm{rank}}&\le\frac{\lambda_1\theta_1}{8}
=\frac{\theta_1^2}{32},
\\
\theta_1-\tau_{\mathrm{rank}}&>\lambda,
&
\lambda(\theta_1-\tau_{\mathrm{rank}})-2\tau_{\mathrm{rank}}&>0,
&
\theta_{\mathrm{rec}}&:=\theta_1+\tau_{\mathrm{rank}}
\le\frac{33}{32}\theta_1,
\\
\varepsilon_{\mathrm{peel}}
&\le\frac{n\lambda_1}{2}
=\frac{\varepsilon^2}{1024n},
&
p_b&\ge1-\varepsilon_{\mathrm{peel}}>\frac12,
\\
2\sqrt{\varepsilon_{\mathrm{peel}}}&\le\frac{\varepsilon}{16},
&
n\sqrt{3\theta_{\mathrm{rec}}}
&\le\frac{\sqrt{99}}{64}\varepsilon<\frac{\varepsilon}{6}.
\end{aligned}
\end{equation}
The first two strict inequalities are the two rank-recovery margins: the
first follows from $\lambda_1=\theta_1/4$ and the displayed tolerance, and
the second follows by the same lower-bound substitution
$\lambda\ge\lambda_1/2$.  The remaining bounds are certified peeling,
the resulting syndrome weight, and the missed-mass certificate.

With \(\varepsilon_{\mathrm{tom}}=\varepsilon/4\) and tomography failure
budget \(\zeta_{\mathrm{tom}}=\delta/4\), specializing
Eq.~\eqref{eq:supp-block-product-tomography-general-cost} to
\(k_C=1\) and \(\widehat K\le n\) gives
\begin{equation}
N_{1\mathrm q}=O\!\left(
\frac{n^2}{\varepsilon^2}\log\frac{16n}{\delta}\right)
=
\widetilde O(n^2\varepsilon^{-2}).
\end{equation}
The retained-coordinate bias obeys
\begin{equation}
2\sqrt3\,n\varepsilon_{\mathrm{peel}}
\le\frac{\sqrt3}{512}\varepsilon^2
\le\frac{\sqrt3}{512}\varepsilon.
\end{equation}
Substitution into Eq.~\eqref{eq:supp-single-qubit-error-bound} gives
\begin{equation}
\|\widehat\rho_{1\mathrm q}-\rho\|_1
\le\left(\frac1{16}+\frac{\sqrt{99}}{32}
+\frac{\sqrt3}{512}+\frac14\right)\varepsilon
<0.627\varepsilon<\varepsilon.
\end{equation}
Sequential conditioning over these four stages gives failure at most
\(\delta\); the syndrome pool and its otherwise-vacuous event are omitted
when $t=0$.

Eqs.~\eqref{eq:supp-calibrated-first-bell-audit} and~\eqref{eq:supp-calibrated-second-bell-audit} give the resource ledger
\begin{equation}
\begin{aligned}
M_1&=\widetilde O\!\left(\frac{n^7}{\varepsilon^4}\right),
&
M_2&=\widetilde O\!\left(\frac{n^9}{\varepsilon^8}\right),
\\
N_{1\mathrm q}&=\widetilde O(n^2\varepsilon^{-2}),
&
tM_{\mathrm{sgn}}&=\widetilde O(n),
&
N_{\mathrm{grp}}&=0.
\end{aligned}
\end{equation}
For \(n\ge1\) and \(\varepsilon\in(0,1)\), \(M_2\) dominates.
\end{proof}
\section{Numerical Simulation}
\label{sec:supp-numerical}

Figure~\ref{fig:cebp_vs_lpli} compares CEBP and LP-LI in physical-copy
complexity.  Bell and local product-Pauli measurements have different experimental
requirements; circuit depth, entangling-gate count, measurement-setting complexity,
classical runtime, and memory are not normalized.

\subsection{Target-State Ensemble and Simulation Settings}

Panel (a) uses $n=10$ and 20 independent targets for each $d=1,2,3$,
optimizing CEBP error separately at the 15 provisioned budgets
\begin{equation}
N\in\{10^3\}\cup\{10^k,5\times10^k:\ k=4,\ldots,10\}.
\label{eq:supp-numerical-cebp-budget-grid}
\end{equation}
Panel (b) seeks a statewise empirical threshold for
$D_{\mathrm{tr}}(\widehat\rho,\rho)=\frac12\|\widehat\rho-\rho\|_1\le0.05$,
using 20 frozen targets per $(n,d)$, with $n=1,\ldots,9$ for $d=1$,
$n=2,\ldots,9$ for $d=2$, and $n=3,\ldots,9$ for $d=3$.

Targets follow Eq.~\eqref{eq:supp-clifford-encoded-block-product}.
As a state-generation convention, not a restriction of the CEBP model, we use
contiguous blocks: all single-qubit blocks for $d=1$, and as many size-$d$
blocks as possible for $d=2,3$, followed by the remainder block when needed.
Independently for each $L_a$-qubit block, a $2^{L_a}\times2^{L_a}$ complex
Ginibre matrix gives $\rho_a=G_aG_a^\dagger/\Tr(G_aG_a^\dagger)$, the
Hilbert--Schmidt ensemble~\cite{ZyczkowskiSommers2001}.  The encoder
$U_{\mathrm c}$ is a nonuniform $5n$-step random walk over Hadamard, phase,
and CNOT gates.  Both methods use the same targets, frozen before tuning or
measurement simulation.  Target generation, structural-candidate sampling,
and measurement randomness are distinct.  Per target and budget in panel (a),
CEBP uses 16 tuning seeds and 17 independent holdout seeds, none used in tuning;
LP-LI uses six measurement replicates.  Panel (b) uses 16 prescribed measurement
seeds per target without additional holdouts.  Batched multinomial counts and
sufficient statistics reproduce shot-experiment statistics and copy accounting.

\subsection{Numerical Implementation and Optimization of the CEBP Protocol}

The learner receives only $n$, the known cap $d$, and measurement access, not the
hidden partition, latent states, or encoder.  It performs Bell peeling,
independent residual Bell sampling and rank-guided recovery, cumulant grouping
for $d\ge2$, simultaneous Clifford localization, syndrome-sign estimation,
localized tomography, physical projection, and global decoding
(Secs.~\ref{sec:supp-stabilizer-peeling}--\ref{sec:supp-end-to-end}).
For $d=1$, grouping is skipped and recovered sectors are treated individually.
Practical execution permits reconstruction under relaxed thresholds and
uncertified analytical margins, including the no-false-merge margin, while
requiring operational and budget feasibility; these runs are not theorem-certified.
Only after reconstruction is the exact simulated target used to evaluate error
and select hyperparameters: reconstruction is oracle-free, tuning oracle-assisted.

Both objectives optimize the practical candidate
\begin{equation}
\boldsymbol{\xi}=(h_{\min},h_{\max},c_\theta,\eta_{\mathrm{test}},
w_{\mathrm{peel}},w_{\mathrm{rec}},w_{\mathrm{grp}},w_{\mathrm{sgn}},w_{\mathrm{tom}}),
\end{equation}
where $c_\theta$ is a dimensionless threshold multiplier and the positive $w$'s
are unnormalized stage  weights.  The search ranges are:
\begin{center}
\begin{tabular}{lccc}
\hline
Parameter & Panel (a) & Panel (b) & Sampling \\
\hline
$h_{\min}$ & $[0.55,0.95]$ & $[0.55,0.95]$ & Uniform \\
$h_{\max}$ & $[0.96,0.999]$ & $[0.60,0.99]$ & Uniform \\
$c_\theta$ & $[1.5,128]$ & $[1.5,128]$ & Log-uniform \\
$\eta_{\mathrm{test}}$ & $[0.02,0.30]$ & $[0.02,0.30]$ & Log-uniform \\
$w_{\mathrm{peel}},w_{\mathrm{rec}},w_{\mathrm{tom}}$
& $[0.20,5.0]$ & $[0.20,5.0]$ & Log-uniform \\
$w_{\mathrm{grp}}$ & $[0.20,20.0]$ & $[0.20,5.0]$ & Log-uniform \\
$w_{\mathrm{sgn}}$ & $[0.05,2.0]$ & $[0.05,2.0]$ & Log-uniform \\
\hline
\end{tabular}
\end{center}
Sampling obeys $\frac12<h_{\min}<h_{\max}<1$; $w_{\mathrm{grp}}$ is inactive
for $d=1$.  Theorem-calibration parameters are not independently optimized.
Five peeling-grid intervals and the allocated second Bell pool determine
\begin{equation}
\eta_{\mathrm{peel}}=\frac{h_{\max}-h_{\min}}{5},\qquad
\theta_{\mathrm{raw}}=c_\theta\tau_{\mathrm{rank}},\qquad
\theta=\min\!\left\{\frac12,\theta_{\mathrm{raw}}\right\},
\end{equation}
with $\tau_{\mathrm{rank}}$ as in Sec.~\ref{sec:supp-rank-guided-recovery}
and empirical grouping threshold $\eta_{\mathrm{test}}$.
For both objectives, $\zeta_{\mathrm{peel}}=\zeta_{\mathrm{rank}}
=\zeta_{\mathrm{sgn}}=0.05$, $\delta_{\mathrm{grp}}=\zeta_{\mathrm{tom}}=0.10$,
and the diagnostic values are $\tau_\kappa=0.05$ and
$\varepsilon_{\mathrm{tom}}=1.0$; these do not constitute the rigorous theorem
calibration of Sec.~\ref{sec:supp-calibrated-schedule}.

The stage weights allocate the provisioned physical-copy budget $N$.
Normalize all five weights for $d\ge2$, or the four active weights with zero
grouping allocation for $d=1$; floor all caps except tomography, which receives
the integer remainder.  Unused capacity is not borrowed downstream.  Since
each Bell round consumes two copies,
\begin{equation}
N_{\mathrm{peel}}=2M_1,\qquad N_{\mathrm{rec}}=2M_2,\qquad
N_{\mathrm{tot}}^{\mathrm{real}}\le N.
\label{eq:supp-numerical-copy-ledgers}
\end{equation}
The realized cost sums the stage consumptions as in
Eq.~\eqref{eq:supp-block-product-main-copy-count}; disjoint recovered registers
$J_C$ share physical rounds for simultaneous local Pauli tomography, rather
than adding their schedule lengths.  Provisioned $N$ is the CEBP $x$-coordinate
in panel (a) and the optimized resource in panel (b).

Both searches use candidate pools $16,32,64,128,256$, seed fidelities
$1,2,4,8,16$, retention fraction $1/2$, at least three progressive rounds, and
improvement patience two.  Panel (a) minimizes the tuning-seed mean trace distance
among fully operational candidates, then freezes the choice before holdout evaluation.
For each structural candidate $\boldsymbol{\xi}$ in panel (b), we search for the smallest
empirically feasible physical-copy budget. Starting from the smallest budget
that assigns nonzero resources to every required stage, we increase $N$
 by a factor of two until the fixed-error criterion is satisfied
or the scientific cap $N=10^9$ is reached, using at most 32 upward
expansions. Once a failing budget $N_{\mathrm{low}}$ and a feasible budget
$N_{\mathrm{high}}$ are identified, we refine the bracket until
\begin{equation}
\frac{N_{\mathrm{high}}-N_{\mathrm{low}}}{N_{\mathrm{high}}}
\le 0.01,
\end{equation}
and take the feasible endpoint $N_{\mathrm{high}}$ as that candidate's
empirical threshold, which is not a proven global minimum. A target is declared
search-censored only if no tested structural candidate satisfies the
fixed-error criterion within $N\le 10^9$.
A final budget passes only if all 16 runs are operational, budget-feasible, and
finite and, with $D_j=D_{\mathrm{tr}}(\widehat\rho_j,\rho)$,
\begin{equation}
\frac1{16}\sum_{j=1}^{16}D_j\le0.05+10^{-12},\qquad
\#\{j:D_j\le0.05+10^{-12}\}\ge8.
\label{eq:supp-numerical-fixed-error-criterion}
\end{equation}

\subsection{Local-Pauli Linear-Inversion Baseline}

LP-LI measures all $3^n$ local settings $b\in\{X,Y,Z\}^n$, drawing a
$2^n$-outcome multinomial count vector per setting, without a structural prior.
In panel (a), write $N=q3^n+r$, $0\le r<3^n$: each setting receives $q$ shots,
and $r$ settings selected by a deterministic seeded permutation receive one extra,
using exactly $N$ copies.  The requested grid is
\begin{equation}
N\in\{10^k,5\times10^k:\ k=1,\ldots,10\}.
\label{eq:supp-numerical-lpli-budget-grid}
\end{equation}
Complete-design feasibility requires $N\ge3^n$; at $n=10$, $3^{10}=59049$,
so the first eight budgets through $5\times10^4$ are structurally infeasible,
not stochastic failures.  Pool Pauli outcome products from every setting compatible
with $P\in\Pc_n$, weighting by actual shot counts; for uniform $M$ shots per
setting the denominator is $M3^{n-\mathrm{wt}(P)}$.  Fix $\widehat r(I)=1$ and
form the linear estimate~\cite{AltepeterJamesKwiat2004}
\begin{equation}
\widehat\rho_{\mathrm{lin}}=2^{-n}\sum_{P\in\Pc_n}\widehat r(P)P.
\end{equation}
Hermitize it and project its eigenvalues in Euclidean distance onto the probability
simplex.  This Frobenius (Hilbert--Schmidt) projection onto positive-semidefinite
trace-one matrices supplies the physical estimate used for all errors.

For panel (b), uniform shots give $N_{\mathrm{LP\text{-}LI}}=M3^n$.
Use the same 16 prescribed seeds and criterion
\eqref{eq:supp-numerical-fixed-error-criterion}, including operational, finite,
budget-feasible runs.  Starting at integer $M=1$, bracket by factor-two growth.
An \(M^{-1/2}\) error-scaling estimate may be used
only as a heuristic to suggest a promising next value of \(M\); the suggested
value is still explicitly simulated before being classified as passing or
failing.  Once a bracket is found, we refine it by integer binary search.  We
then explicitly test the radius-two integer neighborhood around the detected
crossing, i.e., nearby values \(M_c-2,\ldots,M_c+2\) when they lie in the
allowed search domain.  Because finite-sample pass/fail outcomes need not be
strictly monotone in \(M\), we do not replace the measured outcomes by a
monotone envelope.  Instead, we report the smallest empirically passing
\(M\) among the tested values in the crossing neighborhood.  This quantity is
therefore an empirical threshold estimate, not a proof of the globally minimal
passing integer \(M\).
If none is found within
$M3^n\le5\times10^{10}$, the target is search-censored.

\subsection{Data Aggregation, Incomplete Runs, and Power-Law Fits}

For positive observations, use
\begin{equation}
\overline x_{\mathrm{geo}}=\exp\!\left(\frac1m\sum_{j=1}^m\log x_j\right),
\qquad
[\overline x_{\mathrm{geo}}e^{-\sigma_{\log x}},\,
\overline x_{\mathrm{geo}}e^{+\sigma_{\log x}}],
\label{eq:supp-numerical-geometric-error-bar}
\end{equation}
where $\sigma_{\log x}$ is the sample standard deviation of the logarithms,
with denominator $m-1$ for $m>1$.  Bars show empirical dispersion, not standard
errors or confidence intervals.  Panel (a) aggregates individual errors directly,
without first averaging within targets: $20\times17=340$ CEBP target--holdout
errors or $20\times6=120$ LP-LI target--replicate errors per budget and $d$.
All 15 CEBP budgets are complete, giving $15\times20\times17=5100$ evaluations
per $d$; all 12 feasible LP-LI budgets from $10^5$ through $5\times10^{10}$
have 120 reconstructions each.  Each smaller budget has 120 deterministic
infeasibility cases and no error point.  Panel (b) first determines one
$N_{\mathrm{emp},s}$ per target by its 16-seed decision, then aggregates complete
20-target ensembles as
$\overline N_{\mathrm{emp}}=\exp(\frac1{20}\sum_{s=1}^{20}\log N_{\mathrm{emp},s})$
with the same log-space dispersion; measurement seeds are not independent targets.

CEBP $d=1,2,3$ series contain all 20 thresholds at each $n$. 
LP-LI $d=1,2$ series are complete over $n=1,\ldots,9$ and $n=2,\ldots,9$;
$d=3$ is complete over $n=3,\ldots,8$, while $n=9$ has six validated thresholds
and 14 search-censored targets within the $5\times10^{10}$-copy domain.

Only CEBP data are fit, using unweighted ordinary least squares in base-10
logarithms.  Panel (a) fits $D_{\mathrm{tr}}=AN^p$ to geometric-mean errors
above $N\ge10^4,10^5,10^6$ for $d=1,2,3$, respectively (14, 12, and 10 points).
Panel (b) fits $\overline N_{\mathrm{emp}}=An^{p_n}$ over all complete,
uncensored sizes: $n=1,\ldots,9$ for $d=1$  ,$n=2,\ldots,9$ for $d=2$ and $n=3,\ldots,9$ for $d=3$.
\begin{center}
\begin{tabular}{ccrrr}
\hline
Panel & $d$ & Exponent ($p$ or $p_n$) & Standard error & $R^2_{\log_{10}}$ \\
\hline
(a) & 1 & $-0.58358$ & $0.01482$ & $0.99232$ \\
(a) & 2 & $-0.62451$ & $0.02835$ & $0.97980$ \\
(a) & 3 & $-0.62215$ & $0.02919$ & $0.98270$ \\
(b) & 1 & $1.79474$ & $0.08484$ & $0.98460$ \\
(b) & 2 & $2.15539$ & $0.10647$ & $0.98557$ \\
(b) & 3 & $1.97374$ & $0.21983$ & $0.94160$ \\
\hline
\end{tabular}
\end{center}
For panel (a), $N\propto D_{\mathrm{tr}}^{-p_D}$ with
$p_D=-1/p\simeq1.71,1.60,1.61$ for $d=1,2,3$, respectively, compared
qualitatively with the analytical $D_{\mathrm{tr}}^{-8}$ dependence.
These fits are finite-range empirical summaries, not asymptotic complexity
exponents; the analytical bounds are uniform worst-case guarantees.

\end{document}